%% file: ms.tex
\pdfoutput=1
\documentclass[11pt]{article}
\input{preamble}
\hypersetup{pdftitle={A Mean Field Model of Consumption Inertia and Keeping Up with the Joneses},pdfauthor={Junkee Jeon}}

\begin{document}

\hypersetup{pageanchor=false}
\begin{titlepage}
\begin{center}
{\LARGE\bfseries A Mean Field Model of Consumption Inertia\\ and Keeping Up with the Joneses\par}
\vspace{1.6em}
{\large Junkee Jeon\par}
\vspace{0.4em}
{\normalsize Department of Applied Mathematics, Kyung Hee University, Republic of Korea\\
\href{mailto:junkeejeon@khu.ac.kr}{junkeejeon@khu.ac.kr}\par}
\vspace{0.8em}
{\normalsize October 2026\par}
\end{center}
\vspace{1em}
\noindent\textbf{Abstract.}
\input{abstract}

\vspace{1em}
\noindent\textbf{Keywords.} intertemporal loss aversion, relative consumption, mean field games.

\vspace{0.6em}
\noindent\textbf{JEL Classification.} C73, D15, D91, E21, G11.
\end{titlepage}

\setcounter{page}{1}\hypersetup{pageanchor=true}
\input{sec1_introduction}
\input{sec2_model}
\input{sec3_implications}
\input{sec4_conclusion}

\appendix
\input{appendixA}

\counterwithin*{equation}{section}
\renewcommand{\theequation}{\thesection.\arabic{equation}}
\renewcommand{\theHequation}{\thesection.\arabic{equation}}
\counterwithin*{theorem}{section}\renewcommand{\thetheorem}{\thesection.\arabic{theorem}}\renewcommand{\theHtheorem}{\thesection.\arabic{theorem}}
\counterwithin*{proposition}{section}\renewcommand{\theproposition}{\thesection.\arabic{proposition}}\renewcommand{\theHproposition}{\thesection.\arabic{proposition}}
\counterwithin*{lemma}{section}\renewcommand{\thelemma}{\thesection.\arabic{lemma}}\renewcommand{\theHlemma}{\thesection.\arabic{lemma}}
\counterwithin*{corollary}{section}\renewcommand{\thecorollary}{\thesection.\arabic{corollary}}\renewcommand{\theHcorollary}{\thesection.\arabic{corollary}}
\counterwithin*{definition}{section}\renewcommand{\thedefinition}{\thesection.\arabic{definition}}\renewcommand{\theHdefinition}{\thesection.\arabic{definition}}
\counterwithin*{remark}{section}\renewcommand{\theremark}{\thesection.\arabic{remark}}\renewcommand{\theHremark}{\thesection.\arabic{remark}}
\counterwithin*{figure}{section}\renewcommand{\thefigure}{\thesection.\arabic{figure}}\renewcommand{\theHfigure}{\thesection.\arabic{figure}}
\counterwithin*{table}{section}\renewcommand{\thetable}{\thesection.\arabic{table}}\renewcommand{\theHtable}{\thesection.\arabic{table}}

\input{ecompanion}

\input{supplement_body}

\input{references}
\end{document}

%% file: preamble.tex
\usepackage[T1]{fontenc}
\usepackage{amsmath}

\usepackage{txfonts}

\usepackage{amsthm,mathtools,bm}
\usepackage[margin=1in]{geometry}
\usepackage{setspace}
\usepackage{graphicx,enumitem}
\usepackage[authoryear,round]{natbib}
\setcitestyle{citesep={,},aysep={}}
\usepackage[font=small,labelfont=bf,labelsep=period]{caption}
\usepackage{placeins}
\usepackage{xcolor}
\definecolor{linkink}{HTML}{1F3F5B}
\usepackage{xr-hyper}
\usepackage[colorlinks=true,linkcolor=linkink,citecolor=linkink,urlcolor=linkink]{hyperref}

\theoremstyle{plain}
\newtheorem{theorem}{Theorem}
\newtheorem{proposition}{Proposition}
\newtheorem{lemma}{Lemma}
\newtheorem{corollary}{Corollary}
\theoremstyle{definition}
\newtheorem{definition}{Definition}
\newtheorem{remark}{Remark}

\newcommand{\E}{\mathbb{E}}
\newcommand{\R}{\mathbb{R}}
\newcommand{\F}{\mathcal{F}}
\newcommand{\FB}{\mathbb{F}^B}
\newcommand{\dd}{\mathrm{d}}
\newcommand{\ind}{\mathbf{1}}
\newcommand{\hg}{\hat\gamma}
\newcommand{\Xf}{\mathcal{X}}
\newcommand{\za}{z_\alpha}
\newcommand{\zb}{z_\beta}
\newcommand{\la}{l^{\alpha}}
\newcommand{\lb}{l^{\beta}}
\newcommand{\Rr}{\mathcal{R}}
\newcommand{\J}{\mathcal{J}}
\newcommand{\vb}{\bar\varphi}
\newif\iftablesatend
\tablesatendfalse

%% file: abstract.tex
Households compare their living standard with others' and dislike changing
it. We study consumption and investment when felicity depends on consumption
relative to a benchmark, the population's geometric average consumption,
and cuts in felicity weigh more than increases. Asset returns are exogenous.
For finitely many types and arbitrary initial wealth and inherited
standards, a mean field equilibrium exists under a weak interaction
condition, and for fixed budget multipliers the benchmark is the fixed point
of a contraction. In a complete market with common risk, a mean field
equilibrium at empirical type frequencies is an exact Nash equilibrium of
the finite game, for any preferences over relative consumption. We bound
the gain from deviating when plans computed at another composition are
used. We compute equilibria in which
groups with different wealth, inherited standards, and loss aversion revise
spending at different dates. With identical households and common risk, the
unique symmetric equilibrium is the policy of a single loss averse household
with modified curvature. In a baseline example, ignoring the inherited
standard costs up to 17\% of wealth, while small errors about the
composition of the reference group cost less than 0.5\%.

%% file: sec1_introduction.tex
\section{Introduction}\label{sec:intro}

\citet{Duesenberry} proposed that a household's consumption depends both on
what other households consume and on the highest standard of living the
household has reached in the past. The first idea lives on in models of
keeping up with the Joneses \citep{Abel,Gali,LU} and in work on how status
concerns shape portfolios \citep{DKK,Roussanov}. The second is captured by
reference dependence and loss aversion \citep{KT,KR}, which help to explain
why spending reacts slowly to news and how this sluggishness interacts with
risk \citep{CS,Pagel}. This paper puts the two ideas into one model of
consumption and investment in which the reference standard is set by the
spending of the population itself.

Household data support both ideas. Spending responds to news
infrequently, responses to windfalls are either zero or substantial
\citep{FKZ}, and consumption responds more strongly to
predictable declines in income than to predictable increases \citep{Shea}.
At the same time, a household's spending rises with that of its neighbors
and coworkers \citep{KKSK,DFP}. When the standard of living is judged
relative to others, inertia has three consequences. A household that does
not revise still changes its spending when the
benchmark moves, the benchmark in turn depends on when the other households
revise, and two households with the same wealth but different inherited
standards spend and invest differently. None of them arises in a frictionless model
of relative consumption, and the second cannot be studied one household at
a time.

We model inertia with the intertemporal loss aversion (ILA) preferences of
\citet{CJK}, applied to consumption relative to a power of the benchmark,
the geometric average consumption of the reference group. Cuts in relative
felicity weigh more in utility than increases of the same size. As a
result, the household keeps its relative standard as long as the ratio of the
marginal value of wealth to marginal utility stays inside a band, and at the
ends of the band it revises by the smallest amount that brings the ratio
back. In the explicit equilibria, unlike under a fixed cost of adjustment,
anticipated spending paths have no jumps after an initial revision, and
unlike under habit formation, there are periods of inaction. The band and
the cost of consumption are given by
explicit formulas, which yield closed form equilibria when revisions are
synchronized and explicit constants in the existence proof. When cuts and
increases weigh equally, the adjustment costs reduce to a constant and a
rescaling of utility, relative consumption may change at every instant, and
the inherited standard does not affect choices (Remark~\ref{rem:frictionless}).

A model of the whole population is needed because the households themselves
produce the benchmark. With inaction, the benchmark depends on where every
household sits inside its band, so the state of the game grows with the
number of households. A mean field formulation, in which every household
takes the benchmark as given, replaces this state by a consistency condition
for the benchmark. We ask how a mean field equilibrium can be
computed when households differ in wealth, in the standards they inherit,
and in the dates at which they revise spending, and, since actual reference
groups such as neighborhoods and workplaces are small, whether its policies
remain equilibrium policies in a group of finitely many households, in which
each household moves the benchmark. Asset returns are taken as given, so the
effects we describe come from relative concerns and adjustment costs.

With finitely many types, common and private risk, and arbitrary positive
initial wealth and inherited standards, Theorem~\ref{thm:weakexist} shows
that a mean field equilibrium exists when relative concerns are weak in an
explicit sense and two further conditions keep each individual problem well
posed. Households without intertemporal loss aversion can be added as
further types (Remark~\ref{rem:frictionless}). For prices within constant
multiples of a geometric price, the logarithm of optimal relative
consumption is Lipschitz in the logarithm of the price, with a constant
equal to the reciprocal of the curvature of utility, so that for fixed
budget multipliers the benchmark map is a contraction near a geometric
benchmark. Brouwer's
theorem gives multipliers at which all budgets bind, and under a stronger
condition they are unique.
Section~\ref{sec:computing} treats the inner benchmark problem and
the outer budget equations separately. With common risk and a common
discount rate, the revision rules are computed once in index coordinates,
and wealth enters only through the budgets. The lattice has one dimension
for each type, so its size grows quickly with the number of types but not
with the number of households. In Section~\ref{sec:computeexample} three groups revise at different dates
until common shocks bring their indices together, and on all
simulated paths of two examples with types that differ in loss aversion,
the more loss averse type revises only together with the other once they
have revised jointly.

With common risk in a complete market, Theorem~\ref{thm:exact} shows that a
mean field equilibrium at an empirical type distribution is an exact Nash
equilibrium of the finite game with that composition and of all its
replicas. Because a deviating household also moves the geometric benchmark, any
relative consumption path that it can finance in the finite game is
affordable for a price taker with the same wealth, and the policy of the
price taker is optimal, so no deviation is profitable.
The argument uses only that all risk is common and traded, that the
geometric benchmark includes the household itself, that relative concerns
are below one, and that preferences depend on relative consumption
(Remark~\ref{rem:scope}), so households may differ in preferences, wealth,
and inherited standards and need not revise at the same time. When households cannot trade the private risks of others, the
budget comparison fails, and the reflected plans of the stationary
equilibrium cannot be financed from any finite wealth in a finite population
of identical households with a positive relative concern.

Populations rarely have the exact composition for which policies were
computed. When type frequencies approach a target distribution,
Theorem~\ref{thm:populationlimit} shows that the finite equilibria obtained
from Theorem~\ref{thm:weakexist} at the empirical frequencies converge along
subsequences to a mean field equilibrium. Theorem~\ref{thm:certificate}
bounds the gain from deviating when every plan is a best response to some
benchmark other than the realized one, for instance when plans computed at
a target composition are used in the actual population. For logarithmic
utility, a common discount rate, and a common relative concern below one
half, the target policies remain an approximate Nash equilibrium after the
type counts are rounded, with an error of order one over the population size
(Corollary~\ref{cor:rounded}). In an example with power utility, the bound
for keeping the target policies is below 0.004\% of wealth with ten
households (Section~\ref{sec:advice}).

Explicit equilibria check these computations and have economic content of
their own. With common risk and identical households, the unique symmetric
equilibrium for every population size is the policy of a single loss averse
household whose risk aversion is modified by the relative concern
(Theorem~\ref{thm:symmetricunique}). If risk aversion exceeds one, a
stronger relative concern lowers effective risk aversion and raises the
risky share at revision dates. When households that differ in curvature and relative
concern revise together (Theorem~\ref{thm:common}), a larger share of
households with low effective risk aversion raises the risky share at
revision dates of households with risk aversion above one and a positive
relative concern, although their own preferences are unchanged
(Proposition~\ref{prop:composition}). With purely idiosyncratic risk, the
benchmark grows at a constant rate in a stationary equilibrium
(Theorem~\ref{thm:stat}), and a household that keeps its relative
consumption fixed lets its spending grow at a fraction of that rate, so
spending growth between revisions need not reflect a decision to raise the
standard of living.
Benchmark growth lowers the effective interest rate, which in our example
turns negative within the range of estimated relative concerns, and
Lemma~\ref{lem:band} therefore covers every real interest rate.

Adjustment costs, initial wealth, and inherited standards move the timing
of revisions and the level of the benchmark but not its long run growth rate
(Section~\ref{sec:growth}). At fixed wealth, however, a policy that ignores
the inherited standard at time zero loses up to 17\% of wealth in our
numerical example, and its largest loss ranges from 9\% to 29\% as the
parameters vary. A small error about the composition of the reference group
costs less than 0.5\%, but a household that believes all its peers are like
itself loses 16\%, mostly because its revision band is misplaced
(Section~\ref{sec:advice}). Section~\ref{sec:evidence} relates these
implications to household evidence. Estimates of the weight that households
place on the consumption of their peers lie between 0.3 and 0.44, which our
existence result covers when risk aversion is at least one.

\subsection{Related Literature}

The frictionless starting point is the consumption and portfolio problem of
\citet{Merton}, solved by the martingale method of \citet{KLS} and
\citet{CoxHuang} \citep[see also][]{KS98}. Costly adjustment of durable
goods \citep{GL}, consumption commitments \citep{CS}, proportional
transaction costs \citep{DN}, and ratcheting constraints \citep{Dybvig,JKS}
all make spending persistent and risk taking dependent on the state.
Reference points formed from past consumption appear in \citet{DLPY},
\citet{LYZ}, and \citet{GHR}, and \citet{BLN} study loss aversion with an
endogenously updated reference level. In portfolio selection, \citet{HS}
show how an endogenous reference point changes the apparent degree of loss
aversion, and \citet{SL} find that investors who fail to anticipate updates
of their reference point invest in a way that is inconsistent over time.
\citet{GR} let aspirations be formed from the incomes of others and show how
they shape investment incentives and the long run income distribution. Our
individual problem builds on \citet{CJK}, and \citet{CJK25} examine the
preference representation further. Here the reference standard is relative
and determined in equilibrium, and loss aversion applies to changes in a
household's own relative standard rather than to a shortfall from others.

Habit formation explains persistent consumption and several features of
asset prices \citep{Const,CC,DK}, and \citet{HHK} and \citet{BR} study
preferences in which consumption at one date is a close substitute for
consumption at nearby dates. In these models a reference stock enters flow
utility. In our model the adjustment cost is attached to changes in relative
felicity, so a household can leave its relative consumption unchanged while
its actual spending follows the benchmark.

Equilibria of investment games under relative performance criteria are
characterized by \citet{ET} and \citet{LZ}, and \citet{BM} study fund
managers with relative wealth concerns. \citet{LS} add consumption,
\citet{DRS} consider recursive utility, \citet{FZ} develop a general mean
field portfolio game, and \citet{HZ} compare competitive and homophilous
interaction. \citet{Zhou} studies how to design relative performance
criteria in mean field contribution games, \citet{BWY} analyze a mean field
game with external habit formation, and \citet{LiZh} combine habit
formation with relative performance criteria. Consumption in these
models, when present, is adjusted continuously. With adjustment costs,
households stay inactive for long periods, their risky share departs from
its frictionless value, and the equilibrium construction must account for
the cost of variation and for the inherited standard.

The consistency principle of mean field games goes back to \citet{HMC} and
\citet{LL}. Common noise is treated by \citet{CDL} and \citet{CD18}, and
singular controls by \citet{FH}, \citet{GX}, \citet{CDF}, and \citet{DFFN}.
\citet{Lacker16,Lacker20}, \citet{Fischer}, and \citet{CF} study the
convergence of Nash equilibria of finite games to mean field equilibria. In
operations research, \citet{WBV}, \citet{AJW}, and \citet{LW} develop mean
field equilibrium as an approximation to dynamic games with many players and
study its existence, uniqueness, and comparative statics.
\citet{ABF} study irreversible investment as a singular control problem
and prove existence and uniqueness of a stationary mean field equilibrium in
an economy with two regimes.
In the relative performance games of \citet{LZ} and \citet{LS}, the
equilibria of the finite game and of the mean field game are both available
in closed form and can be compared directly. Theorem~\ref{thm:exact}
compares budgets instead, so it needs no solution of the finite game and
applies with path dependent preferences, adjustment costs, inherited
standards, and revisions at different dates, but it requires that all risk
be common and traded. \citet{AC} develop finite difference methods for mean field games. Our
lattice scheme is a Markov chain approximation in the sense of \citet{KD},
and we solve the coupled obstacle problems with Howard's algorithm, whose
convergence for obstacle problems is studied by \citet{BMZ}. Our verification and duality arguments for the individual
problem are related to \citet{BR}, \citet{RS}, and \citet{BK}.

Section~\ref{sec:model} presents the model, the equilibrium results, and
the computational procedure. Section~\ref{sec:implications} derives
economic implications, computes an asynchronous equilibrium, compares
policies at fixed wealth, and relates the results to household evidence.
Section~\ref{sec:conclusion} discusses the scope of the results and open
problems.
Appendix~\ref{app:proofs} proves Theorems~\ref{thm:common} to~\ref{thm:stat}
and~\ref{thm:exact}, Corollaries~\ref{cor:common} and~\ref{cor:stat}, and
Propositions~\ref{prop:composition} to~\ref{prop:comparativestat}, and it
describes the other appendices, which contain the remaining proofs and the
computations and collect further results.

%% file: sec2_model.tex
\section{Model and Equilibrium}\label{sec:model}

\subsection{Market, Preferences, and Equilibrium}\label{sec:setup}

Let $(\Omega,\F,\mathbb F,\mathbb P)$ be a filtered probability space
satisfying the usual conditions and carrying two independent standard
Brownian motions $B$ and $W$. The Brownian motion $B$ is common to all
households, and $W$ is specific to the representative household, whose filtration is the
augmented filtration generated by its initial type and $(B,W)$. The type is
independent of $(B,W)$. We write $\FB=(\F^B_t)_{t\ge0}$ for the augmented
filtration of $B$. When only one source of risk is present, the other
Brownian motion is omitted.

A household can invest in a bank account with interest rate $r>0$ and in
two risky technologies with volatilities $\sigma$ and $\sigma_p$, driven by
$B$ and by its own $W$. Their Sharpe ratios $\kappa_a\ge0$ and
$\kappa_p\ge0$ may depend on the type, since a household does not trade the
technologies of other types, and $\kappa^2=\kappa_a^2+\kappa_p^2>0$.
Given its type, the household trades in a complete market with state price
density
\begin{equation}\label{eq:spd}
 \xi_t=\exp\Big(-\big(r+\tfrac12\kappa^2\big)t-\kappa_aB_t-\kappa_pW_t\Big).
\end{equation}
A consumption plan $c$ can be financed from initial wealth $X_0$ if and only
if $\E\int_0^\infty\xi_tc_t\,\dd t\le X_0$ \citep{KS98}.

A benchmark is a positive $\FB$ adapted process $Y$. Given a relative concern
$\theta\in[0,1)$, relative consumption is $h_t=c_tY_t^{-\theta}$. Let
$u(h)=h^{1-\gamma}/(1-\gamma)$ for $\gamma>0$, $\gamma\ne1$, and $u=\log$
for $\gamma=1$. Following the ILA model of \citet{CJK}, the household evaluates relative consumption by
\begin{equation}\label{eq:U}
 U(h)=\E\int_0^\infty e^{-\delta t}\Big(u(h_t)\,\dd t-\alpha\,\dd v_t^{+}
       -\beta\,\dd v_t^{-}\Big),\qquad v_t=u(h_t),
\end{equation}
where $v^{+}$ and $v^{-}$ are the increasing and decreasing parts of $v$ on
$[0-,t]$ and $h_{0-}>0$ is the inherited relative standard, which we call the
relative reference. Integrals with
respect to $\dd v^{\pm}$ are taken over $[0,\infty)$ and include a jump at
time zero. The parameters satisfy $\delta>0$, $\beta\ge0$, and
$-\beta<\alpha<1/\delta$, and
\[
 L=\frac{1+\delta\beta}{1-\delta\alpha}>1
\]
is the coefficient of intertemporal loss aversion. The
admissible set $\mathcal A$ consists of the positive adapted right
continuous processes $h$ of finite variation on compacts, with the given
$h_{0-}$, such that
$\E\int_0^\infty e^{-\delta t}(|v_t|\,\dd t+\dd v_t^{+}+\dd v_t^{-})<\infty$.
Given $Y$ and $X_0$, the household solves
\begin{equation}\label{eq:primal}
 V(X_0,Y)=\sup\Big\{U(h):\ h\in\mathcal A,\
 \E\int_0^\infty \xi_tY_t^{\theta}h_t\,\dd t\le X_0\Big\}.
\end{equation}
Types and initial data are $\F_0$ measurable, and the expectations in
\eqref{eq:U} and \eqref{eq:primal} are conditional on $\F_0$, so budget
multipliers may be $\F_0$ measurable random variables.

Adjustment costs are charged on changes in the relative standard. A
household that scales its spending with $Y^\theta$ keeps $h$ constant and
pays nothing, whereas a household that holds its spending fixed while the
benchmark rises sees its relative standard fall. If the benchmark has a
Brownian component, admissibility forces actual consumption to move with
$Y^\theta$ between revisions, since $h$ must have finite variation. The
benchmark enters \eqref{eq:primal} only through the deflator $\xi Y^\theta$.

\begin{definition}\label{def:mfe}
The type $(\gamma,\delta,\alpha,\beta,\theta,\kappa_a,\kappa_p)$, the wealth
$X_0$, and the relative reference $h_{0-}$ of the representative household
are $\F_0$ measurable and independent of $(B,W)$. A \emph{mean field
equilibrium} is a pair $(Y,h^\ast)$ such that $h^\ast$ attains
\eqref{eq:primal} given $Y$ and
\begin{equation}\label{eq:consistency}
 \log Y_t=\E\big[\log c^\ast_t\,\big|\,\F^B_t\big],\qquad
 c^\ast_t=h^\ast_tY_t^{\theta},\qquad t\ge0 .
\end{equation}
\end{definition}

\begin{definition}\label{def:nash}
In the game with $N$ households the benchmark is
$\bar Y_t=\prod_{j=1}^N(c^j_t)^{1/N}$, and household $i$ evaluates
$h^i=c^i\bar Y^{-\theta_i}$ by \eqref{eq:U} with its own parameters.
A profile $(c^1,\dots,c^N)$ of plans that can be financed, with
$h^i\in\mathcal A$ for all $i$, is a \emph{Nash equilibrium} if no household
can raise its utility by switching to another plan that it can finance and
whose relative consumption lies in $\mathcal A$, while the consumption
processes of the other households are held fixed.
\end{definition}

The equilibrium notion is open loop. With common risk only, the Brownian
market is complete for every household, and a plan can be financed if and
only if it satisfies the budget inequality. We write
\begin{equation}\label{eq:ghat}
 \hg=\gamma+\theta(1-\gamma)>0
\end{equation}
for the \emph{effective risk aversion}.

\subsection{Individual Policies}\label{sec:individual}

Fix an expenditure market $(r,\kappa)$, where we now allow any $r\in\R$
because the markets that arise after absorbing the benchmark into prices can
have zero or negative interest rates. For $0<\za<\zb$ consider
\begin{gather}
 \tfrac12\kappa^2z^2H''(z)+(\delta-r)zH'(z)-\delta H(z)+1-z=0,
 \qquad \za\le z\le\zb,\label{eq:Hode}\\
 H(\za)=\alpha,\quad H'(\za)=0,\qquad H(\zb)=-\beta,\quad H'(\zb)=0,
 \label{eq:Hbc}
\end{gather}
and for $\rho>0$ define
\begin{equation}\label{eq:K}
 K(\rho)=r+\frac{\delta-r}{\rho}+\frac{(\rho-1)\kappa^2}{2\rho^2}.
\end{equation}
Let $\mathcal B$ be a standard Brownian motion, $\nu=\delta-r-\kappa^2/2$, and
$z_0\in[\za,\zb]$. The reflected log index solves
\begin{equation}\label{eq:skor}
 x_t=\log z_0+\nu t-\kappa\mathcal B_t+\la_t-\lb_t\in[\log\za,\log\zb],
 \qquad Z_t=e^{x_t},
\end{equation}
where $\la$ and $\lb$ are continuous nondecreasing processes starting at
zero that increase only when $x$ is at the lower and at the upper endpoint,
respectively. We write $\ell=\la-\lb$ for the net regulator. A low index
triggers an increase in relative consumption and a high index triggers a
decrease.

\begin{lemma}\label{lem:band}
Let $r\in\R$, $\kappa,\delta>0$, $\beta\ge0$, and $-\beta<\alpha<1/\delta$.
\begin{enumerate}[label=(\roman*),nosep,leftmargin=2em]
\item There is a unique pair $0<\za<\zb$ and a unique $C^2$ function $H$
satisfying \eqref{eq:Hode} and \eqref{eq:Hbc}. Moreover $H'<0$ on
$(\za,\zb)$ and $\za<1-\delta\alpha<1+\delta\beta<\zb$. The endpoints
depend on $(r,\kappa,\delta,\alpha,\beta)$ but not on utility curvature,
and both are strictly increasing in $r$.
\item Let $\rho>0$ and $\xi_t=\exp(-(r+\kappa^2/2)t-\kappa\mathcal B_t)$.
The cost $\E\int_0^\infty\xi_te^{\ell_t/\rho}\,\dd t$ is finite if and only
if $K(\rho)>0$. In that case there is a positive, strictly decreasing $C^2$
function $\Xf_\rho$ on $[\za,\zb]$ such that
$\E[\int_t^\infty\xi_se^{\ell_s/\rho}\dd s\,|\,\F_t]
=\xi_te^{\ell_t/\rho}\Xf_\rho(Z_t)$,
\begin{equation}\label{eq:XH}
 \Xf_\rho(z)+\rho z\Xf_\rho'(z)=-H'(z),
\end{equation}
and $e^{-d/\rho}\le K(\rho)\Xf_\rho\le e^{d/\rho}$ with $d=\log(\zb/\za)$.
\end{enumerate}
\end{lemma}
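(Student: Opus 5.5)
The plan is to treat the free boundary problem in (i) by shooting from the lower endpoint. The problem in (ii) reduces to an explicit computation, because the regulator $\ell$ differs from a Brownian motion with drift only by the bounded process $x$.

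\emph{Part (i).} Equation \eqref{eq:Hode} is an inhomogeneous Euler equation, and it does not involve utility curvature. So once existence and uniqueness are established, the endpoints can depend only on $(r,\kappa,\delta,\alpha,\beta)$. For each $\za\in(0,1-\delta\alpha)$, let $H(\cdot;\za)$ solve the initial value problem $H(\za)=\alpha$, $H'(\za)=0$.
\begin{itemize}[nosep,leftmargin=2em]
\item Evaluating the equation at $\za$ gives $\tfrac12\kappa^2\za^2H''(\za)=1-\delta\alpha-\za\cdot 1$ rewritten as $\tfrac12\kappa^2\za^2H''(\za)=\delta\alpha-1+\za<0$. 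Hence $H$ starts strictly decreasing.
\item Let $\zb(\za)$ be the first point after $\za$ where $H'$ vanishes again. Working with $w=H'$, which solves the differentiated equation $\tfrac12\kappa^2z^2w''+(\kappa^2+\delta-r)zw'-rw=1$, I would show that $\zb(\za)$ is finite and that $w<0$ on $(\za,\zb)$.
\item At $\zb$ the equation gives $\tfrac12\kappa^2\zb^2H''(\zb)=\delta H(\zb)-1+\zb\ge0$. Once $H(\zb)=-\beta$ is imposed, this yields $\zb\ge1+\delta\beta$. Strictness follows because $H''(\zb)=0$ together with $H'(\zb)=0$ would contradict uniqueness for the $w$-equation.
\end{itemize}
Next consider the shooting function $\Phi(\za)=H(\zb(\za);\za)$.
\begin{itemize}[nosep,leftmargin=2em]
\item As $\za\uparrow1-\delta\alpha$, the excursion of $H$ shrinks, so $\Phi\to\alpha>-\beta$.
\item As $\za\downarrow0$, I expect $\Phi\to-\infty$. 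This can be read off the explicit solution $1/\delta-z/r+az^{m_+}+bz^{m_-}$, where $m_\pm$ are the roots of $\tfrac12\kappa^2m(m-1)+(\delta-r)m-\delta=0$; when $r=0$ the term $-z/r$ is replaced by $z\log z$.
\item I would prove that $\Phi$ is continuous and strictly decreasing by differentiating in $\za$ and applying a comparison argument to $\partial_{\za}H$.
\end{itemize}
The intermediate value theorem then gives a unique root. The same variational equation, differentiated in $r$ instead of $\za$, should give that both endpoints are strictly increasing in $r$.

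I expect the strict monotonicity of $\Phi$ and the monotonicity in $r$ to be the main obstacle. If the direct comparison argument is too delicate, I would fall back on the explicit solution. The four conditions in \eqref{eq:Hbc} then determine $a$, $b$, $\za$, $\zb$, and I would reduce them to one equation in the ratio $\zb/\za$.

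\emph{Part (ii).} From \eqref{eq:skor}, $\ell_t=x_t-\log z_0-\nu t+\kappa\mathcal B_t$ and $|x_t-\log z_0|\le d$. Therefore $\xi_te^{\ell_t/\rho}$ lies within the factors $e^{\pm d/\rho}$ of the geometric Brownian motion
\[
\exp\big(-(r+\tfrac12\kappa^2+\nu/\rho)t-\kappa(1-1/\rho)\mathcal B_t\big).
\]
A direct computation shows that the expectation of this process equals $e^{-K(\rho)t}$. Integrating in time gives finiteness if and only if $K(\rho)>0$. Doing the same conditionally on $\F_t$ gives the representation with $\Xf_\rho(z)$, defined as the conditional cost started from $Z_t=z$ and normalised by $\xi_te^{\ell_t/\rho}$, together with the bounds $e^{-d/\rho}\le K\Xf_\rho\le e^{d/\rho}$.

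For regularity, I would solve the linear equation
\[
\tfrac12\kappa^2z^2f''+\big(\nu+\tfrac12\kappa^2-\kappa^2/\rho\big)zf'-K f\cdot\ldots+1=0
\]
on $[\za,\zb]$, with the exact first-order coefficients fixed by Itô's formula applied to $\xi e^{\ell/\rho}f(Z)$. The boundary conditions are the Neumann-type conditions $f+\rho zf'=0$ at $\za$ and $\zb$. These come from the $\dd\la$ and $\dd\lb$ terms, since $\dd Z=Z\,\dd\ell$ at the boundary. A verification argument, using the boundedness just proved, identifies $f$ with $\Xf_\rho$.

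To obtain \eqref{eq:XH}, set $g=f+\rho zf'$. Differentiating the equation for $f$ shows that $g$ solves the same equation as $-H'$, which is the differentiated version of \eqref{eq:Hode}. Both vanish at $\za$ and $\zb$, so uniqueness of that boundary value problem gives $g=-H'$.

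Finally, $\Xf_\rho$ is strictly decreasing. Suppose $\Xf_\rho'$ had a nonnegative interior maximum. Differentiating the equation for $\Xf_\rho$ once more and applying the maximum principle should give a contradiction, using $-H'>0$ from (i) and $\Xf_\rho'=-\Xf_\rho/(\rho z)<0$ at the endpoints.
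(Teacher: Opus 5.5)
Part (ii) follows the paper's route. Part (i), however, contains a step that fails for $r>0$ and leaves both of its substantive claims unproved.

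Write $\varsigma=\kappa^2/2$ and $H(\za e^s)=1/\delta-\za f(s)$. Then $\varsigma f''+\nu f'-\delta f=-e^s$, $f(0)=(1-\delta\alpha)/(\delta\za)$, $f'(0)=0$, and $H'<0$ means $f'>0$.

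\emph{The shooting domain.} If $r>0$, the positive root $m_1$ of $\varsigma m^2+\nu m-\delta=0$ exceeds one. For $\za\le\za^\ast:=(1-\delta\alpha)(m_1-1)/m_1$ the solution satisfies $f'(s)=(e^s-e^{m_2s})/r+c_1m_1(e^{m_1s}-e^{m_2s})$ with $c_1\ge0$. This is positive for every $s>0$, so $H'$ never vanishes again. Hence $\zb(\za)$ is not finite on all of $(0,1-\delta\alpha)$. The limit you need is $\zb(\za)\to\infty$ and $\Phi\to-\infty$ as $\za\downarrow\za^\ast$; nothing can be read off at $\za\downarrow0$.

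\emph{Uniqueness.} This rests entirely on strict monotonicity of $\Phi$ (increasing in $\za$, given your limits). You also need to exclude solutions with $\za\ge1-\delta\alpha$ or with $H'$ vanishing inside the band. ``Differentiate in $\za$ and compare'' does not control the moving endpoint.

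The paper's key idea is to parametrize by the width $d=\log(\zb/\za)$ rather than by $\za$. Let $f_d$ solve the Neumann problem $\varsigma f_d''+\nu f_d'-\delta f_d=-e^s$ on $(0,d)$ with $f_d'(0)=f_d'(d)=0$. Every solution of the free boundary problem yields this $f_d$. The four conditions collapse to $\za=(1-\delta\alpha)/(\delta f_d(0))$ and the single equation $f_d(d)/f_d(0)=L$. The maximum principle gives $f_d'>0$ and $f_d''(0)>0$, which are exactly $H'<0$ and $\za<1-\delta\alpha$.

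The ratio $f_d(d)/f_d(0)$ increases strictly from $1$ to $\infty$. The reason is that $\partial_df_d=c_dw_0$ with $c_d=-f_d''(d)/w_0'(d)>0$, where $w_0$ is the homogeneous solution with $w_0(0)=1$ and $w_0'(0)=0$, and a Wronskian computation shows that $f_d/w_0$ is strictly decreasing. Your fallback of one equation in $\zb/\za$ points this way, but this monotonicity is the proof.

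\emph{Monotonicity in $r$.} ``Differentiate in $r$'' supplies no sign. The paper extends $H_r$ by $\alpha$ below the band and by $-\beta$ above it, giving a $C^1$, piecewise $C^2$ function with the sign pattern of a double obstacle problem. It then shows $H_{r_1}\le H_{r_2}$ for $r_1<r_2$ by the maximum principle, using $H_{r_1}'\le0$. Strict monotonicity of both endpoints follows from the boundary point lemma together with smooth fit.

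In part (ii) the equation should read $\varsigma z^2\Xf_\rho''+(\delta-r+\kappa^2)z\Xf_\rho'-r\Xf_\rho+1=0$. Its operator $\mathcal G$ has killing rate $r$, which may be zero or negative; this is exactly the case the lemma is designed to cover. So the maximum principle cannot be applied to $\mathcal G$ directly for either of two steps:
\begin{itemize}
\item uniqueness of the Dirichlet problem $\mathcal GF=0$, $F=0$ at $\za,\zb$, satisfied by $F=\Xf_\rho+\rho z\Xf_\rho'+H'$;
\item the sign of $z\Xf_\rho'$, which solves $\mathcal Gu=0$ with boundary values $-\Xf_\rho/\rho<0$.
\end{itemize}
The fix is a conjugation. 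In $x=\log z$, multiplying by $e^{x/\rho}$ turns $\mathcal G$ into an operator with killing rate $K(\rho)>0$, and turns the conditions $\Xf_\rho+\rho z\Xf_\rho'=0$ into Neumann conditions. After that both conclusions follow. The positivity of $-H'$ plays no role in showing $\Xf_\rho'<0$.

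Finally, $\zb>1+\delta\beta$ does not follow from uniqueness for the inhomogeneous $w$-equation. It follows from the equation itself: $w(\zb)=w'(\zb)=0$ would give $\varsigma\zb^2w''(\zb)=1$, hence $w>0$ just to the left of $\zb$, a contradiction.
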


Here $\Xf_\rho$ is the ratio of wealth to consumption, and $H$ and
$\Xf_\rho$ have explicit formulas. Curvature does not enter the band in part (i), but it enters the
finiteness condition in part (ii) through $\rho$.

Suppose now that the benchmark is geometric, $\log Y_t=\log Y_0+gt+sB_t$
with constants $g$ and $s$. Then
$\xi_tY_t^\theta=Y_0^\theta\hat\xi_t$, where
$\hat\xi_t=\exp(-(\hat r+\hat\kappa^2/2)t-\hat\kappa\hat{\mathcal B}_t)$,
$\hat{\mathcal B}=(\hat\kappa_aB+\hat\kappa_pW)/\hat\kappa$, and
\begin{equation}\label{eq:hat}
 \hat\kappa_a=\kappa_a-\theta s,\qquad \hat\kappa_p=\kappa_p,\qquad
 \hat\kappa^2=\hat\kappa_a^2+\hat\kappa_p^2,\qquad
 \hat r=r-\theta(g-\kappa_as)-\tfrac12\theta^2s^2 .
\end{equation}
We call $(\hat r,\hat\kappa)$ the \emph{artificial market}, and objects
computed in it carry hats: $\hat z_\alpha$, $\hat z_\beta$, $\hat H$,
$\hat\Xf_\rho$, and $\hat K(\rho)$.

\begin{theorem}\label{thm:br}
Let the benchmark be geometric, and assume $\hat\kappa>0$ and
$\hat K(\gamma)>0$, where $\hat r$ may be any real number. Put
$\hat X_0=X_0/Y_0^\theta$. If
$\hat X_0/h_{0-}\in[\hat\Xf_\gamma(\hat z_\beta),\hat\Xf_\gamma(\hat z_\alpha)]$,
let $\hat z_0$ solve $\hat\Xf_\gamma(\hat z_0)=\hat X_0/h_{0-}$ and set
$h_0=h_{0-}$. If the ratio lies above this interval let
$\hat z_0=\hat z_\alpha$, and if it lies below let $\hat z_0=\hat z_\beta$,
with $h_0=\hat X_0/\hat\Xf_\gamma(\hat z_0)$ in both cases. Let
$(x,\la,\lb)$ solve \eqref{eq:skor} in the artificial market, driven by
$\hat{\mathcal B}$ and started at $\hat z_0$, and let $\hat Z=e^x$. Then
\[
 h^\ast_t=h_0\exp(\ell_t/\gamma)
\]
is the unique solution of \eqref{eq:primal}. Optimal wealth is
$X_t=Y_t^\theta h^\ast_t\hat\Xf_\gamma(\hat Z_t)$, and the loadings of
$\dd X_t/X_t$ on $(\dd B_t,\dd W_t)$ are
\begin{equation}\label{eq:port}
 \frac{1}{\Rr_t}(\kappa_a,\kappa_p)+\Big(1-\frac1{\Rr_t}\Big)(\theta s,0),
 \qquad \frac1{\Rr_t}=-\frac{\hat Z_t\hat\Xf_\gamma'(\hat Z_t)}
 {\hat\Xf_\gamma(\hat Z_t)} .
\end{equation}
\end{theorem}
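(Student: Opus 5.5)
The plan is to reduce \eqref{eq:primal} to a standard ILA problem in the artificial market and then verify optimality by a duality (Lagrangian) argument built from $\hat H$ and $\hat\Xf_\gamma$ of Lemma~\ref{lem:band}. Since $\xi_tY_t^\theta=Y_0^\theta\hat\xi_t$, the budget constraint becomes $\E\int_0^\infty\hat\xi_th_t\,\dd t\le\hat X_0$. The objective $U(h)$ does not involve $Y$. So the problem is the individual ILA problem of \citet{CJK} in the market $(\hat r,\hat\kappa)$, driven by the single Brownian motion $\hat{\mathcal B}$, with wealth $\hat X_0$ and reference $h_{0-}$. The only novelty is that $\hat r$ may be any real number. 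This is why Lemma~\ref{lem:band} was stated for all $r\in\R$.

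\textbf{Feasibility.} For the candidate $h^\ast=h_0e^{\ell/\gamma}$, Lemma~\ref{lem:band}(ii) with $\rho=\gamma$ and $\hat K(\gamma)>0$ gives finite cost $h_0\hat\Xf_\gamma(\hat z_0)$. By the choice of $\hat z_0$ and $h_0$, this equals $\hat X_0$ in all three cases, so the budget binds. The conditional identity in Lemma~\ref{lem:band}(ii) gives the wealth formula $X_t=Y_t^\theta h^\ast_t\hat\Xf_\gamma(\hat Z_t)$ after multiplying by $Y_t^\theta$ and dividing by $\xi_t$. Admissibility of $h^\ast$ follows from continuity and monotonicity of $\ell$ plus the same integrability bounds, since $v=u(h^\ast)$ has $\dd v^\pm$ proportional to $(h^\ast)^{1-\gamma}\dd\la,\dd\lb$. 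The needed moment estimate reduces to $\hat K(\gamma)>0$ via the bound $e^{-d/\gamma}\le\hat K\hat\Xf_\gamma\le e^{d/\gamma}$.

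\textbf{Optimality.} Introduce the multiplier $y$ defined by $u'(h_0)=y\,\hat Z_0$ up to normalisation, so that along the optimum $\hat Z_t=y\hat\xi_te^{\delta t}/u'(h^\ast_t)$ is the ratio of marginal value of wealth to marginal utility. For an arbitrary admissible $h$, apply Itô's formula (with jumps, for finite-variation $v$) to $e^{-\delta t}\hat H(Z^h_t)$-type expressions, or more directly establish the pointwise inequality
\[
u(h_t)\dd t-\alpha\dd v^+_t-\beta\dd v^-_t\le y\hat\xi_te^{\delta t}h_t\dd t+\dd M_t+(\text{terms vanishing for }h^\ast).
\]
Here the boundary conditions $\hat H(\hat z_\alpha)=\alpha$, $\hat H(\hat z_\beta)=-\beta$, $\hat H'=0$ at the endpoints, and the ODE \eqref{eq:Hode} make the drift nonpositive. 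This is the standard ILA verification: $\hat H'<0$ ensures increases are cheapest at $\hat z_\alpha$ and cuts at $\hat z_\beta$, and the ODE is the Bellman equation in the inaction region. Integrating and using the budget yields $U(h)\le U(h^\ast)$, with equality forcing $h=h^\ast$; uniqueness uses strict concavity of $u$. I expect the main obstacle to be this step: handling the jump at time zero (which is why $\hat z_0$ is clamped to the band), and showing that local martingales and the terms $e^{-\delta T}(\cdot)$ vanish as $T\to\infty$ uniformly over admissible $h$. I would first try a truncation with localising times, using the admissibility integrability condition and $\hat K(\gamma)>0$ to pass to the limit.

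\textbf{Portfolio.} Write $X_t/Y_t^\theta=h^\ast_t\hat\Xf_\gamma(\hat Z_t)$. Since $\ell$ has finite variation and $h^\ast$ moves only when $\ell$ does, the $\dd\hat{\mathcal B}$ loading of this ratio's log is $\hat Z\hat\Xf_\gamma'/\hat\Xf_\gamma\cdot(-\hat\kappa)=\hat\kappa/\Rr$. Here one uses \eqref{eq:XH} together with $\hat H'=0$ at the endpoints, so that the $\dd\ell$ terms cancel and wealth is continuous after time zero. Adding $\theta\,\dd\log Y=\theta s\,\dd B+\dots$ and decomposing $\hat\kappa\hat{\mathcal B}=(\kappa_a-\theta s)B+\kappa_pW$ gives loadings $\frac1\Rr(\kappa_a-\theta s,\kappa_p)+(\theta s,0)$, which is \eqref{eq:port}.
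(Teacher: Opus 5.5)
Your reduction to the artificial market agrees with the paper. So do the choice of multiplier that makes the index of $h^\ast$ equal to the reflected process $\hat Z$, the binding budget and wealth formula from the cost function, and the computation of the loadings, including the cancellation of the $\dd\ell$ terms through \eqref{eq:XH} at the endpoints.

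The gap is the optimality step. It carries the theorem, and you leave it as a sketch that would not work as stated.
\begin{itemize}
\item Your displayed inequality has misplaced discount factors and contains no term involving $h^\ast$ or an adjoint. Integrating it therefore cannot produce $U(h)\le U(h^\ast)$.
\item Applying It\^o's formula to $\hat H$ at a competitor's index does not help. That index leaves $[\hat z_\alpha,\hat z_\beta]$, and $\hat H$ is not a value function.
\item Equation \eqref{eq:Hode} is not a Bellman equation. By Lemma~\ref{ec:lem:fk} it yields the identity $\hat H(\hat Z_t)=\E[\int_t^\infty e^{-\delta(s-t)}(1-\hat Z_s)\dd s\,|\,\F_t]$. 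So $\hat H$ is the adjoint of the \emph{optimal} plan, the marginal value of its utility level, and no drift inequality is involved.
\end{itemize}

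The missing idea is to work in utility coordinates $v=u(h)$. With $P_t=\lambda e^{\delta t}\xi_tY_t^\theta$, the map $v\mapsto v-P_tu^{-1}(v)$ is strictly concave, and the cost $\alpha\,\dd v^++\beta\,\dd v^-$ is convex in $v$ because $\alpha+\beta>0$. By contrast, $U$ is not concave in $h$, so strict concavity in these coordinates is also what gives uniqueness.

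The supergradient at $v^\ast=u(h^\ast)$ is $e^{-\delta t}(1-\hat Z_t)$. Fubini with optional projection gives
\[
 \E\int_0^\infty e^{-\delta t}(1-\hat Z_t)(v_t-v^\ast_t)\,\dd t
 =\E\int_{[0,\infty)}e^{-\delta s}\hat H(\hat Z_s)\,\dd(v-v^\ast)_s ,
\]
and you can also get this from It\^o's product rule for $e^{-\delta t}\hat H(\hat Z_t)(v_t-v^\ast_t)$ with localization. Since $-\beta\le\hat H\le\alpha$, the competitor's term is at most its adjustment cost. Equality holds for $v^\ast$ for three reasons:
\begin{itemize}
\item $\ell$ increases only where $\hat H=\alpha$;
\item $\ell$ decreases only where $\hat H=-\beta$;
\item an upward (downward) jump at time zero occurs only when $\hat z_0=\hat z_\alpha$ ($\hat z_\beta$), because $\hat\Xf_\gamma$ is decreasing.
\end{itemize}
This is Theorem~\ref{ec:thm:foc}, and the multiplier converts the Lagrangian inequality into optimality in \eqref{eq:primal}.

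The limits you were worried about need only the competitor's admissibility and the boundedness of $\hat Z$ and $\hat H$. For example, $\E[e^{-\delta T}|v_T-v^\ast_T|]\to0$ by dominated convergence.

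Separately, your admissibility argument is incomplete. For $\gamma\ne1$ the cost bound does control the running utility of $h^\ast$, since $e^{-\delta t}(h^\ast_t)^{1-\gamma}$ is proportional to $\hat\xi_th^\ast_t/\hat Z_t$. It does not control the regulator term $\E\int_0^\infty e^{-\delta t}(h^\ast_t)^{1-\gamma}(\dd\la_t+\dd\lb_t)$, which needs the It\^o argument of Lemma~\ref{ec:lem:variation}.
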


In \eqref{eq:port} the first term is the Merton exposure scaled by the
revealed risk tolerance $1/\Rr_t$, and the second term is the exposure to
common risk that wealth needs because spending moves with the benchmark.
Applying \eqref{eq:XH} in the artificial market gives $\Rr_t=\gamma$ at the
endpoints of the band and $\Rr_t>\gamma$ inside it, where $\hat H'<0$. Since
the loadings equal $(\theta s,0)+\Rr_t^{-1}(\hat\kappa_a,\hat\kappa_p)$, they
lie closer to the benchmark loading $(\theta s,0)$ between revisions than at
revision dates. Existence and uniqueness extend to deflators between two
constant multiples of a geometric deflator with finite cost, together with
a logarithmic stability estimate on which the existence results rely.

\subsection{Common Risk: Explicit Equilibria}\label{sec:common}

In this subsection $\kappa_p=0$, so $\kappa=\kappa_a>0$ and all households
face the same state price density. Households share $(\delta,\alpha,\beta)$
and may differ in $(\gamma_i,\theta_i)$, wealth, and inherited consumption.
Let $\za$, $\zb$, $H$, and $\Xf_\rho$ be the objects of Lemma~\ref{lem:band}
for the market $(r,\kappa)$. In the game with $N$ households define
\begin{equation}\label{eq:AN}
 \vb_N=\frac{\frac1N\sum_{j}\gamma_j^{-1}}
             {1+\frac1N\sum_{j}\theta_j(\gamma_j^{-1}-1)},\qquad
 \varphi_i=\frac1{\gamma_i}+\theta_i\Big(1-\frac1{\gamma_i}\Big)\vb_N .
\end{equation}
Averaging the second identity shows that $\vb_N$ is the average of the
$\varphi_i$. In the mean field game the averages are replaced by
expectations, and we write $\vb$ and $\varphi$. Inherited absolute
consumption $c^i_{0-}>0$ induces relative references
$h^i_{0-}=c^i_{0-}\bar c_{0-}^{\,-\theta_i}$ with
$\log\bar c_{0-}=N^{-1}\sum_j\log c^j_{0-}$, or the corresponding
expectation, assumed finite, in the mean field case.

\begin{theorem}\label{thm:common}
Assume $\theta_i\vb_N\le1$ and $K(1/\varphi_i)>0$ for all $i$. Let
$z_0\in[\za,\zb]$, let $(x,\la,\lb)$ solve \eqref{eq:skor} with
$\mathcal B=B$, and suppose the initial data are synchronized in the sense
that $X^i_0=c^i_{0-}\Xf_{1/\varphi_i}(z_0)$ for all $i$. Then
\begin{equation}\label{eq:commonplan}
 c^i_t=c^i_{0-}\exp\big(\varphi_i\ell_t\big),\qquad i=1,\dots,N,
\end{equation}
is a Nash equilibrium of the game with $N$ households, and each $c^i$ is
the unique best response to the others. If the types take values in a
compact set, the same formula with $(\vb,\varphi)$ defines a mean field
equilibrium. Without synchronized initial data, the same conclusions hold
with an upward revision at time zero if $z_0=\za$, $c^i_{0-}$ is replaced in
\eqref{eq:commonplan} by $c^i_0=X^i_0/\Xf_{1/\varphi_i}(\za)$, and
$\log(c^i_0/c^i_{0-})\ge\theta_iN^{-1}\sum_j\log(c^j_0/c^j_{0-})$ for every
$i$. They also hold with a downward revision at time zero if $z_0=\zb$,
$c^i_0=X^i_0/\Xf_{1/\varphi_i}(\zb)$, and the inequality is reversed. In
the mean field case the average
is replaced by the expectation, and $\log X^i_0$ is assumed integrable.
\end{theorem}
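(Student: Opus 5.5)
The plan is to guess and verify. I would fix the candidate profile \eqref{eq:commonplan} for all $j\ne i$, compute the environment it creates for household $i$, and show by a Lagrangian verification built on the function $H$ of Lemma~\ref{lem:band} that $c^i=c^i_{0-}e^{\varphi_i\ell}$ is the unique best response.

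\emph{Reducing household $i$'s problem.} Under the candidate, $\log\bar Y_t=\log\bar c_{0-}+\vb_N\ell_t$. If household $i$ deviates to $c$, the benchmark becomes $\bar Y=c^{1/N}A$, where $A=\prod_{j\ne i}(c^j)^{1/N}$ is fixed and continuous of finite variation. Then
\[
 h=c^{1-\theta_i/N}A^{-\theta_i}.
\]
In terms of $c$, household $i$ therefore faces an ILA problem with the common deflator $\xi$. Its felicity is $c\mapsto u\big(c^{1-\theta_i/N}A_t^{-\theta_i}\big)$, and the costs $\alpha\,\dd v^+$ and $\beta\,\dd v^-$ are charged on $v=u(h)$.
\begin{itemize}[nosep,leftmargin=2em]
\item Because $\theta_i<1$, this felicity is strictly concave in $c$ (exponent $(1-\theta_i/N)(1-\gamma_i)<1$ in the power case, a log otherwise). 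So a first order characterization is sufficient.
\item Since $A$ is of finite variation and a deviation must keep $h\in\mathcal A$, the set of deviations corresponds to finite variation $c$, and $v$ is a well defined finite variation process.
\end{itemize}

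\emph{First order condition and verification.} In the one household problem behind Lemma~\ref{lem:band} and Theorem~\ref{thm:br}, optimality means the following. The index $\lambda\xi_t/\text{(marginal felicity)}$ equals the reflected process $Z_t$, and $v$ moves up only when $Z=\za$ and down only when $Z=\zb$.
\begin{itemize}[nosep,leftmargin=2em]
\item At the candidate, $\log h^i$ is linear in $\ell$. I would compute the marginal felicity $\partial_c u(h)=(1-\theta_i/N)\,u'(h)h/c$ and require its logarithm to be $\text{const}-\ell_t$. This gives a linear equation for $\varphi_i$, whose solution should be exactly \eqref{eq:AN}. Averaging over $i$ closes the system through $\vb_N$, which is the algebraic check to do first.
\item The slope of $\log h^i$ in $\ell$ is $\varphi_i(1-\theta_i/N)-\theta_i\,\overline{\varphi}_{-i}$, with $\overline{\varphi}_{-i}=N^{-1}\sum_{j\ne i}\varphi_j$. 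It must be nonnegative, so that increases of $\la$ raise relative felicity and increases of $\lb$ lower it. This sign requirement is where $\theta_i\vb_N\le1$ enters.
\end{itemize}
With this in place I would run the CJK verification. Apply It\^o to $e^{-\delta t}$ times a dual value built from $H(Z_t)$ and \eqref{eq:Hode}, and combine it with the boundary conditions \eqref{eq:Hbc} and concavity. This bounds $U(h)-\lambda\,\E\int\xi c\,\dd t$ for every admissible deviation, with equality only at the candidate; strict concavity gives uniqueness.

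Two points remain. The budget binds because $\E\int_0^\infty\xi_te^{\varphi_i\ell_t}\dd t=\Xf_{1/\varphi_i}(z_0)$ by Lemma~\ref{lem:band}(ii) with $\rho=1/\varphi_i$; this cost is finite since $K(1/\varphi_i)>0$, and it matches the synchronized wealth. The mean field claim follows by the same argument with $1/N\to0$: $Y$ is exogenous, $h=cY^{-\theta}$, and compact types make the expectations defining $\vb$ and $\log\bar c_{0-}$ finite.

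\emph{Revisions at time zero and the main obstacle.} For unsynchronized data the candidate starts at $\za$ with an upward jump, or at $\zb$ with a downward jump. The stated inequality on $\log(c^i_0/c^i_{0-})$ says exactly that each household's relative standard $h^i$ jumps in the direction ($h^i_0\ge h^i_{0-}$, resp.\ $\le$) consistent with the time zero term of $\dd v^{\pm}$. The benchmark's own jump is accounted for through the term $\theta_iN^{-1}\sum_j$.

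The main obstacle is the verification step. The felicity weight $A^{-\theta_i}$ is state dependent, so $v$ moves even when $c$ does not. The dual inequality must therefore be proved along the Skorokhod regulator rather than borrowed from Theorem~\ref{thm:br}, which assumes a geometric benchmark. I would first try to absorb $A$ into a time change of the reference and treat the $\ell$-dependence of $A$ as part of the reflection.
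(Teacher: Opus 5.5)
Your skeleton is the paper's. Requiring the candidate's index to equal $e^{x_t}$ gives exactly the identity $\gamma_i\varphi_i-(\gamma_i-1)\theta_i\vb_N=1$. The sign of $\varphi_i-\theta_i\vb_N=(1-\theta_i\vb_N)/\gamma_i$ is where $\theta_i\vb_N\le1$ enters, and the budget comes from the cost formula with $\rho=1/\varphi_i$. Your reading of the time-zero inequality also matches: it is an upward jump of $h^i$ where $H(\za)=\alpha$, resp.\ a downward jump where $H(\zb)=-\beta$.

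\textbf{The gap.} The gap is the sufficiency step, which you leave open, and the justification you do give for it is wrong. Strict concavity of $c\mapsto u(c^{1-\theta_i/N}A_t^{-\theta_i})$ does not make household $i$'s objective concave in $c$. The adjustment cost $\E\int e^{-\delta t}(\alpha\,\dd v^++\beta\,\dd v^-)$ is convex but not monotone in the path $v$, and $v=u(h)$ is nonlinear in $c$. The composition therefore need not be convex, and stationarity in $c$ does not by itself give optimality. Your proposed repair, a time change of the reference with the $\ell$-dependence of $A$ folded into the reflection, is not needed. It is also unclear that it would give an inequality valid against arbitrary finite-variation competitors.

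\textbf{The missing idea: change coordinates, not time.} Take $v=u(h)$ as the control; your worry that $v$ moves when $c$ does not then disappears. Write the expenditure as $c=\varpi\,h^{1/p_i}$ with $p_i=1-\theta_i/N$ and $\varpi=A^{\theta_i/p_i}$. Household $i$ then maximizes the Lagrangian \eqref{eq:lagrangian} with price $P_t=\lambda e^{\delta t}\xi_t\varpi_t$. In these coordinates:
\begin{itemize}
\item running utility is linear in $v$;
\item the expenditure $P_t\Phi(v_t)$, with $\Phi=\psi_{p_i}\circ u^{-1}$, is strictly convex because $\Phi'(v)=p_i^{-1}h^{\gamma_i+1/p_i-1}$ is increasing;
\item the adjustment cost is convex.
\end{itemize}
So the Lagrangian is strictly concave for every positive price process, geometric or not.

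Verification (Theorem~\ref{ec:thm:foc}) then has three ingredients. The first is the pointwise concavity inequality with slope $1-z^\ast_t$. The second is the optional projection identity $\E\int_0^\infty e^{-\delta t}(1-z^\ast_t)(v_t-v^\ast_t)\,\dd t=\E\int_{[0,\infty)}e^{-\delta s}\Psi^\ast_s\,\dd(v-v^\ast)_s$. The third is the obstacle and support conditions on the adjoint $\Psi^\ast$.

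The other households enter only through $\varpi$. All you must check is that the candidate's index $P_t\psi_{p_i}'(h^i_t)/u'(h^i_t)$ equals $e^{x_t}$, and your linear equation for $\varphi_i$ already gives this. Lemma~\ref{ec:lem:fk} then yields $\Psi^\ast=H(Z)$: apply It\^o to $e^{-\delta t}H(Z_t)$, and the regulator terms vanish because $H'=0$ at the endpoints. Uniqueness follows from strict concavity in $v$, not in $c$.

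\textbf{Admissibility.} You also need $h^i\in\mathcal A$ and $\J_P(h^i)>-\infty$, not only a finite cost. Since $(1-\gamma_i)(\varphi_i-\theta_i\vb_N)=\varphi_i-1$, utility along the plan is a constant times $e^{(\varphi_i-1)\ell}$. Lemma~\ref{ec:lem:variation} with $K(1/\varphi_i)>0$ then gives integrability of running utility and of the regulators, using the logarithmic estimates when $\gamma_i=1$.
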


In this equilibrium all households revise at the same dates and in the
same direction, and only the size of the revision differs across types. The
condition $\theta_i\vb_N\le1$ ensures that the relative consumption of each
type moves in the direction of $\ell$ or stays constant.

\begin{corollary}\label{cor:common}
Let the households be identical. Then $\varphi_i=1/\hg$ for every $N$ and in
the mean field game. In equilibrium every household consumes and invests as
a single ILA household with risk aversion $\hg$, discount rate
$\delta$, and adjustment costs $(\alpha,\beta)$, hence with the same $L$.
In particular:
\begin{enumerate}[label=(\roman*),nosep,leftmargin=2em]
\item consumption is revised when $\lambda e^{\delta t}\xi_tc_t^{\hg}$ reaches
$\za$ or $\zb$, where $\lambda>0$ is a constant fixed by the budget, and
these thresholds depend neither on $\theta$ nor on $\gamma$,
\item the band for the ratio of consumption to wealth is
$[1/\Xf_{\hg}(\za),1/\Xf_{\hg}(\zb)]$,
\item the share of wealth in the risky technology equals $\kappa/(\sigma\hg)$
at revision dates, which is the equilibrium share under common noise in
\citet{LZ} and \citet{DRS}, and it is strictly smaller inside the band.
\end{enumerate}
\end{corollary}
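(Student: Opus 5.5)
The plan is to specialize Theorem~\ref{thm:common} to identical types and then read off the three claims from Lemma~\ref{lem:band} and Theorem~\ref{thm:br}.

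\emph{Step 1: the exponent $\varphi_i$.} With $\gamma_j=\gamma$ and $\theta_j=\theta$, \eqref{eq:AN} gives
\[
\vb_N=\frac{\gamma^{-1}}{1+\theta(\gamma^{-1}-1)}=\frac{1}{\gamma+\theta(1-\gamma)}=\frac1{\hg},
\]
and this does not depend on $N$. It then follows that
\[
\varphi_i=\frac1\gamma+\theta\Big(1-\frac1\gamma\Big)\frac1{\hg}=\frac{\hg+\theta(\gamma-1)}{\gamma\hg}=\frac{\gamma}{\gamma\hg}=\frac1{\hg}.
\]
The same computation holds with expectations in the mean field game. Moreover $\theta\vb_N=\theta/\hg\le1$ is equivalent to $\theta\le\gamma+\theta-\theta\gamma$, that is, to $\theta\gamma\le\gamma$, which holds since $\theta<1$. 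Hence the hypotheses of Theorem~\ref{thm:common} reduce to $K(\hg)>0$.

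\emph{Step 2: identify the equilibrium with a single ILA household.} The equilibrium plan \eqref{eq:commonplan} is $c_t=c_0\exp(\ell_t/\hg)$, with initial wealth $X_0=c_0\Xf_{\hg}(z_0)$. By Theorem~\ref{thm:br} with $\theta=0$, $Y\equiv1$, curvature $\hg$, and market $(r,\kappa)$, this is exactly the optimal plan of a lone ILA household with risk aversion $\hg$ and parameters $(\delta,\alpha,\beta)$. It has the same wealth process $X_t=c_t\Xf_{\hg}(Z_t)$ and the same portfolio, so $L$ is unchanged.

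\emph{Step 3: the three claims.} For (i), the index of that household is $Z_t=\lambda e^{\delta t}\xi_tc_t^{\hg}$. This follows from \eqref{eq:skor} by writing $x_t=\log Z_t$ with $\log\xi_t$ contributing $-(r+\kappa^2/2)t-\kappa B_t$ and $\hg\log c_t$ contributing $\ell_t$; the constant $\lambda$ is fixed by $Z_0=z_0$, that is, by the budget. Revisions occur where $\ell$ increases, which is at $\za$ or $\zb$. By Lemma~\ref{lem:band}(i) these endpoints depend only on $(r,\kappa,\delta,\alpha,\beta)$. For (ii), $c_t/X_t=1/\Xf_{\hg}(Z_t)$, and $\Xf_{\hg}$ is strictly decreasing on $[\za,\zb]$, so the ratio ranges over the stated interval.

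For (iii), apply \eqref{eq:port} with $\theta s$ replaced by $0$ in the single-household representation. The loading of $\dd X/X$ on $\dd B$ is then $\kappa/\Rr_t$, so the risky share is $\kappa/(\sigma\Rr_t)$. By \eqref{eq:XH} with $\rho=\hg$,
\[
\frac1{\Rr}=-\frac{z\Xf'}{\Xf}=\frac{1}{\hg}\Big(1+\frac{H'(z)}{\Xf(z)}\Big).
\]
This equals $1/\hg$ at the endpoints, where $H'=0$, and is strictly less than $1/\hg$ inside the band, where $H'<0$.

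The step I expect to need most care is the consistency between the two representations in (iii). The paper's \eqref{eq:port} is written relative to an actual benchmark loading $\theta s$, while the single-household picture uses no benchmark. So I would verify directly from $X_t=c_t\Xf_{\hg}(Z_t)$ by It\^o's formula, where $c$ has finite variation. This gives a diffusion coefficient of $\dd X/X$ equal to $Z\Xf'/\Xf$ times the coefficient $-\kappa$ of $\dd x$, that is, $\kappa/\Rr$, which settles the claim without reference to $s$.
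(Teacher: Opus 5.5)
Your proposal is correct and follows the paper's proof essentially step for step. The paper likewise gets $\vb_N=\varphi_i=1/\hg$ and $\theta\vb_N\le1$ (via $\hg-\theta=\gamma(1-\theta)>0$), then $X_t=c_t\Xf_{\hg}(Z_t)$ and the interval in (ii) from the cost lemma. It then obtains the risky share $-(\kappa/\sigma)Z\Xf_{\hg}'(Z)/\Xf_{\hg}(Z)$ from It\^o's formula with $c$ of finite variation, and \eqref{eq:XH} with $H'=0$ at the endpoints and $H'<0$ inside gives (iii).

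The one item you leave out is the identification with \citet{LZ} and \citet{DRS}. The paper obtains it by noting that $1/\hg=\gamma^{-1}/(1+\theta(\gamma^{-1}-1))$, which is the identical-agent equilibrium exposure in those papers.
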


Theorem~\ref{thm:common} requires synchronized initial data or initial
revisions of all relative standards in a common direction, and for
identical households these cases cover every positive wealth and inherited
consumption. The next result adds uniqueness among symmetric equilibria.

\begin{theorem}\label{thm:symmetricunique}
Suppose all households face only common risk and have identical parameters,
wealth $X_0>0$, and inherited consumption $c_{0-}>0$, and let $K(\hg)>0$.
For every $N\ge1$ and for the mean field game there is exactly one symmetric
equilibrium, up to indistinguishability. It is the optimal policy of a
single ILA household with risk aversion $\hg$, including an
initial revision when $X_0/c_{0-}$ lies outside
$[\Xf_{\hg}(\zb),\Xf_{\hg}(\za)]$.
\end{theorem}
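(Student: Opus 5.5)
Existence and uniqueness are handled separately. Existence follows from Theorem~\ref{thm:common} and Corollary~\ref{cor:common}; uniqueness comes from reducing any symmetric equilibrium to a fixed-point identity for the benchmark.

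\emph{Existence.} For identical households, $\varphi_i=1/\hg$. The condition $\theta\vb_N=\theta/\hg\le1$ holds because $\hg-\theta=\gamma(1-\theta)>0$, and $K(1/\varphi_i)=K(\hg)>0$ is assumed. Let $c^\ast$ be the optimal policy of a single ILA household with risk aversion $\hg$ in the market $(r,\kappa)$. By Theorem~\ref{thm:br} with $\theta=0$ and $\gamma$ replaced by $\hg$, this is $c^\ast_t=c_0e^{\ell_t/\hg}$, with $z_0$ and $c_0$ chosen from $X_0/c_{0-}$ as in that theorem. We then split into cases according to where $X_0/c_{0-}$ lies.
\begin{itemize}
\item If $X_0/c_{0-}\in[\Xf_{\hg}(\zb),\Xf_{\hg}(\za)]$, the data are synchronized and Theorem~\ref{thm:common} applies directly.
\item If the ratio lies above the interval, we take $z_0=\za$ and use the upward-revision clause. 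The required inequality becomes $(1-\theta)\log(c_0/c_{0-})\ge0$, which holds because $c_0>c_{0-}$ and $\theta<1$.
\item If the ratio lies below the interval, the downward-revision case is symmetric.
\end{itemize}
The same argument covers the mean field game.

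\emph{Uniqueness.} Take any symmetric equilibrium with common plan $c$. In the finite game the benchmark is $\bar Y=\prod_j(c^j)^{1/N}=c$. In the mean field game, with identical deterministic data and $\FB$-adapted plans, consistency gives $Y=c$ as well. In both cases $h=c^{1-\theta}$.

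The key step is to show that $c$ solves the single-household problem with curvature $\hg$. Since $u(h)=u(c^{1-\theta})$ is a positive multiple of the power utility of $c$ with exponent $1-\hg$, it is tempting to say the relative utility of the equilibrium plan equals a rescaled ILA utility with curvature $\hg$. Optimality of each household, however, is taken against a fixed benchmark, not along the diagonal. So we instead use the best-response characterization. Each household's $h$ must be the unique maximizer of \eqref{eq:primal} given $Y$. Its first-order (dual) characterization from \citet{CJK} states that $h$ is constant unless $\lambda e^{\delta t}\xi_tY_t^{\theta}h_t^{\gamma}$ sits at $\za$ or $\zb$, with minimal reflection.

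Substituting $Y=c$ and $h=c^{1-\theta}$ turns this index into $\lambda e^{\delta t}\xi_tc_t^{\hg}$. This is exactly the reflection characterization of the single ILA household with curvature $\hg$, by Corollary~\ref{cor:common}(i). Minimality of reflection pins down $c$ uniquely through the Skorokhod problem \eqref{eq:skor} for $\log$ of that index, once $\lambda$ is fixed. The budget binds in both problems and $\Xf_{\hg}$ is strictly decreasing, so $\lambda$, and with it $z_0$ and $c_0$, is determined uniquely. Indistinguishability then follows from pathwise uniqueness of the Skorokhod problem.

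\emph{Main obstacle.} The difficulty is that the best-response characterization in Theorem~\ref{thm:br} is stated only for geometric benchmarks, whereas here $Y$ is a priori arbitrary. The first thing to try is the extension mentioned after Theorem~\ref{thm:br} to deflators within constant multiples of a geometric one. If that does not cover the case, we would derive the necessary conditions directly by the duality and verification argument of \citet{CJK}. In that argument the dual process $\lambda e^{\delta t}\xi_tY^\theta_t$ is general, and optimality of $h$ requires that the ILA first-order condition $H$-band be respected pathwise. This is necessary for any benchmark, since the individual problem is strictly concave. The mean field consistency identification $Y=c$ also needs care: it requires that a symmetric equilibrium plan be $\FB$-adapted, which holds because there is no private risk.
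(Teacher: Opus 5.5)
Your existence part is fine and matches the paper: with $\varphi_i=1/\hg$, the computation behind Theorem~\ref{thm:common} applies in all three cases for $X_0/c_{0-}$.

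The uniqueness part has a genuine gap, at the point you flag as the main obstacle. Your argument needs the necessary optimality conditions, namely a supporting multiplier and an adjoint in $[-\beta,\alpha]$ with the support conditions, for an \emph{arbitrary} competing symmetric equilibrium $c$ whose price $\lambda e^{\delta t}\xi_tc_t^{\theta}$ is unknown. These are not available. The paper's necessity result, Theorem~\ref{ec:thm:necessary}, requires a bounded index $\lambda e^{\delta t}\xi_tc_t^{\hg}$, a finite Lagrangian, and local interiority. The extension after Theorem~\ref{thm:br} (Theorem~\ref{ec:thm:comparable}) requires the price to lie within constant multiples of a geometric price, i.e.\ $\log c_t-gt-sB_t$ bounded. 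For a competing equilibrium these properties are essentially the conclusion, so that route is circular. Your fallback, that the conditions are necessary ``since the individual problem is strictly concave'', does not hold up: concavity gives sufficiency and uniqueness, not necessity, and without an index bound the adjoint need not even be well defined. There is a second problem for finite $N$. A household's best response is not problem \eqref{eq:primal} at a fixed $Y$, because its own consumption moves $\bar Y$. As in the proof of Theorem~\ref{thm:common}, its expenditure is $\varpi\psi_p(h)$ with $p=1-\theta/N$. Its index does turn out to be proportional to $e^{\delta t}\xi_tc_t^{\hg}$, but you have to compute this rather than assume it.

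The missing idea is that the observation you set aside as ``tempting'' does work, once you pair it with a deviation that lands the deviator on the diagonal. Let $\tilde c$ be the unique optimum of the single household with curvature $\hg$.
\begin{enumerate}
\item Since $u_\gamma(c^{1-\theta})=(1-\theta)u_{\hg}(c)$, and the same factor relates both variation measures, every symmetric profile gives each household utility $(1-\theta)U_{\hg}(c)$.
\item Hold the others at $c$ and let one household switch to $\check c=c^{1-q}\tilde c^{\,q}$ with $q=(1-\theta)/p\in(0,1]$. Its relative consumption in the game is $\check c^{\,p}c^{-\theta(N-1)/N}=\tilde c^{\,1-\theta}$, with the correct reference.
\item H\"older's inequality for the measure $\xi_t\,\dd t\,\dd\mathbb P$ gives $\E\int_0^\infty\xi_t\check c_t\,\dd t\le(\E\int_0^\infty\xi_tc_t\,\dd t)^{1-q}(\E\int_0^\infty\xi_t\tilde c_t\,\dd t)^{q}\le X_0$. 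Since the market is complete, the deviation can be financed.
\item The deviation yields $(1-\theta)U_{\hg}(\tilde c)$. By uniqueness in Theorem~\ref{thm:br}, this strictly exceeds $(1-\theta)U_{\hg}(c)$ unless $c=\tilde c$.
\end{enumerate}
In the mean field game, the plan $c^{\theta}\tilde c^{\,1-\theta}$ against the fixed benchmark $Y=c$ plays the same role. This argument needs no multiplier, no index bound, and no first-order condition for the competing equilibrium.
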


For a symmetric profile, utility equals $1-\theta$ times the utility of a
single household with risk aversion $\hg$, so the symmetric problem reduces
to the problem of one household. Uniqueness then follows from a deviation
constructed with H\"older's inequality, and no a priori bound on the index or
the multiplier of a competing equilibrium is needed. The same reduction
shows that the equilibrium maximizes utility over symmetric profiles that
can be financed, so within this class relative concerns do not lead to
overconsumption. We do not know whether asymmetric equilibria exist.

\subsection{Idiosyncratic Risk: Stationary Equilibrium}\label{sec:stationary}

In this subsection $\kappa_a=0$ for every type, so that $\kappa=\kappa_p>0$
and the benchmark is deterministic. Define
\begin{equation}\label{eq:gstar}
 g^\ast=\frac{\E[(r+\kappa^2/2-\delta)/\gamma]}{1+\E[\theta(1/\gamma-1)]}
\end{equation}
and, for each type, the artificial market $\hat r=r-\theta g^\ast$,
$\hat\kappa=\kappa$, with
$\hat\zeta=2(\delta-\hat r)/\kappa^2-1$ and the corresponding $\hat z_\alpha$,
$\hat z_\beta$, and $\hat\Xf_\gamma$.

\begin{theorem}\label{thm:stat}
Assume that the type takes values in a compact subset of the parameter
domain, that $\hat K(\gamma)>0$ almost surely, and that $\log h_{0-}$ is
integrable. Let
\[
 Y_0=\exp\Big(\frac{\E[\log h_{0-}]}{1-\E[\theta]}\Big)
 \qquad\text{and}\qquad X_0=Y_0^\theta h_{0-}\hat\Xf_\gamma(\hat z_0),
\]
where, conditionally on the type, $\log\hat z_0$ has the density
proportional to $e^{\hat\zeta y}$ on $[\log\hat z_\alpha,\log\hat z_\beta]$.
Then $Y_t=Y_0e^{g^\ast t}$, together with the plans of
Theorem~\ref{thm:br} for $g=g^\ast$ and $s=0$, is a mean field equilibrium.
Conversely, if $Y_t=Y_0e^{gt}$ is the benchmark of a mean field equilibrium
whose artificial markets with $\hat r=r-\theta g$ satisfy $\hat K(\gamma)>0$,
then $g=g^\ast$.
\end{theorem}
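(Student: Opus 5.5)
The plan is to verify the consistency condition \eqref{eq:consistency} directly, using the best responses of Theorem~\ref{thm:br} with $s=0$ together with the stationary law of the reflected log index. Fix $g$ and the deterministic benchmark $Y_t=Y_0e^{gt}$. Since $\kappa_a=0$, \eqref{eq:hat} gives $\hat\kappa=\kappa$ and $\hat r=r-\theta g$. Conditionally on the type, the reflected process \eqref{eq:skor} in the artificial market is driven by $\hat{\mathcal B}=W$ and has drift
\[
\hat\nu=\delta-\hat r-\kappa^2/2.
\]
The assumption $\hat K(\gamma)>0$ lets us apply Theorem~\ref{thm:br} type by type. In the first part, $X_0$ is chosen so that $\hat X_0/h_{0-}=\hat\Xf_\gamma(\hat z_0)$ with $\hat z_0\in[\hat z_\alpha,\hat z_\beta]$, so there is no initial revision and $h_0=h_{0-}$. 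Hence
\[
\log c^\ast_t=\theta(\log Y_0+gt)+\log h_{0-}+\ell_t/\gamma .
\]
Because $Y$ is deterministic and $\ell$ is a functional of the type, $\hat z_0$ and $W$, all of which are independent of $B$, the conditional expectation given $\F^B_t$ reduces to an unconditional expectation. Consistency therefore becomes
\[
\log Y_0+gt=\E[\theta](\log Y_0+gt)+\E[\log h_{0-}]+\E[\ell_t/\gamma].
\]

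The key step is computing $\E[\ell_t\mid\text{type}]$. Note that $\hat\zeta=2\hat\nu/\kappa^2$. The density proportional to $e^{\hat\zeta y}$ on $[\log\hat z_\alpha,\log\hat z_\beta]$ is the stationary law of Brownian motion with drift $\hat\nu$ and variance $\kappa^2$ reflected on that interval. I would check this through the adjoint equation $\tfrac12\kappa^2p''-\hat\nu p'=0$ with zero-flux boundary conditions. Starting from stationarity, $\E[x_t]=\E[x_0]$. Taking expectations in \eqref{eq:skor}, and using that the stochastic integral has mean zero, gives
\[
\E[\ell_t\mid\text{type}]=-\hat\nu t=\big(r+\kappa^2/2-\delta-\theta g\big)t .
\]
Substituting into the consistency condition and matching terms gives two equations. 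The constant terms give $(1-\E\theta)\log Y_0=\E\log h_{0-}$, which is the stated $Y_0$. The slopes give
\[
g\big(1-\E\theta+\E[\theta/\gamma]\big)=\E[(r+\kappa^2/2-\delta)/\gamma],
\]
that is, $g=g^\ast$ from \eqref{eq:gstar}. The denominator equals $1+\E[\theta(1/\gamma-1)]$ and is positive because $\theta<1$.

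Integrability is needed throughout. Compactness of the type set, together with continuity of the endpoints in the parameters from Lemma~\ref{lem:band}(i), bounds $\log(\hat z_\beta/\hat z_\alpha)$ and $1/\gamma$ uniformly in the type. This makes $\E|\ell_t/\gamma|<\infty$. I expect this measurability and uniformity bookkeeping to be the main technical obstacle, rather than the algebra.

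For the converse, let $Y_t=Y_0e^{gt}$ be an equilibrium benchmark with $\hat K(\gamma)>0$. Theorem~\ref{thm:br} then gives the unique plans, possibly with an initial revision, and an arbitrary starting point $\hat z_0$. Consistency at times $t$ and $0$ gives
\[
(1-\E\theta)gt=\E[\ell_t/\gamma]-\E[\ell_0/\gamma],
\]
where the initial jump is contained in $h_0$. The term $\ell_0$ does not depend on $t$, so only the first expectation matters after dividing by $t$. Since $x$ stays in a band of type-dependent but uniformly bounded width, $\ell_t=x_t-x_0-\hat\nu t+\kappa\hat{\mathcal B}_t$ satisfies $\E[\ell_t\mid\text{type}]=-\hat\nu t+O(1)$ uniformly. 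Dividing by $t$ and letting $t\to\infty$ recovers the same linear equation for $g$. That equation has a unique solution because its coefficient is positive, so $g=g^\ast$.
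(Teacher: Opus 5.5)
Your proposal is correct and follows the paper's proof. The existence part is the same computation: the stationary law gives $\E[\ell_t\mid\text{type}]=-\hat\nu t$, and matching constants and slopes then gives $Y_0$ and $g^\ast$. For the converse, the paper invokes the aggregation identity of Proposition~\ref{prop:agg} together with the bounded indices, whereas you redo that computation directly from the Skorokhod representation and consistency at times $0$ and $t$; this is the same argument in slightly different coordinates.
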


\begin{corollary}\label{cor:stat}
In the equilibrium of Theorem~\ref{thm:stat}:
\begin{enumerate}[label=(\roman*),nosep,leftmargin=2em]
\item $g^\ast$ does not depend on the distribution of $(\alpha,\beta)$, hence,
for given discount rates, not on that of $L$. For identical households
$g^\ast=(r+\kappa^2/2-\delta)/\hg$.
\item Between revisions the consumption of a household grows at rate
$\theta g^\ast$, while its relative consumption is constant.
\item Let $\hat p$ be the stationary density of $\log\hat Z$. Conditionally on
the type, $\E\la_t=\tfrac12\kappa^2\hat p(\log\hat z_\alpha)\,t$ and
$\E\lb_t=\tfrac12\kappa^2\hat p(\log\hat z_\beta)\,t$. For identical
households
$(1-\theta)g^\ast=\kappa^2\big(\hat p(\log\hat z_\alpha)
-\hat p(\log\hat z_\beta)\big)/(2\gamma)$.
\item The share of wealth in the private technology is $\kappa/(\sigma_p\Rr_t)$
with $\Rr_t$ from \eqref{eq:port}. It can depend on $\theta$ through $\hat r$
whenever $g^\ast\ne0$, although households invest in independent technologies.
\end{enumerate}
\end{corollary}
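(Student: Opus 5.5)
The plan is to prove the four items of Corollary~\ref{cor:stat} one at a time, reading each off the equilibrium constructed in Theorem~\ref{thm:stat}. Throughout, the benchmark is $Y_t=Y_0e^{g^\ast t}$ and the plans are those of Theorem~\ref{thm:br} with $s=0$.

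\emph{Item (i).} In \eqref{eq:gstar} only $(\gamma,\theta)$, $r$, $\kappa$, and $\delta$ enter; the adjustment costs $(\alpha,\beta)$ do not. Hence, once the discount rates are given, $g^\ast$ does not depend on the distribution of $(\alpha,\beta)$ and therefore not on that of $L$. For identical households the expectations drop out, and
\[
1+\theta(1/\gamma-1)=\hg/\gamma
\]
gives $g^\ast=(r+\kappa^2/2-\delta)/\hg$.

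\emph{Item (ii).} Between revisions the regulators $\la,\lb$ are constant, so $\ell$ is constant. By Theorem~\ref{thm:br}, $h^\ast_t=h_0e^{\ell_t/\gamma}$, so $h^\ast$ is constant on such intervals. Then $c^\ast=h^\ast Y^\theta$ grows at rate $\theta g^\ast$.

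\emph{Item (iii).} This is the only step that needs real work. I would condition on the type and start $\log\hat Z$ from its stationary law. In the artificial market, $x$ is a reflected Brownian motion with drift $\hat\nu=\delta-\hat r-\kappa^2/2$ and variance $\kappa^2$. Its stationary density on $[\log\hat z_\alpha,\log\hat z_\beta]$ is proportional to $e^{2\hat\nu y/\kappa^2}=e^{\hat\zeta y}$, which is exactly the initial law prescribed in Theorem~\ref{thm:stat}. Stationarity therefore holds at every $t$.

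Next I apply It\^o's formula (Tanaka if needed) to $f(x_t)$ for a $C^2$ test function $f$ with $f'=1$ near the lower endpoint and $f'=0$ near the upper one. Taking expectations, $\E f(x_t)$ is constant in $t$. This leaves the identity
\[
\E\la_t=-t\,\E\big[\hat\nu f'(x)+\tfrac12\kappa^2f''(x)\big]
\]
under the stationary law $\hat p$. Integrating by parts against $\hat p$, which satisfies $\tfrac12\kappa^2\hat p'=\hat\nu\hat p$, leaves only the boundary term $\tfrac12\kappa^2\hat p(\log\hat z_\alpha)\,t$. The argument for $\lb$ is symmetric.

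For identical households I combine two facts:
\begin{itemize}
\item The consistency condition, with $\log c^\ast=\log h_0+\ell/\gamma+\theta g^\ast t+\theta\log Y_0$ and $\E\log h_0$ linear in $t$, requires $(1-\theta)g^\ast t=\E\ell_t/\gamma$.
\item By the previous step, $\E\ell_t=\E\la_t-\E\lb_t=\tfrac12\kappa^2\big(\hat p(\log\hat z_\alpha)-\hat p(\log\hat z_\beta)\big)\,t$.
\end{itemize}
Together these give the stated formula. Since $h_{0-}$ is fixed and $X_0$ is matched, there is no initial revision, so $h_0=h_{0-}$.

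\emph{Item (iv).} With $s=0$, \eqref{eq:port} gives a loading on $\dd W$ equal to $\kappa/\Rr_t$, so the share of wealth in the private technology is $\kappa/(\sigma_p\Rr_t)$. Here $\Rr_t$ is computed from $\hat\Xf_\gamma$ in the artificial market $\hat r=r-\theta g^\ast$. This market depends on $\theta$ when $g^\ast\ne0$. To show that the dependence is genuine, I would use the strict monotonicity of the band endpoints in $r$ from Lemma~\ref{lem:band}(i): the band, and with it $\Rr$ at a given index, varies with $\hat r$. Varying $\theta$ moves $\hat r$, which gives the claimed dependence on $\theta$.
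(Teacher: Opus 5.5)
Your proposal is correct and follows the paper's proof essentially step for step. Parts (i) and (ii) are read off \eqref{eq:gstar} and $h^\ast=h_0e^{\ell/\gamma}$. Part (iii) uses stationarity of the reflected index, It\^o's formula with a test function whose two endpoint derivatives are chosen separately, integration by parts against $\hat p$ using $\tfrac12\kappa^2\hat p'=\hat\nu\hat p$, and then consistency in the form $(1-\theta)g^\ast t=\E\ell_t/\gamma$. Part (iv) is \eqref{eq:port} with $s=0$. The differences are minor. $\E\log h_0$ is constant rather than linear in $t$, and it cancels against $(1-\theta)\log Y_0$ by the choice of $Y_0$. Your monotonicity argument for a genuine $\theta$ dependence in (iv) goes beyond the paper's proof, which only needs the weaker claim that the share can depend on $\theta$.
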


Stationarity requires the initial index to be drawn from its invariant law
given the type, so Theorem~\ref{thm:stat} covers only initial wealth drawn in
this particular way. The next subsection removes this restriction.

\subsection{Arbitrary Initial Wealth and Inherited Standards}\label{sec:exist}

Let the type $i\in\{1,\ldots,n\}$ have probability $\pi_i>0$, independently of
the Brownian motions, and put
\[
 \nu_i=\delta_i-r-\tfrac12(\kappa_{a,i}^2+\kappa_{p,i}^2),\qquad
 \Theta_1=\sum_i\pi_i\theta_i,\qquad
 \Theta_2=\sum_i\pi_i\frac{\theta_i}{\gamma_i},\qquad
 D=1-\Theta_1+\Theta_2 .
\]
Define the trend coefficients
\begin{equation}\label{eq:generalcenter}
 g=-\frac1D\sum_i\pi_i\frac{\nu_i}{\gamma_i},\qquad
 s=\frac1D\sum_i\pi_i\frac{\kappa_{a,i}}{\gamma_i},
\end{equation}
and the artificial markets
$\hat\kappa_{a,i}=\kappa_{a,i}-\theta_is$, $\hat\kappa_{p,i}=\kappa_{p,i}$,
and $\hat r_i=r-\theta_i(g-\kappa_{a,i}s)-\frac12\theta_i^2s^2$, as in
\eqref{eq:hat}. Let $[e^{a_i},e^{b_i}]$ be the band of
Lemma~\ref{lem:band} for the artificial market of type $i$ and its
$(\delta_i,\alpha_i,\beta_i)$, and put $d_i=b_i-a_i$ and
$m_i=(a_i+b_i)/2$.

\begin{theorem}\label{thm:weakexist}
Suppose every type satisfies the parameter restrictions of
Section~\ref{sec:setup}, $\hat\kappa_i>0$, $\hat K_i(\gamma_i)>0$, and
\begin{equation}\label{eq:weakcondition}
 \Theta_1+\Theta_2<1,\qquad\text{equivalently}\qquad
 \sum_i\pi_i\theta_i\big(1+\gamma_i^{-1}\big)<1 .
\end{equation}
Then for arbitrary $X_i>0$ and $h_{i,0-}>0$ there exists a mean field
equilibrium, with common and private risk, heterogeneous preferences, and
nonstationary initial data. The optimal relative consumption of each type
has a bounded index and is admissible on the infinite horizon. There are
multipliers $\lambda_i=e^{\mu_i}$ such that, with
\begin{equation}\label{eq:existconstants}
 k(\mu)=\frac1D\sum_i\pi_i\frac{m_i-\mu_i}{\gamma_i},\qquad
 R=\frac{\sum_i\pi_id_i/(2\gamma_i)}{1-\Theta_1-\Theta_2},
\end{equation}
the equilibrium benchmark satisfies
\begin{equation}\label{eq:existbound}
 \big|\log Y_t-k(\mu)-gt-sB_t\big|\le R\qquad\text{for all }t\ge0,
 \text{ almost surely.}
\end{equation}
For fixed multipliers the benchmark is unique among adapted c\`adl\`ag
benchmarks whose logarithm stays within a bounded distance of $gt+sB_t$,
and best response iteration converges to it at the geometric rate
$q=\Theta_2/(1-\Theta_1)<1$.
\end{theorem}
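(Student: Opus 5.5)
The proof is a nested fixed point: an inner contraction for the benchmark at fixed budget multipliers $\mu=(\mu_1,\dots,\mu_n)$, and an outer Brouwer argument that chooses $\mu$ so that every budget binds.

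\textbf{Inner problem: the benchmark map.} Fix $\mu$ and work with the set $\mathcal Y$ of adapted c\`adl\`ag benchmarks $Y$ with $\log Y_t=gt+sB_t+\eta_t$ and $\eta$ bounded. For such $Y$ the deflator of type $i$ is $\xi^iY^{\theta_i}$, which lies between two constant multiples of the geometric deflator of the artificial market $(\hat r_i,\hat\kappa_i)$ of \eqref{eq:hat}. By the extension of Theorem~\ref{thm:br} to such deflators, type $i$ has a unique optimizer for the multiplier $e^{\mu_i}$. The ILA first order structure gives
\[
 \gamma_i\log h^i_t=\mu_i+\log\big(e^{\delta_it}\xi^i_tY_t^{\theta_i}\big)-x^i_t,
\]
where $x^i_t$ is the log index, reflected in $[a_i,b_i]$. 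I will write $x^i=m_i+\epsilon^i$ with $|\epsilon^i|\le d_i/2$. The map is then $\Phi(Y)$, defined by $\log\Phi(Y)_t=\sum_i\pi_i\E[\log c^i_t\mid\F^B_t]$ with $c^i=h^iY^{\theta_i}$.

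\textbf{Step 1: $\Phi$ maps $\mathcal Y$ into itself with the stated bound.} Substituting the display, taking the conditional expectation, and collecting the drift and $B$ terms reproduces the trend coefficients $g$ and $s$ of \eqref{eq:generalcenter}. The constant $k(\mu)$ absorbs the $\mu_i$ and $m_i$. The remainder is controlled by $\Theta_1\|\eta\|+\Theta_2\|\eta\|+\sum\pi_id_i/(2\gamma_i)$, since $\log c^i$ contains $\theta_i(1-1/\gamma_i)\eta$ plus a reflection error. This is where I must be careful, and I plan to use the paper's logarithmic stability estimate for it. That estimate says $\log h$ is $1/\gamma$-Lipschitz in the log price, but the index term $x^i$ also moves with the price. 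The crude bound $|\theta_i\eta|+\theta_i|\eta|/\gamma_i$ therefore appears, which gives the constant $\Theta_1+\Theta_2$.

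\textbf{Step 2: the self-map and contraction.} By Step 1, $\Phi$ sends the ball $\|\eta-k\|\le R$ into itself, since $R=\sum_i\pi_id_i/(2\gamma_i)/(1-\Theta_1-\Theta_2)$ solves $R=(\Theta_1+\Theta_2)R+\sum_i\pi_id_i/(2\gamma_i)$. For the contraction at rate $q=\Theta_2/(1-\Theta_1)$, I will use a sharper decomposition. The own term $\theta_i\log Y$ is exactly linear, so I move it to the left-hand side by solving $\log Y=\Theta_1\log Y+\dots$. Only the best response part then contributes the Lipschitz factor $\theta_i/\gamma_i$, giving $\Theta_2$, and dividing by $1-\Theta_1$ gives $q$.

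Banach's theorem on this space with the sup norm then yields uniqueness among benchmarks within bounded distance of $gt+sB_t$, and convergence of best response iteration at rate $q$. Because both the Lipschitz bound and the index bound hold pathwise, the bound \eqref{eq:existbound} holds almost surely. Admissibility follows from the bounded index together with $\hat K_i(\gamma_i)>0$.

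\textbf{Outer problem: the budgets (main obstacle).} Let $F_i(\mu)=\log$ of the cost of type $i$'s plan minus $\log X_i$. $F_i$ is continuous in $\mu$, by continuous dependence of the fixed point on $\mu$ combined with dominated convergence, which applies because the index is bounded. Increasing $\mu_i$ alone lowers $h^i$ by about $1/\gamma_i$ in logs, while the feedback through $Y$ is of size at most $\Theta_2/(1-\Theta_1)<1$ times that. Costs are monotone up to bounded errors, and by the bounds of Lemma~\ref{lem:band}(ii) the cost moves like $-\mu_i/\gamma_i+O(1)$ uniformly in the other coordinates.

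Hence on a large box $[-M,M]^n$ the map $\mu\mapsto\mu+\gamma_iF_i(\mu)$, suitably truncated, points inward, and Brouwer's theorem gives a zero of $F$. At that zero all budgets bind, and Theorem~\ref{thm:br} shows that binding budgets with the first order structure imply optimality. Making this quantitative, so that the truncated map really points inward on the boundary uniformly in the other coordinates, is the step I expect to be hardest.
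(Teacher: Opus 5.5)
Your inner step is essentially the paper's Step~1. The log-Lipschitz estimate \eqref{ec:eq:loglipschitz}, with the linear own term $\Theta_1 y$ moved to the left, gives the factor $q$. The radius $R$ comes from the exact reflection of the indices at the center $y^0=k(\mu)+gt+sB_t$. Using your map $\Phi$, with factor $\Theta_1+\Theta_2$, for invariance is fine; the paper instead checks directly that the rate-$q$ map moves $y^0$ by at most $(1-q)R$.

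Two corrections to that step:
\begin{itemize}
\item The first-order display has the wrong sign. It should read $\gamma_i\log h^i_t=x^i_t-\mu_i-\log(e^{\delta_it}\xi^i_tY_t^{\theta_i})$.
\item At a non-geometric $Y$ whose logarithm is within $\varepsilon$ of the center, the log index is not reflected in $[a_i,b_i]$. By \eqref{ec:eq:comparableband} it is only confined to $[a_i-2\theta_i\varepsilon,\,b_i+2\theta_i\varepsilon]$. The $d_i/(2\gamma_i)$ term must therefore come from comparison with $y^0$, which is what your Lipschitz argument in effect does.
\end{itemize}

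\textbf{The gap: the outer step rests on a false premise.} The cost of type $i$ is not $-\mu_i/\gamma_i+O(1)$ uniformly in the other multipliers. The envelopes \eqref{ec:eq:wealthenv}, whose constants do not depend on $\mu$, pin the log expenditure $\log\mathcal E_i(\mu)$ within bounded distance of $-\mu_i/\gamma_i+\theta_i(1-\gamma_i^{-1})k(\mu)$. But $k(\mu)$ in \eqref{eq:existconstants} is linear in every $\mu_j$. Hence $\log\mathcal E(\mu)=-\mathbf A\mu+O(1)$ with $\mathbf A_{ij}=\ind_{i=j}/\gamma_i+\theta_i(1-\gamma_i^{-1})\pi_j/(D\gamma_j)$. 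Others' multipliers shift the benchmark, and through it type $i$'s spending, without bound, and the off-diagonal entries can dominate the diagonal ones.

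\emph{Two types.} Take equal weights, $(\gamma_1,\theta_1)=(10,1/2)$ and $(\gamma_2,\theta_2)=(1/10,0)$, so $\Theta_1+\Theta_2=0.275$. Then $\log\mathcal E_1\approx-0.13\mu_1-2.9\mu_2$. On the face $\mu_1=M$, $\mu_2=-M$ of $[-M,M]^2$, type~1 overspends by a factor of order $e^{2.77M}$. So $\mu\mapsto\mu+\mathrm{diag}(\gamma_i)F(\mu)$ points outward there, and truncation only produces boundary fixed points at which budgets do not bind.

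\emph{Three types.} Reshaping the box does not help in general. Take three types with equal weights, $\gamma_i=100$ and $\theta_i=0.9$, so $\Theta_1+\Theta_2=0.909$. The face conditions for a box $\prod_i[-Mw_i,Mw_i]$ require $\mathbf A_{ii}w_i>\sum_{j\ne i}|\mathbf A_{ij}|w_j$ for every $i$. Summing over $i$ gives the false inequality $0.037\sum_iw_i>0.054\sum_iw_i$.

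In both examples the remaining hypotheses can be met by choosing the market and discount rates.

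\textbf{The missing idea} is to treat this coupling exactly rather than as an error term.
\begin{itemize}
\item Since $D+\sum_i\pi_i\theta_i(1-\gamma_i^{-1})=1$, the matrix $\mathbf A$ (diagonal plus rank one) has the explicit inverse $(\mathbf A^{-1})_{ij}=\gamma_i\ind_{i=j}-\theta_i(\gamma_i-1)\pi_j$.
\item The functions $F^{\mathbf A}_i(\mu)=\log\mathcal E_i(\mu)+(\mathbf A\mu)_i$ take values in fixed compact intervals $I_i$, because $R$ and the constants $C_i^\pm$ do not depend on $\mu$.
\item Therefore $\mu\mapsto\mathbf A^{-1}\big(F^{\mathbf A}(\mu)-(\log X_i)_i\big)$ maps the compact convex parallelepiped $\mathbf A^{-1}\big(\prod_iI_i-(\log X_i)_i\big)$ into itself.
\item A Brouwer fixed point satisfies $\mathbf A\mu=F^{\mathbf A}(\mu)-(\log X_i)_i$, that is, $\mathcal E_i(\mu)=X_i$ for every type.
\end{itemize}
The outer map must be preconditioned by the full $\mathbf A^{-1}$ rather than by $\mathrm{diag}(\gamma_i)$. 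The Brouwer domain is the image of a box under $\mathbf A^{-1}$, not a coordinate box.
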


\begin{corollary}\label{cor:weakgrowth}
For homogeneous preference and market parameters, condition
\eqref{eq:weakcondition} reads $\theta<\gamma/(\gamma+1)$. Subject also to
$\hat K(\gamma)>0$, every finite distribution of positive initial wealth and
inherited standards admits a mean field equilibrium, with common risk,
private risk, or both. In every equilibrium constructed in
Theorem~\ref{thm:weakexist}, $\log Y_t/t\to g=(r+\kappa^2/2-\delta)/\hg$ almost
surely and $s=\kappa_a/\hg$. For logarithmic utility $\hat K(1)=\delta$, so
$\theta<1/2$ suffices.
\end{corollary}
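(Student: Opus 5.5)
The corollary follows from Theorem~\ref{thm:weakexist} once its objects are specialized to one type. We take $\pi$ concentrated on a single parameter vector, with $n=1$, or equivalently $n$ types that share all preference and market parameters and differ only in $X_i$ and $h_{i,0-}$. Then $\Theta_1=\theta$ and $\Theta_2=\theta/\gamma$, so \eqref{eq:weakcondition} reads $\theta(1+1/\gamma)<1$, that is, $\theta<\gamma/(\gamma+1)$. Since Theorem~\ref{thm:weakexist} allows arbitrary $X_i>0$ and $h_{i,0-}>0$, any finite distribution of initial data is covered. We realize it by splitting the population into finitely many types with identical parameters, one for each atom of the initial data. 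The hypotheses $\hat\kappa_i>0$ and $\hat K(\gamma)>0$ are assumed. For $\hat\kappa>0$ we use $\hat\kappa_a=\kappa_a-\theta s=\kappa_a(1-\theta/\hg)$ and $\hat\kappa_p=\kappa_p$. Because $\hg-\theta=\gamma(1-\theta)>0$, we get $\hat\kappa_a\ne0$ whenever $\kappa_a>0$. Hence $\hat\kappa^2>0$ because $\kappa^2>0$. This covers common risk only, private risk only, or both.

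For the growth statement, $D=1-\theta+\theta/\gamma=\hg/\gamma$. Formula \eqref{eq:generalcenter} then gives $g=-(\nu/\gamma)/D=-\nu/\hg=(r+\kappa^2/2-\delta)/\hg$ and $s=(\kappa_a/\gamma)/D=\kappa_a/\hg$. By \eqref{eq:existbound}, $|\log Y_t-k(\mu)-gt-sB_t|\le R$ for all $t$. Dividing by $t$ and using $B_t/t\to0$ almost surely, by the strong law of large numbers for Brownian motion, we obtain $\log Y_t/t\to g$ almost surely. The constants $k(\mu)$ and $R$ are finite, and $\mu$ is $\F_0$ measurable and finite.

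For logarithmic utility $\hg=1$, so $g=r+\kappa^2/2-\delta$ and $s=\kappa_a$. Plugging these into \eqref{eq:hat} gives $\hat\kappa_a=(1-\theta)\kappa_a$, $\hat\kappa_p=\kappa_p$, and $\hat r=r-\theta(g-\kappa_a^2)-\tfrac12\theta^2\kappa_a^2$. By \eqref{eq:K} at $\rho=1$, $\hat K(1)=\hat r+(\delta-\hat r)=\delta>0$ for any $\hat r$. So $\hat K(1)>0$ holds automatically, and the only remaining condition is $\theta<1/(1+1)=1/2$.

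The only step needing care is the reduction from a general finite distribution of initial data to finitely many types. It rests on checking that Theorem~\ref{thm:weakexist} does not require types to have distinct parameters, since $g$, $s$, $D$, and the bands depend only on the common parameters. Everything else is direct substitution.
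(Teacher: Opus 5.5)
Your proof is correct and follows the paper's argument: substitute the homogeneous coefficients into \eqref{eq:weakcondition} and \eqref{eq:generalcenter}, using $D=\hg/\gamma$. Then divide \eqref{eq:existbound} by $t$ with $B_t/t\to0$, and evaluate $\hat K(1)=\delta$. Your check that $\hat\kappa_a=\kappa_a\gamma(1-\theta)/\hg$ makes $\hat\kappa>0$ automatic, and your realization of a finite distribution of initial data as types with identical parameters, spell out details that the paper's one-line proof leaves implicit.
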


By this estimate, a change of $\varepsilon$ in the log price moves log
relative consumption by at most $\varepsilon/\gamma$. With the
multipliers fixed, the benchmark map is therefore a contraction with factor
$q$ on the ball of radius $R$ around $k(\mu)+gt+sB_t$. Bounds on expenditure
whose constants do not depend on the multipliers confine the multipliers to a
compact set, and Brouwer's theorem gives multipliers for which every budget
binds.

\subsection{Finite Populations}\label{sec:finite}

In this subsection all risk is common, and all households trade in the
complete market with $\xi_t=\exp(-(r+\kappa^2/2)t-\kappa B_t)$, $\kappa>0$.
A finite type distribution can then be read as an actual population, with
weights equal to the observed fractions of the types. We write $U_a$ for
the utility \eqref{eq:U} with the parameters of type $a$.

\begin{theorem}\label{thm:exact}
Consider a mean field equilibrium with types $a=1,\ldots,n$, weights
$\pi_a>0$, and consumption plans $c^a=h^aY^{\theta_a}$, in which type $a$ has
initial wealth $X_a$ and relative reference $h_{a,0-}>0$. Suppose
$\pi_a=N_a/N$ for positive integers $N_a$ with $\sum_aN_a=N$. In the game
with $N$ households, give $N_a$ households the primitives of type $a$ and
the plan $c^a$. This profile is a Nash equilibrium. For every integer
$J\ge1$ the same plans form a Nash equilibrium of the replica game with
$JN_a$ households of type $a$. The benchmark and the consumption, financing
wealth, and utility of every type are the same in all these games.
\end{theorem}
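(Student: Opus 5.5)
The plan is to compare budgets rather than solve the finite game: a deviation in the finite game is turned into an affordable plan for the price taker of the mean field equilibrium, whose plan $h^a$ is optimal by Definition~\ref{def:mfe}. The argument has three steps: identify the benchmark, show that a unilateral deviation is affordable for the price taker, and then pass to replicas.

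\emph{Step 1: the benchmark.} With only common risk, the market is complete for every household with deflator $\xi$. The mean field plans $c^a=h^aY^{\theta_a}$ are $\FB$ adapted, because the types and the initial data are deterministic within each group. Since $\pi_a=N_a/N$, the consistency condition \eqref{eq:consistency} reads $\log Y_t=\sum_a\pi_a\log c^a_t$. In the $N$ household game the profile therefore has
\[
 \bar Y_t=\prod_a (c^a_t)^{N_a/N}=Y_t .
\]
Hence every household's relative consumption is $h^a$, and its utility is $U_a(h^a)$. Its plan can be financed because $\E\int_0^\infty\xi_tc^a_t\,\dd t\le X_a$ in the mean field problem. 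I will also check that the inherited consumptions assigned to the households reproduce the relative references $h_{a,0-}$. This holds because the geometric mean of inherited consumption is consistent in the same way as $Y$, and it is the point where the primitives of the two games must be matched.

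\emph{Step 2: a deviation is affordable for the price taker.} Let household $i$ of type $a$ switch to a plan $c'>0$ that it can finance, $\E\int_0^\infty\xi c'\,\dd t\le X_a$, while the others keep their plans fixed. Because the household is itself part of the geometric benchmark, the new benchmark is $Y'=Y(c'/c^a)^{1/N}$. Its relative consumption is therefore
\[
 h'=c'Y'^{-\theta_a}=Y^{-\theta_a}(c')^{1-w}(c^a)^{w},\qquad w=\theta_a/N\in[0,1).
\]
By the weighted arithmetic--geometric mean inequality, $Y^{\theta_a}h'\le(1-w)c'+wc^a$ pointwise. Multiplying by $\xi$ and integrating gives
\[
 \E\int_0^\infty\xi_tY_t^{\theta_a}h'_t\,\dd t\le(1-w)X_a+wX_a=X_a .
\]
The deviation requires $h'\in\mathcal A$, and $h'$ has the same $h_{a,0-}$ because inherited consumptions are not changed. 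So $h'$ is feasible in \eqref{eq:primal} for the benchmark $Y$ and wealth $X_a$. Optimality of $h^a$ then gives $U_a(h')\le U_a(h^a)$, so no deviation is profitable.

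Only four facts are used here: all risk is common and traded; $\theta_a<1$, so that $w\in[0,1)$; the benchmark includes the household itself; and utility depends on $h$ alone.

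\emph{Step 3: replicas and invariance.} In the replica game with $JN_a$ households of type $a$, the weights $JN_a/(JN)=\pi_a$ are unchanged, so Step~1 again gives $\bar Y=Y$. Step~2 repeats with $w=\theta_a/(JN)$. The benchmark, the plans $c^a$, the relative consumption $h^a$, and hence the utility are the same in every game. Financing wealth $X^a_t=\xi_t^{-1}\E[\int_t^\infty\xi_sc^a_s\,\dd s\,|\,\F_t]$ depends only on $c^a$ and $\xi$, so it is the same as well.

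I expect the main obstacle to be the bookkeeping in Step~2 rather than the inequality itself. The deviating plan must be adapted to the filtration used in \eqref{eq:primal}, which is automatic here because only $B$ is present. The budget inequality must be the right characterization of financeability, which completeness supplies. The initial jump and the relative reference must also be handled consistently, so that $h'$ lies in the same admissible set $\mathcal A$ with the same $h_{a,0-}$ as in the mean field problem.
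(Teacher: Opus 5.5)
Your proposal is correct and follows the paper's proof. At the empirical frequencies the geometric benchmark reproduces $Y$, and a unilateral deviation enters the benchmark as the power $(c'/c^a)^{1/N}$. The resulting relative path is therefore affordable for the price taker facing $Y$, and the replica case only changes the exponent to $\theta_a/(JN)$. The one difference is in bounding the cost: you use the pointwise weighted AM--GM inequality $(c')^{1-w}(c^a)^{w}\le(1-w)c'+wc^a$, whereas the paper uses H\"older's inequality for the measure $\xi_t\,\dd t\,\mathbb P(\dd\omega)$. Both suffice because both plans cost at most $X_a$.
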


The proof compares budgets. Suppose a household of type $a$ deviates to
$\tilde c$ while the others keep their plans. The benchmark becomes
$Y(\tilde c/c^a)^{1/N}$, and the relative consumption of the deviating
household becomes $h^a(\tilde c/c^a)^{1-\theta_a/N}$. A price taker facing
$Y$ can buy this relative consumption at cost
$\E\int\xi_t(c^a_t)^{\theta_a/N}\tilde c_t^{\,1-\theta_a/N}\dd t$, and
H\"older's inequality bounds this cost by $X_a$ because both $c^a$ and
$\tilde c$ cost at most $X_a$. Hence every relative consumption path that the
household can afford in the finite game is also affordable for a price taker
with the same wealth. Because the benchmark is a geometric average, the
deviation enters as a power, and the argument involves no passage to a
limit.

\begin{remark}\label{rem:scope}
The proof of Theorem~\ref{thm:exact} uses four properties of the model.
All risk is common and the market is complete for it, so the plans of one
type coincide across households and their geometric average equals the mean
field benchmark. The benchmark is the geometric average of the consumption
of all households including the one that deviates. Relative concerns lie in
$[0,1)$, which puts the exponent in H\"older's inequality in $(0,1]$.
Preferences depend on consumption only through the path of relative
consumption. The proof does not use the form of \eqref{eq:U}, the
reflected structure of the policies, or uniqueness of best responses.
Theorem~\ref{thm:exact} therefore holds when $U_a$ is any
functional of relative consumption paths, provided the admissible class is
the same in the mean field problem and in the game. An example is time
separable power utility of $cY^{-\theta}$, which gives the preferences of
\citet{Gali}. For an arithmetic average the deviation does not enter the
benchmark as a power, and the argument does not apply directly.
\end{remark}

Theorem~\ref{thm:weakexist} without private risk, applied to the empirical
type distribution, supplies equilibria that satisfy the hypotheses of
Theorem~\ref{thm:exact}, so a
Nash equilibrium exists in every population that meets the conditions of
Theorem~\ref{thm:weakexist} at its own composition.
A direct contraction argument in the finite game covers a different range
of small populations. If
households cannot trade the private risks of others, the budget comparison fails.
For identical households with $\theta>0$, the reflected plans of the
stationary equilibrium, combined with the benchmark of a finite
population, then cannot be financed from any finite wealth.

Theorem~\ref{thm:exact} keeps the composition fixed. To compare
populations whose type frequencies approach a target distribution we need
continuity of individual choices and of the budget normalization, and the
next theorem provides it without assuming uniqueness.

\begin{theorem}\label{thm:populationlimit}
Consider $n$ fixed types in the complete market with state price density
$\xi$ and arbitrary $X_i>0$ and $h_{i,0-}>0$. Let $\pi_i>0$ with
$\sum_i\pi_i=1$, and suppose the hypotheses of Theorem~\ref{thm:weakexist}
hold at $\pi$ without private risk. For positive integers $N_i^{(N)}$ with
$\sum_iN_i^{(N)}=N$ and $\pi^{(N)}_i=N^{(N)}_i/N\to\pi_i$, choose for each
large $N$ an equilibrium of Theorem~\ref{thm:weakexist} for the weights
$\pi^{(N)}$ and assign its plan for type $i$ to $N_i^{(N)}$ households.
These profiles are Nash equilibria. Every sequence of them has a
subsequence along which, for a mean field equilibrium with weights $\pi$,
every type $i$, and every $T<\infty$,
\[
 \E\!\int_0^T\!|\log Y^{(N)}_t-\log Y_t|\,\dd t\to0,\quad
 \E\!\int_0^T\!|\log c^{(N)}_{i,t}-\log c_{i,t}|\,\dd t\to0,\quad
 \E\!\int_0^\infty\!\xi_t|c^{(N)}_{i,t}-c_{i,t}|\,\dd t\to0 .
\]
Utilities converge along the same subsequence, and the financing wealth
processes converge uniformly in probability on compact time intervals. If
the limiting equilibrium is unique within the class of
Theorem~\ref{thm:weakexist}, the whole sequence converges.
\end{theorem}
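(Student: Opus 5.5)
The plan has four stages: show the finite profiles are Nash equilibria, extract compactness from the uniform bounds of Theorem~\ref{thm:weakexist}, pass to the limit in the fixed-point conditions, and then upgrade to convergence of costs, utilities, and wealth.

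\emph{Nash property.} This is immediate. The hypotheses of Theorem~\ref{thm:weakexist} are strict inequalities in $\pi$: $\Theta_1+\Theta_2<1$, $\hat\kappa_i>0$, and $\hat K_i(\gamma_i)>0$. The quantities $g$, $s$, $\hat r_i$, $\hat\kappa_i$ and the bands $[a_i,b_i]$ depend continuously on $\pi$; for the bands this uses the implicit function theorem for the boundary problem of Lemma~\ref{lem:band}(i). So the hypotheses hold at $\pi^{(N)}$ for all large $N$, an equilibrium exists there, and Theorem~\ref{thm:exact} turns it into a Nash equilibrium of the $N$-household game.

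\emph{Compactness.} Write $\lambda^{(N)}_i=e^{\mu^{(N)}_i}$ for the multipliers and $Y^{(N)}$ for the benchmarks. From the proof of Theorem~\ref{thm:weakexist} I will reuse two facts. First, the expenditure bounds confine $\mu^{(N)}$ to a compact set, and their constants depend only on $g,s,d_i,m_i,\hat K_i$ and the data $X_i,h_{i,0-}$, so they can be taken uniform in $N$. Second, \eqref{eq:existbound} gives $|\log Y^{(N)}_t-k^{(N)}(\mu^{(N)})-g^{(N)}t-s^{(N)}B_t|\le R^{(N)}$ with $R^{(N)}$ bounded. I then pass to a subsequence with $\mu^{(N)}\to\mu$. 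The benchmark of the limit candidate is built directly rather than taken as a limit of $Y^{(N)}$. Let $Y$ be the unique fixed point of the benchmark map at weights $\pi$ and multipliers $\mu$, which Theorem~\ref{thm:weakexist} provides inside the ball of radius $R$.

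\emph{Passing to the limit.} I compare $Y^{(N)}$ and $Y$ with the contraction estimate: log relative consumption is Lipschitz in the log deflator with constant $1/\gamma_i$, uniformly for deflators within constant multiples of a geometric one. This gives
\[
\sup_t|\log Y^{(N)}_t-\log Y_t|\le q\,\sup_t|\log Y^{(N)}_t-\log Y_t|+\epsilon_N ,
\]
where $\epsilon_N$ collects three effects: the change of weights in the average $\sum_i\pi_i\log c_i$, the change of $\mu$, and the change of the trend $(g,s)$. For the last one a pathwise sup over all $t$ is too much to ask, since $(g^{(N)}-g)t+(s^{(N)}-s)B_t$ is unbounded. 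I therefore expect to work instead with the deflators $\xi Y^{\theta_i}$ on $[0,T]$, in $L^1(\dd t\otimes\dd\mathbb P)$. Since $q<1$ uniformly, this yields the first two convergences in the statement, with $\log c^{(N)}_i=\log h^{(N)}_i+\theta_i\log Y^{(N)}$. Optimality of the limit plan given $Y$ holds by construction. The budget equations pass to the limit as well, because the budget maps are continuous in $(\mu,\pi)$, so $(Y,h)$ is a mean field equilibrium at $\pi$.

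\emph{Costs, utilities, wealth.} For $\E\int_0^\infty\xi|c^{(N)}-c|\,\dd t\to0$ I would use the uniform bounds $e^{-d/\rho}\le K\Xf\le e^{d/\rho}$ together with the reflected structure: $c^{(N)}_t\le C\exp(\ell^{(N)}_t/\gamma+\theta\log Y^{(N)}_t)$ is dominated by a fixed geometric process with finite cost once $\hat K_i(\gamma_i)$ is bounded below uniformly. Uniform integrability then combines with the finite-horizon convergence and a tail estimate beyond $T$. Utilities converge by the same domination, where the $\dd v^\pm$ terms are controlled by $\E\int e^{-\delta t}\dd\ell$. Wealth $X_t=\xi_t^{-1}\E[\int_t^\infty\xi c\,|\,\F_t]$ converges uniformly in probability on compacts by Doob's maximal inequality applied to the martingale of cost differences. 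If the limit equilibrium is unique, every subsequence has the same limit, so the whole sequence converges.

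\emph{Main obstacle.} I expect the hardest step to be the passage to the limit in the fixed point, because the trends $(g^{(N)},s^{(N)})$ move with $N$. The stability estimate is stated for deflators within constant multiples of a \emph{fixed} geometric deflator, and $\xi Y^{(N)\theta}/(\xi Y^\theta)$ is not uniformly bounded on $[0,\infty)$. My first attempt would be to apply the Lipschitz estimate on $[0,T]$, treating the remaining trend difference as a perturbation of size $O(|g^{(N)}-g|T+|s^{(N)}-s|\sup_{t\le T}|B_t|)$. The remaining question is whether the infinite-horizon index problem localizes to $[0,T]$ up to a discounted tail.
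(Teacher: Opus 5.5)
Your outline matches the paper's in its frame: the Nash property from Theorem~\ref{thm:exact}, multipliers in a compact set uniformly near $\pi$, the limit benchmark defined as the fixed point at $(\pi,\mu)$, Doob's inequality for wealth, and subsequences for uniqueness. But the step everything else rests on is missing, as your last paragraph concedes.

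\textbf{The displayed contraction inequality cannot be used.} It compares two quantities that are almost surely infinite as soon as $(g^{(N)},s^{(N)})\ne(g,s)$, since by \eqref{eq:existbound}
$\log Y^{(N)}_t-\log Y_t=(g^{(N)}-g)t+(s^{(N)}-s)B_t+O(1)$.

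\textbf{The Lipschitz estimate does not localize.} You cannot apply \eqref{ec:eq:loglipschitz} on $[0,T]$ with a perturbation of size $O(|g^{(N)}-g|T+|s^{(N)}-s|\sup_{t\le T}|B_t|)$. The sandwich argument behind that estimate needs $e^{-\varepsilon}P\le Q\le e^{\varepsilon}P$ on all of $[0,\infty)$, because the plan on $[0,T]$ depends on prices after $T$ through the adjoint $\Psi$. Here the price ratio is unbounded after $T$.

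\textbf{What is left unproved.} You have neither the two local $L^1$ convergences nor the budget identity $\mathcal E_i(\pi,\mu)=X_i$. Step~2 of the proof of Theorem~\ref{thm:weakexist} gives continuity in $\mu$ only at fixed weights. Your sentence that ``the budget maps are continuous in $(\mu,\pi)$'' therefore assumes exactly the missing step.

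\textbf{The paper's replacement: stability from values.} The paper replaces the quantitative Lipschitz estimate by a qualitative stability result obtained from values, Lemma~\ref{ec:lem:localstab}. Suppose $P^n$ and $P$ lie within fixed constant multiples of geometric prices whose coefficients converge, $K(\gamma)>0$ at the limit, and $P^n\to P$ in $\dd t\otimes\mathbb P$ measure on finite intervals.
\begin{itemize}
\item The envelopes give the cross expenditure bound \eqref{ec:eq:crossprice}, which decays exponentially at a uniform rate.
\item Hence $\E\int_0^\infty e^{-\delta t}|P^n_t-P_t|(h^n_t+h_t)\dd t\to0$, so $v^n=u(h^n)$ is $o(1)$-optimal for $\J_P$.
\item Testing the midpoint $(v^n+v)/2$, using convexity of the adjustment cost and strict convexity of $u^{-1}$, gives local convergence in measure of $v^n$.
\item This yields local $L^1$ convergence of $\log h^n$, convergence of the maximized Lagrangians, and convergence of discounted expenditure on $[0,\infty)$.
\end{itemize}
This is the sense in which the infinite horizon ``localizes up to a discounted tail.''

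\textbf{The double limit.} The lemma is used in a double limit that never compares the two fixed points in one uniform norm.
\begin{itemize}
\item On a compact neighborhood of $\pi$ the contraction factors are at most $q_\ast<1$ and the radii at most $R_\ast$.
\item Iterate the map at $(\pi^{(N)},\mu^{(N)})$ from its own center $k(\pi^{(N)},\mu^{(N)})+g^{(N)}t+s^{(N)}B_t$.
\item For each fixed $m$, the $m$-th iterates converge locally in $L^1$ by induction with the lemma.
\item Each fixed point lies within $R_\ast q_\ast^m$ of its own $m$-th iterate, in the uniform norm of its own space.
\item Let $N\to\infty$, then $m\to\infty$.
\end{itemize}

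\textbf{The utility step also needs this lemma.} Your domination argument bounds the discounted variations but does not make the adjustment cost terms converge, since discounted variation is only lower semicontinuous under local convergence. Moreover, the regulator representation $h=h_0e^{\ell/\gamma}$ is not available at a non-geometric benchmark. The paper instead writes $U_i=\J_i+\lambda_i\mathcal E_i$ and combines convergent Lagrangian values, $\mathcal E_i=X_i$, and convergent multipliers.
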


Theorem~\ref{thm:populationlimit} recomputes the policies at each
empirical composition. With logarithmic utility one can instead keep the
policies computed for the target composition.

\begin{corollary}\label{cor:rounded}
In the complete market with state price density $\xi$, suppose all types
have logarithmic utility, discount rate $\delta$, and relative concern
$\theta\in[0,1/2)$, while adjustment costs, wealth, and relative references
may differ. Fix a mean field equilibrium of Theorem~\ref{thm:weakexist} with
weights $\pi_a>0$, and give its plan $c^a$, unchanged, to each of $N_a$
households, where $\sum_aN_a=N$ and $d_N=\sum_a|N_a/N-\pi_a|\le1$. The
profile can be financed and is an $\varepsilon_N$ Nash equilibrium with
$\varepsilon_N\le Cd_N$ for a constant $C$ independent of $N$. Its benchmark
satisfies $\sup_{t\ge0}|\log(Y^N_t/Y_t)|\le M_\Delta d_N$ almost surely for a
deterministic $M_\Delta$, and the utility of each type differs from its mean
field value by at most $Cd_N$. If $|N_a/N-\pi_a|\le1/N$ for every $a$, all
these errors are of order $1/N$.
\end{corollary}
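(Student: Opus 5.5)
The plan has four steps: compare the two benchmarks, turn that into a price comparison, bound deviation gains with a price-taking certificate, and then control utilities.

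\emph{Step 1: compare the benchmarks.} Write $\Delta_a=N_a/N-\pi_a$. With logarithmic utility, common $\delta$ and common $\theta$, the mean field equilibrium plans are $c^a=h^aY^\theta$. The realized benchmark of the rounded profile is
\[
\log Y^N=\sum_a (N_a/N)\log c^a=\log Y+\sum_a\Delta_a\log c^a .
\]
Since $\sum_a\Delta_a=0$, this equals
\[
\log Y+\sum_a\Delta_a(\log h^a-k),
\]
where $k$ is any common centering.

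Theorem~\ref{thm:weakexist} bounds the index of each type. With $\gamma=1$, $\log h^a_t=\log h_{a,0}+\ell^a_t$. I would also like the log price seen by type $a$ to lie within a bounded distance of a common geometric process. Together these should give a bound on $|\log h^a_t-\log h^b_t|$ that is uniform in time, type and $\omega$.

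Here the log case matters. Optimal relative consumption satisfies $\log h^a=-\mu_a-\log(\lambda\text{-deflator})+\text{index}$. The deflator $\xi Y^\theta$ is common to all types, so differences $\log h^a-\log h^b$ are bounded by $|\mu_a-\mu_b|+\max_a d_a$. This yields a deterministic $M_\Delta$ with $\sup_t|\log(Y^N_t/Y_t)|\le M_\Delta d_N$.

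\emph{Step 2: affordability.} The deflator of type $a$ in the rounded game is $\xi (Y^N)^{\theta}$. Its ratio to the mean field deflator lies in $[e^{-\theta M_\Delta d_N},e^{\theta M_\Delta d_N}]$.

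The profile fixes absolute consumption $c^a$, so its cost $\E\int\xi c^a\,\dd t=X_a$ is unchanged and the profile can be financed. What changes is the relative consumption, which becomes $h^a (Y/Y^N)^\theta$.

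\emph{Step 3: deviation bound.} I would apply Theorem~\ref{thm:certificate} to the rounded game. Every plan $c^a$ is a best response, in the price-taking problem, to the benchmark $Y$ rather than to $Y^N$. The certificate bounds the gain from deviating by a quantity controlled by the price discrepancy $\sup|\log(Y^N/Y)|$.

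Failing that, I would argue directly. A deviation in the finite game is dominated, by the Hölder argument of Theorem~\ref{thm:exact} applied at the realized composition, by a price-taking plan against $Y^N$. Log utility then makes the value function shift additively: rescaling the deflator by a factor in $[e^{-\epsilon},e^{\epsilon}]$ changes $V$ by at most $\epsilon(1+\alpha\delta\text{ terms})/\delta$. A budget-scaling argument gives this, since with $\gamma=1$ scaling $h$ by $e^{\epsilon}$ adds $\epsilon/\delta$ to $U$, and the jump costs are invariant under constant scaling except at time zero, which is bounded by $(\alpha+\beta)\epsilon$.

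So the best deviation value lies within $C\,M_\Delta d_N$ of the mean field value $V(X_a,Y)$.

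\emph{Step 4: utility of the prescribed plan.} The prescribed plan's relative consumption differs from $h^a$ by the factor $(Y/Y^N)^\theta$. Its flow utility changes by at most $\theta M_\Delta d_N$.

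The adjustment terms are the delicate part. $\log(Y/Y^N)=-\sum\Delta_a(\log h^a-k)$ has finite variation, generated by the regulators $\ell^b$. So the extra variation in $v$ is bounded by $d_N\sum_b\E\int e^{-\delta t}\dd|\ell^b|_t$, which is finite by admissibility. This gives $|U_a(\text{prescribed})-U_a(h^a)|\le Cd_N$.

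Combining Steps 3 and 4 gives $\varepsilon_N\le Cd_N$ and the utility comparison. The $1/N$ statement follows because then $d_N\le n/N$.

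The condition $\theta<1/2$ enters through Corollary~\ref{cor:weakgrowth}: it guarantees the hypotheses of Theorem~\ref{thm:weakexist}, which supply the bounded indices. It also makes the Hölder exponent admissible.

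\emph{Main obstacle.} I expect the hardest step to be the deterministic bound on $\log h^a-\log h^b$ in Step 1, which requires the common deflator structure, together with the control of the extra adjustment-cost variation in Step 4.
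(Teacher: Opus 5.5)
Your plan follows the paper's proof. The paper also gets the deterministic bound on $\log h^a-\log h^b$ from the common deflator $e^{\delta t}\xi_tY_t^\theta$ and the index bounds of Theorem~\ref{thm:weakexist}, which gives $|\log(Y^N_t/Y_t)|\le M_\Delta d_N$. It then applies Theorem~\ref{thm:certificate} with $Y^0=Y$ and bounds its three terms by $\overline z_a(e^{\eta_N}-1-\eta_N)/\delta$, by your Step 4 estimate, and by $\lambda_aX_a(1-e^{-\eta_N})$, where $\eta_N=\theta M_\Delta d_N$.

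Your fallback for Step 3 is a valid and more elementary alternative. Since $Y^\theta\le e^{\eta_N}(Y^N)^\theta$ pointwise, the H\"older step makes every deviation affordable for a price taker facing $Y$ with wealth $e^{\eta_N}X_a$. Multiplying that relative path by $e^{-\eta_N}$ lowers running utility by $\eta_N/\delta$ and changes only the initial adjustment cost, by at most $\max\{|\alpha_a|,\beta_a\}\eta_N$. Optimality of $h^a$ then bounds the deviation utility by $U_a(h^a)+\eta_N(1/\delta+\max\{|\alpha_a|,\beta_a\})$. You should state this monotonicity in the price explicitly, because $e^{\theta\Delta_t}$ is not a constant factor. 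Your constant $\alpha+\beta$ for the initial jump is too small when $\alpha_a<0$. This route needs only optimality of $h^a$ and log homogeneity. The certificate needs the optimality conditions and a bounded index, but it isolates a price term of second order.

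Two details need repair. In Step 4, $h^{a,N}=h^ae^{-\theta\Delta}$ keeps the reference $h_{a,0-}$, so $\Delta=\log(Y^N/Y)$ is extended by $\Delta_{0-}=0$. The extra variation therefore contains the atom $\theta|\Delta_0|$ at time zero, and your bound $d_N\sum_b\E\int e^{-\delta t}\dd|\ell^b|_t$ misses it. Regulators started at time zero do not see this atom. If they include the initial revisions from $h_{b,0-}$, you must add $|\sum_be_b\log(h_{b,0-}/h_{1,0-})|$ with $e_b=N_b/N-\pi_b$, as the paper does. The simplest fix is $|\Delta_0|\le M_\Delta d_N$.

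In Step 1, $M_\Delta$ is not $|\mu_a-\mu_b|+\max_ad_a$. The log indices of different types lie in different bands, widened to $[a_a-2\theta R,b_a+2\theta R]$, and $M_\Delta$ must cover the whole range of $\log z^a-\log z^b$. Finally, $\theta<1/2$ serves only as the weak interaction condition $2\theta<1$. The H\"older exponent $1-\theta/N$ lies in $(0,1]$ for every $\theta<1$.
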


The proof gives an explicit value of $\varepsilon_N$. With logarithmic utility and common
$\delta$ and $\theta$, the log relative consumptions of different types stay
within a bounded distance of each other, so a rounding error moves the
benchmark in proportion to it, uniformly in time.

In practice policies are computed for a benchmark that is not exactly the
one the population generates, for instance at a target composition. The next
result bounds what a household can gain by deviating from such a profile. It
requires that each plan be an exact best response to some benchmark $Y^0$,
uses only quantities computable from the plans, $Y^0$, and the multipliers,
and does not cover the discretization error of a numerical scheme. Let
$G_\gamma(M)=M+(\gamma-1)^{-1}-\gamma(\gamma-1)^{-1}M^{(\gamma-1)/\gamma}$
for $\gamma\ne1$ and $G_1(M)=M-1-\log M$. Both are nonnegative and vanish
together with their first derivative at $M=1$.

\begin{theorem}\label{thm:certificate}
In the complete market of this subsection, let $N_a$ households of type $a$
use a plan $c^a$ with $w_a=\E\int_0^\infty\xi_tc^a_t\dd t\le X_a$, put
$Y^N=\prod_a(c^a)^{N_a/N}$, and suppose that $h^{a,N}=c^a(Y^N)^{-\theta_a}$
is admissible with reference $h_{a,0-}$. Let $Y^0$ be a positive adapted
process such that for every type the plan $h^{a,0}=c^a(Y^0)^{-\theta_a}$,
with reference $h_{a,0-}$, is admissible, has a bounded index for the price
$\lambda_ae^{\delta_at}\xi_t(Y^0_t)^{\theta_a}$ with a constant
$\lambda_a>0$, and satisfies the optimality conditions of
Appendix~\ref{app:proofs} for this price. Put $\Delta=\log(Y^N/Y^0)$ and
suppose $\E\int_0^\infty\xi_tc^a_te^{\theta_a\Delta_t}\dd t<\infty$. If the
other households keep their plans, no household of type $a$ can raise its
utility by more than
\[
 \varepsilon_a=\lambda_a\E\!\int_0^\infty\!\xi_tc^a_t\,
 G_{\gamma_a}\big(e^{\theta_a\Delta_t}\big)\dd t
 +U_a(h^{a,0})-U_a(h^{a,N})
 +\lambda_a\Big(X_a-\E\!\int_0^\infty\!\xi_tc^a_te^{\theta_a\Delta_t}\dd t\Big)
\]
by switching to any plan that it can finance and whose relative consumption
is admissible.
\end{theorem}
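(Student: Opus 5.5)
The plan is to combine the budget comparison behind Theorem~\ref{thm:exact} with a Lagrangian (duality) bound at the reference price $\psi^0_t=\lambda_ae^{\delta_at}\xi_t(Y^0_t)^{\theta_a}$. The discrepancy between $Y^0$ and the realized benchmark $Y^N$ is then paid for pointwise by the term with $G_{\gamma_a}$.

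\textbf{Step 1 (affordability).} Fix a household of type $a$ and a deviation $\tilde c$ that it can finance, with admissible relative consumption. Since the others keep their plans, the benchmark becomes $Y^N(\tilde c/c^a)^{1/N}$. The deviator's relative consumption is therefore
\[
 \tilde h=\tilde c^{\,1-\theta_a/N}(c^a)^{\theta_a/N}(Y^N)^{-\theta_a}.
\]
Its cost at the realized price $\xi(Y^N)^{\theta_a}$ is $\E\int\xi\,\tilde c^{\,1-\theta_a/N}(c^a)^{\theta_a/N}\dd t$. H\"older's inequality with exponents $(1-\theta_a/N)^{-1}$ and $N/\theta_a$ bounds it by $X_a^{1-\theta_a/N}w_a^{\theta_a/N}\le X_a$, exactly as in Theorem~\ref{thm:exact}. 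Writing $M_t=e^{\theta_a\Delta_t}$, this reads
\[
 \E\int_0^\infty\xi_t(Y^0_t)^{\theta_a}M_t\tilde h_t\,\dd t\le X_a .
\]
So it suffices to bound $\sup\{U_a(h):\ \E\int\xi(Y^0)^{\theta_a}Mh\le X_a\}-U_a(h^{a,N})$ over admissible $h$. The deviator has become a price taker who faces the distorted price $\psi^0M$.

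\textbf{Step 2 (Lagrangian).} Add and subtract $U_a(h^{a,0})$; this produces the term $U_a(h^{a,0})-U_a(h^{a,N})$ in $\varepsilon_a$. For the remainder I use weak duality. For any admissible $h$ in the constraint set,
\[
 U_a(h)-U_a(h^{a,0})\le \Big[U_a(h)-\E\!\int e^{-\delta_at}\psi^0_tM_th_t\,\dd t\Big]-\Big[U_a(h^{a,0})-\E\!\int e^{-\delta_at}\psi^0_tM_th^{a,0}_t\,\dd t\Big]+\lambda_a\Big(X_a-\E\!\int\xi_tc^a_tM_t\,\dd t\Big).
\]
Here $e^{-\delta_at}\psi^0_th^{a,0}_t=\lambda_a\xi_tc^a_t$. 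The last bracket is exactly the third term of $\varepsilon_a$, and it is finite by the integrability hypothesis. It remains to show that the first two brackets differ by at most $\lambda_a\E\int\xi c^a G_{\gamma_a}(M)\,\dd t$.

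\textbf{Step 3 (the main obstacle).} We must show that the Lagrangian at the distorted price $\psi^0M$ exceeds its value at $h^{a,0}$ by at most the $G$ term. The optimality conditions of Appendix~\ref{app:proofs} for $h^{a,0}$ at price $\psi^0$ (the bounded index, with $\hat H$ giving the marginal value of the adjustment terms) should yield a concavity inequality. I intend to use it in multiplicative form: write $h=h^{a,0}k$ and express everything in $\log$ coordinates. The ILA functional is invariant under constant rescalings in $v=u(h)$ up to the felicity weights, and the adjustment terms are controlled by the supergradient supplied by the optimality conditions, so they contribute nonpositively. That leaves the flow term, which is maximized pointwise in $k_t$. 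The flow term's marginal weight $u'(h^{a,0})h^{a,0}$, corrected by the index, equals $\lambda_a\xi c^a$ times a factor. Pointwise one has
\[
 \sup_{k>0}\Big[\tfrac{k^{1-\gamma}-1}{1-\gamma}-(kM-1)\Big]+(M-1)=G_\gamma(M),
\]
attained at $k=M^{-1/\gamma}$, and the logarithmic case gives $M-1-\log M$. This identifies the constant and the form of $G$.

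The delicate point is justifying that the index correction and the $\dd v^{\pm}$ terms can be absorbed into this pointwise bound rather than producing extra error. I would first try to mimic the verification argument of Theorem~\ref{thm:br}. That means integrating by parts $e^{-\delta t}\hat H(\hat Z_t)\,\dd u(h_t)$ against the dual process, which turns the adjustment costs into a nonpositive term plus a flow term with weight $\lambda_a\xi c^a$ on the log scale. Boundedness of the index gives the integrability needed for the integration by parts and for the local martingales to be true martingales.
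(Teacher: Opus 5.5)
Your proposal is correct and is essentially the paper's proof: H\"older affordability, the Lagrangian decomposition at the realized price $\psi^0e^{\theta_a\Delta}$, and a bound on the Lagrangian gain by $\lambda_a\E\int_0^\infty\xi_tc^a_tG_{\gamma_a}(e^{\theta_a\Delta_t})\dd t$. The paper isolates that last bound as Theorem~\ref{ec:thm:regret} and proves it as your Step~3 sketches: the obstacle and support conditions give $\mathcal C(v)-\mathcal C(v^0)\ge\E\int_0^\infty e^{-\delta t}(1-z^0_t)(v_t-v^0_t)\dd t$ for the adjustment cost $\mathcal C$, and pointwise maximization at $k=M^{-1/\gamma}$ then yields $G_{\gamma_a}$. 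One correction: for a general $Y^0$ the adjoint is the abstract $\Psi^0$ of \eqref{ec:eq:index}, not $\hat H(\hat Z)$, and the comparison must be made in utility coordinates $v=u(h)$, where the adjustment cost is linear, not in log coordinates; the exchange of integrals follows from Fubini and optional projection, using the bounded index and admissibility.
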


We call $\varepsilon_a$ the certificate of type $a$.
The three terms measure the error in the price of relative consumption,
the utility lost because the household's own relative consumption moves with
the difference between the benchmark used to compute its plan and the
realized one, and the part of the budget that the plan leaves unused or overdraws at
realized prices. The first term is of second order in $\Delta$. If the plans
form a mean field equilibrium with weights $N_a/N$, bounded indices, and
binding budgets, as those of Theorem~\ref{thm:weakexist} do, one may take
$Y^0=Y^N$, and then $\varepsilon_a=0$, which recovers
Theorem~\ref{thm:exact} for these equilibria. Section~\ref{sec:advice}
evaluates the three terms for plans computed at a target composition.

\begin{remark}\label{rem:frictionless}
Types without intertemporal loss aversion, with $L_i=1$, that is
$\beta_i=-\alpha_i\ge0$, can be included. Their utility is
$(1-\delta_i\alpha_i)\E\int_0^\infty e^{-\delta_it}u(h_t)\dd t$ plus a
constant, as in \citet{Gali}. If these households may choose any positive
plan with finite discounted felicity, in the mean field problem and in the
game, their optimal relative consumption has a Brownian component and the
constant index $1-\delta_i\alpha_i$. Their band degenerates to
$a_i=b_i=\log(1-\delta_i\alpha_i)$, and Theorem~\ref{thm:weakexist} holds under
its other hypotheses, including $\hat K_i(\gamma_i)>0$, with $d_i=0$.
Theorems~\ref{thm:exact} and~\ref{thm:populationlimit} also hold, and
Theorem~\ref{thm:certificate} holds with a constant index in place of its
optimality conditions. Since $g$ in
\eqref{eq:generalcenter} does not involve $\alpha_i$ or $\beta_i$, long run
benchmark growth is the same as if these types had some $L_i>1$ with their
other parameters unchanged. With $\theta>0$ and both kinds of types present,
Corollary~\ref{cor:rounded} is not covered, because a rounding that changes
the total weight of the types with $L_i=1$ gives $Y^N/Y$ a Brownian
component, and the relative consumption of the other types then has
infinite variation.
\end{remark}

\subsection{Computing Equilibria}\label{sec:computing}

Theorems~\ref{thm:weakexist} to~\ref{thm:certificate} suggest a procedure
for computing equilibrium policies and using them in a finite population.
For
fixed multipliers, Theorem~\ref{thm:weakexist} identifies the benchmark as
the unique fixed point of a contraction with factor
$q=\Theta_2/(1-\Theta_1)$ in a ball of explicit radius $R$. The iteration of
the benchmark map therefore converges geometrically, and any computed
benchmark can be checked against \eqref{eq:existbound}. The multipliers
must then make every budget bind. Brouwer's theorem shows that such
multipliers exist but gives no convergent algorithm, and a contraction for
them is available only under a stronger condition, so we solve the outer budget equations with a root finder and report its
residuals.
Theorems~\ref{thm:exact} and~\ref{thm:certificate} concern exact plans. A
numerical solution approximates them, and its discretization error has to
be assessed separately (Section~\ref{sec:scope}).

With common risk and a common discount rate, the computation can be
organized in index coordinates. Write $x^a=\log z^a$ for the log index of
type $a$. Between revisions every $x^a$ moves by $\nu\,\dd t-\kappa\,\dd B_t$,
so the vector of indices moves along the diagonal. A revision that raises
$\log h^b$ by one unit moves $x^a$ by
$\gamma_a\ind_{a=b}+\theta_a\pi_b/(1-\Theta_1)$, a direction fixed by the
primitives. By the optimality conditions of Appendix~\ref{app:proofs}, the
plans are best responses to
the benchmark they generate when the adjoint of each type, a conditional
expectation of a discounted function of its own future index, lies between
two obstacles and reaches them when the type revises. If the revision rules
are functions of the current vector of indices and generate a well defined
reflected process, this vector is a Markov process, the adjoints are
functions of it, and the conditions can be checked state by state. Wealth and the multipliers enter only through the initial
point. We compute these rules once on a lattice, where the adjoint
of each type solves a double obstacle problem coupled to the revisions of
the other types, and we solve the coupled problems with Howard's
algorithm. The lattice has one dimension for each type, so its size
grows quickly with the number of types but not with the number of
households of each type. For given rules, the budget of each
type is a linear functional of its consumption, and the outer step solves
$n$ equations in the $n$ multipliers. Section~\ref{sec:computeexample}
applies the procedure to three groups whose revisions are not synchronized.

%% file: sec3_implications.tex
\FloatBarrier
\section{Economic Implications}\label{sec:implications}

The curves in Sections~\ref{sec:spending} to~\ref{sec:loss} are computed
from the formulas for the band and the cost function behind
Lemma~\ref{lem:band}, which remain valid at zero and negative effective
rates. The two simulated paths in these subsections
use projected Euler reflection with time step 0.002 years \citep{Lep} and
therefore contain discretization error.

\subsection{Actual Spending and the Relative Standard}\label{sec:spending}

In the stationary equilibrium with idiosyncratic risk the benchmark grows at
rate $g^\ast$, so a household that does not revise its relative consumption
still raises its spending at rate $\theta g^\ast$. An observer who sees
spending rise smoothly may thus be looking at a household that is only
maintaining its standard of living. Figure~\ref{fig:relative} plots actual
and relative consumption along one simulated path. The two series drift
apart during inaction, and relative consumption changes only while the index
is at an endpoint of its band.

\begin{figure}[!t]\centering
\includegraphics[width=\textwidth]{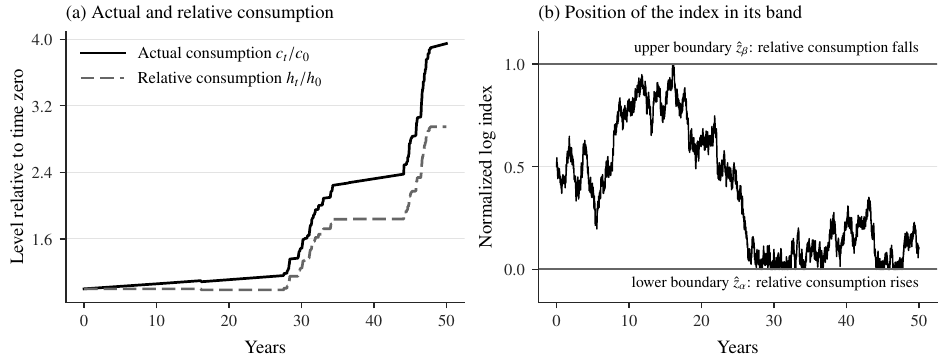}
\caption{Actual Spending and the Relative Standard.
\emph{Notes.} Stationary equilibrium with idiosyncratic risk,
$\gamma=3.5$, $\delta=0.015$, $r=0.0086$, $\kappa=0.3462$, $\theta=0.3$,
$\alpha=0$, and $L=1.5$, so that $g^\ast=1.95\%$. The household starts at the
midpoint of the band of its log index with compatible wealth, and the
population follows the invariant law of Theorem~\ref{thm:stat}. The path
uses projected Euler reflection with step 0.002 years.}
\label{fig:relative}
\end{figure}

\subsection{Relative Concerns and Risk Taking under Common Shocks}\label{sec:commonimpl}

With identical households and common risk, one reflected index triggers all
revisions, and the relative concern changes only the exponent $1/\hg$ that
maps the cumulative regulator into consumption. For $\gamma>1$ a larger
$\theta$ lowers $\hg$, so a given regulator increment produces a larger
change in consumption. For $\gamma<1$ the effect goes the other way, and
with logarithmic utility it disappears. The index thresholds are determined
by adjustment costs, discounting, and investment opportunities and are
therefore independent of $\theta$, whereas the band for the ratio of
consumption to wealth varies with $\theta$ unless $\gamma=1$.

Figure~\ref{fig:common} compares three separate economies of identical
households. Along the common Brownian path in panel (a), a stronger
relative concern produces larger revisions. In panel (b), the risky share
equals $\kappa/(\sigma\hg)$ at the endpoints of the band and is lower inside
it. On the range of ratios of consumption to wealth covered by all three
curves, the curve with the strongest relative concern lies highest, and each
curve still dips well below its endpoint value inside its band.

\begin{figure}[!t]\centering
\includegraphics[width=\textwidth]{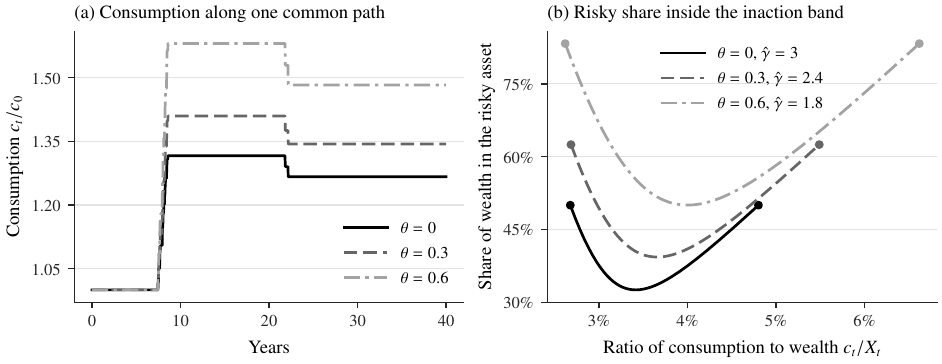}
\caption{Relative Concerns under Common Risk.
\emph{Notes.} Identical households with $r=0.02$, $\delta=0.05$,
$\kappa=0.30$, $\sigma=0.20$, $\gamma=3$, $\theta\in\{0,0.3,0.6\}$,
$\alpha=0$, and $L=2$. Panel (a) uses one Brownian path and the same initial index for the three economies,
with initial wealth matched to that index, and normalizes consumption by its
initial level. Panel (b) is computed from the cost function, and dots mark
the endpoints of the band.}
\label{fig:common}
\end{figure}

\subsection{Population Size and Population Composition}\label{sec:composition}

By Theorem~\ref{thm:exact}, replicating every type in the same proportion
leaves the equilibrium plans in place, and the benchmark and the consumption,
wealth, and utility of each type stay the same, even when preferences and
wealth differ and households revise at different dates. Changes in
composition move the equilibrium, as the next result shows.

\begin{proposition}\label{prop:composition}
Consider the synchronized equilibria of Theorem~\ref{thm:common} with two
types, common $(r,\kappa,\sigma,\delta,\alpha,\beta)$, and population share
$w\in(0,1)$ of type 1. Let $\hg_i=\gamma_i+\theta_i(1-\gamma_i)$,
\[
 \vb(w)=\frac{w/\gamma_1+(1-w)/\gamma_2}{w\hg_1/\gamma_1+(1-w)\hg_2/\gamma_2},
 \qquad
 \varphi_i(w)=\frac1{\gamma_i}+\theta_i\Big(1-\frac1{\gamma_i}\Big)\vb(w),
 \qquad Q=r+\frac{\kappa^2}{2}-\delta .
\]
On any interval of shares on which $\theta_i\vb(w)\le1$ and
$K(1/\varphi_i(w))>0$ for both types, $\log Y_t/t\to Q\vb(w)$ almost surely,
and the risky share of type $i$ at either endpoint of the band is
$\kappa\varphi_i(w)/\sigma$. Moreover
\[
 \vb'(w)=\frac{\hg_2-\hg_1}{\gamma_1\gamma_2\,\big(w\hg_1/\gamma_1+(1-w)\hg_2/\gamma_2\big)^2},
 \qquad
 \frac{\dd}{\dd w}\frac{\kappa\varphi_i(w)}{\sigma}
 =\frac{\kappa}{\sigma}\,\theta_i\Big(1-\frac1{\gamma_i}\Big)\vb'(w).
\]
Hence, if $\hg_1<\hg_2$, a larger share of type 1 raises the asymptotic
growth rate of the benchmark when $Q>0$, and it raises the risky share at
revision dates of every type with $\gamma_i>1$ and $\theta_i>0$, although the
preferences and opportunities of that type are unchanged.
\end{proposition}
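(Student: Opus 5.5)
The plan is to specialize Theorem~\ref{thm:common} to a two-point type distribution and then compute directly. With weights $w$ and $1-w$, the mean field version of \eqref{eq:AN} gives
\[
\vb=\frac{w/\gamma_1+(1-w)/\gamma_2}{1+w\theta_1(1/\gamma_1-1)+(1-w)\theta_2(1/\gamma_2-1)}.
\]
The identity $1+\theta_i(1/\gamma_i-1)=\hg_i/\gamma_i$ lets us rewrite the denominator as $w\hg_1/\gamma_1+(1-w)\hg_2/\gamma_2$, which matches the stated $\vb(w)$. The hypotheses $\theta_i\vb(w)\le1$ and $K(1/\varphi_i(w))>0$ are exactly those of Theorem~\ref{thm:common}. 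So on the given interval, the plans $c^i_t=c^i_0\exp(\varphi_i(w)\ell_t)$ form a mean field equilibrium for each $w$.

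\textbf{Growth rate.} From \eqref{eq:consistency}, $\log Y_t=\E[\log c^\ast_t\mid\F^B_t]=\E[\log c^i_0]+\vb(w)\ell_t$, since $\vb$ is the average of the $\varphi_i$ and $\ell$ is $\FB$ adapted. We therefore need $\ell_t/t\to Q$ almost surely. By \eqref{eq:skor},
\[
\ell_t=x_t-\log z_0-\nu t+\kappa B_t ,
\]
where $x_t$ is bounded and $B_t/t\to0$ almost surely. Hence $\ell_t/t\to-\nu=r+\kappa^2/2-\delta=Q$, and so $\log Y_t/t\to Q\vb(w)$.

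\textbf{Risky share.} Each type $i$ consumes like a single ILA household with risk aversion $1/\varphi_i$ in the market $(r,\kappa)$. At the band endpoints, \eqref{eq:XH} with $H'=0$ gives revealed risk aversion $\Rr=1/\varphi_i$. We apply \eqref{eq:port} with $s=0$ in the original market, or equivalently read off the wealth dynamics $X^i_t=c^i_t\Xf_{1/\varphi_i}(Z_t)$. This gives a loading $\kappa\varphi_i$ on $\dd B$, so the share of wealth in the risky technology is $\kappa\varphi_i(w)/\sigma$, consistent with Corollary~\ref{cor:common}(iii) when $\varphi=1/\hg$. The main point to check is that the endpoint value of $-z\Xf'/\Xf$ equals $\varphi_i$ for the exponent $\rho=1/\varphi_i$; this follows from \eqref{eq:XH} because $\Xf+\rho z\Xf'=0$ there.

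\textbf{Derivatives.} Write $\vb=A/B$ with $A=w/\gamma_1+(1-w)/\gamma_2$ and $B=w\hg_1/\gamma_1+(1-w)\hg_2/\gamma_2$. Then
\[
A'B-AB'=\frac{1}{\gamma_1}\,\frac{\hg_2}{\gamma_2}-\frac{1}{\gamma_2}\,\frac{\hg_1}{\gamma_1}=\frac{\hg_2-\hg_1}{\gamma_1\gamma_2},
\]
because the $w$ terms cancel, as they must for a ratio of affine functions. This gives the formula for $\vb'$. The formula for the derivative of $\kappa\varphi_i/\sigma$ follows by differentiating $\varphi_i$ in $w$.

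\textbf{Signs.} If $\hg_1<\hg_2$ then $\vb'>0$, so $Q\vb$ is increasing when $Q>0$. For a type with $\gamma_i>1$ and $\theta_i>0$, the factor $\theta_i(1-1/\gamma_i)$ is positive, so that type's risky share rises with $w$.

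The only nonroutine step is confirming that Theorem~\ref{thm:common}'s mean field conclusion delivers the representation $\log Y_t=\text{const}+\vb\ell_t$ with a common regulator $\ell$. This holds because the plans are synchronized, so every type's log consumption moves with the same $\ell_t$ and averaging over types produces the factor $\vb$.
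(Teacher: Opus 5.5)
Your proposal is correct and follows the paper's proof step for step: the aggregate exponent via $1+\theta_i(\gamma_i^{-1}-1)=\hg_i/\gamma_i$ and $w\varphi_1+(1-w)\varphi_2=\vb$, $\ell_t/t\to Q$ from the bounded reflected index, the endpoint risky share from $X^i_t=c^i_t\Xf_{1/\varphi_i}(Z_t)$ and \eqref{eq:XH}, and direct differentiation of $\vb$ and $\varphi_i$. One point of phrasing: \eqref{eq:port} is stated for a geometric benchmark, which the synchronized benchmark is not, so the wealth representation from Lemma~\ref{ec:lem:cost} with $\rho=1/\varphi_i$, which you also give, is the route that applies directly.
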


As the share of households with low effective risk aversion grows, the
benchmark reacts more strongly to common shocks, and a household with
$\gamma_i>1$ and $\theta_i>0$ responds by holding more of the risky asset at
revision dates. Figure~\ref{fig:composition} shows both effects for two
types. Proposition~\ref{prop:composition} concerns the risky share at the
endpoints of the band and compares the compatible initial allocations of
Theorem~\ref{thm:common}, so no welfare conclusion follows.

\begin{figure}[!t]\centering
\includegraphics[width=\textwidth]{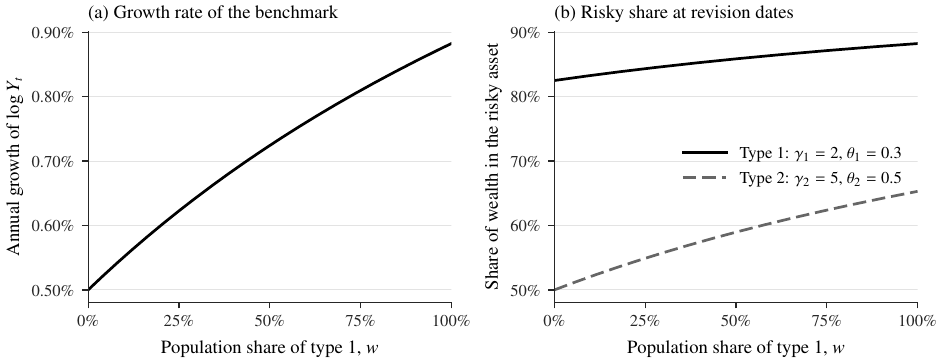}
\caption{Population Composition and Investment Spillovers.
\emph{Notes.} Two synchronized types with $(\gamma_1,\theta_1)=(2,0.3)$ and
$(\gamma_2,\theta_2)=(5,0.5)$, and common $r=0.02$, $\delta=0.05$,
$\kappa=0.30$, $\sigma=0.20$, $\alpha=0$, and $L=2$. Initial wealth is
compatible with a common initial index for each composition. Both panels are
computed from exact formulas, and replicating a fixed composition changes
neither curve.}
\label{fig:composition}
\end{figure}

\subsection{Inherited Standards and Initial Risk Taking}\label{sec:inherited}

In the symmetric equilibrium of Theorem~\ref{thm:symmetricunique}, let
$b_-=1/\Xf_{\hg}(\za)$ and $b_+=1/\Xf_{\hg}(\zb)$, so $b_-<b_+$. For initial
wealth $X_0$ and inherited consumption $c_{0-}$, Theorem~\ref{thm:symmetricunique}
and the initial rule of Theorem~\ref{thm:br} for a single household with risk
aversion $\hg$ imply
\[
 \frac{c_0}{X_0}=\min\Big\{b_+,\max\Big\{b_-,\frac{c_{0-}}{X_0}\Big\}\Big\}.
\]
Inside this interval the household keeps its inherited spending. An
inherited standard below the interval leads to an immediate increase, and one
above it to an immediate cut. After either revision the risky share equals
$\kappa/(\sigma\hg)$, whereas an interior initial ratio leads to a strictly
lower share. Figure~\ref{fig:inherited} fixes initial wealth and varies the
inherited standard across economies with identical households. Households
with the same wealth choose different spending and portfolios, and risk
taking is lower for intermediate inherited standards than at either
extreme.

\begin{figure}[!t]\centering
\includegraphics[width=\textwidth]{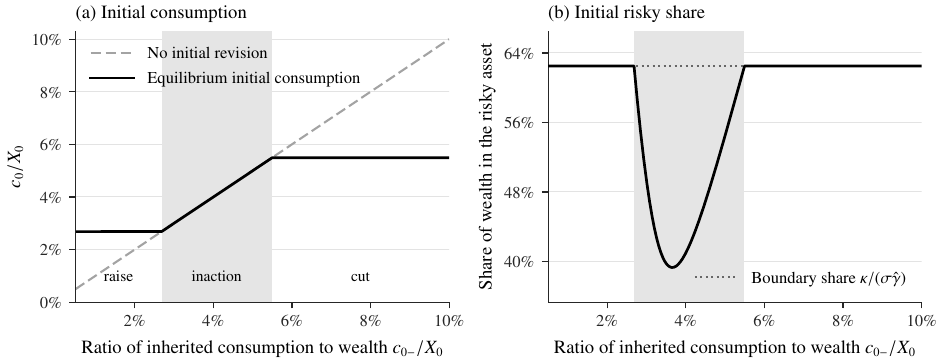}
\caption{Inherited Standards at Fixed Initial Wealth.
\emph{Notes.} Identical households with common risk, $r=0.02$,
$\delta=0.05$, $\kappa=0.30$, $\sigma=0.20$, $\gamma=3$, $\theta=0.3$,
$\alpha=0$, and $L=2$, so that $\hg=2.4$. Initial wealth is normalized to one.
The shaded region requires no initial revision, and outside it the initial
revision of Theorem~\ref{thm:symmetricunique} applies. Both panels are
computed from exact formulas.}
\label{fig:inherited}
\end{figure}

\subsection{Benchmark Growth and Current Spending}\label{sec:growth}

The following aggregation identity splits the log benchmark into a
deterministic trend, a common shock, and a term that depends on where
households are inside their bands.

\begin{proposition}\label{prop:agg}
Let $(Y,h^\ast)$ be a mean field equilibrium, let $\lambda>0$ be $\F_0$
measurable, and define
$z^\ast_t=\lambda e^{\delta t}\xi_tY_t^\theta(h^\ast_t)^\gamma$. Assume that
the type takes values in a compact subset of the parameter domain and that
the expectations below are finite. Then
\begin{equation}\label{eq:agg}
 \Big(1+\E\Big[\theta\Big(\frac1\gamma-1\Big)\Big]\Big)\log Y_t
 =-\E\Big[\frac{\log\lambda}{\gamma}\Big]
 +\E\Big[\frac{r+\frac12\kappa^2-\delta}{\gamma}\Big]t
 +\E\Big[\frac{\kappa_a}{\gamma}\Big]B_t
 +\E\Big[\frac{\log z^\ast_t}{\gamma}\,\Big|\,\F^B_t\Big].
\end{equation}
If the last term is $o(t)$ almost surely, in particular if $z^\ast$ takes
values in a fixed compact subset of $(0,\infty)$, then $\log Y_t/t\to g^\ast$
almost surely, where $g^\ast$ is given by \eqref{eq:gstar} with
$\kappa^2=\kappa_a^2+\kappa_p^2$.
\end{proposition}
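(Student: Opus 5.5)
The identity \eqref{eq:agg} is pure bookkeeping. I would take logarithms in the definition of $z^\ast$, solve for $\log c^\ast$, and then average with the consistency condition \eqref{eq:consistency}. From $z^\ast_t=\lambda e^{\delta t}\xi_tY_t^\theta(h^\ast_t)^\gamma$ and \eqref{eq:spd}, pathwise for every household,
\[
 \log h^\ast_t=\frac{1}{\gamma}\Big(\log z^\ast_t-\log\lambda-\delta t+\big(r+\tfrac12\kappa^2\big)t+\kappa_aB_t+\kappa_pW_t-\theta\log Y_t\Big).
\]
Since $c^\ast_t=h^\ast_tY_t^\theta$, adding $\theta\log Y_t$ gives $\log c^\ast_t$. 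In the result, $\log Y_t$ carries the coefficient $\theta(1-1/\gamma)$, which is $\F^B_t$-measurable given the type.

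Next I take $\E[\,\cdot\,|\,\F^B_t]$. The type, $\lambda$, and the other $\F_0$ data are independent of $(B,W)$, so expectations of functions of the type alone given $\F^B_t$ reduce to unconditional expectations. The term $\E[(\kappa_a/\gamma)B_t|\F^B_t]$ equals $\E[\kappa_a/\gamma]B_t$. The term $\E[(\kappa_p/\gamma)W_t|\F^B_t]$ vanishes because $W$ is independent of $(B,\F_0)$ and has mean zero. Similarly $\E[\theta(1-1/\gamma)\log Y_t|\F^B_t]=\E[\theta(1-1/\gamma)]\log Y_t$. Here I use that $Y$ is $\FB$-adapted and that the type is independent of $B$.

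By \eqref{eq:consistency}, the left side becomes $\log Y_t$. Moving the $\log Y_t$ term to the left yields exactly \eqref{eq:agg}. The finiteness assumptions and the compact type set justify splitting the conditional expectation into these pieces, since each piece is integrable ($\gamma$ is bounded away from zero and the parameters are bounded).

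For the asymptotics, the coefficient $1+\E[\theta(1/\gamma-1)]$ is positive. It equals $\E[\hg/\gamma]$ when $\theta$ is a constant, and in general it is $1-\E\theta+\E[\theta/\gamma]>0$ because $\theta<1$. I divide \eqref{eq:agg} by $t$. The constant term divided by $t$ tends to zero, and $B_t/t\to0$ almost surely by the strong law for Brownian motion. The last term is $o(t)$ by assumption, which gives $\log Y_t/t\to g^\ast$ with $g^\ast$ as in \eqref{eq:gstar}.

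If $z^\ast$ lies in a fixed compact $[m,M]\subset(0,\infty)$, then $|\log z^\ast_t/\gamma|\le C$ uniformly, since $\gamma$ is bounded below on the compact type set. The conditional expectation is then bounded, hence $o(t)$.

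The only delicate step is the conditional expectation of terms that mix type and path, namely $(\kappa_p/\gamma)W_t$ and $\theta(1-1/\gamma)\log Y_t$. For these I would use the independence of the $\F_0$ data from $(B,W)$ explicitly, for instance via a product-measure representation.
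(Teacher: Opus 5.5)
Your proposal is correct and matches the paper's proof. Both solve the definition of $z^\ast$ for $\log c^\ast_t$, condition on $\F^B_t$ using \eqref{eq:consistency}, \eqref{eq:spd}, independence of the type from $(B,W)$ and $\E[W_t\,|\,\F^B_t]=0$, and then conclude from $1-\E\theta+\E[\theta/\gamma]>0$ and $B_t/t\to0$. One small correction: your qualifier about constant $\theta$ is unnecessary, since $1+\theta(1/\gamma-1)=\hg/\gamma$ holds pointwise, so the coefficient always equals $\E[\hg/\gamma]$.
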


Given the distribution of risk exposures, discount rates, curvature, and
relative concerns, the adjustment costs have no effect on $g^\ast$. They
affect spending and welfare through the bounded index term in
\eqref{eq:agg}, the wealth required to finance a given standard, and the
path of consumption.

The initial cross section does not affect long run growth either. In every
equilibrium of Theorem~\ref{thm:weakexist} the benchmark stays within
distance $R$ of $k(\mu)+gt+sB_t$ by \eqref{eq:existbound}, and $g$ in
\eqref{eq:generalcenter} depends only on preferences, the market, and the
type weights, so $\log Y_t/t\to g$ almost surely. Initial wealth and
inherited standards move the timing of revisions and, through the
multipliers, the level of the benchmark, whereas the composition of
preferences moves its growth rate, as \eqref{eq:generalcenter} and
Proposition~\ref{prop:composition} show. In the model of \citet{GR}, where
aspirations formed from the incomes of others drive investment, initial
histories matter for the long run distribution. Here asset returns are
exogenous, and the benchmark grows at a rate fixed by preferences and the
market.

\begin{proposition}\label{prop:comparativestat}
For identical households in the stationary equilibrium of
Theorem~\ref{thm:stat}, let $Q=r+\kappa^2/2-\delta$. Then
\[
 \frac{\partial g^\ast}{\partial\theta}=\frac{(\gamma-1)Q}{\hg^{\,2}},\qquad
 \frac{\partial\hat r}{\partial\theta}=-\frac{\gamma Q}{\hg^{\,2}} .
\]
Throughout the range of $\theta$ on which $\hat K(\gamma)>0$, both index
thresholds strictly decrease in $\theta$ if $Q>0$, strictly increase if
$Q<0$, and are constant if $Q=0$.
\end{proposition}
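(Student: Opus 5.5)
Two derivatives need computing and the monotonicity of the thresholds has to be checked. For identical households, Corollary~\ref{cor:stat}(i) gives $g^\ast=Q/\hg$ with $\hg=\gamma+\theta(1-\gamma)$, so $\partial\hg/\partial\theta=1-\gamma$. Differentiating,
\[
 \frac{\partial g^\ast}{\partial\theta}=-\frac{Q(1-\gamma)}{\hg^{\,2}}=\frac{(\gamma-1)Q}{\hg^{\,2}} .
\]
The artificial market of Theorem~\ref{thm:stat} has $\hat r=r-\theta g^\ast=r-\theta Q/\hg$. Hence
\[
 \frac{\partial\hat r}{\partial\theta}=-\frac{Q}{\hg}-\theta\frac{(\gamma-1)Q}{\hg^{\,2}}
 =-\frac{Q\big(\hg+\theta(\gamma-1)\big)}{\hg^{\,2}}=-\frac{\gamma Q}{\hg^{\,2}} ,
\]
because $\hg+\theta(\gamma-1)=\gamma$. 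These are routine calculations.

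For the thresholds, note that in the stationary case $\kappa_a=0$, so $s=0$ and $\hat\kappa=\kappa$ does not depend on $\theta$. The parameters $\delta,\alpha,\beta$ are also fixed. Lemma~\ref{lem:band}(i) says the endpoints $\hat z_\alpha,\hat z_\beta$ depend only on $(\hat r,\kappa,\delta,\alpha,\beta)$, not on curvature, and are strictly increasing in $\hat r$, for every real $\hat r$. Thus, as functions of $\theta$, the thresholds are the composition of a strictly increasing function with $\theta\mapsto\hat r(\theta)$.

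By the formula above, $\hat r(\theta)$ is strictly decreasing when $Q>0$, strictly increasing when $Q<0$, and constant when $Q=0$, since $\gamma/\hg^{\,2}>0$. Strict monotonicity of $\hat r$ holds on any interval where its derivative has fixed sign. Composing with the strictly increasing endpoint maps then gives the three claimed cases. The range where $\hat K(\gamma)>0$ matters only because there the stationary equilibrium of Theorem~\ref{thm:stat} exists, so the thresholds are meaningful. That range may not be an interval, but strict monotonicity of a composition holds between any two admissible $\theta$ values: the value of $\hat r$ is ordered between them, and the endpoint maps of Lemma~\ref{lem:band} are defined for all real $\hat r$.

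The only possible subtlety is this last point about the admissible set. It is handled because the band itself is defined for every $\hat r\in\R$, independently of $\hat K$. No step is a real obstacle.
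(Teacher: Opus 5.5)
Your proposal is correct and follows the paper's own proof: differentiate $g^\ast=Q/\hg$ and $\hat r=r-\theta Q/\hg$ using $\partial\hg/\partial\theta=1-\gamma$, then combine with Lemma~\ref{lem:band}(i) (endpoints strictly increasing in the rate, for every real $\hat r$) and the $\theta$-independence of $\hat\kappa=\kappa$, $\delta$, $\alpha$, $\beta$. Your extra care about the admissible set is harmless but unnecessary. Since $\hg>0$ on $[0,1)$, $\hat r$ is strictly monotone on all of $[0,1)$, and $\hat K(\gamma)$ is affine in $\hat r$, so the set where $\hat K(\gamma)>0$ is in fact an interval.
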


Suppose $\gamma>1$ and $Q>0$. A stronger desire to keep pace then raises
stationary benchmark growth and lowers the effective rate
$\hat r=r-\theta g^\ast$, because one unit of relative consumption
requires expenditure that grows faster. With $\gamma<1$ the sign of
$\partial g^\ast/\partial\theta$ reverses, so relative concerns need not raise
growth. In Figure~\ref{fig:stationary},
faster benchmark growth comes with a lower band for the ratio of consumption
to wealth. Proposition~\ref{prop:comparativestat} gives the direction in
which the index thresholds move in general, but the downward shift of this
band is only a numerical finding for these parameters. Panel (c) shows the
expected regulator increments divided by $\gamma$, that is, the upward and
downward contributions to $\log h$ per year. By
Corollary~\ref{cor:stat}(iii), their difference plus spending growth during
inaction equals benchmark growth.

\begin{figure}[!t]\centering
\includegraphics[width=\textwidth]{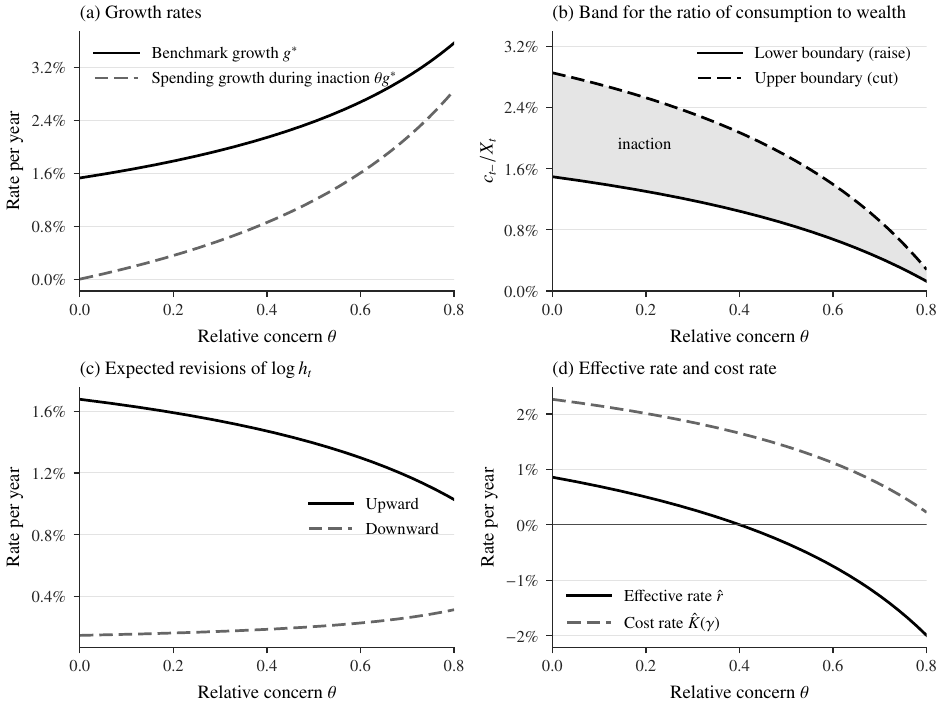}
\caption{Growth, Spending, and Revisions under Idiosyncratic Risk.
\emph{Notes.} Identical households with $\gamma=3.5$, $\delta=0.015$,
$r=0.0086$, $\kappa=0.3462$, $\alpha=0$, and $L=1.5$. Each value of $\theta$
corresponds to the stationary equilibrium of Theorem~\ref{thm:stat} with
compatible initial wealth. All curves are computed from exact formulas. In
panel (d), $\hat r$ crosses zero at $\theta\approx0.401$, and $\hat K(\gamma)$
remains positive on the plotted range.}
\label{fig:stationary}
\end{figure}

A negative effective rate causes no difficulty for the band, which exists
for every real rate, but a sufficiently low rate makes the regulated
consumption stream impossible to finance. The stream has finite cost only if
$\hat K(\gamma)>0$, and in this example the condition fails at
$\theta\approx0.836$. As this limit is approached, the wealth needed per
unit of consumption grows like the reciprocal of $\hat K(\gamma)$.

\subsection{Loss Aversion and Risk Taking inside the Band}\label{sec:loss}

With the relative concern fixed, a higher $L$ widens the band for
the ratio of consumption to wealth and lowers the risky share inside it, so
a low risky share inside the inaction band is consistent with moderate
curvature once the cost of changing consumption is taken into account.

\subsection{Asynchronous Revisions with Three Groups}\label{sec:computeexample}

The closed forms of Theorems~\ref{thm:common} and~\ref{thm:symmetricunique}
require that all households revise at the same dates. We now compute an
equilibrium in which they do not. Three equally large groups have the
preferences of Figure~\ref{fig:common}, with $\gamma=3$ and $\theta=0.3$,
in the market of that figure. Their inherited consumption is $0.90$, $1.00$,
and $1.04$, and their wealth is $38$, $31$, and $27$ times their inherited
consumption. The weak interaction condition holds with
$\Theta_1+\Theta_2=0.4$, the contraction factor of
Theorem~\ref{thm:weakexist} is $q=1/7$, and the radius in
\eqref{eq:existbound} is $R=0.59$. We use the procedure of
Section~\ref{sec:computing}, and Table~\ref{tab:async} reports its accuracy.

The first group inherits a standard that is low relative to its wealth and
raises its relative consumption at time zero, while the other two groups
keep their inherited standards. It raises its spending until its log index
reaches $-0.874$, the end of its inaction region given the positions of the
other two groups, whereas synchronized households raise their spending at
$\log\za=-0.910$. The band of Theorem~\ref{thm:common} therefore does not
describe the asynchronous phase. Because the three log indices differ, a
common shock brings one group to a revision boundary while the others
remain inside their inaction regions. Figure~\ref{fig:async} follows one
path. The first group raises its spending repeatedly during the first three
years, the second group starts to raise its spending when its index reaches
that of the first, and the third group joins after about seven years. From
then on the three groups revise together, as in Theorem~\ref{thm:common},
and they hold the same risky share although their wealth differs. Before
that date the groups differ in their risky shares, and a household that
does not revise still changes its spending because the benchmark moves when
the other groups revise. In $20{,}000$ simulated paths the three groups
synchronize on every path, within ten years with probability $0.23$ and on
average after $23$ years. In the exact equilibrium, groups with common
preferences synchronize in finite time almost surely. The
benchmark stays within the bound \eqref{eq:existbound} along every path.

By Theorem~\ref{thm:exact}, the plans of the exact equilibrium that this
computation approximates form a Nash equilibrium of the game with $J$
households in each group for every $J\ge1$. The computed plans approximate
these plans with the accuracy reported in Table~\ref{tab:async}, and
Section~\ref{sec:scope} discusses what this leaves open.
We compute two further equilibria on the full lattice, with two types that differ in loss aversion and wealth, and with
two types that differ in curvature, relative concern, loss aversion, and
inherited standards. Figure~\ref{fig:twotypes} follows one path of the
first example, in which both types have $\gamma=3$ and $\theta=0.3$, type 1
has $L=1.5$, and type 2 has $L=2.5$, so that the band of type 1 lies
strictly inside the band of type 2. Type 1 raises its standard repeatedly
during the first seven years, and the spending of type 2 rises with the
benchmark although type 2 does not revise. At about $7.5$ years the index
of type 2 reaches its lower boundary while type 1 is revising, and both
raise their standards together. Afterwards type 1 cuts its standard three
times, and the benchmark and the spending of type 2 fall with it. On all
$20{,}000$ simulated paths of both examples, once a joint revision has
occurred the more loss averse type revises only at dates at which the
other type revises as well, whereas about a third of the revisions of the
less loss averse type are not joint. This is a numerical observation, with
the following explanation. The two log indices receive
the same common increments, and in the computed equilibria a revision of
the type with the narrower band alone moves the index of the other type
away from the boundary it is approaching, so the less loss averse type
reaches a boundary first.
\begin{figure}[!t]\centering
\includegraphics[width=\textwidth]{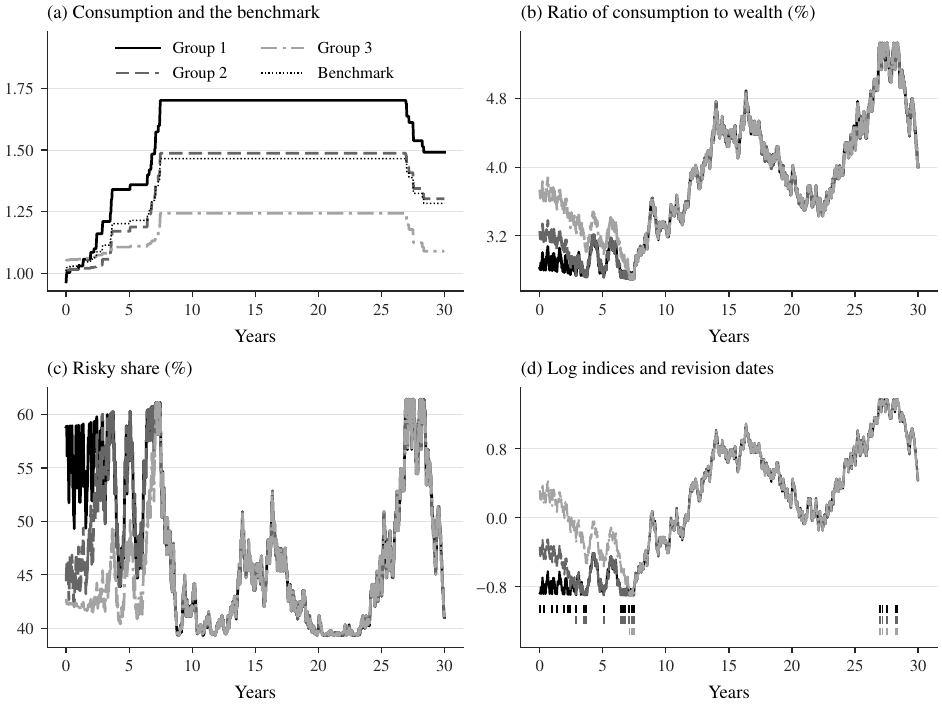}
\caption{Asynchronous Revisions with Three Groups.
\emph{Notes.} Common risk with $r=0.02$, $\delta=0.05$, $\kappa=0.30$,
$\sigma=0.20$, $\gamma=3$, $\theta=0.3$, $\alpha=0$, and $L=2$. Three equally
large groups with inherited consumption $0.90$, $1.00$, and $1.04$ and wealth
equal to $38$, $31$, and $27$ times inherited consumption (groups 1, 2, and 3
in black, dark gray, and light gray). One simulated path of the lattice
scheme with time step $1/64$. Panel (a) uses units in which inherited
consumption is $0.90$, $1.00$, and $1.04$, and ticks at the bottom of panel
(d) mark the revision dates of groups 1, 2, and 3, from top to bottom.}
\label{fig:async}
\end{figure}

\iftablesatend\else
\input{table1}
\fi

\begin{figure}[!t]\centering
\includegraphics[width=\textwidth]{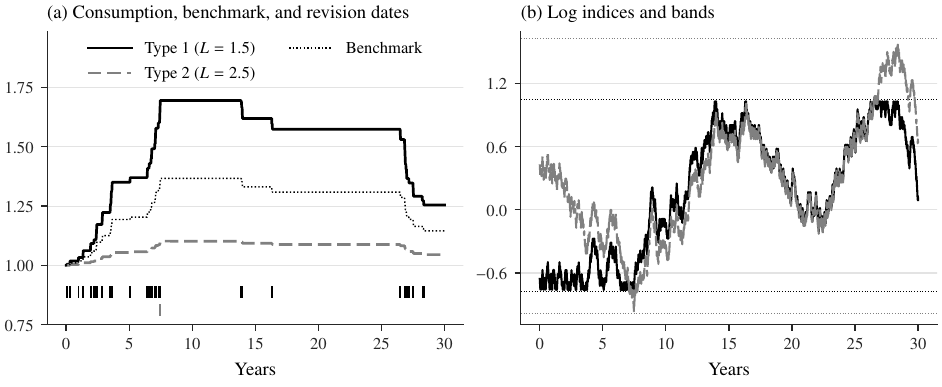}
\caption{Two Types with Different Loss Aversion.
\emph{Notes.} Common risk with $r=0.02$, $\delta=0.05$, $\kappa=0.30$,
$\sigma=0.20$, $\gamma=3$, $\theta=0.3$, and $\alpha=0$. Type 1 has $L=1.5$ and
wealth $32$, and type 2 has $L=2.5$ and wealth $27$, both with inherited
consumption $1$. One simulated path of the lattice scheme with time step
$1/64$. Ticks at the bottom of panel (a) mark the revision dates of types 1
and 2, from top to bottom, and the dotted lines in panel (b) mark the ends
of the band of each type.}
\label{fig:twotypes}
\end{figure}

\subsection{Value of Policies at Fixed Wealth}\label{sec:advice}

Figure~\ref{fig:composition} compares economies whose initial wealth is
compatible with each composition, so it does not show what a given
household gains from a better policy. To measure this gain, we fix wealth
and inherited consumption and
compare policies for the same household. Every
comparison policy is financed by the household's wealth and has relative
consumption of finite variation, as admissibility requires. The
frictionless policy of \citet{Merton} does not have this property, because
its consumption is a diffusion, and comparing with it would make the
adjustment cost infinite. Losses are reported as the additional wealth, in
percent of initial wealth, that a household following the alternative
policy would need to obtain the utility of the optimal policy, with its
inherited standard and the benchmark held fixed. Utilities are
solutions of linear differential equations, which are available in closed
form for the reflected index and are computed by finite differences for the
misjudged composition. The parameters are those of the earlier figures and
are not estimated, so the losses show how large the effects can be in the
model and are not measurements.

\emph{Inherited standards.} Consider a household in the stationary
equilibrium of Figure~\ref{fig:stationary} with $\theta=0.3$. It faces
the deterministic benchmark and therefore solves the problem of
Theorem~\ref{thm:br}. We compare the optimal policy with two policies that
ignore the inherited standard at time zero and follow the optimal rule
afterwards. The first starts at the level that would be optimal if the
initial standard could be chosen freely, which with $\alpha=0$ is the lower
end of the band for the ratio of consumption to wealth. The second starts at the ratio
$\hat K(\gamma)$ of the frictionless problem. Figure~\ref{fig:advice}(a)
shows that the first policy loses nothing when the household inherits a
standard below the band, because the optimal policy then also raises
spending at once to the lower end. When the inherited standard lies inside
or above the band, the first policy forces a costly cut, and the loss
reaches $17\%$ of wealth. For other values of $\theta$, $L$, and $\gamma$,
the largest loss ranges from $9\%$ to $29\%$. The second policy loses up to $4.4\%$, and it
loses nothing only when the inherited ratio happens to equal
$\hat K(\gamma)$.

\emph{Misjudged composition.} Let the population consist of two types with
$\gamma=2$ and $\gamma=5$ and common relative concern $0.3$, in equal
shares, with the other parameters of Figure~\ref{fig:stationary}. A
household of the second type that believes the share of the first type to
be $w'$ uses the band and the portfolio rule that are optimal for the
benchmark growth implied by $w'$. Its spending still moves with the
realized benchmark between revisions. Figure~\ref{fig:advice}(b) shows the
loss. Misjudging the share by $0.1$ costs $0.2\%$ to $0.3\%$ of wealth, and
less than $0.5\%$ for the other parameter values. Believing that the reference
group consists only of households like itself costs $16\%$, and the cost of
this error rises steeply with $\theta$ and with the curvature of the household. Most of the loss comes from
the misplaced band. With the
believed band and the correct portfolio rule the loss in the last case is
$12\%$, and with the correct band and the believed portfolio rule it is
$1\%$.

\emph{Recomputing for the actual composition.} Finally consider the two
types of Figure~\ref{fig:composition} with a target share of $1/3$ for the
first type, wealth $40$, and inherited consumption $0.80$ and $0.85$. Both
types then raise their spending at time zero, and the equilibria at the
target and at every composition in Table~\ref{tab:advice} are those of
Theorem~\ref{thm:common} with $z_0=\za$. These plans are known in closed
form, so the certificate below involves no discretization error. In a
population of $N$ households
the closest integer counts give the actual composition, and
Table~\ref{tab:advice} compares two options. The first is to keep the
plans computed at the target. The certificate of
Theorem~\ref{thm:certificate}, with $Y^0$ the target benchmark, then bounds
the gain of any household from any deviation. The relative references in
the game depend on the actual composition, but the target plans still
revise upward at time zero from these references, so they satisfy the
optimality conditions that the theorem requires. The second is to switch
all households to the equilibrium recomputed at the actual composition, and
the table reports the resulting change in utility. The certificate is below
$0.004\%$ of wealth for ten households and decreases with the square of
the composition error, whereas recomputing changes utilities by amounts of
first order. The second order behavior arises because the second and third
terms of Theorem~\ref{thm:certificate} cancel to first order, which need
not occur in general, and recomputing mainly shifts utility between the
types.

\begin{figure}[!t]\centering
\includegraphics[width=\textwidth]{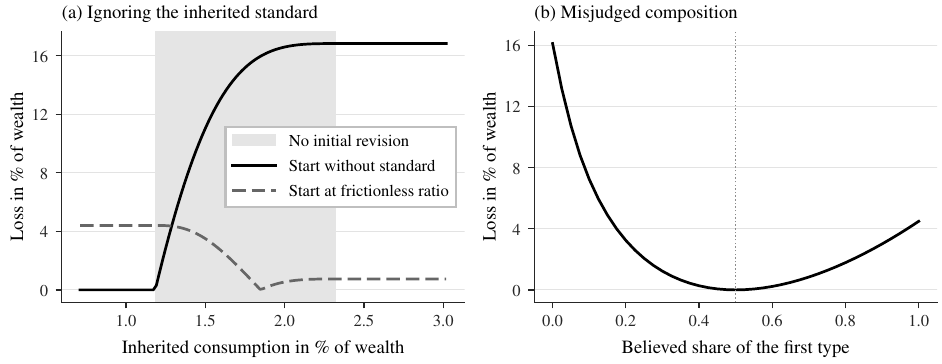}
\caption{Value of Policies at Fixed Wealth.
\emph{Notes.} Panel (a): stationary equilibrium of
Figure~\ref{fig:stationary} with $\theta=0.3$. The first policy starts at
the lower end of the band for the ratio of consumption to wealth, and the
second at the frictionless ratio $\hat K(\gamma)$. Both follow the optimal
rule afterwards. The shaded region requires no initial revision. Panel
(b): two types with $\gamma=2$ and $\gamma=5$ in equal shares and
$\theta=0.3$, and a household of the second type starting at the middle of
its band. The dotted line marks the true share.}
\label{fig:advice}
\end{figure}

\iftablesatend\else
\input{table2}
\fi

\subsection{Relation to Household Evidence}\label{sec:evidence}

We do not estimate the model, but several of its implications can be
compared with household data. Apart from the size of the relative concern,
the comparisons are qualitative.

\paragraph{Size of the relative concern.}
Between revisions $\log c_t=\log h_t+\theta\log Y_t$ with $h_t$ constant, so
$\theta$ is also the elasticity of a household's spending with respect to the
benchmark over periods without revisions. \citet{MM} estimate Euler
equations in which utility depends on consumption relative to a power of the
geometric mean of the current consumption of a peer group, which is the
benchmark used here, and obtain values between 0.30 and 0.44 in the
specifications in which the parameter is significant. In Danish
administrative data, \citet{DFP} estimate an elasticity of own consumption
with respect to the consumption of workplace peers of about 0.3. Values of
this size satisfy the weak interaction condition \eqref{eq:weakcondition}
for every $\gamma\ge1$, because for homogeneous households it reads
$\theta<\gamma/(\gamma+1)$ (Corollary~\ref{cor:weakgrowth}). They also lie
below the bound $1/2$ in Corollary~\ref{cor:rounded}. Our baseline
value is $\theta=0.3$. These estimates motivate but do not identify $\theta$
in our model. For the parameters of Figure~\ref{fig:stationary} the effective
rate $\hat r$ changes sign at $\theta\approx0.401$, inside the range of these
estimates, so the band for zero and negative rates in Lemma~\ref{lem:band} is
needed at empirically relevant values.

\paragraph{Infrequent revisions.}
After an unexpected change in wealth, a household solves its problem again
with its current standard as the inherited one, so the initial rule of
Theorem~\ref{thm:br} applies. If the ratio of consumption to wealth stays
inside the band, the relative standard does not change. If a gain carries
the ratio below the band, the household raises its standard at once, by
just enough to bring the ratio back to the lower end of the band. Small
gains therefore cause few revisions, and both the fraction of households that
revise and the average change in spending per unit of gain increase with the
size of the gain. \citet{FKZ} find this pattern. In their survey data, spending
responses to unanticipated gains are either zero or substantial and rise
with the size of the gain mainly through the extensive margin, and the
authors account for them with a utility cost of changing consumption.
Because the index of Theorem~\ref{thm:br} is reflected at the band,
relative consumption does not jump after time zero unless wealth changes
unexpectedly, whereas with a fixed cost of adjustment consumption also
jumps along anticipated paths.
\citet{Shea} finds that consumption responds more strongly to predictable
income declines than to predictable increases and notes that this asymmetry
is consistent with loss aversion, which ILA builds in through $L>1$.

\paragraph{Spending of others.}
Between revisions a household spends $h_tY_t^\theta$, so its spending rises
with the benchmark even when it does not revise, and holding a relative
standard constant costs more when the benchmark grows faster. In the
stationary equilibrium, faster benchmark growth lowers the effective rate
$\hat r=r-\theta g^\ast$ at which households finance spending, and as the cost
rate $\hat K(\gamma)$ approaches zero, the wealth needed per unit of spending
grows without bound. Evidence that
a household's spending and borrowing rise with the spending or income of its
neighbors is in line with this mechanism. Nonparticipating neighbors of
winners in the Dutch Postcode Lottery have higher car consumption than other
nonparticipants \citep{KKSK}, and in Finland the car purchases of nearest
neighbors raise a household's own purchases \citep{GKI}. Larger lottery
prizes increase subsequent borrowing and bankruptcies among the winners'
neighbors \citep{AMS}. \citet{BeMo} find that nonrich households in the
United States consume a larger share of their income, and spend more on
visible goods, when top incomes are higher.

\paragraph{Peer effects in risk taking.}
Proposition~\ref{prop:composition} and the portfolio rule \eqref{eq:port}
give two channels through which others affect the risk taking of a household
with $\gamma>1$ and $\theta>0$, and neither involves learning. First, a
reference group with more households of low effective risk aversion makes
the benchmark react more strongly to common shocks, and the household then
holds a larger risky share at revision dates although its own preferences
and opportunities are unchanged. Second, because spending moves with
$Y^\theta$, the loading of wealth on common risk contains the term
$(1-1/\Rr_t)\theta s$, which is positive when $s>0$ since $\Rr_t\ge\gamma>1$.
The household therefore holds some of the risk that drives the benchmark even
if this risk carries no premium for it, and between revisions its loadings
move toward $(\theta s,0)$, the loadings of $Y^\theta$. Peer effects in stock
market participation and in investment decisions are well documented
\citep{HKS,BISW,OT}. In a field experiment with a Brazilian brokerage,
\citet{BEFY} separate learning from the utility of owning the asset that a
peer owns and find that both channels matter. \citet{DKK} derive a
preference for the assets held by one's community from relative wealth
concerns, and \citet{HJWZ} attribute heavy trading in local stocks to the
desire to keep up with neighbors.
\FloatBarrier

%% file: table1.tex
\begin{table}[!t]\centering\small
\caption{Accuracy of the Computed Asynchronous Equilibrium.}\label{tab:async}
\setlength{\tabcolsep}{4pt}
\begin{tabular}{@{}lcccc@{}}
\hline
Time step & $1/16$ & $1/32$ & $1/64$ & $1/128$\\
\hline
\multicolumn{5}{@{}l}{\emph{Inner equilibrium in index coordinates}}\\
Lattice points & $6{,}992$ & $18{,}656$ & $49{,}830$ & $137{,}361$\\
Howard iterations & 8 & 7 & 8 & 9\\
Largest complementarity residual & $8.5\times10^{-15}$ & $9.2\times10^{-15}$ & $2.5\times10^{-14}$ & $1.8\times10^{-14}$\\
Synchronized states: error in adjoint & $9.3\times10^{-4}$ & $1.6\times10^{-4}$ & $1.6\times10^{-4}$ & $4.0\times10^{-5}$\\
Synchronized states: error in cost & $2.5\times10^{-3}$ & $2.1\times10^{-3}$ & $1.1\times10^{-3}$ & $1.0\times10^{-3}$\\
Synchronized states: error in risky share & $7.0\times10^{-3}$ & $3.5\times10^{-3}$ & $1.5\times10^{-3}$ & $1.9\times10^{-3}$\\
\multicolumn{5}{@{}l}{\emph{Outer budget equations}}\\
Largest budget residual & $<10^{-15}$ & $<10^{-15}$ & $<10^{-15}$ & $<10^{-15}$\\
Monte Carlo cost minus wealth, in \% & $-0.15$ & $0.25$ & $0.20$ & $0.05$\\
\multicolumn{5}{@{}l}{\emph{Results}}\\
Initial consumption to wealth, in \% & $2.83,\,3.25,\,3.74$ & $2.82,\,3.25,\,3.74$ & $2.82,\,3.25,\,3.74$ & $2.81,\,3.25,\,3.73$\\
Initial risky share, in \% & $57.7,\,44.3,\,42.7$ & $58.3,\,44.3,\,42.7$ & $58.8,\,44.4,\,42.7$ & $59.2,\,44.4,\,42.8$\\
Mean time to synchronization, in years & $24.0$ & $23.4$ & $23.2$ & $23.5$\\
Synchronization within ten years & $0.21$ & $0.22$ & $0.23$ & $0.22$\\
\hline
\end{tabular}

\vspace{4pt}
\begin{minipage}{\textwidth}\footnotesize
\emph{Notes.} Errors on synchronized states compare the lattice solution
with the closed forms of Theorem~\ref{thm:common} on the diagonal of the
index lattice. The adjoint error is scaled by the largest absolute
adjoint, the cost error is relative, and the risky share error is absolute.
The Monte Carlo row reports the largest deviation over the three groups of
the simulated cost of consumption from wealth, from $20{,}000$ paths over
$200$ years, with standard errors of at most $0.68\%$. The simulations
use the interpolated policy at states off the lattice. Initial ratios and risky shares are
listed for groups 1, 2, and 3. Synchronization means that the three log
indices coincide, and the last row gives the probability that this happens
within ten years.
\end{minipage}
\end{table}

%% file: table2.tex
\begin{table}[!t]\centering\small
\caption{Plans Computed at a Target Composition.}\label{tab:advice}
\begin{tabular}{@{}rcccccc@{}}
\hline
 & & & \multicolumn{2}{c}{Certificate $\varepsilon_a$, \% of wealth} &
 \multicolumn{2}{c}{Effect of recomputing, \% of wealth}\\
$N$ & $N_1$ & $d_N$ & Type 1 & Type 2 & Type 1 & Type 2\\
\hline
10 & 3 & 0.067 & $1.4\times10^{-4}$ & $3.7\times10^{-3}$ & $\phantom{-}2.7\times10^{-2}$ & $-4.0\times10^{-3}$\\
20 & 7 & 0.033 & $3.4\times10^{-5}$ & $9.0\times10^{-4}$ & $-1.2\times10^{-2}$ & $\phantom{-}4.8\times10^{-3}$\\
50 & 17 & 0.013 & $5.4\times10^{-6}$ & $1.4\times10^{-4}$ & $-4.9\times10^{-3}$ & $\phantom{-}1.7\times10^{-3}$\\
100 & 33 & 0.0067 & $1.4\times10^{-6}$ & $3.6\times10^{-5}$ & $\phantom{-}2.5\times10^{-3}$ & $-7.5\times10^{-4}$\\
1000 & 333 & 0.00067 & $1.4\times10^{-8}$ & $3.6\times10^{-7}$ & $\phantom{-}2.5\times10^{-4}$ & $-7.8\times10^{-5}$\\
\hline
\end{tabular}

\vspace{4pt}
\begin{minipage}{\textwidth}\footnotesize
\emph{Notes.} Types $(\gamma,\theta)=(2,0.3)$ and $(5,0.5)$ with the market and
adjustment costs of Figure~\ref{fig:composition}, target share $1/3$ of
type 1, wealth $40$, and inherited consumption $0.80$ and $0.85$. The
count $N_1$ is the closest integer to $N/3$ and $d_N=\sum_a|N_a/N-\pi_a|$. The certificate
bounds the gain of a household from any deviation when every household keeps
the target plan. The last two columns report utility in the equilibrium
recomputed at the actual composition minus utility under the target plans.
Utilities are converted into wealth by dividing by the multiplier of each
type, the marginal utility of wealth. Unlike the compensating wealth of
Figure~\ref{fig:advice}, this conversion is linear, which is accurate for
differences of this size.
\end{minipage}
\end{table}

%% file: sec4_conclusion.tex
\section{Discussion}\label{sec:conclusion}

We have studied consumption and investment when households compare their
consumption with that of others and find it costly to revise their relative
standard. For finitely many types, a weak interaction condition, together
with conditions that keep each individual problem well posed, guarantees a
mean field equilibrium for arbitrary initial wealth and inherited standards,
and relative concerns in the range of micro estimates satisfy this
condition when risk aversion is at least one. In a complete market with common risk, any mean field
equilibrium with finitely many types is an exact Nash equilibrium of every
finite game whose type frequencies equal its weights, and a computable
certificate bounds the gain from deviating when plans computed for another
benchmark are used. With identical households and common risk, the relative
concern enters the unique symmetric equilibrium only through the curvature
of an otherwise standard loss averse household. Adjustment costs and
the initial cross section leave the long run growth of the benchmark
unchanged.

\subsection{Scope of the Results}\label{sec:scope}

\emph{Uniqueness.} For fixed multipliers the benchmark is unique among benchmarks whose logarithm stays within a
bounded distance of $gt+sB_t$ (Theorem~\ref{thm:weakexist}), at a given price each household has exactly
one multiplier at which its budget binds, and identical households facing common
risk have exactly one symmetric equilibrium for every population size
(Theorem~\ref{thm:symmetricunique}). As in \citet{ABF}, uniqueness of
the multipliers that make all budgets bind requires a lower bound on the
response of expenditure to the own multiplier, which inaction makes hard to
obtain. Households facing the same price forget their inherited standards
at their first revision in the relevant direction, and this yields
uniqueness among such benchmarks under an explicit strengthening of
\eqref{eq:weakcondition}. The results on finite populations do not
depend on uniqueness: Theorem~\ref{thm:exact} applies to every mean field
equilibrium with finitely many types, Theorem~\ref{thm:populationlimit}
gives convergence along subsequences and of the whole sequence when the
limit is unique, and the certificate of Theorem~\ref{thm:certificate}
applies whichever equilibrium the plans come from.

\emph{Computed equilibria.} For one type the lattice scheme
reduces to a monotone Markov chain approximation in one dimension of the
kind analyzed by \citet{KD}. With several types a revision of one type also
moves the indices of all other types. We know no comparison
principle for the resulting system of coupled obstacle problems, and
convergence as the time step tends to zero is not proved.
Section~\ref{sec:computeexample} checks the computation against the closed forms on synchronized states, across four
time steps, using Monte Carlo budget diagnostics for the interpolated
policies, and against the
bound \eqref{eq:existbound}. Theorems~\ref{thm:exact} and~\ref{thm:certificate} assume
exact best responses, so for the computed plans of
Section~\ref{sec:computeexample} they apply to the equilibrium that these
plans approximate rather than to the plans themselves. The example has
groups with identical preferences, which lets us store the adjoints on a
sorted part of the lattice, and the two examples with two types of
different preferences are computed on the full lattice. With three or more
types and different preferences the lattice grows quickly.

\emph{Common risk and complete markets.} Theorem~\ref{thm:exact} uses that
a household can finance every plan whose value at the state price density
does not exceed its wealth. When households cannot trade the private risks
of others, the reflected plans of the stationary equilibrium cannot be
carried over to a finite population of identical households with
$\theta>0$, although the finite game may still
have equilibria. The mean field results of
Theorems~\ref{thm:stat} and~\ref{thm:weakexist} include private risk.

\emph{Modeling choices.} The benchmark is a geometric average, as in the
Euler equation estimates of \citet{MM}. This form lets a deviation enter the
benchmark as a power, which Theorem~\ref{thm:exact} uses. Asset returns are exogenous, as in the relative
performance games of \citet{ET}, \citet{LZ}, and \citet{LS}, so equilibrium
determines consumption, portfolios, and the benchmark but not prices. The
existence proof applies Brouwer's theorem to finitely many multipliers,
which matters only for limits with type distributions of infinite support.
The weak interaction
condition \eqref{eq:weakcondition} is sufficient but not necessary.

\subsection{Open Problems}\label{sec:open}

Whether the equilibrium is unique without a strengthening of \eqref{eq:weakcondition} remains open, and so is existence beyond
the weak interaction condition and for continuous distributions of
preference parameters. On the computational side, a proof that the lattice
scheme converges, combined with an a posteriori bound that adds the
discretization error to the certificate of Theorem~\ref{thm:certificate},
would make computed policies verified approximate equilibria. Numerical schemes for mean field games \citep{AC} and approximate
aggregation \citep{KrS} could be applied to larger type spaces, but their
accuracy would need careful assessment \citep{DH}. In finite populations in which households cannot trade the
private risks of others, an approximation result would require policies
that can be financed, and the reflected construction does not provide them.
Endogenous asset prices, finite horizons, and bequests would change both the
individual problem and the consistency condition.
Finally, panel data on spending and portfolios could identify the relative concern and the asymmetry of
adjustment costs jointly and would discipline the quantitative statements
of Section~\ref{sec:implications}.

%% file: appendixA.tex
\FloatBarrier
\section{Proofs of the Main Equilibrium Results}\label{app:proofs}

Appendices~\ref{ec:individual} to~\ref{ec:further} contain the remaining
proofs and the computational details, and Appendix~\ref{sup:intro} collects further results. Appendix~\ref{ec:individual} treats the individual problem,
including the proofs of Lemma~\ref{lem:band} and Theorem~\ref{thm:br},
deflators between constant multiples of a geometric deflator
(Theorem~\ref{ec:thm:comparable}), and a bound on the gain from any
deviation. Appendix~\ref{ec:existence} proves Theorems~\ref{thm:weakexist},
\ref{thm:populationlimit}, and~\ref{thm:certificate} and
Corollaries~\ref{cor:weakgrowth} and~\ref{cor:rounded} and treats finite
games, private risk, and types without intertemporal loss aversion.
Appendix~\ref{ec:compute} describes the lattice scheme, its diagnostics, and
further examples, and Appendix~\ref{ec:further} proves uniqueness with all
budgets imposed under a strengthening of \eqref{eq:weakcondition} and
synchronization of revisions for groups with common preferences.
Throughout Appendices~\ref{ec:individual} to~\ref{sup:intro}, $\varsigma=\kappa^2/2$.

The proofs below use four results from Appendix~\ref{ec:individual}: the sufficient
optimality conditions of Theorem~\ref{ec:thm:foc} for the Lagrangian
\begin{equation}\label{eq:lagrangian}
 \J_P(h)=\E\int_0^\infty e^{-\delta t}\Big(\big(u(h_t)-P_t\psi_p(h_t)\big)\dd t
        -\alpha\,\dd v_t^{+}-\beta\,\dd v_t^{-}\Big),\qquad
 \psi_p(h)=h^{1/p},\ p\in(0,1],
\end{equation}
stated in terms of the index $z_t=P_t\psi_p'(h_t)/u'(h_t)$ and its adjoint
$\Psi_t=\E[\int_t^\infty e^{-\delta(s-t)}(1-z_s)\dd s\,|\,\F_t]$, which
must stay in $[-\beta,\alpha]$, equal $\alpha$ whenever $h$ increases, and
equal $-\beta$ whenever $h$ decreases, the
representation $\Psi_t=H(Z_t)$ for the reflected index (Lemma~\ref{ec:lem:fk}),
the cost formula (Lemma~\ref{ec:lem:cost}), and the integrability of
regulators (Lemma~\ref{ec:lem:variation}). Throughout, $\delta t+\log\xi_t=\nu
t-\kappa B_t$ when $\kappa_p=0$.

\subsection{Common Risk}

\begin{proof}[Proof of Theorem~\ref{thm:common}]
Fix $i$ and let $p_i=1-\theta_i/N$ and $\Pi_i=\prod_{j\ne i}(c^j)^{1/N}$. Then
$h^i=(c^i)^{p_i}\Pi_i^{-\theta_i}$, and the expenditure of household $i$ is
$c^i=\varpi\,\psi_{p_i}(h^i)$ with $\varpi=\Pi_i^{\theta_i/p_i}$. Its problem is
that of Theorem~\ref{ec:thm:foc} with $P_t=\lambda e^{\delta t}\xi_t\varpi_t$.
Since $\vb_N=N^{-1}\sum_j\varphi_j$, along the candidate $\log\Pi_i=\mathrm{const}+(\vb_N-\varphi_i/N)\ell$ and
$\log h^i=\mathrm{const}+(\varphi_i-\theta_i\vb_N)\ell$, so the index
satisfies
\[
 \log z^i_t=\mathrm{const}+\delta t+\log\xi_t
 +\Big[\frac{\theta_i}{p_i}\Big(\vb_N-\frac{\varphi_i}N\Big)
 +\Big(\gamma_i+\frac1{p_i}-1\Big)(\varphi_i-\theta_i\vb_N)\Big]\ell_t .
\]
The bracket equals $\gamma_i\varphi_i-(\gamma_i-1)\theta_i\vb_N$, which is
$1$ by \eqref{eq:AN}. Choosing $\lambda$ so that $z^i_0=z_0$ gives
$\log z^i_t=\log z_0+\nu t-\kappa B_t+\ell_t=x_t$ for every $i$, and
Lemma~\ref{ec:lem:fk} gives $\Psi^i=H(e^{x})\in[-\beta,\alpha]$. Since
$\varphi_i-\theta_i\vb_N=(1-\theta_i\vb_N)/\gamma_i\ge0$, relative
consumption increases only when $\dd\la>0$, where $\Psi^i=\alpha$, and
decreases only when $\dd\lb>0$, where $\Psi^i=-\beta$. Hence the conditions
of Theorem~\ref{ec:thm:foc} hold.
Since $\varphi_i>0$, Lemma~\ref{ec:lem:cost} with $\rho=1/\varphi_i$ shows
that $\E\int\xi_tc^i_t\dd t=c^i_{0-}\Xf_{1/\varphi_i}(z_0)=X^i_0$, so the
budget binds. Because $(1-\gamma_i)(\varphi_i-\theta_i\vb_N)=\varphi_i-1$,
utility along the plan is a constant times $e^{(\varphi_i-1)\ell}$, and
Lemma~\ref{ec:lem:variation} with $K(1/\varphi_i)>0$ gives admissibility
(logarithmic estimates of the same lemma apply when $\gamma_i=1$).
Theorem~\ref{ec:thm:foc} shows that $c^i$ is the unique best response. The
mean field case is the same computation with $p=1$ and $\Pi_i$ replaced by
$Y$, where $\log Y=\log\bar c_{0-}+\vb\ell$ by consistency. Compactness of
the type parameters and integrability of the inherited log consumption make
\eqref{eq:consistency} well defined. For the version with an upward revision
at time zero, the same computation applies after time zero with $z_0=\za$,
and the budget binds by the choice of $c^i_0$. The condition on the
logarithmic increments states that the relative consumption of no
household jumps downward at time zero. Upward jumps occur where
$\Psi^i_0=H(\za)=\alpha$, which is the support condition for the initial
jump. Averaging the condition shows that no household's consumption falls
at time zero. The downward version is symmetric, with $z_0=\zb$ and
$\Psi^i_0=H(\zb)=-\beta$ at every downward jump.
\end{proof}

\begin{proof}[Proof of Corollary~\ref{cor:common}]
For identical types \eqref{eq:AN} gives $\vb_N=\varphi_i=1/\hg$, and
$\hg-\theta=\gamma(1-\theta)>0$ gives $\theta\vb_N\le1$.
Lemma~\ref{ec:lem:cost} yields $X_t=c_t\Xf_{\hg}(Z_t)$ and the interval in
(ii). Since $c$ has finite variation, It\^o's formula gives the risky share
$-(\kappa/\sigma)Z\Xf'_{\hg}(Z)/\Xf_{\hg}(Z)$. At either endpoint
\eqref{eq:XH} and $H'=0$ make it $\kappa/(\sigma\hg)$, and in the interior
$H'<0$ makes it strictly smaller. The index in (i) is the index of the
proof of Theorem~\ref{thm:common}. The share $\kappa/(\sigma\hg)$ equals the
equilibrium share under common noise of \citet{LZ} and \citet{DRS} for identical
agents, since $1/\hg=\gamma^{-1}/(1+\theta(\gamma^{-1}-1))$.
\end{proof}

\begin{proof}[Proof of Theorem~\ref{thm:symmetricunique}]
For a symmetric profile $c^i=c$ the relative consumption is $c^{1-\theta}$,
and for power and logarithmic utility
\begin{equation}\label{eq:symmetricutility}
 u_\gamma(c^{1-\theta})=(1-\theta)u_{\hg}(c).
\end{equation}
The same factor relates the two signed variation measures, including their
initial jumps. A symmetric profile is therefore admissible exactly
when the corresponding plan of the single household problem with risk
aversion $\hg$ is, and its utility is that of this plan multiplied by
$1-\theta$. Let $\tilde c$ be the unique optimum of that problem, which exists by Lemma~\ref{lem:band} and Theorem~\ref{thm:br}. It is an equilibrium,
because the computation in the proof of Theorem~\ref{thm:common} with
$\varphi_i=1/\hg$ applies, including to the initial jump, whose direction is
that of the jump in relative consumption.

Conversely, let $c$ be a symmetric equilibrium. For finite $N$ put
$p=1-\theta/N$ and $q=(1-\theta)/p\in(0,1]$. Hold the consumption $c$ of the
others fixed and give one household the plan
$\check c_t=c_t^{1-q}\tilde c_t^{\,q}$. Its relative consumption after the
deviation is $\check c_t^{\,p}c_t^{-\theta(N-1)/N}=\tilde c_t^{\,1-\theta}$, the reference at
$0-$ agrees, and this relative plan is admissible. H\"older's inequality
for the measure $\xi_t(\omega)\,\dd t\,\mathbb P(\dd\omega)$ gives
\[
 \E\int_0^\infty\xi_t\check c_t\,\dd t\le
 \Big(\E\int_0^\infty\xi_tc_t\,\dd t\Big)^{1-q}
 \Big(\E\int_0^\infty\xi_t\tilde c_t\,\dd t\Big)^{q}\le X_0,
\]
and since the market is complete, the deviation can be financed. If $c$ differed
from $\tilde c$, uniqueness of the optimum for a single household and
\eqref{eq:symmetricutility} would make the deviation strictly profitable.
For the mean field game use $q=1-\theta$: with the benchmark fixed at $c$,
the plan $c^\theta\tilde c^{\,1-\theta}$ has relative consumption
$\tilde c^{\,1-\theta}$, and the same argument applies.
\end{proof}

\begin{proof}[Proof of Proposition~\ref{prop:composition}]
The aggregate exponent in \eqref{eq:AN} is $\vb(w)$, because
$1+\theta_i(\gamma_i^{-1}-1)=\hg_i/\gamma_i$, and
$w\varphi_1(w)+(1-w)\varphi_2(w)=\vb(w)$. The reflected index gives
$\ell_t=x_t-x_0-\nu t+\kappa B_t$ with $\nu=-Q$. Since $x$ is bounded and
$B_t/t\to0$, $\ell_t/t\to Q$ almost surely, and \eqref{eq:commonplan} gives
$\log Y_t=\log Y_0+\vb(w)\ell_t$, which proves the growth statement. With
$\rho_i=1/\varphi_i(w)$, the wealth of type $i$ is
$X^i_t=c^i_t\Xf_{\rho_i}(Z_t)$, its risky share is
$-(\kappa/\sigma)Z_t\Xf'_{\rho_i}(Z_t)/\Xf_{\rho_i}(Z_t)$, and at either
endpoint \eqref{eq:XH} and $H'=0$ make it $\kappa\varphi_i(w)/\sigma$. Direct
differentiation of $\vb$ gives the stated derivative, whose numerator is
$(\hg_2-\hg_1)/(\gamma_1\gamma_2)$, and differentiation of $\varphi_i(w)$
completes the proof.
\end{proof}

\subsection{Idiosyncratic Risk and Aggregation}

\begin{proof}[Proof of Proposition~\ref{prop:agg}]
By definition,
$\log c^\ast_t=\theta\log Y_t+\gamma^{-1}\big(\log z^\ast_t-\log\lambda
-\delta t-\log\xi_t-\theta\log Y_t\big)$. Take conditional expectations
given $\F^B_t$ and use \eqref{eq:consistency}, \eqref{eq:spd}, the
independence of the type from $(B,W)$, and $\E[W_t\,|\,\F^B_t]=0$. The
coefficient on the left is positive because $1-\theta+\theta/\gamma>0$, and
the limit follows from $B_t/t\to0$.
\end{proof}

\begin{proof}[Proof of Theorem~\ref{thm:stat}]
By Theorem~\ref{thm:br} with $s=0$, the plan
$h^\ast_t=h_{0-}\exp(\ell_t/\gamma)$ is the unique best response, where
$x_t=\log\hat z_0+\hat\nu t-\kappa W_t+\ell_t$ and
$\hat\nu=\delta-\hat r-\kappa^2/2$. No initial revision occurs because
$X_0/(Y_0^\theta h_{0-})$ lies in the interval of Theorem~\ref{thm:br}. The density
proportional to $e^{\hat\zeta y}$ is stationary for this reflected Brownian
motion with drift \citep[Chapter 5]{H}, so $\E[x_t-x_0\,|\,\text{type}]=0$
and $\E[\ell_t\,|\,\text{type}]=-\hat\nu t$. Hence
\[
 \E[\log c^\ast_t]=\E[\theta](\log Y_0+g^\ast t)+\E[\log h_{0-}]
 +\E\Big[\frac{r+\frac12\kappa^2-\delta-\theta g^\ast}{\gamma}\Big]t .
\]
The constant equals $\log Y_0$ by the choice of $Y_0$ and the slope equals
$g^\ast$ by \eqref{eq:gstar}, which is \eqref{eq:consistency}. For the
converse, the best responses are those of Theorem~\ref{thm:br}, their indices
remain in $[\hat z_\alpha,\hat z_\beta]$, these intervals lie in a fixed
compact set because the types do, and Proposition~\ref{prop:agg} gives
$g=\lim_t\log Y_t/t=g^\ast$.
\end{proof}

\begin{proof}[Proof of Corollary~\ref{cor:stat}]
Part (i) follows from \eqref{eq:gstar}. Between increases of the regulators
$h$ is constant, so $c=hY^\theta$ grows at rate $\theta g^\ast$. For (iii),
work conditionally on the type, write $a=\log\hat z_\alpha$,
$b=\log\hat z_\beta$, and $\varsigma=\kappa^2/2$. The invariant density
satisfies $\varsigma\hat p'=\hat\nu\hat p$, so integration by parts gives
$\int_a^b(\varsigma f''+\hat\nu f')\hat p\,\dd x=\varsigma[f'\hat p]_a^b$ for
$f\in C^2$. Stationarity and It\^o's formula give
$0=\varsigma[f'\hat p]_a^b\,t+f'(a)\E\la_t-f'(b)\E\lb_t$, and choosing the
two endpoint derivatives independently yields
$\E\la_t=\varsigma\hat p(a)t$ and $\E\lb_t=\varsigma\hat p(b)t$. For
identical households consistency gives $(1-\theta)g^\ast t=\E\ell_t/\gamma$,
which is the stated identity. Setting $s=0$ in \eqref{eq:port} proves (iv).
\end{proof}

\begin{proof}[Proof of Proposition~\ref{prop:comparativestat}]
Differentiate $g^\ast=Q/\hg$ and $\hat r=r-\theta Q/\hg$, using
$\partial\hg/\partial\theta=1-\gamma$. The thresholds are strictly increasing
in the rate by Lemma~\ref{lem:band}(i), and $\hat\kappa$, $\delta$,
$\alpha$, and $\beta$ do not depend on $\theta$.
\end{proof}

\subsection{Exact Implementation}

\begin{proof}[Proof of Theorem~\ref{thm:exact}]
The geometric mean of the assigned consumption is
$\prod_a(c^a_t)^{N_a/N}=\exp(\sum_a\pi_a\log c^a_t)=Y_t$, every assigned plan
can be financed from the wealth of its type, and its relative consumption is
the admissible plan $h^a$. Fix a household of type $a$, hold the consumption
of the others fixed, and write $c^\ast=c^a$, $h^\ast=h^a$, $\theta=\theta_a$,
and $X=X_a$. Let $\tilde c$ be a deviation that can be financed and whose
relative consumption in the finite game is admissible, and put
$p=1-\theta/N\in(0,1]$. The induced benchmark and relative consumption are
\[
 \tilde Y_t=Y_t(\tilde c_t/c^\ast_t)^{1/N},\qquad
 \tilde h_t=\tilde c_t\tilde Y_t^{-\theta}=h^\ast_t(\tilde c_t/c^\ast_t)^p .
\]
The reference at $0-$ is still $h_{a,0-}$, so the utility of the deviation
in the finite game is $U_a(\tilde h)$, including its initial adjustment
cost. For $p<1$, H\"older's inequality for the measure
$\xi_t(\omega)\,\dd t\,\mathbb P(\dd\omega)$ gives
\[
 \E\int_0^\infty\xi_tY_t^\theta\tilde h_t\,\dd t
 =\E\int_0^\infty\xi_t(c^\ast_t)^{1-p}\tilde c_t^{\,p}\,\dd t
 \le\Big(\E\int_0^\infty\xi_tc^\ast_t\,\dd t\Big)^{1-p}
    \Big(\E\int_0^\infty\xi_t\tilde c_t\,\dd t\Big)^{p}\le X,
\]
and the bound is immediate for $p=1$. With random initial primitives the
same argument applies conditionally on the type. Thus $\tilde h$ is
admissible and affordable for a price taker facing $Y$, and optimality in
the mean field equilibrium gives $U_a(\tilde h)\le U_a(h^\ast)$. Replacing
$(N,N_a)$ by $(JN,JN_a)$ preserves the benchmark, and the argument applies
with $p=1-\theta/(JN)$. Since the plans are unchanged, so are the financing
wealth $X^a_t=\xi_t^{-1}\E[\int_t^\infty\xi_sc^a_s\dd s\,|\,\F_t]$, the
portfolios, and the utilities.
\end{proof}

%% file: ecompanion.tex
\section{Individual Optimization}\label{ec:individual}

\subsection{Optimality Conditions}\label{ec:foc}

The results of this subsection hold for any positive price process. They also cover
the finite game, in which expenditure is a convex power of relative
consumption. Let $P$ be a positive progressively measurable process, let
$p\in(0,1]$ and $\psi_p(h)=h^{1/p}$, and consider the Lagrangian $\J_P$ of
\eqref{eq:lagrangian} on $\mathcal A$. It is well defined in
$[-\infty,\infty)$. For $h^\ast\in\mathcal A$ put
\begin{equation}\label{ec:eq:index}
 z^\ast_t=\frac{P_t\,\psi_p'(h^\ast_t)}{u'(h^\ast_t)},\qquad
 \Psi^\ast_t=\E\Big[\int_t^\infty e^{-\delta(s-t)}\big(1-z^\ast_s\big)\dd s
 \,\Big|\,\F_t\Big],
\end{equation}
where $\Psi^\ast$ denotes the right continuous version.

\begin{theorem}\label{ec:thm:foc}
Let $h^\ast\in\mathcal A$ have a bounded index $z^\ast$ and
$\J_P(h^\ast)>-\infty$. Suppose that almost surely
\begin{equation}\label{ec:eq:foc}
 -\beta\le\Psi^\ast_t\le\alpha\ \text{ for all }t\ge0,\qquad
 \int_{[0,\infty)}e^{-\delta t}\big(\alpha-\Psi^\ast_t\big)\dd v^{\ast+}_t=0,\qquad
 \int_{[0,\infty)}e^{-\delta t}\big(\Psi^\ast_t+\beta\big)\dd v^{\ast-}_t=0 .
\end{equation}
Then $\J_P(h)\le\J_P(h^\ast)$ for every $h\in\mathcal A$, with equality only
if $h=h^\ast$. If moreover $P_t=\lambda e^{\delta t}\xi_t\varpi_t$ for an
$\F_0$ measurable $\lambda>0$ and a positive process $\varpi$, and
$\E\int_0^\infty\xi_t\varpi_t\psi_p(h^\ast_t)\dd t=X_0$, then $h^\ast$ is the
unique maximizer of $U$ over
$\{h\in\mathcal A:\E\int_0^\infty\xi_t\varpi_t\psi_p(h_t)\dd t\le X_0\}$.
\end{theorem}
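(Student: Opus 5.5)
The plan is a concavity and subgradient argument for $\J_P$, followed by a Lagrangian step for the budget problem. Fix $h\in\mathcal A$ with $\J_P(h)>-\infty$; otherwise there is nothing to prove. Write $v=u(h)$ and $v^\ast=u(h^\ast)$, and treat everything as a functional of the felicity path $v$. Since $u$ is strictly increasing and concave, $\phi(v)=P\,\psi_p(u^{-1}(v))$ is strictly convex in $v$: $u^{-1}$ is convex increasing and $\psi_p$ is convex increasing because $p\le1$. Its derivative is
\[
 \phi'(v^\ast_t)=P_t\psi_p'(h^\ast_t)/u'(h^\ast_t)=z^\ast_t .
\]
The flow part of $\J_P$ is then $\int e^{-\delta t}(v_t-\phi(v_t))\,\dd t$. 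Convexity gives
\[
 v_t-\phi(v_t)\le v^\ast_t-\phi(v^\ast_t)+(1-z^\ast_t)(v_t-v^\ast_t),
\]
with equality only where $v_t=v^\ast_t$. The difference $\J_P(h)-\J_P(h^\ast)$ is therefore bounded by
\[
 \E\int_0^\infty e^{-\delta t}(1-z^\ast_t)(v_t-v^\ast_t)\,\dd t
 \;-\;\big(\text{difference of the }\alpha\,\dd v^{+},\ \beta\,\dd v^{-}\text{ terms}\big).
\]

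The key step is to rewrite the linear term through the adjoint. Write $w=v-v^\ast$, which has finite variation and satisfies $w_{0-}=0$ because the references agree. By Fubini and the tower property,
\[
 \E\int_0^\infty e^{-\delta t}(1-z^\ast_t)w_t\,\dd t=\E\int_{[0,\infty)}e^{-\delta s}\Psi^\ast_s\,\dd w_s .
\]
To see this, write $w_t=\int_{[0,t]}\dd w_s$, exchange the order of integration, and use that $\Psi^\ast$ is the optional projection of $\int_s^\infty e^{-\delta(u-s)}(1-z^\ast_u)\,\dd u$. This is legitimate for integrators of finite variation of the adapted kind, since optional projections may be paired with adapted finite-variation integrators.

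Decompose $\dd w=\dd v^+-\dd v^--\dd v^{\ast+}+\dd v^{\ast-}$. The constraint $-\beta\le\Psi^\ast\le\alpha$ gives $\Psi^\ast\,\dd v^+\le\alpha\,\dd v^+$ and $-\Psi^\ast\,\dd v^-\le\beta\,\dd v^-$. The complementarity conditions in \eqref{ec:eq:foc} turn the $h^\ast$ terms into equalities. Hence the linear term is at most
\[
 \E\int e^{-\delta t}\big(\alpha\,\dd v^+ +\beta\,\dd v^- -\alpha\,\dd v^{\ast+}-\beta\,\dd v^{\ast-}\big),
\]
which exactly cancels the adjustment-cost difference. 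This yields $\J_P(h)\le\J_P(h^\ast)$. Equality forces equality in the strict convexity inequality $\dd t\otimes\mathbb P$-almost everywhere, and right continuity then gives $h=h^\ast$.

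The main obstacle is integrability, which is needed to justify Fubini and to discard the boundary term at infinity. Boundedness of $z^\ast$ makes $1-z^\ast$ and $\Psi^\ast$ bounded. The admissibility bound $\E\int e^{-\delta t}(|v_t|\,\dd t+\dd v^\pm_t)<\infty$ for both $h$ and $h^\ast$ then makes $\E\int e^{-\delta t}|\Psi^\ast|\,|\dd w|$ and $\E\int e^{-\delta t}|w_t|\,\dd t$ finite. My first attempt would be to truncate at a horizon $T$. On $[0,T]$ integration by parts produces a boundary term $e^{-\delta T}\Psi^\ast_T w_T$. Its expectation tends to zero along a sequence $T_k$, because $\int e^{-\delta T}\E|w_T|\,\dd T<\infty$ and $\Psi^\ast$ is bounded. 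Dominated convergence then passes to the limit.

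For the budget statement, take $h$ feasible. Then $P\psi_p(h)=\lambda e^{\delta t}\xi_t\varpi_t\psi_p(h_t)$, so
\[
 U(h)=\J_P(h)+\lambda\,\E\int\xi\varpi\psi_p(h)\,\dd t\le\J_P(h^\ast)+\lambda X_0=U(h^\ast),
\]
using that $\lambda$ is $\F_0$-measurable and the expectations are conditional on $\F_0$. Uniqueness follows from the strict inequality for $\J_P$.
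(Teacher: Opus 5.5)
Your proposal is correct and follows the paper's proof essentially step for step. The steps are: strict concavity of $v\mapsto v-P_t\psi_p(u^{-1}(v))$ with slope $1-z^\ast_t$ at $v^\ast_t$; Fubini plus optional projection, which turns the linear term into $\E\int_{[0,\infty)}e^{-\delta s}\Psi^\ast_s\,\dd(v-v^\ast)_s$; the obstacle bounds and support conditions, which cancel the adjustment costs; strict concavity with right continuity for uniqueness; and the identity $U=\J_P+\lambda\E\int\xi\varpi\psi_p(h)\,\dd t$ for the budget problem. The truncation at a horizon $T$ is unnecessary: boundedness of $z^\ast$ and $\E\int_{[0,\infty)}e^{-\delta s}\,\dd|v-v^\ast|_s<\infty$ already justify exchanging the integrals on the whole half line, which is how the paper argues.
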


\begin{proof}
Write $\Phi_p=\psi_p\circ u^{-1}$. Since
$\Phi_p'(v)=\psi_p'(h)/u'(h)=p^{-1}h^{\gamma+1/p-1}$ with $h=u^{-1}(v)$ is
increasing in $v$, the map $v\mapsto v-P_t\Phi_p(v)$ is strictly concave. Let
$h\in\mathcal A$ with $\J_P(h)>-\infty$, and write $v=u(h)$ and
$v^\ast=u(h^\ast)$. Concavity gives
\[
 \big(v_t-P_t\Phi_p(v_t)\big)-\big(v^\ast_t-P_t\Phi_p(v^\ast_t)\big)
 \le\big(1-z^\ast_t\big)\big(v_t-v^\ast_t\big).
\]
The process $A=v-v^\ast$ is right continuous, adapted, of finite variation,
and $A_{0-}=0$ because both plans start from the same reference. Since
$z^\ast$ is bounded and $\E\int_{[0,\infty)}e^{-\delta s}\dd|A|_s<\infty$ by
admissibility, Fubini's theorem yields
\[
 \E\int_0^\infty e^{-\delta t}(1-z^\ast_t)A_t\,\dd t
 =\E\int_{[0,\infty)}\Big(\int_s^\infty e^{-\delta t}(1-z^\ast_t)\dd t\Big)\dd A_s
 =\E\int_{[0,\infty)}e^{-\delta s}\Psi^\ast_s\,\dd A_s .
\]
The last equality replaces the integrand by its optional projection, which
is allowed because $\dd A$ is an optional random measure of integrable
weighted variation \citep[Chapter VI]{DM}. Using
$\dd A=\dd v^{+}-\dd v^{-}-\dd v^{\ast+}+\dd v^{\ast-}$ we obtain
\begin{align*}
 \J_P(h)-\J_P(h^\ast)\le{}&\E\int_{[0,\infty)}e^{-\delta s}
 \Big((\Psi^\ast_s-\alpha)\dd v^{+}_s-(\Psi^\ast_s+\beta)\dd v^{-}_s\Big)\\
 &-\E\int_{[0,\infty)}e^{-\delta s}
 \Big((\Psi^\ast_s-\alpha)\dd v^{\ast+}_s-(\Psi^\ast_s+\beta)\dd v^{\ast-}_s\Big).
\end{align*}
The first expectation is nonpositive by the inequalities in
\eqref{ec:eq:foc}, and the second vanishes by the two support conditions. If
$\J_P(h)=\J_P(h^\ast)$, the concavity inequality is an equality for almost
every $(t,\omega)$, so $v=v^\ast$ almost everywhere and then everywhere by
right continuity. For the last claim let $h$ satisfy the budget. Then
$U(h)\le U(h)+\lambda\big(X_0-\E\int\xi_t\varpi_t\psi_p(h_t)\dd t\big)
=\J_P(h)+\lambda X_0\le\J_P(h^\ast)+\lambda X_0=U(h^\ast)$.
\end{proof}

The Lagrangian is strictly concave in utility coordinates, although utility
need not be concave as a functional of consumption. Let $I_\gamma=u((0,\infty))$.
A path is \emph{locally interior} if its values and left limits on every
bounded time interval lie in a compact subset of $I_\gamma$. All reflected
plans in this paper have this property.

\begin{theorem}\label{ec:thm:necessary}
Suppose $h^\ast$ maximizes $\J_P$, $\J_P(h^\ast)>-\infty$, $z^\ast$ is bounded,
and $v^\ast=u(h^\ast)$ is locally interior. Then \eqref{ec:eq:foc} holds.
Within this regularity class, \eqref{ec:eq:foc} is necessary and sufficient
for optimality.
\end{theorem}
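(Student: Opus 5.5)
The plan is to derive \eqref{ec:eq:foc} from one-sided directional derivatives of $\J_P$ at $h^\ast$. We work in utility coordinates $v=u(h)$, where by the proof of Theorem~\ref{ec:thm:foc} the flow term $v\mapsto v-P_t\Phi_p(v)$ is strictly concave with derivative $1-z^\ast_t$ at $v^\ast_t$. For an adapted right continuous process $A$ of finite variation with $A_{0-}=0$ and support in a bounded interval $[0,T]$, we set $v^\varepsilon=v^\ast+\varepsilon A$. Local interiority of $v^\ast$ on $[0,T]$ keeps $v^\varepsilon$ in a compact subset of $I_\gamma$ for small $\varepsilon$, at least after localizing $A$ by a stopping time at which its variation exceeds a level $n$. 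Boundedness of $z^\ast$ and of $P\Phi_p'$ on that compact set then allow dominated convergence in $\varepsilon^{-1}(\J_P(h^\varepsilon)-\J_P(h^\ast))$. The flow part of the derivative is $\E\int_0^\infty e^{-\delta t}(1-z^\ast_t)A_t\,\dd t$. By the Fubini and optional projection step already used in Theorem~\ref{ec:thm:foc}, this equals $\E\int_{[0,\infty)}e^{-\delta s}\Psi^\ast_s\,\dd A_s$. The adjustment cost part is handled by a pathwise one-sided bound on how the minimal decomposition $v^{\pm}$ changes.

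\emph{Step 1 (the obstacle inequalities).} Fix a stopping time $\tau$, an event $F\in\F_\tau$, and a deterministic $T$, and take $A=\ind_F\ind_{[\tau\wedge T,T)}$. This is an upward jump at $\tau$ that is undone at $T$. Adding a jump of size $\varepsilon$ to a jump $\Delta$ changes $\alpha\Delta^++\beta\Delta^-$ by at most $\alpha\varepsilon$, because $-\beta<\alpha$, and the reverse jump at $T$ costs at most $\beta\varepsilon$. Optimality then gives
\[
\E\big[\ind_F\big(e^{-\delta\tau}(\Psi^\ast_\tau-\alpha)\ind_{\tau<T}-e^{-\delta T}(\Psi^\ast_T+\beta)\big)\big]\le0 .
\]
Since $z^\ast$ is bounded, $\Psi^\ast$ is bounded, so letting $T\to\infty$ yields $\E[\ind_Fe^{-\delta\tau}(\Psi^\ast_\tau-\alpha)]\le0$ for every $F\in\F_\tau$, hence $\Psi^\ast_\tau\le\alpha$ almost surely. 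The optional section theorem and right continuity of $\Psi^\ast$ upgrade this to all $t$ simultaneously. The mirror perturbation, a downward jump, gives $\Psi^\ast\ge-\beta$.

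\emph{Step 2 (the support conditions).} Take $A_t=-\int_{[0,t\wedge\sigma]}\dd v^{\ast+}_s$, where $\sigma=T\wedge\inf\{t:v^{\ast+}_t\ge n\}$, and add back the removed amount as a jump at $T$. This shrinks the increases of $v^\ast$ by the factor $1-\varepsilon$ on $[0,\sigma]$. It lowers $v^{+}$ exactly by $\varepsilon\,\dd v^{\ast+}$ and leaves $v^-$ unchanged, since $\dd v^{\ast+}$ and $\dd v^{\ast-}$ are mutually singular. The cost terms are therefore differentiable, and optimality gives
\[
\E\int_{[0,\sigma]}e^{-\delta s}(\alpha-\Psi^\ast_s)\,\dd v^{\ast+}_s\le \text{(terms at $T$ of order $e^{-\delta T}$)}.
\]
Let $n,T\to\infty$, using admissibility to control $\E\int e^{-\delta s}\dd v^{\ast+}_s$. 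Together with $\Psi^\ast\le\alpha$ from Step~1, this forces the first support condition. The second follows symmetrically by shrinking decreases. Sufficiency within the class is Theorem~\ref{ec:thm:foc}, which gives the final equivalence.

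The main obstacle is the justification of the derivatives. This means keeping $v^\varepsilon$ uniformly inside a compact subset of $I_\gamma$, which is why all perturbations are supported on $[0,T]$ and localized in variation. It also means controlling the boundary terms at $T$, and checking that the cost functional of the minimal Jordan decomposition has the claimed one-sided derivative when jumps of $v^\ast$ and of the perturbation overlap. I would first try to handle that last point pathwise, jump by jump and on the continuous part separately, using $\alpha+\beta>0$.
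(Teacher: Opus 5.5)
Your overall plan is right, and Step~1 is the paper's argument: an upward jump at $\tau$ that is undone later, subadditivity of $k_{\alpha\beta}(a)=\alpha a^++\beta a^-$ for the cost, a limit in the horizon, and optional section. The localization you describe, however, does not give what you need.

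\textbf{The localization gap.} Local interiority only provides, for each $\omega$, a compact subset of $I_\gamma$ that contains the path on $[0,T]$. That compact depends on $\omega$. Supporting $A$ on $[0,T]$ and capping its variation does not fix this. In Step~1, $A$ already has variation at most $2$. In Step~2, the perturbed path on $(\sigma,T)$ is $v^\ast_t-\varepsilon m$ with $m=v^{\ast+}([0,\sigma])$, and $v^\ast$ can come arbitrarily close to $\partial I_\gamma$ there. For $\gamma<1$, for instance, $I_\gamma=(0,\infty)$ and $\inf_{[0,T]}h^\ast$ is not bounded away from zero across $\omega$ for the reflected plans. So for every fixed $\varepsilon$ the perturbed path leaves $I_\gamma$ with positive probability, and it is then not of the form $u(h)$ with $h>0$. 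The same happens in Step~1 when $\gamma>1$.

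Likewise, $P=z^\ast/\Phi_p'(v^\ast)$ is bounded only while $v^\ast$ stays in a fixed compact. Without such a bound the perturbed expenditure need not be finite, and the dominated convergence you invoke has no dominating function.

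\textbf{The repair (the paper's).} Let $\vartheta_n\le T$ be the exit times of $v^\ast$ from deterministic compact intervals exhausting $I_\gamma$, also capping its variation. Undo each perturbation at $\vartheta_n$ instead of at $T$; in Step~1, restrict to $F\cap\{\tau<\vartheta_n\}$. The boundary terms then involve $e^{-\delta\vartheta_n}$. Let $n\to\infty$ first, noting that $\vartheta_n=T$ eventually, and then $T\to\infty$, using
\[
\E\big[e^{-\delta T}v^{\ast\pm}([0,T])\big]\le\E\int_{[0,T]}e^{-\delta s}e^{-\delta(T-s)}\,\dd v^{\ast\pm}_s\to0 .
\]
Also note that the second term of your Step~1 display should carry the indicator $\ind_{\tau<T}$ as well.

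\textbf{Comparison for Step~2.} With this repair your Step~2 is a valid alternative route. The paper instead uses $v_{0-}+(1-\varepsilon)(v^\ast-v_{0-})$ on $[0,\vartheta_n)$. This multiplies every signed increment, including the initial jump, by $1-\varepsilon$. Positive homogeneity of $k_{\alpha\beta}$ then scales the cost exactly, and both support conditions follow at once: the sum of two nonnegative measures has zero expected mass. Moreover, that perturbed path is a convex combination of $v_{0-}$ and $v^\ast_t$, so it stays in the interval $I_\gamma$ automatically.

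Your one-sided shrinking of $\dd v^{\ast+}$ differs in three ways:
\begin{itemize}
\item it needs the mutual singularity of $\dd v^{\ast+}$ and $\dd v^{\ast-}$ to compute the cost change;
\item it needs a separate interiority argument, since $v^\ast-\varepsilon v^{\ast+}$ is not a convex combination of points of the path;
\item it handles the two conditions separately.
\end{itemize}
On the other hand, it isolates each support condition directly.
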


\begin{proof}
Write $v=v^\ast$, $z=z^\ast$, $\Psi=\Psi^\ast$, and
$k_{\alpha\beta}(a)=\alpha a^++\beta a^-$. The condition $\alpha+\beta>0$
makes $k_{\alpha\beta}$ convex, positively homogeneous, and subadditive,
even when $\alpha<0$. Its integral against $\dd v$ is the adjustment cost.
Put $M_{\alpha\beta}=\max\{|\alpha|,\beta\}$.

Fix a bounded stopping time $\tau$, an event $A\in\F_\tau$, and a
deterministic $T>\|\tau\|_\infty$. Let $\vartheta_n\le T$ be the stopping times at
which $v$ leaves increasing compact subintervals of $I_\gamma$, also
localizing its variation. Local interiority gives $\vartheta_n\uparrow T$ almost
surely. One may first restrict to the events where $\tau<\vartheta_n$ and then
exhaust them. For small $\varepsilon>0$, adding
$\varepsilon\ind_A\ind_{[\tau,\vartheta_n)}$ to $v$ is admissible. Before
$\vartheta_n$, $P=z/\Phi_p'(v)$ is bounded after localization, so the perturbed
expenditure is finite, and after $\vartheta_n$ the path is unchanged.
Subadditivity bounds the extra adjustment cost by
$\varepsilon\ind_A(\alpha e^{-\delta\tau}+\beta e^{-\delta\vartheta_n})$.
We differentiate the running payoff at $\varepsilon=0$, which localization
justifies, use optimality, and let $n\to\infty$ and then $T\to\infty$.
Boundedness of $z$ then gives
$\E[\ind_Ae^{-\delta\tau}\Psi_\tau]\le\alpha\E[\ind_Ae^{-\delta\tau}]$. The
negative perturbation gives the lower bound $-\beta$. Using all bounded
stopping times and the c\`adl\`ag optional version of $\Psi$ yields
$-\beta\le\Psi\le\alpha$ up to indistinguishability.

It remains to prove the support conditions. On
$[0,\vartheta_n)$ use the admissible path $v_{0-}+(1-\varepsilon)(v-v_{0-})$, and at
$\vartheta_n$ return to $v$. Before the return every signed increment, including
the initial jump, is multiplied by $1-\varepsilon$, and at the return
subadditivity bounds the additional cost by
$\varepsilon M_{\alpha\beta}e^{-\delta\vartheta_n}|v_{\vartheta_n-}-v_{0-}|$. Optimality
and differentiation imply
\[
 \E\int_{[0,\vartheta_n)}e^{-\delta t}
 \big((\alpha-\Psi_t)\dd v_t^++(\beta+\Psi_t)\dd v_t^-\big)
 \le(M_{\alpha\beta}+\|\Psi\|_\infty)\,
 \E\big[e^{-\delta\vartheta_n}|v_{\vartheta_n-}-v_{0-}|\big].
\]
The boundary term comes from the adjoint identity
$\E\int_0^{\tau'}e^{-\delta t}(1-z_t)(v_t-v_{0-})\dd t
=\E\int_{[0,\tau')}e^{-\delta t}\Psi_t\dd v_t
-\E[e^{-\delta\tau'}\Psi_{\tau'}(v_{\tau'-}-v_{0-})]$ for stopping times
$\tau'$, which follows by optional projection or by integration by parts
with the bounded adjoint, first with stopped variation. The same argument
handles simultaneous jumps at the return time. Letting $n\to\infty$ is
justified by the integrable variation on $[0,T]$, and
\[
 \E\big[e^{-\delta T}|v_{T-}-v_{0-}|\big]
 \le\E\int_{[0,T)}e^{-\delta s}e^{-\delta(T-s)}\dd|v|_s\longrightarrow0
\]
as $T\to\infty$ by dominated convergence and admissibility. The nonnegative measure
integrated in the first display therefore has zero expected mass, and its two components vanish almost
surely. All arguments can be made conditionally on $\F_0$.
\end{proof}

\subsection{Adjustment Band and Consumption Cost}\label{ec:band}

\begin{theorem}\label{ec:thm:allrates}
Let $r\in\R$, $\kappa,\delta>0$, $\beta\ge0$, and $-\beta<\alpha<1/\delta$.
There is a unique pair $0<\za<\zb$ and a unique $C^2$ function $H$ on
$[\za,\zb]$ satisfying \eqref{eq:Hode} and \eqref{eq:Hbc}. Moreover
$H'(z)<0$ for $\za<z<\zb$ and $0<\za<1-\delta\alpha<1+\delta\beta<\zb$. The
endpoints depend smoothly on the parameters, with derivatives from the right
at $\beta=0$.
\end{theorem}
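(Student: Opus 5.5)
The approach is to turn the free boundary problem \eqref{eq:Hode}--\eqref{eq:Hbc} into two scalar equations in $(\za,\zb)$ that do not involve $H$, using the adjoint solutions of the Euler operator. Write $\varsigma=\kappa^2/2$ and $\mathcal L H=\varsigma z^2H''+(\delta-r)zH'-\delta H$. The indicial equation $\varsigma m(m-1)+(\delta-r)m-\delta=0$ has roots $m_1<0<m_2$ for every $r\in\R$, because the product of the roots is $-\delta/\varsigma<0$. This is the only place the sign of $r$ could enter, and it does not, so one argument covers all real rates.

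\emph{Step 1: reduction to two equations.} For $k=1,2$, multiply the equation $\mathcal LH=z-1$ by $z^{-m_k-1}$ and integrate over $[\za,\zb]$. Integrating by parts twice, the interior terms vanish because $z^{-m_k}$ solves the adjoint equation. Since $H'=0$ at both endpoints, only boundary terms in $H$ survive, so the condition reads
\[
 \int_{\za}^{\zb}z^{-m_k-1}(1-z)\,\dd z=\varsigma m_k\big(\beta\,\zb^{-m_k}+\alpha\,\za^{-m_k}\big),
\]
up to the sign and constant that the computation fixes. Conversely, given $(\za,\zb)$ satisfying both equations, I would take the unique solution of the initial value problem $H(\za)=\alpha$, $H'(\za)=0$. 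The two identities then force $H'(\zb)=0$ and $H(\zb)=-\beta$. Hence the statement is equivalent to existence and uniqueness of a solution $0<\za<\zb$ of this $2\times2$ system.

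\emph{Step 2: location of the endpoints and one-dimensional reduction.} Evaluating the ODE at the endpoints with $H'=0$ gives $\varsigma\za^2H''(\za)=\za-(1-\delta\alpha)$ and $\varsigma\zb^2H''(\zb)=\zb-(1+\delta\beta)$. For $H$ to decrease from $\alpha$ to $-\beta$ we need $H''(\za)\le0\le H''(\zb)$, and the strict inequalities $\za<1-\delta\alpha<1+\delta\beta<\zb$ will come out of the construction. Here $\alpha+\beta>0$ gives $1-\delta\alpha<1+\delta\beta$, and $\alpha<1/\delta$ gives $1-\delta\alpha>0$.

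Next I would fix $\za\in(0,1-\delta\alpha)$ and run the initial value problem. Let $\zb(\za)$ be the first point after $\za$ where $H'$ returns to zero. Between the endpoints $H'<0$ by a maximum principle argument. The function $G=H'$ satisfies a second order linear ODE with source of fixed sign, namely $\varsigma z^2G''+(2\varsigma+\delta-r)zG'-rG=1$. An interior zero of $G$ with $G\le0$ nearby would then contradict the sign of $G''$ there, which forces $H'<0$ on $(\za,\zb)$. It remains to solve the single equation $F(\za):=H(\zb(\za);\za)+\beta=0$.

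\emph{Step 3: existence, uniqueness, smoothness (main obstacle).} I expect the main difficulty to be proving that $F$ is strictly monotone, with opposite signs at the two ends of $(0,1-\delta\alpha)$. Two parts are needed.
\begin{enumerate}[label=(\alph*),nosep,leftmargin=2em]
\item \emph{Monotonicity.} I would differentiate in $\za$. The derivative $\partial_{\za}H$ solves the homogeneous equation with initial data derived from $H''(\za)\ne0$. Since $H'(\zb)=0$, one gets $F'(\za)=\partial_{\za}H(\zb)$. Its sign can be read off from the explicit homogeneous basis $z^{m_1},z^{m_2}$ and the sign pattern $m_1<0<m_2$.
\item \emph{Limits.} As $\za\uparrow1-\delta\alpha$, $H$ barely moves and $F\to\alpha+\beta>0$. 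As $\za\downarrow0$, the total drop of $H$ should become unbounded, using the explicit particular solutions ($1/\delta-z/r$, or a $z\log z$ term when $r=0$). That unboundedness is the estimate I expect to need the most care in the cases $r\le0$.
\end{enumerate}

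If the direct monotonicity argument fails, I would instead parametrize by the ratio $\zb/\za$ and use the two integral identities of Step 1 as a system. Their Jacobian is a $2\times2$ determinant of the form $\zb^{-m_1}\za^{-m_2}-\zb^{-m_2}\za^{-m_1}$ times positive factors, which is nonzero.

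The nonvanishing Jacobian also gives, by the implicit function theorem, smooth dependence of $(\za,\zb)$ on $(r,\kappa,\delta,\alpha,\beta)$. At $\beta=0$ the same implicit function argument applies to the system extended to small negative $\beta$ with $\alpha+\beta>0$ retained, and this yields the one-sided derivatives at $\beta=0$.
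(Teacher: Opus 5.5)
Your route differs from the paper's, and its central computation is correct. $\dot H=\partial_{\za}H$ solves the homogeneous equation with $\dot H(\za)=0$ and $\dot H'(\za)=-H''(\za)>0$. Hence $\dot H=C\big((z/\za)^{m_2}-(z/\za)^{m_1}\big)$ with $C>0$, and $F'(\za)=\dot H(\zb)>0$; differentiability of $\zb(\za)$ follows from $H''(\zb)>0$, which your touching argument gives.

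\textbf{Gap in existence.} The gap is in the step that carries existence, exactly where you assert that the sign of $r$ does not enter. For $r\ne0$ write $H'=-1/r+A(z/\za)^{m_2-1}+B(z/\za)^{m_1-1}$. The initial data give $A+B=1/r$ and $(m_2-m_1)A=\varsigma^{-1}\big(1-(1-\delta\alpha)/\za\big)+(1-m_1)/r$.

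\emph{Case $r>0$.} Here $m_2>1$, and $A\le0$ for all $\za$ up to an explicit threshold $\za^\ast\in(0,1-\delta\alpha)$. On that range $B>0$, so $H'$ is strictly decreasing on $(\za,\infty)$ and never returns to zero. Thus $\zb(\za)$ and $F$ exist only on $(\za^\ast,1-\delta\alpha)$, and the limit $\za\downarrow0$ is unavailable. What must be shown instead is $H(\zb(\za))\to-\infty$ as $\za\downarrow\za^\ast$, where $\zb(\za)\to\infty$.

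\emph{Case $r\le0$.} For $r<0$ the map is defined down to $0$. However, the positive and negative parts of $\int_{\za}^{\zb}H'\,\dd z$ are both of order $\zb/|r|$. Only their difference diverges, and to leading order it is $(1-m_2)\zb/(m_2|r|)$ with $0<m_2<1$. The case $r=0$ needs a logarithmic version of this estimate.

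None of these estimates is in the proposal, so existence is not established.

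\textbf{Gap in uniqueness.} This one is smaller. Monotonicity of $F$ gives uniqueness only inside the shooting family. You must therefore show that every solution of \eqref{eq:Hode}--\eqref{eq:Hbc} has $H'<0$ on $(\za,\zb)$; this then forces $\za<1-\delta\alpha$ and makes $\zb$ the first return.

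Your remark rules out only zeros where $G=H'$ touches $0$ from below, not an interior sign change. For $r\ge0$ the $G$ equation excludes a positive interior maximum. For $r<0$, however, its zeroth order coefficient $-r$ has the wrong sign, and a positive interior maximum is only forced to be at least $1/|r|$.

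Using $w=zH'$ repairs this. In $x=\log z$ it solves $\varsigma w''+(\delta-r-\varsigma)w'-\delta w=e^x$, and the killing rate $\delta>0$ excludes a positive maximum for every $r$.

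\textbf{Comparison with the paper.} The paper avoids both difficulties with the substitution $H(\za e^s)=1/\delta-\za f_d(s)$. For every width $d>0$ the Neumann problem $\varsigma f''+\nu f'-\delta f=-e^s$, $f'(0)=f'(d)=0$, with $\nu=\delta-r-\varsigma$, is uniquely solvable whatever the sign of $r$. The four boundary conditions then reduce to $f_d(d)/f_d(0)=L$ and $\za=(1-\delta\alpha)/(\delta f_d(0))$. Every solution maps to this one, so uniqueness is immediate. The only limit needed, $f_d(d)/f_d(0)\to\infty$, is read off from explicit formulas, and the monotonicity proof in $d$ is the analogue of your differentiation in $\za$.

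Your fallback of parametrizing by $\zb/\za$ points in this direction, but a nonvanishing Jacobian alone gives only local uniqueness. Together with your Step 1 system, whose constant is $\delta/m_k$ rather than $\varsigma m_k$, it does give the smooth dependence.
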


\begin{proof}
Write $\nu=\delta-r-\varsigma$, $A_\alpha=1-\delta\alpha$,
$B_\beta=1+\delta\beta$, and $L=B_\beta/A_\alpha>1$. For $d>0$ consider the
Neumann problem
\begin{equation}\label{ec:eq:neumannwidth}
 \varsigma f_d''+\nu f_d'-\delta f_d=-e^s\quad(0<s<d),
 \qquad f_d'(0)=f_d'(d)=0 .
\end{equation}
Multiplying the homogeneous equation by $fe^{\nu s/\varsigma}$ and
integrating shows that it has only the zero solution, so
\eqref{ec:eq:neumannwidth} has a unique solution, and the maximum principle
gives $f_d>0$. Differentiating the equation shows that $f_d'$ solves a Dirichlet
problem with strictly negative forcing $-e^s$ and killing $\delta>0$, and the
strong maximum principle and the boundary point lemma give
\begin{equation}\label{ec:eq:fshape}
 f_d'>0\ \text{on }(0,d),\qquad f_d''(0)>0,\qquad f_d''(d)<0 .
\end{equation}

Set $\varrho(d)=f_d(d)/f_d(0)$. We show that $\varrho$ increases strictly
from $1$ to $\infty$. Let $m_1>0>m_2$ be the roots of
$\varsigma m^2+\nu m-\delta=0$ and let
$w_0(s)=(m_1e^{m_2s}-m_2e^{m_1s})/(m_1-m_2)$, which solves the homogeneous
equation with $w_0(0)=1$, $w_0'(0)=0$, and satisfies $w_0>0$ and $w_0'>0$ on
$(0,\infty)$. The derivative $\dot f_d=\partial_df_d$ at fixed $s$ solves the
homogeneous equation with $\dot f_d'(0)=0$ and $\dot f_d'(d)=-f_d''(d)$, so
$\dot f_d=c_dw_0$ with $c_d=-f_d''(d)/w_0'(d)>0$. Also
$[e^{\nu s/\varsigma}(w_0f_d'-w_0'f_d)]'=-\varsigma^{-1}e^{\nu s/\varsigma}w_0(s)e^s<0$,
so $f_d/w_0$ is strictly decreasing. Since $f_d'(d)=0$,
\begin{equation}\label{ec:eq:Rprime}
 \frac{\varrho'(d)}{\varrho(d)}
 =c_d\Big(\frac{w_0(d)}{f_d(d)}-\frac{w_0(0)}{f_d(0)}\Big)>0 .
\end{equation}
For the limits, let $I_q(d)=\int_0^de^{qs}\dd s$, including $I_0(d)=d$.
Variation of constants gives
\begin{equation}\label{ec:eq:fwidth}
 \delta f_d(0)=\frac{m_1e^{m_1d}I_{1-m_1}(d)-m_2e^{m_2d}I_{1-m_2}(d)}
 {e^{m_1d}-e^{m_2d}},\qquad
 \delta f_d(d)=\frac{m_1I_{1-m_1}(d)-m_2I_{1-m_2}(d)}{e^{-m_2d}-e^{-m_1d}} .
\end{equation}
Both $f_d(0)$ and $f_d(d)$ tend to $1/\delta$ as $d\downarrow0$. As $d\uparrow\infty$,
$f_d(d)$ is asymptotic to $-m_2e^d/[\delta(1-m_2)]$, while $f_d(0)$ is of
order $1$, $d$, or $e^{(1-m_1)d}$ according as $m_1>1$, $m_1=1$, or
$0<m_1<1$. In all cases $\varrho(d)\to\infty$.

Hence there is exactly one $d>0$ with $\varrho(d)=L$. Define
\[
 \za=\frac{A_\alpha}{\delta f_d(0)},\qquad \zb=\za e^d,\qquad
 H(\za e^s)=\frac1\delta-\za f_d(s).
\]
These formulas give the equation and all four boundary conditions, and
\eqref{ec:eq:fshape} gives $H'<0$. At the endpoints
$\varsigma f_d''(0)=\delta f_d(0)-1>0$ and $\varsigma f_d''(d)=\delta f_d(d)-e^d<0$,
which give $\za<A_\alpha$ and $\zb>B_\beta$. Conversely, every solution of
\eqref{eq:Hode}--\eqref{eq:Hbc} yields, after this change of variables, the
unique solution of \eqref{ec:eq:neumannwidth} with $\varrho(d)=L$, which
proves uniqueness. Smooth dependence follows from the linear equation,
\eqref{ec:eq:Rprime}, and the implicit function theorem.
\end{proof}

Let $m_1>0>m_2$ be the roots of
$\varsigma m^2+(\delta-r-\varsigma)m-\delta=0$. For $r\ne0$ the solution
has the representation
\begin{equation}\label{ec:eq:H}
 H(z)=D_1(z/\za)^{m_1}+D_2(z/\za)^{m_2}+\frac1\delta-\frac zr,
\end{equation}
where the coefficients and the band satisfy \eqref{eq:Hbc}. For $r=0$ one has
$m_1=1$, and the resonant representation is
\begin{equation}\label{ec:eq:H0}
 H(z)=E_1\frac z\za+E_2(z/\za)^{m_2}+\frac1\delta
 +\frac{z\log(z/\za)}{\delta+\varsigma}.
\end{equation}

\begin{proposition}\label{ec:prop:rate}
With $(\kappa,\delta,\alpha,\beta)$ fixed, both $\za$ and $\zb$ are strictly
increasing functions of $r\in\R$.
\end{proposition}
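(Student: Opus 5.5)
The plan is to work with the normalized Neumann solution $f_d$ from the proof of Theorem~\ref{ec:thm:allrates}. Only $\nu=\delta-r-\varsigma$ depends on $r$, so $r$ enters the construction only through $\nu$. Raising $r$ lowers $\nu$. Since $\za=A_\alpha/(\delta f_d(0))$ and $\zb=\za e^d$, where $d=d(r)$ solves $\varrho(d)=L$, it suffices to show two things as $r$ increases:
\begin{itemize}
\item[(a)] $f_{d}(0)$, evaluated at the equilibrium width, strictly decreases;
\item[(b)] $\log\zb=\log\za+d$ strictly increases.
\end{itemize}

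A cleaner route, which I would try first, works directly in the original variable $z$ and uses a comparison argument. Write $\mathcal L_r H=\varsigma z^2H''+(\delta-r)zH'-\delta H$. Differentiating in $r$ at fixed $z$ gives a perturbation $-zH'$, which is positive on $(\za,\zb)$ because $H'<0$. Let $(\za,\zb,H)$ and $(\za',\zb',H')$ denote the solutions for $r<r'$. I would argue by contradiction.

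\emph{Case $\za'\le\za$.} Consider the difference $w=H'-H$ on the common interval, extending $H'$ by its ODE where needed. On the interval, $w$ satisfies $\mathcal L_{r'}w=(r'-r)zH_z>0$ (with the sign to be checked), together with boundary data forced by \eqref{eq:Hbc}. The maximum principle for the operator with killing $\delta>0$ should then contradict the conditions $H(\zb)=H'(\zb')=-\beta$ with vanishing derivatives. Concretely, $H$ is a decreasing curve from $\alpha$ to $-\beta$ with flat ends. The maximum principle should show that the curve for $r'$, started flat at $\alpha$ from a point no larger than $\za$, stays strictly below the curve for $r$. It then reaches the level $-\beta$ with negative slope, or reaches it before its derivative can vanish, which contradicts the flat condition at $\zb'$.

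\emph{Case $\zb'\le\zb$.} This is handled symmetrically, by starting the shooting from the upper endpoint.

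Alternatively, I would use the implicit function theorem as in Theorem~\ref{ec:thm:allrates}. The map $\Phi(d,\nu)=\varrho(d;\nu)-L$ satisfies $\partial_d\Phi>0$ by \eqref{ec:eq:Rprime}, so
\[
 \frac{\dd d}{\dd\nu}=-\frac{\partial_\nu\varrho}{\partial_d\varrho}.
\]
Differentiating $f_d$ in $\nu$ gives a Neumann problem with forcing $-f_d'<0$. Hence $\partial_\nu f_d>0$, and its shape can be compared with $f_d$ through the same Wronskian-type identity used for \eqref{ec:eq:Rprime}. This yields the signs of $\partial_\nu f_d(0)$ and $\partial_\nu\varrho$. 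Combining these signs in the expressions
\[
 \log\za=\log A_\alpha-\log\delta-\log f_d(0),\qquad \log\zb=\log\za+d
\]
then gives the two monotonicities.

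The main obstacle is step (a) in the implicit-function route. The direct effect $\partial_\nu f_d(0)>0$, which lowers $\za$ as $\nu$ rises, competes with the indirect effect through $d(\nu)$, because $f_d(0)$ increases in $d$ (since $\dot f_d=c_dw_0>0$). Neither sign is obvious. For this reason I prefer the comparison/shooting argument in $z$, where both effects are handled at once by the maximum principle. The expected key lemma is that the solutions started flat at $\alpha$ are ordered in $r$, and that their first critical point, which is where they must equal $-\beta$, moves right as $r$ increases.
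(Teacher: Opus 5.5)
\textbf{There is a genuine gap: none of your three routes reaches the conclusion, and the comparison route you prefer fails as written.}

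The implicit-function route stops where you say it does. The direct effect $\partial_\nu f_d(0)>0$ competes with the growth of $f_d(0)$ in $d$. Step (b) has the same defect, since $\log\zb=\log\za+d$ and you have no sign for the derivative of $d$ in $r$.

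In the comparison route, write $G$ for the solution at $r'>r$ and use your $\mathcal L_r$, which includes $-\delta$. Then $\mathcal L_{r'}(G-H)=(r'-r)zH'<0$ on the band, not $>0$. So $G-H$ is a strict supersolution. The maximum principle can then only exclude a nonpositive interior minimum of $G-H$, that is, turn nonnegative boundary values into positivity. It never shows that the $r'$ curve stays strictly below.

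There are three further problems with this route:
\begin{itemize}
\item ``Started flat at $\alpha$'' is initial data, and the maximum principle does not order solutions of initial value problems.
\item Even granting $G<H$, nothing forces $G$ to reach $-\beta$ with nonzero slope. It may simply reach $-\beta$ flat at some $\zb'<\zb$, which contradicts nothing.
\item On the common interval, the boundary values of $G-H$ do not have the signs the maximum principle needs unless the order of the endpoints is already known, and that order is the claim.
\end{itemize}
The key lemma you expect (ordering of the shooting solutions together with the motion of their first critical point) carries the whole content and is left unproved.

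\textbf{The missing idea is to compare on all of $(0,\infty)$ rather than on the band.} Extend $H_r$ by $\alpha$ on $(0,\za]$ and by $-\beta$ on $[\zb,\infty)$. Smooth fit makes the extension $C^1$ and piecewise $C^2$. The strict inequalities $\za<1-\delta\alpha$ and $1+\delta\beta<\zb$ of Lemma~\ref{lem:band} then give, for $F_r(H_r)=\mathcal L_rH_r+1-z$:
\begin{itemize}
\item $F_r(H_r)\ge0$ where $H_r=\alpha$;
\item $F_r(H_r)=0$ on the band;
\item $F_r(H_r)\le0$ where $H_r=-\beta$.
\end{itemize}

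For $r_1<r_2$ put $w=H_{r_1}-H_{r_2}$. On any component of $\{w>0\}$ one has $H_{r_1}>-\beta$ and $H_{r_2}<\alpha$. Using $H_{r_1}'\le0$, this gives
\[
 \mathcal L_{r_2}w=F_{r_1}(H_{r_1})-(r_2-r_1)zH_{r_1}'-F_{r_2}(H_{r_2})\ge0 .
\]
Since $w=0$ near $0$ and near $\infty$, the maximum principle (which survives the $C^1$ junctions) excludes such a component. Hence $H_{r_1}\le H_{r_2}$.

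The contact sets are then nested: $\{H_{r_1}=\alpha\}\subseteq\{H_{r_2}=\alpha\}$ and $\{H_{r_2}=-\beta\}\subseteq\{H_{r_1}=-\beta\}$. So both endpoints increase weakly at once, with no case split or shooting.

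For strictness, suppose an endpoint were shared. Then $H_{r_2}-H_{r_1}\ge0$ would satisfy $\mathcal L_{r_2}(H_{r_2}-H_{r_1})=(r_2-r_1)zH_{r_1}'<0$ on the common continuation interval. It would therefore be positive inside, and the boundary point lemma would give it a nonzero derivative at the shared endpoint. This contradicts smooth fit, since both functions have zero derivative there.
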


\begin{proof}
Extend $H_r$ by $\alpha$ below its lower endpoint and by $-\beta$ above its
upper endpoint. The extension is $C^1$ and piecewise $C^2$. Write
$\mathcal L_rf=\varsigma z^2f''+(\delta-r)zf'$ and
$F_r(f)=\mathcal L_rf-\delta f+1-z$. The extension satisfies $F_r(H_r)\ge0$
where $H_r=\alpha$, $F_r(H_r)=0$ in its continuation interval, and
$F_r(H_r)\le0$ where $H_r=-\beta$. For $r_2>r_1$, on a component where
$w=H_{r_1}-H_{r_2}>0$,
\[
 (\mathcal L_{r_2}-\delta)w
 =F_{r_1}(H_{r_1})-(r_2-r_1)zH_{r_1}'-F_{r_2}(H_{r_2})\ge0,
\]
where $H_{r_1}>-\beta$, $H_{r_2}<\alpha$, and $H_{r_1}'\le0$ give the signs. Both functions
are constant outside a compact interval, and the maximum principle, which
remains valid across the $C^1$ junctions, excludes such a component. Thus
$H_{r_1}\le H_{r_2}$, and both endpoints increase weakly. If either endpoint
were the same, $H_{r_2}-H_{r_1}$ would be positive on the common
continuation interval and would solve an equation with strictly negative
right side, and the boundary point lemma would give a nonzero inward
derivative at the shared endpoint, contradicting smooth fit.
\end{proof}

\begin{lemma}\label{ec:lem:fk}
For the reflected index \eqref{eq:skor} and all $t\ge0$,
$H(Z_t)=\E[\int_t^\infty e^{-\delta(s-t)}(1-Z_s)\dd s\,|\,\F_t]$. This
quantity lies in $[-\beta,\alpha]$, and it equals $\alpha$ exactly when
$Z_t=\za$ and $-\beta$ exactly when $Z_t=\zb$.
\end{lemma}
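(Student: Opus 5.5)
The plan is a Feynman--Kac representation followed by a verification argument. First apply It\^o's formula to $e^{-\delta s}H(Z_s)$, using that $H$ is $C^2$ on $[\za,\zb]$ by Theorem~\ref{ec:thm:allrates}. Write $x=\log Z$ with $\dd x_s=\nu\,\dd s-\kappa\,\dd\mathcal B_s+\dd\la_s-\dd\lb_s$. With $\varsigma=\kappa^2/2$ and $\nu=\delta-r-\varsigma$, the generator of $Z$ acting on $H$ is $\varsigma z^2H''+(\delta-r)zH'$. Then \eqref{eq:Hode} gives
\[
 \dd\big(e^{-\delta s}H(Z_s)\big)=e^{-\delta s}\Big(-(1-Z_s)\,\dd s-\kappa Z_sH'(Z_s)\,\dd\mathcal B_s+Z_sH'(Z_s)(\dd\la_s-\dd\lb_s)\Big).
\]
The regulator terms vanish. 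The measure $\dd\la$ charges only times with $Z=\za$, and $\dd\lb$ only times with $Z=\zb$, where $H'=0$ by \eqref{eq:Hbc}.

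Next integrate from $t$ to $T$ and take conditional expectations given $\F_t$. The stochastic integral is a true martingale, because $Z$ stays in the compact interval $[\za,\zb]$ and $zH'(z)$ is bounded there. This gives
\[
 H(Z_t)=\E\Big[\int_t^Te^{-\delta(s-t)}(1-Z_s)\,\dd s+e^{-\delta(T-t)}H(Z_T)\,\Big|\,\F_t\Big].
\]
Let $T\to\infty$. The integrand $1-Z_s$ is bounded and $H$ is bounded on the band, so bounded convergence gives the representation.

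For the range statement, use $H'<0$ on $(\za,\zb)$ from Lemma~\ref{lem:band}(i). This makes $H$ strictly decreasing on $[\za,\zb]$. Hence $H$ takes values between $H(\zb)=-\beta$ and $H(\za)=\alpha$. It equals $\alpha$ only at $z=\za$, and $-\beta$ only at $z=\zb$. Since $Z_t\in[\za,\zb]$ for all $t$, the claims follow.

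The step most likely to need care is the justification of It\^o's formula up to the endpoints. $H$ is only known to be $C^2$ on the closed interval, and the process actually reaches the endpoints. I would handle this by extending $H$ to a $C^2$ function on a neighbourhood of $[\za,\zb]$, for instance by a quadratic Taylor extension. Since $Z$ never leaves the band, the extension does not change any term.

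One more point needs checking: the reflected SDE \eqref{eq:skor} should have a unique strong solution whose regulators are supported at the endpoints. This is the standard two-sided Skorokhod map, which is Lipschitz on path space.
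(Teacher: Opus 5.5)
Your proof is correct and follows the paper's argument: It\^o's formula for $e^{-\delta s}H(Z_s)$, with the regulator terms removed by the smooth-fit conditions $H'(z_\alpha)=H'(z_\beta)=0$. The stochastic integral is a true martingale because everything is bounded on the band, you let $T\to\infty$ by boundedness, and the range and ``exactly when'' claims follow from strict monotonicity of $H$. The only cosmetic difference is that the paper does the It\^o computation in the log variable with $\tilde H(x)=H(e^x)$, while you work directly in $z$. Your remarks on the $C^2$ extension past the endpoints and on the Skorokhod map are harmless refinements that the paper leaves implicit.
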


\begin{proof}
In the variable $x=\log z$, \eqref{eq:Hode} reads
$\varsigma\tilde H''+\nu\tilde H'-\delta\tilde H+1-e^x=0$ for
$\tilde H(x)=H(e^x)$. It\^o's formula gives
$\dd(e^{-\delta t}\tilde H(x_t))=e^{-\delta t}\big(-(1-Z_t)\dd t
-\kappa\tilde H'(x_t)\dd\mathcal B_t+\tilde H'(x_t)(\dd\la_t-\dd\lb_t)\big)$.
The last term vanishes because $\tilde H'=0$ at both endpoints. Since
$\tilde H$ and $\tilde H'$ are bounded, the stochastic integral is a
martingale and $e^{-\delta T}\tilde H(x_T)\to0$. The remaining claims follow
from \eqref{eq:Hbc} and $H'<0$.
\end{proof}

For $r\ne0$ and $K(\rho)>0$ the cost function has the explicit form
\begin{equation}\label{ec:eq:X}
 \Xf_\rho(z)=\frac1r
 -\frac{D_1m_1(z/\za)^{m_1-1}}{(1-\rho+\rho m_1)\za}
 -\frac{D_2m_2(z/\za)^{m_2-1}}{(1-\rho+\rho m_2)\za}.
\end{equation}

\begin{lemma}\label{ec:lem:cost}
Let $\xi_t=\exp(-(r+\varsigma)t-\kappa\mathcal B_t)$, $\rho>0$, and
$C_t=\exp(\ell_t/\rho)$. The cost $\E\int_0^\infty\xi_tC_t\dd t$ is finite
if and only if $K(\rho)>0$. In that case, for every $r\in\R$,
$\E[\int_t^\infty\xi_sC_s\dd s\,|\,\F_t]=\xi_tC_t\Xf_\rho(Z_t)$, where
$\Xf_\rho$ is the unique solution of
\begin{equation}\label{ec:eq:costbvp}
 \varsigma z^2\Xf_\rho''+(\delta-r+\kappa^2)z\Xf_\rho'-r\Xf_\rho+1=0,
 \qquad \Xf_\rho+\rho z\Xf_\rho'=0\ \text{ at }\za,\zb .
\end{equation}
It is positive with $\Xf_\rho'<0$, satisfies \eqref{eq:XH}, and
$e^{-d/\rho}\le K(\rho)\Xf_\rho\le e^{d/\rho}$ with $d=\log(\zb/\za)$.
Formula \eqref{ec:eq:X} holds for $r\ne0$, and \eqref{ec:eq:costbvp} defines
$\Xf_\rho$ continuously through $r=0$.
\end{lemma}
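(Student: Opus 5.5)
The plan is to prove Lemma~\ref{ec:lem:cost} by a Feynman--Kac verification in the log index $x=\log z$. Put $\tilde\Xf(x)=\Xf_\rho(e^x)$ and $M_t=\xi_tC_t\tilde\Xf(x_t)+\int_0^t\xi_sC_s\,\dd s$, and show that $M$ is a martingale. The equation is then forced by requiring zero drift in It\^o's formula, and the boundary conditions by requiring the regulator terms to vanish.

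\emph{It\^o step.} Write $\dd\log\xi_t=-(r+\varsigma)\dd t-\kappa\,\dd\mathcal B_t$ and $\dd x_t=\nu\,\dd t-\kappa\,\dd\mathcal B_t+\dd\ell_t$, where $\nu=\delta-r-\varsigma$. The regulators are continuous, so $\dd C_t=C_t\,\dd\ell_t/\rho$. The terms multiplying $\dd\la$ and $\dd\lb$ are then $\xi C(\tilde\Xf/\rho+\tilde\Xf')$. These vanish exactly when $\Xf_\rho+\rho z\Xf_\rho'=0$ at $\za$ and $\zb$. The $\dd t$ term comes from $\xi\tilde\Xf(x)$, whose two factors are correlated through $\mathcal B$ with cross variation $\kappa^2\,\dd t$. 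It reads
\[
\varsigma\tilde\Xf''+(\nu+\kappa^2)\tilde\Xf'-(r+\varsigma)\tilde\Xf+\varsigma\tilde\Xf+1 .
\]
Converting back to $z$ gives \eqref{ec:eq:costbvp}, because $\varsigma z^2\Xf''=\varsigma(\tilde\Xf''-\tilde\Xf')$ and $\nu+\kappa^2-\varsigma+\varsigma\cdot$ produces the coefficient $\delta-r+\kappa^2$.

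\emph{Solving the boundary value problem.} The operator has killing rate $r$, which may be zero or negative, so the maximum principle does not apply directly. Instead I would construct $\Xf_\rho$ from $H$ through \eqref{eq:XH}. Define $G=-H'$, which is $\ge0$ and vanishes only at the endpoints by Lemma~\ref{lem:band}(i). Integrate $\Xf+\rho z\Xf'=G$ as $\Xf(z)=z^{-1/\rho}\big(c+\rho^{-1}\int_{\za}^z y^{1/\rho-1}G(y)\,\dd y\big)$. Differentiating \eqref{eq:Hode} shows that $G$ satisfies the differentiated ODE. One can then check that such $\Xf$ satisfies \eqref{ec:eq:costbvp} for exactly one constant $c$ (the constant of the first-order ODE) that matches the forcing $1$; the Robin conditions hold automatically since $G(\za)=G(\zb)=0$. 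This one-parameter matching is where $K(\rho)$ appears, and I expect to find $c$ explicitly in terms of the endpoint values of $\Xf$ and $K(\rho)$. I would confirm it against \eqref{ec:eq:X} for $r\ne0$. I would also verify continuity through $r=0$ from the uniform treatment, which avoids the resonance in \eqref{ec:eq:H0}.

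\emph{Finiteness criterion and the main obstacle.} This is the hardest step. First, the drift of $\log(\xi_tC_t)$ is $-(r+\varsigma)+\E\dot\ell/\rho$ in the long run. Corollary~\ref{cor:stat}(iii)-type ergodic identities give $\ell_t/t\to-\nu$, since $x$ is bounded. This alone gives only the exponential rate, not the exact criterion, because $\ell$ and $\mathcal B$ are dependent. For the exact criterion I would use an exponential change of measure. Write $\xi_tC_t=\exp(\ell_t/\rho-(r+\varsigma)t-\kappa\mathcal B_t)$ and substitute $\ell_t=x_t-x_0-\nu t+\kappa\mathcal B_t$. This gives
\[
\xi_tC_t=e^{(x_t-x_0)/\rho}\exp\big(-(r+\varsigma+\nu/\rho)t+\kappa(1/\rho-1)\mathcal B_t\big).
\]
The first factor lies in $[e^{-d/\rho},e^{d/\rho}]$. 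The expectation of the second, by the Girsanov identity, is $e^{-K(\rho)t}$, after checking $r+\varsigma+\nu/\rho-\tfrac12\kappa^2(1/\rho-1)^2=K(\rho)$. Integrating over $t$ shows that the cost is finite iff $K(\rho)>0$, with two-sided bounds $e^{\mp d/\rho}/K(\rho)$. The same computation conditionally on $\F_t$ gives the bounds $e^{-d/\rho}\le K\Xf_\rho\le e^{d/\rho}$.

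\emph{Remaining checks.} These bounds give boundedness of $\tilde\Xf$ and $\tilde\Xf'$, so the local martingale $M$ is a true martingale. They also give $\E[\xi_TC_T]\to0$, which yields the representation and hence uniqueness of the solution. Positivity follows from the bounds. Finally, $\Xf_\rho'<0$ follows from \eqref{eq:XH}: $\rho z\Xf'=-H'-\Xf$, and we need $\Xf>-H'$ in the interior. I would prove this by showing that $w=\Xf+H'$ is positive at the endpoints, where $H'=0$, and satisfies an equation to which a boundary-point argument applies. I expect this to need care when $r\le0$.
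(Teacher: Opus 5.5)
Your route differs from the paper's in how $\Xf_\rho$ is obtained. The paper \emph{defines} $\Xf_\rho(e^x)=e^{-qx}\int_0^\infty e^{-K(\rho)t}\E^{\mathbb Q}_x[e^{qx_t}]\dd t$ with $q=1/\rho$, using the same Girsanov factorization you write down. It reads finiteness, positivity, the two-sided bounds and the conditional formula off this expression. It then conjugates, $\Xi(x)=e^{qx}\Xf_\rho(e^x)$, to get a Neumann problem with killing rate $K(\rho)>0$, where the maximum principle gives uniqueness. It proves \eqref{eq:XH} last, by showing that $\Xf_\rho+\rho z\Xf_\rho'+H'$ solves the homogeneous equation with zero boundary data.

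You instead build $\Xf_\rho$ from $H$, so that \eqref{eq:XH} and the Robin conditions hold by construction, and you get the representation and uniqueness by It\^o verification. This works. In $x=\log z$ the operator $\mathcal G=\varsigma z^2\partial_{zz}+(\delta-r+\kappa^2)z\partial_z-r$ has constant coefficients, so it commutes with $1+\rho\partial_x$. Hence $\mathcal G\Xf_\rho+1$ is annihilated by $1+\rho\partial_x$ and equals $Az^{-1/\rho}$. Since $\mathcal G z^{-1/\rho}=-K(\rho)z^{-1/\rho}$, shifting $c$ by $\Delta c$ shifts $A$ by $-K(\rho)\Delta c$, which is exactly your one-parameter matching.

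Uniqueness via $\E[\xi_TC_T]\le e^{d/\rho}e^{-K(\rho)T}$ is a valid replacement for the maximum principle. Your construction also produces a classical solution explicitly, whereas the paper passes from the expectation to the Neumann problem without spelling out regularity. Continuity through $r=0$ then follows from continuity of $H$ and the band in $r$. One slip: the first-order coefficient after the change of variables is $\nu+\kappa^2+\varsigma=\delta-r+\kappa^2$; your bookkeeping there is garbled, though the result is right.

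The step you leave open, $\Xf_\rho'<0$, is a genuine gap as written. A maximum-principle or boundary-point argument applied to $\mathcal G$ itself fails for $r<0$, because $\mathcal G$ has killing rate $r$, and that is precisely the case the lemma is meant to cover. The missing idea is the conjugation already implicit in your Girsanov factor $e^{q(x_t-x_0)}$.

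Your $w=\Xf_\rho+H'$ equals $-\rho z\Xf_\rho'$, satisfies $\mathcal Gw=0$, and is positive at $\za$ and $\zb$. Put $\psi(x)=e^{qx}w(e^x)$. Then
\[
 \varsigma\psi''+\tilde\nu\psi'-K(\rho)\psi=0,\qquad \tilde\nu=\delta-r-\varsigma+\kappa^2(1-q),
\]
on $[\log\za,\log\zb]$, with positive boundary values and positive killing. The strong maximum principle gives $\psi>0$, hence $\Xf_\rho'<0$, for every $r\in\R$; this is the paper's argument.

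An elementary alternative: $\mathcal G e^{(m-1)x}=0$ exactly when $\varsigma m^2+\nu m-\delta=0$, whose roots satisfy $m_1>0>m_2$. So $w=Ae^{(m_1-1)x}+Be^{(m_2-1)x}$ vanishes at most once and changes sign where it does. Positivity at both ends therefore gives $w>0$ on the whole band.
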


\begin{proof}
Set $q=1/\rho$, $a=\log\za$, and $b=\log\zb$. From \eqref{eq:skor},
\[
 \xi_tC_t=e^{q(x_t-x_0)}e^{-K(\rho)t}
 \exp\big(-\kappa(1-q)\mathcal B_t-\varsigma(1-q)^2t\big).
\]
The last factor is a true exponential martingale on every finite interval.
Under the consistent change of measure the reflected diffusion has drift
$\tilde\nu=\nu+\kappa^2(1-q)$, and Tonelli's theorem yields
\begin{equation}\label{ec:eq:tilted}
 \Xf_\rho(e^x)=e^{-qx}\int_0^\infty e^{-K(\rho)t}\E_x^{\mathbb Q}[e^{qx_t}]\dd t .
\end{equation}
Since $x_t\in[a,b]$, this proves both directions of the finiteness claim,
positivity, and the upper and lower bounds. The Markov property gives the
conditional formula. The function $\Xi(x)=e^{qx}\Xf_\rho(e^x)$ solves
$\varsigma\Xi''+\tilde\nu\Xi'-K(\rho)\Xi=-e^{qx}$ with
$\Xi'(a)=\Xi'(b)=0$, and since its killing rate is positive, the maximum principle gives
uniqueness. Conjugating back gives \eqref{ec:eq:costbvp} and
smoothness. The derivative in $x$ of $\Xf_\rho(e^x)$ solves the homogeneous
cost equation with strictly negative boundary values $-\Xf_\rho/\rho$, and
conjugation by $e^{qx}$ again yields a Dirichlet problem with positive
killing, so this derivative is strictly negative. Finally let
$\mathcal G=\varsigma z^2\partial_{zz}+(\delta-r+\kappa^2)z\partial_z-r$.
Differentiating \eqref{eq:Hode} gives $\mathcal GH'=1$, while
$\mathcal G\Xf_\rho=-1$ and $\mathcal G(z\Xf_\rho')=0$. Hence
$F=\Xf_\rho+\rho z\Xf_\rho'+H'$ solves $\mathcal GF=0$ with zero boundary
values, and the same conjugation gives $F=0$. For $r\ne0$, \eqref{ec:eq:X}
satisfies \eqref{ec:eq:costbvp}, and its denominators are nonzero because the
characteristic polynomial evaluated at $1-1/\rho$ equals $-K(\rho)<0$.
\end{proof}

\begin{lemma}\label{ec:lem:variation}
Let $q=1/\rho$ and $K(\rho)>0$. Then
$\E\int_0^\infty e^{-\delta t}e^{(q-1)\ell_t}\dd t<\infty$ and
$\E\int_0^\infty e^{-\delta t}e^{(q-1)\ell_t}(\dd\la_t+\dd\lb_t)<\infty$.
Also $\E\int_0^\infty e^{-\delta t}|\ell_t|\dd t<\infty$ and
$\E\int_0^\infty e^{-\delta t}(\dd\la_t+\dd\lb_t)<\infty$.
\end{lemma}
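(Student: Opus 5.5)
The plan is to use the Skorokhod decomposition \eqref{eq:skor} to turn every quantity into an explicit exponential of a Brownian motion with drift, times factors that are bounded because $x$ stays in $[a,b]$ with $a=\log\za$, $b=\log\zb$. From \eqref{eq:skor},
\[
 \ell_t=x_t-x_0-\nu t+\kappa\mathcal B_t ,
\]
and $|x_t-x_0|\le d=b-a$.

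\emph{Running integral.} The first claim is then a Gaussian computation. We have $e^{(q-1)\ell_t}\le e^{|q-1|d}\exp\big((q-1)(\kappa\mathcal B_t-\nu t)\big)$. Taking expectations gives
\[
 \E\big[e^{-\delta t}e^{(q-1)\ell_t}\big]\le e^{|q-1|d}\exp\big(-\big[\delta+(q-1)\nu-\varsigma(q-1)^2\big]t\big).
\]
Expanding with $\nu=\delta-r-\varsigma$, the bracket equals $r+q(\delta-r)-\varsigma q(q-1)$, which is exactly $K(\rho)$ from \eqref{eq:K}. Hence $\E\int_0^\infty e^{-\delta t}e^{(q-1)\ell_t}\dd t\le e^{|q-1|d}/K(\rho)<\infty$, and in addition $\E[M_T]\le Ce^{-K(\rho)T}\to0$ for the weight
\[
 M_t=e^{-\delta t+(q-1)\ell_t}.
\]
This is the same tilting identity as in the proof of Lemma~\ref{ec:lem:cost}.

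\emph{Regulator integrals.} Here I would use a test-function argument. Choose a positive $C^2$ function $\phi$ on $[a,b]$ with
\[
 \phi'(a)+(q-1)\phi(a)=1,\qquad \phi'(b)+(q-1)\phi(b)=-1 .
\]
For example, take $\phi=e^{-(q-1)x}\psi$ with $\psi>0$ and $\psi'(a)=e^{(q-1)a}$, $\psi'(b)=-e^{(q-1)b}$; a positive quadratic or cosine bump suffices. Itô's formula for $M_t\phi(x_t)$ uses that $\ell$ is continuous of finite variation and that $\la,\lb$ charge only $\{x=a\}$ and $\{x=b\}$. It gives
\[
 \dd\big(M\phi(x)\big)=M_t\Big(\mathcal D\phi(x_t)\,\dd t+\sigma_\phi(x_t)\,\dd\mathcal B_t+\dd\la_t+\dd\lb_t\Big),
\]
where $\mathcal D\phi$ and $\sigma_\phi$ are bounded continuous functions on $[a,b]$. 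The $\dd\lb$ coefficient is $-(\phi'(b)+(q-1)\phi(b))=1$, which is the reason for the boundary conditions on $\phi$.

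Next I localize with stopping times $\tau_n\uparrow\infty$ that make the stochastic integral a martingale and take expectations. This yields
\[
 \E\int_0^{T\wedge\tau_n}M\,(\dd\la+\dd\lb)\le\E\big[M_{T\wedge\tau_n}\phi(x_{T\wedge\tau_n})\big]+\|\mathcal D\phi\|_\infty\,\E\int_0^\infty M_t\,\dd t .
\]
The boundary term is bounded by $\|\phi\|_\infty\E[M_{T\wedge\tau_n}]$, and $\E[M_{T\wedge\tau_n}]$ is controlled uniformly by optional stopping for the exponential martingale part of $M$ together with the bound above. Monotone convergence in $n$ and then in $T$ gives the second claim.

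\emph{The case $q=1$ and $|\ell|$.} The last two statements follow with $q=1$, where $M_t=e^{-\delta t}$ and the same $\phi$-argument works with $K$ replaced by $\delta>0$. Moreover $|\ell_t|\le d+|\nu|t+\kappa|\mathcal B_t|$, which is integrable against $e^{-\delta t}$.

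The main obstacle is the localization step, namely keeping the boundary term $\E[M_{T\wedge\tau_n}\phi]$ uniformly bounded. I would handle it by writing $M$ as a bounded factor times an exponential martingale times $e^{-K(\rho)t}$, as in \eqref{ec:eq:tilted}. Under the tilted measure the bound is immediate, since $x$ remains confined to $[a,b]$.
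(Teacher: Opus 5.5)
Your proof is correct and follows essentially the paper's route. It uses the same Gaussian bound $\E\big[e^{-\delta t+(q-1)\ell_t}\big]\le e^{|q-1|d}e^{-K(\rho)t}$ from the Skorokhod representation, together with It\^o's formula for the weight times a test function whose regulator coefficients equal one at both endpoints, and the only cosmetic difference is that the paper takes $f(x)=(x-a)(b-x)/(b-a)$, which vanishes at $a$ and $b$ so that the $(q-1)f$ terms drop out and no $q$-dependent boundary conditions are needed.
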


\begin{proof}
Put $k=q-1$ and $G_t=e^{-\delta t+k\ell_t}$. Since
$\ell_t=x_t-x_0-\nu t+\kappa\mathcal B_t$,
$\E G_t\le e^{|k|(b-a)}\exp[-(\delta+k\nu-k^2\varsigma)t]
=e^{|k|(b-a)}e^{-K(\rho)t}$. Let $f(x)=(x-a)(b-x)/(b-a)$, which
vanishes at both endpoints and has $f'(a)=1$ and $f'(b)=-1$. With
$\mathcal L_x=\varsigma\partial_{xx}+\nu\partial_x$, It\^o's formula
for $G_tf(x_t)$, in which the regulator coefficient at either endpoint is
$G_t$, gives
\[
 \E\int_0^TG_t(\dd\la_t+\dd\lb_t)
 =\E[G_Tf(x_T)]-f(x_0)-\E\int_0^TG_t(\mathcal L_xf-\delta f)(x_t)\dd t .
\]
To justify this identity, first stop the regulators and then remove the
stopping by bounding $G$ with a geometric Brownian motion, which has moments
of all orders on bounded intervals.
Boundedness of $f$ and its derivatives and the exponential estimate give a
bound uniform in $T$. The case $k=0$ gives the unweighted statement
because its killing rate is $\delta>0$, and
$\E|\ell_t|\le b-a+|\nu|t+\kappa\sqrt t$ gives the last claim.
\end{proof}

\begin{proposition}\label{ec:prop:criticalcost}
Consider a continuous family of market parameters with fixed $\rho>0$,
$K(\rho)\downarrow0$, a nondegenerate limiting band
$[a_c,b_c]=[\log z_{\alpha,c},\log z_{\beta,c}]$, $\kappa_c>0$, and
$\delta_c>0$. Let $q=1/\rho$, let $\tilde\pi_c$ be the density proportional to
$\exp(2\tilde\nu_cx/\kappa_c^2)$ on $[a_c,b_c]$ with
$\tilde\nu_c=\delta_c-r_c-\kappa_c^2/2+\kappa_c^2(1-q)$, and let
$\Upsilon_c=\int e^{qx}\tilde\pi_c(x)\dd x$. Then
$K(\rho)\Xf_\rho(e^x)\to e^{-qx}\Upsilon_c$, uniformly after the bands are
mapped to $[0,1]$. In particular $1/\Xf_\rho(e^x)\sim K(\rho)e^{qx}/\Upsilon_c$.
\end{proposition}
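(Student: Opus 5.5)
The plan is to start from the tilted representation \eqref{ec:eq:tilted} in the proof of Lemma~\ref{ec:lem:cost}:
\[
 K(\rho)\Xf_\rho(e^x)=e^{-qx}\int_0^\infty K(\rho)e^{-K(\rho)t}\,\E^{\mathbb Q}_x[e^{qx_t}]\,\dd t .
\]
Under $\mathbb Q$ the log index is a reflected Brownian motion on $[a,b]$ with drift $\tilde\nu$ and volatility $\kappa$. The measure $K e^{-Kt}\dd t$ is a probability measure that spreads out to infinity as $K\downarrow0$. The right side is therefore $e^{-qx}$ times an Abel average of $t\mapsto\E_x^{\mathbb Q}[e^{qx_t}]$. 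The claim then reduces to showing that this average converges to the stationary mean $\Upsilon_c$, uniformly in the starting point after rescaling the band.

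The first step is to map the band to $[0,1]$ by $y=(x-a)/(b-a)$. The rescaled process is a reflected Brownian motion on $[0,1]$ with drift $\tilde\nu/(b-a)$ and variance $\kappa^2/(b-a)^2$. These coefficients converge to their limiting values because the family is continuous and the limiting band $[a_c,b_c]$ is nondegenerate. The function $f(y)=e^{q(a+(b-a)y)}$ also converges uniformly, and so does its exponential prefactor.

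The second step is a uniform exponential ergodicity estimate. For reflected Brownian motion on $[0,1]$ with coefficients in a compact set and variance bounded below, I would show
\[
 |\E_y[f(y_t)]-\textstyle\int f\,\dd\pi|\le C\|f\|_\infty e^{-ct}
\]
with $C,c$ uniform over the family. Two routes seem available.
\begin{itemize}
\item A spectral-gap bound for the Neumann Sturm--Liouville operator on $[0,1]$, whose gap depends continuously on the coefficients.
\item A coupling or Doeblin argument: from any starting point, the process hits the midpoint within time $1$ with probability bounded below, uniformly in the coefficients.
\end{itemize}
I would use the Doeblin minorization, since it avoids self-adjointness bookkeeping. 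The invariant density is proportional to $e^{2\tilde\nu x/\kappa^2}$, and it converges to $\tilde\pi_c$ in the rescaled variable.

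The third step is to insert the ergodicity bound into the Abel average:
\[
 \Big|\int_0^\infty Ke^{-Kt}\E_x[e^{qx_t}]\,\dd t-\Upsilon\Big|\le C\|f\|_\infty\frac{K}{K+c}\to0,
\]
where $\Upsilon$ is the stationary mean for the current parameters. Since $\Upsilon\to\Upsilon_c$, this gives uniform convergence of $K\Xf_\rho$ to $e^{-qx}\Upsilon_c$. The asymptotic $1/\Xf_\rho\sim Ke^{qx}/\Upsilon_c$ follows because the limit is bounded away from zero, which holds because $\Upsilon_c\ge e^{qa_c}>0$.

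I expect the main obstacle to be the uniformity of the mixing constants $(C,c)$ along the family. This is where nondegeneracy of the limiting band and $\kappa_c>0$ are needed. I would control these constants through explicit lower bounds on the hitting probabilities, which depend continuously on the drift and the variance of the rescaled process.
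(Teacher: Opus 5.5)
Your argument is correct, but it takes a different route from the paper. Both proofs start from the tilted representation \eqref{ec:eq:tilted} and reduce the claim to uniform convergence of the Abel average $\int_0^\infty Ke^{-Kt}\E_x^{\mathbb Q}[e^{qx_t}]\,\dd t$ to the stationary mean. You get this from uniform exponential ergodicity of the rescaled reflected diffusion, which gives the error $C\|f\|_\infty K/(K+c)$.

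The paper never uses a mixing rate. It solves the centered Poisson equation $(\varsigma\partial_{xx}+\tilde\nu\partial_x)\chi=\Upsilon-e^{qx}$ with $\chi'(a)=\chi'(b)=0$. In one dimension this is explicit via the integrating factor $e^{\tilde\nu x/\varsigma}$, and the centering supplies the second Neumann condition. The solution is therefore uniformly bounded along the family. The discounted It\^o formula, in which the Neumann conditions remove the regulator terms, then gives $\bigl|\int_0^\infty e^{-Kt}(\E_x^{\mathbb Q}[e^{qx_t}]-\Upsilon)\,\dd t\bigr|\le 2\|\chi\|_\infty$. Multiplying by $K$ gives the same $O(K)$ error as your bound. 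The bounded corrector is exactly the integrated deviation you need, and its uniformity comes from an explicit formula with continuous coefficients. So the step you flag as the main obstacle does not arise in the paper. Your route is more portable, since it does not rely on an explicit corrector, and it also gives decay in $t$. The cost is that the uniform mixing constants must be made rigorous.

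On that point, a positive probability of hitting the midpoint at a random time is not by itself a Doeblin minorization. The clean one-dimensional version is a coupling argument. Two independent copies, ordered at time zero, must meet before the lower one reaches the upper endpoint. A uniform lower bound on the probability of traversing the rescaled band within unit time then gives a geometric tail for the coupling time, and hence your estimate.
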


\begin{proof}
For each member of the family let $\tilde\pi$ be the invariant density of the
tilted reflected diffusion in \eqref{ec:eq:tilted} and
$\Upsilon=\int e^{qx}\tilde\pi(x)\dd x$. The centered Poisson equation
$(\varsigma\partial_{xx}+\tilde\nu\partial_x)\chi=\Upsilon-e^{qx}$ with
$\chi'(a)=\chi'(b)=0$ has a solution, obtained by integrating with the factor
$e^{\tilde\nu x/\varsigma}$, and the centering gives the second Neumann
condition. With $\chi(a)=0$ this explicit construction is uniformly bounded
along the family. The discounted It\^o formula gives
$|\int_0^\infty e^{-Kt}(\E_x^{\mathbb Q}[e^{qx_t}]-\Upsilon)\dd t|\le2\|\chi\|_\infty$.
Multiply \eqref{ec:eq:tilted} by $K$, let $K\downarrow0$, and use continuity
of the band and of its invariant density.
\end{proof}

\begin{proof}[Proof of Lemma~\ref{lem:band}]
Part (i) is Theorem~\ref{ec:thm:allrates} together with
Proposition~\ref{ec:prop:rate}, and part (ii) is Lemma~\ref{ec:lem:cost}.
\end{proof}

\subsection{Proof of Theorem~\ref{thm:br}}\label{ec:proofbr}

\begin{proof}
Let $\lambda=\hat z_0u'(h_0)Y_0^{-\theta}$ and
$P_t=\lambda e^{\delta t}\xi_tY_t^\theta$. Since
$\xi_tY_t^\theta=Y_0^\theta\hat\xi_t$,
$\log(P_t/u'(h^\ast_t))=\log\hat z_0+(\delta-\hat r-\hat\kappa^2/2)t
-\hat\kappa\hat{\mathcal B}_t+\ell_t=x_t$, so the index \eqref{ec:eq:index}
of $h^\ast$ with $p=1$ is $\hat Z$. Lemma~\ref{ec:lem:fk} in the artificial
market gives $\Psi^\ast=\hat H(\hat Z)\in[-\beta,\alpha]$. The plan increases
only when $\dd\la>0$, that is when $\hat Z=\hat z_\alpha$ and
$\Psi^\ast=\alpha$, and decreases only when $\Psi^\ast=-\beta$. A jump at
time zero occurs only when $\hat z_0$ is an endpoint, and then in the
correct direction because $\hat\Xf_\gamma$ is decreasing. Hence \eqref{ec:eq:foc}
holds. By Lemma~\ref{ec:lem:cost} with $\rho=\gamma$,
$\E\int\hat\xi_th^\ast_t\dd t=h_0\hat\Xf_\gamma(\hat z_0)=\hat X_0$, so the
budget binds. For $\gamma\ne1$ put $q=1/\gamma$. Utility along the plan is a
constant times $e^{(q-1)\ell}$, and its variation after time zero is
$\gamma^{-1}(h^\ast)^{1-\gamma}(\dd\la+\dd\lb)$, so
Lemma~\ref{ec:lem:variation} gives admissibility. For $\gamma=1$ use
$\log h^\ast=\log h_0+\ell$ and the logarithmic estimates of that lemma. The
initial jump has finite utility variation, and the expenditure is finite, so
$\J_P(h^\ast)>-\infty$. Theorem~\ref{ec:thm:foc} applies with $p=1$ and
$\varpi=Y^\theta$. The wealth formula is Lemma~\ref{ec:lem:cost}, and the
loadings follow from It\^o's formula, since $h^\ast$ has finite variation,
$\dd\hat Z/\hat Z$ has loadings $-(\hat\kappa_a,\hat\kappa_p)$, and $Y^\theta$
has loadings $(\theta s,0)$.
\end{proof}

\subsection{Comparable Prices and Stability}\label{ec:comparable}

In this subsection $p=1$ and the initial reference $h_{0-}>0$ is fixed.
Initial types with finitely many values are treated conditionally. Write
$h^\ast(P,h_{0-})$ for an attained optimum of $\J_P$.

\begin{lemma}\label{ec:lem:lattice}
Write $\J_P(v)$ for the Lagrangian in utility coordinates $v=u(h)$, with the
reference $v_{0-}$ indicated when it matters. For prices $P\le Q$ and
admissible paths $v$ and $w$ with $\J_P(v)>-\infty$ and $\J_Q(w)>-\infty$,
\begin{equation}\label{ec:eq:lattice}
 \J_P(v\vee w)+\J_Q(v\wedge w)\ge\J_P(v)+\J_Q(w),
\end{equation}
where $v\vee w$ and $v\wedge w$ carry the references $v_{0-}\vee w_{0-}$ and
$v_{0-}\wedge w_{0-}$. Consequently, for attained finite optima, relative
consumption is decreasing in the price and increasing in the initial
reference: $P\le Q$ implies $h^\ast(P,h_{0-})\ge h^\ast(Q,h_{0-})$, and
$h_{0-}\le\tilde h_{0-}$ implies $h^\ast(P,h_{0-})\le h^\ast(P,\tilde h_{0-})$, up to
indistinguishability. The same holds for $p\in(0,1]$ whenever the optima
exist.
\end{lemma}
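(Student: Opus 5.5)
The plan is to establish \eqref{ec:eq:lattice} separately for the running payoff and for the adjustment cost, and then derive the monotonicity statements by the usual supermodularity argument, using strict concavity for uniqueness. Write $\Phi_p=\psi_p\circ u^{-1}$ as in the proof of Theorem~\ref{ec:thm:foc}. Then
\[
 \J_P(v)=\E\int_0^\infty e^{-\delta t}\big(v_t-P_t\Phi_p(v_t)\big)\dd t-\mathcal C(v),
 \qquad \mathcal C(v)=\E\int_{[0,\infty)}e^{-\delta t}k_{\alpha\beta}(\dd v_t),
\]
with $k_{\alpha\beta}(a)=\alpha a^++\beta a^-$ convex and positively homogeneous. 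Here $k_{\alpha\beta}(\dd v)$ means $\alpha\,\dd v^+ +\beta\,\dd v^-$, and the jump at time zero is measured from the reference.

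\emph{Running payoff.} The linear terms cancel because $v\vee w+v\wedge w=v+w$ pointwise. It therefore suffices to show, for every $(t,\omega)$,
\[
P\Phi_p(v\vee w)+Q\Phi_p(v\wedge w)\le P\Phi_p(v)+Q\Phi_p(w).
\]
Where $v\ge w$ both sides coincide. Where $v<w$ the inequality reads $(Q-P)(\Phi_p(w)-\Phi_p(v))\ge0$, which holds because $P\le Q$ and $\Phi_p$ is increasing.

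\emph{Adjustment cost.} This is the step I expect to be the main obstacle. The goal is $\mathcal C(v\vee w)+\mathcal C(v\wedge w)\le\mathcal C(v)+\mathcal C(w)$ pathwise, with the references $v_{0-}\vee w_{0-}$ and $v_{0-}\wedge w_{0-}$. I would first prove a two-point inequality. For reals $a_0,a_1,b_0,b_1$, let $x=a_1\vee b_1-a_0\vee b_0$ and $y=a_1\wedge b_1-a_0\wedge b_0$. Then $x+y=(a_1-a_0)+(b_1-b_0)$, and a check of the four orderings shows that $x$ and $y$ lie between $a_1-a_0$ and $b_1-b_0$. For example, if $a_0>b_0$ and $a_1<b_1$, then $a_1-a_0<b_1-a_0=x<b_1-b_0$. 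Since $k_{\alpha\beta}$ is convex, this majorization gives
\[
k_{\alpha\beta}(x)+k_{\alpha\beta}(y)\le k_{\alpha\beta}(a_1-a_0)+k_{\alpha\beta}(b_1-b_0).
\]
Next take partitions $0-=t_0<t_1<\dots$ of $[0,T]$ that include the point $0-$, so the initial jump from the reference is covered. Sum $e^{-\delta t_i}k_{\alpha\beta}(\Delta_i\,\cdot)$ over the partition and refine. For a right continuous path of finite variation, these Riemann--Stieltjes sums converge to $\int e^{-\delta t}k_{\alpha\beta}(\dd v)$, because positive homogeneity and convexity make the partition sums of $\Delta^+$ and $\Delta^-$ converge to $\dd v^{\pm}$. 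Passing to the limit, letting $T\to\infty$, and taking expectations gives the cost inequality. The same bound shows that the variation of $v\vee w$ and of $v\wedge w$ is dominated by that of $v$ and $w$. Hence both lie in $\mathcal A$, and both Lagrangians on the left of \eqref{ec:eq:lattice} are finite.

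\emph{Comparative statics.} Let $v=u(h^\ast(P,h_{0-}))$ and $w=u(h^\ast(Q,h_{0-}))$ be attained finite optima. Optimality gives $\J_P(v\vee w)\le\J_P(v)$ and $\J_Q(v\wedge w)\le\J_Q(w)$. Combined with \eqref{ec:eq:lattice}, both must be equalities. The map $v\mapsto v-P\Phi_p(v)$ is strictly concave and the cost is convex, so $\J_P$ is strictly concave on $\mathcal A$ and its maximizer is unique. Hence $v\vee w=v$, that is $v\ge w$ everywhere by right continuity, and $h^\ast(P)\ge h^\ast(Q)$ because $u$ is increasing. For the references, take $P=Q$ and let $v$ and $w$ be the optima from $h_{0-}\le\tilde h_{0-}$. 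Then $v\vee w$ carries the reference $\tilde v_{0-}$ and $v\wedge w$ carries $v_{0-}$, and the same argument gives $v\wedge w=v$, so $h^\ast(P,h_{0-})\le h^\ast(P,\tilde h_{0-})$. Nothing in the argument used $p=1$ beyond $\Phi_p$ being increasing and $v\mapsto v-P\Phi_p(v)$ being strictly concave, so the case $p\in(0,1]$ follows verbatim whenever the optima exist.
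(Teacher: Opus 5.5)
Your proof is correct and follows the paper's argument. It uses the same three steps: separating running utility, expenditure (with the pointwise swap on $\{v<w\}$), and adjustment cost via discounted partition sums that include the atom at $0-$; then combining optimality with uniqueness from strict concavity in utility coordinates to get $v\vee w=v$ and $v\wedge w=v$. The only difference is cosmetic: you treat the convex cost $k_{\alpha\beta}$ in one step by a two-point majorization, whereas the paper writes $k_{\alpha\beta}(a)=\bar\alpha|a|+\tilde\alpha a$ with $\bar\alpha=(\alpha+\beta)/2>0$, notes that the signed part is linear, and applies $|\max(a,c)-\max(b,d)|+|\min(a,c)-\min(b,d)|\le|a-b|+|c-d|$ only to the absolute part.
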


\begin{proof}
Write $\J_P(v)=\E\int e^{-\delta t}(v_t-P_t\Phi(v_t))\dd t-\mathcal C(v)$
with $\Phi=u^{-1}$, or $\Phi=\psi_p\circ u^{-1}$, and
$\mathcal C(v)=\E\int_{[0,\infty)}e^{-\delta t}k_{\alpha\beta}(\dd v_t)$,
$k_{\alpha\beta}(a)=\bar\alpha|a|+\tilde\alpha a$,
$\bar\alpha=(\alpha+\beta)/2>0$, and $\tilde\alpha=(\alpha-\beta)/2$. We
compare the three parts of \eqref{ec:eq:lattice} separately.

\emph{Running utility.} Since $v\vee w+v\wedge w=v+w$, the terms
$\E\int e^{-\delta t}v_t\dd t$ add up to the same value on both sides.

\emph{Expenditure.} On $\{v_t\ge w_t\}$ both sides contain
$P_t\Phi(v_t)+Q_t\Phi(w_t)$. On $\{v_t<w_t\}$ the left side contains
$P_t\Phi(w_t)+Q_t\Phi(v_t)$ and the right side contains
$P_t\Phi(v_t)+Q_t\Phi(w_t)$, which is larger by
$(Q_t-P_t)(\Phi(w_t)-\Phi(v_t))\ge0$. Since expenditure enters with a minus
sign, this part of the left side is at least that of the right side. The
expenditures on the left are finite because on the swapped set the larger
consumption faces the smaller price.

\emph{Adjustment costs.} The signed part of $k_{\alpha\beta}$ is linear, and
the increments of $v\vee w$ and $v\wedge w$ add up to those of $v+w$,
including the initial atom from the references. For the absolute part,
consider a partition $0-=t_0<t_1<\cdots$ and the elementary inequality
\[
 |\max(a,c)-\max(b,d)|+|\min(a,c)-\min(b,d)|\le|a-b|+|c-d|
\]
with $a=v_{t_{i+1}}$, $b=v_{t_i}$, $c=w_{t_{i+1}}$, and $d=w_{t_i}$.
Multiplying by $e^{-\delta t_i}$, summing, and refining the partition gives
$\mathcal C(v\vee w)+\mathcal C(v\wedge w)\le\mathcal C(v)+\mathcal C(w)$,
because the discounted variation of a right continuous path of finite
variation is the limit of these weighted sums. Adding the three parts
proves \eqref{ec:eq:lattice}, and the paths on the left are admissible.

\emph{Comparisons.} Attained finite optima are unique because $\J_P$ is
strictly concave in utility coordinates. Let $v$ be the optimum at $P$ and
$w$ the optimum at $Q\ge P$, with the same reference. By optimality the
right side of \eqref{ec:eq:lattice} is at least the left side, so equality
holds termwise, $v\vee w$ is optimal at $P$, and uniqueness gives
$v\vee w=v$, that is $v\ge w$. For the references take $P=Q$, let $v$ be
the optimum for $h_{0-}$ and $w$ the optimum for $\tilde h_{0-}\ge h_{0-}$.
Then $v\vee w$ carries the reference $\tilde h_{0-}$ and $v\wedge w$ the
reference $h_{0-}$, and the same argument gives $v\wedge w=v$, that is
$v\le w$.
\end{proof}

\begin{theorem}\label{ec:thm:comparable}
Let $P^0_t=\exp(\nu_0t-\bm\kappa_0\cdot\bm W_t)$ for a finite dimensional
Brownian motion $\bm W$ with $|\bm\kappa_0|>0$, and let
$r_0=\delta-\nu_0-|\bm\kappa_0|^2/2$. Assume $K_0(\gamma)>0$ in this market
and $\underline mP^0\le P\le\overline mP^0$ for constants
$0<\underline m\le\overline m<\infty$. Then $\J_P$ has a unique finite optimum
$h^\ast(P,h_{0-})$ in $\mathcal A$. If $[z_{\alpha,0},z_{\beta,0}]$ is the
band of Lemma~\ref{lem:band} in this market, then
\begin{equation}\label{ec:eq:comparableband}
 \frac{\underline m}{\overline m}z_{\alpha,0}\le P_t\,h^\ast(P,h_{0-})_t^{\gamma}
 \le\frac{\overline m}{\underline m}z_{\beta,0}.
\end{equation}
The optimum is locally interior and satisfies \eqref{ec:eq:foc}. If two
prices satisfy the hypotheses and $e^{-\varepsilon}P\le Q\le e^{\varepsilon}P$,
then
\begin{equation}\label{ec:eq:loglipschitz}
 \sup_{t\ge0}\Big|\log\frac{h^\ast(Q,h_{0-})_t}{h^\ast(P,h_{0-})_t}\Big|
 \le\frac{\varepsilon}{\gamma}\quad\text{almost surely.}
\end{equation}
For every $X>0$ the budget problem with expenditure
$\E\int e^{-\delta t}P_th_t\dd t\le X$ has a unique optimum and a unique
positive multiplier at which the budget binds.
\end{theorem}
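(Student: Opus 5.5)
The plan is to sandwich the problem at the price $P$ between the two geometric prices $\underline mP^0$ and $\overline mP^0$, using the explicit solutions of Theorem~\ref{thm:br}. Monotonicity in the price (Lemma~\ref{ec:lem:lattice}) then transfers their bounded indices to the optimum at $P$.

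\emph{Step 1 (geometric prices).} For a constant $m>0$, the price $mP^0$ is the price of Theorem~\ref{thm:br} in the market $(r_0,|\bm\kappa_0|)$, since a constant factor only rescales the multiplier. The reflected plan $h^m_t=h_0^m\exp(\ell_t/\gamma)$ is therefore the unique optimum of $\J_{mP^0}$. Its index $mP^0(h^m)^\gamma$ lies in $[z_{\alpha,0},z_{\beta,0}]$, and $K_0(\gamma)>0$ makes it admissible with finite Lagrangian by Lemmas~\ref{ec:lem:cost} and~\ref{ec:lem:variation}. The key scaling fact is this: if $P^0h^\gamma$ is the reflected index, then replacing $m$ by $em$ leaves the index process invariant once $h$ is multiplied by $e^{-1/\gamma}$. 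This holds exactly after the first revision, and before it the initial rule gives only monotone dependence.

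\emph{Step 2 (existence at $P$).} Existence is the main obstacle, because $P$ is not geometric and no explicit plan is available. First I would truncate, restricting to paths $v$ with $u(h^{\overline m})\le v\le u(h^{\underline m})$. On this order interval the expenditure and the utility are controlled by those of the two geometric plans, and the adjustment cost is bounded through the lattice inequality \eqref{ec:eq:lattice}. Taking a maximizing sequence and forming convex combinations (Komlós), strict concavity of $\J_P$ in utility coordinates and lower semicontinuity of the discounted variation yield a maximizer $v^\ast$ on the order interval. The truncation is then removed with \eqref{ec:eq:lattice}. Comparing an arbitrary $w$ with $v^\ast$ via $w\wedge u(h^{\underline m})$ and $w\vee u(h^{\overline m})$, and using $\underline mP^0\le P\le\overline mP^0$, shows that truncating any competitor does not lower $\J_P$, so $v^\ast$ is a global optimum. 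Uniqueness follows from strict concavity. Because $v^\ast$ lies between two locally interior paths it is locally interior, and its index satisfies
\[
P(h^\ast)^\gamma\le\overline mP^0(h^{\underline m})^\gamma\le\tfrac{\overline m}{\underline m}z_{\beta,0},
\]
and symmetrically from below. The constants must be arranged through the scaling of Step 1, using $h^{\underline m}=(\overline m/\underline m)^{1/\gamma}h^{\overline m}$ once both have revised; this gives \eqref{ec:eq:comparableband}. Theorem~\ref{ec:thm:necessary} then gives \eqref{ec:eq:foc}.

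\emph{Step 3 (stability and budget).} If $e^{-\varepsilon}P\le Q\le e^{\varepsilon}P$, Lemma~\ref{ec:lem:lattice} gives $h^\ast(e^{\varepsilon}P)\le h^\ast(Q)\le h^\ast(e^{-\varepsilon}P)$. It then suffices to show $h^\ast(e^{\pm\varepsilon}P)\gtrless e^{\mp\varepsilon/\gamma}h^\ast(P)$. For the candidate $e^{-\varepsilon/\gamma}h^\ast(P)$, with its reference moved to $e^{-\varepsilon/\gamma}h_{0-}$, the index is unchanged, so \eqref{ec:eq:foc} holds and Theorem~\ref{ec:thm:foc} identifies it as the optimum at $e^{\varepsilon}P$ with that reference. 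Monotonicity in the reference, since $e^{-\varepsilon/\gamma}h_{0-}\le h_{0-}$, gives the inequality, and \eqref{ec:eq:loglipschitz} follows. For the budget statement, the expenditure $E(\lambda)$ of the optimum at price $\lambda P$ is strictly decreasing in $\lambda$ by monotonicity and uniqueness. It is continuous by \eqref{ec:eq:loglipschitz} and dominated convergence, and it tends to $\infty$ as $\lambda\to0$ and to $0$ as $\lambda\to\infty$ by \eqref{ec:eq:comparableband}. A unique $\lambda$ therefore makes the budget bind, and the last part of Theorem~\ref{ec:thm:foc} gives the unique budget optimum.
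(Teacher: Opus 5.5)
\textbf{Gap: strict decrease of expenditure in the multiplier.} Your route for existence, the index bound, local interiority and the log-Lipschitz estimate is essentially the paper's. The paper clips a maximizing sequence into the order interval before taking limits, where you optimize on the interval and unclip afterwards. The paper gets $h^\ast(aP,h_{0-})=a^{-1/\gamma}h^\ast(P,a^{1/\gamma}h_{0-})$ by substituting into the objective, where you verify \eqref{ec:eq:foc}. Both variants work.

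The gap is in the budget statement. You say $E(\lambda)$ is strictly decreasing ``by monotonicity and uniqueness''. But uniqueness of the optimum at each fixed price does not stop one plan from being optimal at two prices $\lambda_1P$ and $\lambda_2P$. With inaction this is exactly what must be ruled out. Without it you only get a nonincreasing $E$, so possibly a whole interval of multipliers at which the budget binds.

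The missing argument runs as follows.
\begin{itemize}
\item Suppose $E(\lambda_1)=E(\lambda_2)$ with $\lambda_1<\lambda_2$. Price monotonicity gives $h^\ast(\lambda_2P)\le h^\ast(\lambda_1P)$ with equal expenditure. So the plans agree $\dd t\otimes\mathbb P$-a.e., hence everywhere by right continuity, and a single $h$ is optimal at both prices.
\item Its two adjoints are $\Psi_j=\delta^{-1}-\lambda_j\Gamma$ with $\Gamma_t=\E[\int_t^\infty e^{-\delta(s-t)}P_sh_s^\gamma\dd s\,|\,\F_t]$. The index bounds keep $\Gamma$ away from zero.
\item Subtracting the two upward support identities in \eqref{ec:eq:foc} gives $(\lambda_2-\lambda_1)\,\E\int e^{-\delta t}\Gamma_t\dd v^+_t=0$, and likewise for $v^-$. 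Hence $h\equiv h_{0-}$.
\item Then \eqref{ec:eq:comparableband} would confine $P^0_t$ to a deterministic compact subinterval of $(0,\infty)$. This contradicts its nondegenerate lognormal law for $t>0$.
\end{itemize}
This is where $|\bm\kappa_0|>0$ is really used, and your proposal never uses it.

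\textbf{Smaller points.}
\begin{itemize}
\item In Step~2, Koml\'os and Helly need a bound on the discounted variation of the maximizing sequence. That bound does not come from the lattice inequality. Since $\alpha$ may be zero or negative, a bounded adjustment cost does not by itself bound variation (with $\alpha=0$ it controls only $v^-$). Write $\alpha\,\dd v^++\beta\,\dd v^-=\bar\alpha\,\dd|v|+\tilde\alpha\,\dd v$ with $\bar\alpha=(\alpha+\beta)/2>0$, and integrate the linear part by parts as in \eqref{ec:eq:coercive}. On the order interval the remaining terms are dominated, so $\E\int e^{-\delta t}\dd|v_n|_t$ stays bounded.
\item The index bound needs no scaling. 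Your display already follows from $\underline mP^0(h^{\underline m})^\gamma\le z_{\beta,0}$, that is, from the index of each envelope at its own price, and symmetrically for the lower bound.
\item Do not rely on $h^{\underline m}=(\overline m/\underline m)^{1/\gamma}h^{\overline m}$. It holds only after the two reflected indices coalesce, not merely after both have revised, since they can revise in opposite directions without meeting.
\end{itemize}
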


\begin{proof}
\emph{Step 1: envelopes.} For a constant $c>0$, the price $cP^0$ is a
geometric price in the market $(r_0,\bm\kappa_0)$. By
Lemmas~\ref{ec:lem:fk} to~\ref{ec:lem:variation} and
Theorem~\ref{ec:thm:foc}, the reflected policy of Theorem~\ref{thm:br},
started at the projection of $ch_{0-}^\gamma$ onto
$[z_{\alpha,0},z_{\beta,0}]$, is the unique finite optimum
$h^\ast(cP^0,h_{0-})$, and its index $cP^0_th_t^\gamma$ stays in this band.
Let $\underline h=h^\ast(\overline mP^0,h_{0-})$, $\overline h=h^\ast(\underline mP^0,h_{0-})$,
$\underline v=u(\underline h)$, and $\overline v=u(\overline h)$. By
Lemma~\ref{ec:lem:lattice}, $\underline v\le\overline v$.

\emph{Step 2: clipping.} Let $v$ be admissible with $\J_P(v)>-\infty$. Since
$P\le\overline mP^0$, inequality \eqref{ec:eq:lattice} and optimality of
$\underline v$ at $\overline mP^0$ give
\[
 \J_P(v\vee \underline v)+\J_{\overline mP^0}(v\wedge \underline v)
 \ge\J_P(v)+\J_{\overline mP^0}(\underline v)
 \ge\J_P(v)+\J_{\overline mP^0}(v\wedge \underline v),
\]
so $\J_P(v\vee \underline v)\ge\J_P(v)$. Since $\underline mP^0\le P$, the same
argument with $\overline v$ at $\underline mP^0$ and the path $w=v\vee \underline v$ gives
$\J_P(w\wedge \overline v)\ge\J_P(w)$. Hence the clipped path
$(v\vee \underline v)\wedge \overline v$ is admissible and has an objective at least
$\J_P(v)$, and all expenditures involved are finite because the prices are
comparable.

\emph{Step 3: existence and uniqueness.} By Step 2 a maximizing sequence
$v_n$ can be chosen with $\underline v\le v_n\le \overline v$, so that $|v_n|$ and
$P\Phi(v_n)$ are dominated in $L^1(e^{-\delta t}\dd t\,\dd\mathbb P)$. For an
admissible path, discounted integration by parts gives
\begin{equation}\label{ec:eq:coercive}
 \J_P(v)=\tilde\alpha v_{0-}
 +\E\int_0^\infty e^{-\delta t}\big((1-\delta\tilde\alpha)v_t-P_t\Phi(v_t)\big)\dd t
 -\bar\alpha\,\E\int_{[0,\infty)}e^{-\delta t}\dd|v|_t,
\end{equation}
where the terminal term vanishes because its absolute expectation is at most
\[
 e^{-\delta T}|v_{0-}|+\E\int_{[0,T]}e^{-\delta s}e^{-\delta(T-s)}\dd|v|_s\longrightarrow0 .
\]
The first two terms of \eqref{ec:eq:coercive} are bounded along the
sequence, so the discounted total variations of the $v_n$ are bounded in
expectation. Apply the scalar Koml\'os theorem \citep{Komlos} simultaneously
to these total variations and to $v_n(t)$ at nonnegative rational times,
placing the countably many coordinates on a disjoint union with positive
weights chosen to make their common $L^1$ bound finite. Ces\`aro averages
$\check v_n$ of a subsequence then converge at rational times and have
pathwise bounded variation on compacts, and they remain a maximizing
sequence by concavity. Define the limit path by right limits of the rational
time limits. Helly's argument gives convergence at continuity points and in
$L^1$ on compact time intervals, and right continuity of the filtration
gives adaptedness. The limit is right continuous with locally bounded
variation, has the reference $v_{0-}$, and lies between $\underline v$ and $\overline v$.
Once every path is extended by $v_{0-}$ before zero, weighted variation,
including the initial atom, is lower semicontinuous. Fatou's lemma and the
domination in \eqref{ec:eq:coercive} show that the limit is an admissible
maximizer, and strict concavity gives uniqueness.

\emph{Step 4: index bounds and optimality conditions.} Write $h^\ast$ for
the optimum. Since $\underline h\le h^\ast\le \overline h$ and the indices of $\underline h$ and $\overline h$ lie in
$[z_{\alpha,0},z_{\beta,0}]$,
\[
 P_t(h^\ast_t)^\gamma\ge\underline mP^0_t(\underline h_t)^\gamma
 =\frac{\underline m}{\overline m}\,\overline mP^0_t(\underline h_t)^\gamma
 \ge\frac{\underline m}{\overline m}z_{\alpha,0},
\]
and the upper bound in \eqref{ec:eq:comparableband} follows in the same way
from $h^\ast\le \overline h$. The values and left limits of $\underline v$ and $\overline v$ are interior on
compacts, so $h^\ast$ is locally interior, and
Theorem~\ref{ec:thm:necessary} gives \eqref{ec:eq:foc}.

\emph{Step 5: stability.} For $a>0$ the substitution $h=a^{-1/\gamma}k$
multiplies running utility, expenditure at the price $aP$, and both
adjustment costs by $a^{(\gamma-1)/\gamma}$, and it maps the reference
$h_{0-}$ to $a^{1/\gamma}h_{0-}$. For logarithmic utility it changes the
objective by the constant $-(\log a)/\delta$. In both cases
\[
 h^\ast(aP,h_{0-})=a^{-1/\gamma}h^\ast(P,a^{1/\gamma}h_{0-}).
\]
Let $a\ge1$. The price comparison of Lemma~\ref{ec:lem:lattice} gives
$h^\ast(aP,h_{0-})\le h^\ast(P,h_{0-})$, and the reference comparison gives
$h^\ast(P,a^{1/\gamma}h_{0-})\ge h^\ast(P,h_{0-})$. Together,
\begin{equation}\label{ec:eq:pricescaling}
 a^{-1/\gamma}h^\ast(P,h_{0-})\le h^\ast(aP,h_{0-})\le h^\ast(P,h_{0-}).
\end{equation}
If $e^{-\varepsilon}P\le Q\le e^{\varepsilon}P$, the price comparison gives
$h^\ast(e^{\varepsilon}P,h_{0-})\le h^\ast(Q,h_{0-})\le h^\ast(e^{-\varepsilon}P,h_{0-})$.
Applying \eqref{ec:eq:pricescaling} with $a=e^{\varepsilon}$, once to $P$
and once to $e^{-\varepsilon}P$, bounds the outer terms by
$e^{-\varepsilon/\gamma}h^\ast(P,h_{0-})$ from below and by
$e^{\varepsilon/\gamma}h^\ast(P,h_{0-})$ from above, which is
\eqref{ec:eq:loglipschitz}.

\emph{Budgets.} Let
$\mathcal W(\lambda)=\E\int e^{-\delta t}P_th^\ast(\lambda P,h_{0-})_t\dd t$.
It is continuous and nonincreasing by \eqref{ec:eq:pricescaling} and
dominated convergence on compact intervals of multipliers. The envelopes at
$\lambda\underline mP^0$ and $\lambda\overline mP^0$ give
$c_-\lambda^{-1/\gamma}\le\mathcal W(\lambda)\le c_+\lambda^{-1/\gamma}$ with
$0<c_-\le c_+<\infty$, because the remaining exponential moment integral is
$\int_0^\infty e^{-K_0(\gamma)t}\dd t$. The function is strictly decreasing.
Equal values at $0<\lambda_1<\lambda_2$, the price ordering, and
right continuity would give a common optimizer $h$. With $v=u(h)$ and
$\Gamma_t=\E[\int_t^\infty e^{-\delta(s-t)}P_sh_s^\gamma\dd s\,|\,\F_t]$,
the two adjoints are $\Psi_j=\delta^{-1}-\lambda_j\Gamma$, and the index
bounds keep $\Gamma$ away from zero. Subtracting the upward support
identities gives $\int e^{-\delta t}\Gamma_t\dd v^+_t=0$, and the downward
identities give the same for $v^-$. Hence $h\equiv h_{0-}$, and
\eqref{ec:eq:comparableband} would confine $P^0_t$ to a deterministic compact
subinterval of $(0,\infty)$, which contradicts its nondegenerate lognormal
law for $t>0$. Therefore $\mathcal W$ takes every value in $(0,\infty)$
exactly once, and Theorem~\ref{ec:thm:foc} gives the unique primal optimum.
\end{proof}

\begin{lemma}\label{ec:lem:localstab}
Fix $(\delta,\alpha,\beta,\gamma,h_{0-})$. Let $P^n$ and $P$ be positive
progressively measurable prices with
$C_0^{-1}e^{\nu_nt-\kappa_nB_t}\le P^n_t\le C_0e^{\nu_nt-\kappa_nB_t}$ and
$C_0^{-1}e^{\nu t-\kappa B_t}\le P_t\le C_0e^{\nu t-\kappa B_t}$ for one
finite $C_0$, where $(\nu_n,\kappa_n)\to(\nu,\kappa)$, $\kappa\ne0$, and the
limiting market satisfies $K(\gamma)>0$. Suppose $P^n\to P$ in
$\dd t\otimes\mathbb P$ measure on every finite time interval. For large $n$
let $h^n=h^\ast(P^n,h_{0-})$ and $h=h^\ast(P,h_{0-})$. Then
$\E\int_0^T|\log h^n_t-\log h_t|\dd t\to0$ for every $T<\infty$, the
maximized Lagrangian values converge, and
$\E\int_0^\infty e^{-\delta t}|P^n_th^n_t-P_th_t|\dd t\to0$.
\end{lemma}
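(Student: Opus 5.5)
The plan is to combine the uniform index bounds of Theorem~\ref{ec:thm:comparable} with the comparison structure of Lemma~\ref{ec:lem:lattice}, using a squeeze argument and not a compactness argument. The first step is to make the prices uniformly comparable. By hypothesis both $P^n$ and $P$ lie within the factor $C_0$ of geometric prices whose parameters $(\nu_n,\kappa_n)$ converge to $(\nu,\kappa)$. The bands of Lemma~\ref{lem:band} depend continuously on these parameters (Theorem~\ref{ec:thm:allrates}), and $K_n(\gamma)\to K(\gamma)>0$. So for large $n$ the index bound \eqref{ec:eq:comparableband} holds with constants independent of $n$. The wealth-to-consumption bounds $e^{-d/\rho}\le K\Xf_\rho\le e^{d/\rho}$ are also uniform. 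In particular every $h^n$ is sandwiched between the envelopes $h^\ast(C_0P^{0,n},h_{0-})$ and $h^\ast(C_0^{-1}P^{0,n},h_{0-})$. These envelopes are reflected plans in explicit geometric markets, and $\log$ of each envelope is bounded on $[0,T]$ by $\log h_{0-}$ plus a constant multiple of $1+t+\sup_{s\le T}|B_s|$, uniformly in $n$.

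The second step is to localize the price perturbation. Fix $\eta>0$ and let $E_n=\{(t,\omega):|\log(P^n_t/P_t)|>\eta\}$, which has small $\dd t\otimes\mathbb P$ measure on $[0,T']$ for large $n$. I would define modified prices $\tilde P^n=P\,e^{\max(-\eta,\min(\eta,\log(P^n/P)))}$, which satisfy $e^{-\eta}P\le\tilde P^n\le e^{\eta}P$. By \eqref{ec:eq:loglipschitz}, $\sup_t|\log(\tilde h^n_t/h_t)|\le\eta/\gamma$ for $\tilde h^n=h^\ast(\tilde P^n,h_{0-})$. It then remains to compare $h^n$ with $\tilde h^n$. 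These prices differ only on $E_n$, and there they are still within a factor $C_0^2e^{|\nu_n-\nu|t+|\kappa_n-\kappa||B_t|}$ of each other. The objective difference $|\J_{P^n}(v)-\J_{\tilde P^n}(v)|$ over paths lying between the envelopes is bounded by $\E\int e^{-\delta t}|P^n-\tilde P^n|\Phi(v)\dd t$. This quantity tends to zero by uniform integrability: the envelope expenditures are dominated by the exponential moments controlled by $K(\gamma)>0$, so a tail beyond $T'$ plus the small set $E_n$ contributes little. Near-optimality of $\tilde v^n=u(\tilde h^n)$ for $\J_{P^n}$ then follows. Strict concavity in utility coordinates, quantified through the concavity gap in the proof of Theorem~\ref{ec:thm:foc}, gives $\E\int e^{-\delta t}P_t\,\mathrm{gap}(v^n_t,\tilde v^n_t)\dd t\to0$. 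Together with the uniform bounds, this yields convergence of $v^n-\tilde v^n$ to zero in $\dd t\otimes\mathbb P$ measure on $[0,T]$. Dominated convergence with the envelope bound gives $\limsup\E\int_0^T|\log h^n-\log h|\dd t\le T\eta/\gamma$, and letting $\eta\downarrow0$ proves the first claim.

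For the values, $\sup\J_{P^n}$ and $\sup\J_P$ differ by at most the price-perturbation term over the envelope class plus the corresponding $\eta$ term, so they converge. For the expenditure, split $|P^nh^n-Ph|\le|P^n-P|h^n+P|h^n-h|$. Both pieces converge in $\dd t\otimes\mathbb P$ measure on compacts, and they are dominated by envelope expenditures with uniformly small tails, which gives $L^1(e^{-\delta t}\dd t\,\dd\mathbb P)$ convergence on $[0,\infty)$.

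The main obstacle will be the step that upgrades near-optimality to convergence of the paths. The concavity gap degenerates where $v$ is large, and the adjustment-cost part is only convex, not strictly convex. I will try to control this using only the running term's strict concavity, on the set where the envelopes keep $v$ in a compact interval with high probability.
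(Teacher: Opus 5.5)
Your argument is correct, but it is organized differently from the paper's proof.

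\textbf{The paper's route.} The paper compares $h^n$ and $h$ directly.
\begin{enumerate}
\item The uniform envelopes give the cross-price bound $\E[e^{-\delta t}P^n_th^m_t]\le C_2e^{-a_0t}$, with the limit included among the indices.
\item Local convergence in measure, uniform integrability on finite intervals, and the exponential tails then give a single estimate, $\E\int_0^\infty e^{-\delta t}|P^n_t-P_t|(h^n_t+h_t)\,\dd t\to0$.
\item Using each optimum as a competitor at the other price turns this into $0\le\J_P(v)-\J_P(v^n)=o(1)$.
\item The strict convexity gap of $\Phi$ is extracted from the midpoint $(v^n+v)/2$. This uses only optimality of $v$ and convexity of the adjustment cost. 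The gap is then localized to $\{t\le T,\ \sup_{s\le T}|B_s|\le M\}$, exactly as you plan.
\end{enumerate}
The same estimate gives convergence of the maximized Lagrangians at once. Together with the cross bound, it also gives convergence of expenditure on the infinite horizon.

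\textbf{Your route.} You insert the clipped price $\tilde P^n$.
\begin{itemize}
\item The uniformly small part of the perturbation is absorbed pathwise by \eqref{ec:eq:loglipschitz}, as $\sup_t|\log(\tilde h^n_t/h_t)|\le\eta/\gamma$.
\item Only the large-error set $E_n$ is treated variationally. There you take the gap from the verification inequality of Theorem~\ref{ec:thm:foc} at $h^n$. This is legitimate, because Theorem~\ref{ec:thm:comparable} supplies the bounded index and the support conditions for $h^n$. The resulting gap is weighted by $P^n$, not by $P$ as you wrote.
\end{itemize}

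\textbf{Comparison.} The extra layer is not needed. Off $E_n$ one already has $|P^n-P|\le(e^\eta-1)P$, so the direct perturbation term is $O(\eta)$ plus a small-set contribution. Your route also still needs the same two ingredients the paper uses:
\begin{itemize}
\item the cross-price exponential bound, so that $\J_{P^n}(\tilde v^n)$ and $\J_{\tilde P^n}(v^n)$ are finite and close;
\item the same localized strict convexity of the inverse utility.
\end{itemize}
What your decomposition buys is a slightly finer description of the error: a part of size $\eta/\gamma$ uniformly in time, plus a part that vanishes in measure. The paper's direct comparison is shorter, handles paths, values, and expenditures with one estimate, and through the midpoint argument avoids appealing to the optimality conditions at $h^n$.
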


\begin{proof}
The strict inequalities of the limiting market persist near $(\nu,\kappa)$.
Theorem~\ref{ec:thm:comparable} and continuity of the band give a constant
$C_1$ such that $|\log h^n_t+(\nu_nt-\kappa_nB_t)/\gamma|\le C_1$ for all
large $n$, and the same bound holds for $h$. Consequently there exist
$a_0>0$ and $C_2<\infty$ such that, for all large $n,m$ including the limit,
\begin{equation}\label{ec:eq:crossprice}
 \E\big[e^{-\delta t}P^n_th^m_t\big]\le C_2e^{-a_0t}.
\end{equation}
The exponent inside the expectation is bounded above by
$-\delta t+(\nu_n-\nu_m/\gamma)t+(-\kappa_n+\kappa_m/\gamma)B_t+C_3$, and its
expected decay rate depends continuously on the four coefficients and equals
$K(\gamma)>0$ on the diagonal at the limit. On finite intervals the same
envelopes give uniform moments of every order. Local convergence in measure,
uniform integrability, and the exponentially small tails imply
\begin{equation}\label{ec:eq:priceerror}
 \E\int_0^\infty e^{-\delta t}|P^n_t-P_t|(h^n_t+h_t)\dd t\longrightarrow0 .
\end{equation}
Put $v^n=u(h^n)$, $v=u(h)$, and $\Phi=u^{-1}$. These controls are admissible
for both prices, because admissibility does not depend on the price and
cross expenditures are finite by \eqref{ec:eq:crossprice}. Optimality at $P^n$
and $P$ gives $0\le\J_P(v)-\J_P(v^n)\le
\E\int e^{-\delta t}|P^n_t-P_t|(h^n_t+h_t)\dd t=o(1)$. Convexity of the
adjustment cost and optimality of $v$ yield
\[
 \E\int_0^\infty e^{-\delta t}P_t
 \Big[\frac{\Phi(v^n_t)+\Phi(v_t)}2-\Phi\Big(\frac{v^n_t+v_t}2\Big)\Big]\dd t=o(1).
\]
The inverse utility is strictly convex, and on
$\{t\le T,\sup_{s\le T}|B_s|\le M\}$ the envelopes confine $v^n$ and $v$ to a
compact set and keep $P$ away from zero. The display therefore implies local
convergence in measure of $v^n$, hence of $h^n$ and $\log h^n$. Letting
$M\to\infty$ and using the logarithmic envelopes gives local $L^1$
convergence. Local convergence of $P^nh^n$ and \eqref{ec:eq:crossprice} give
convergence of expenditure on the infinite horizon, and
\eqref{ec:eq:priceerror} with the comparison of optimal values gives
convergence of the maximized Lagrangians.
\end{proof}

\subsection{A Certificate for All Deviations}\label{ec:certificate}

\begin{theorem}\label{ec:thm:regret}
Let $\gamma>0$ and $p\in(0,1]$. Let a candidate $h^0\in\mathcal A$ have a
bounded positive index $z^0=P^0\psi_p'(h^0)/u'(h^0)$ for some
price $P^0$, and let its adjoint $\Psi^0$ satisfy \eqref{ec:eq:foc}
with $h^\ast=h^0$. For another price $P$ define
$z^P_t=P_t\psi_p'(h^0_t)/u'(h^0_t)$ and $\mathcal M_t=z^P_t/z^0_t$. If the
candidate has finite expenditure at $P$, every admissible competitor with finite
$\J_P$ satisfies
\begin{equation}\label{ec:eq:regret}
 \J_P(h)-\J_P(h^0)\le\E\int_0^\infty e^{-\delta t}z^0_t(h^0_t)^{1-\gamma}
 G_{\gamma,p}(\mathcal M_t)\dd t,
\end{equation}
where $G_{\gamma,p}(M)=pM+\frac1{\gamma-1}-\big(p+\frac1{\gamma-1}\big)
M^{p(\gamma-1)/(1+p(\gamma-1))}$ for $\gamma\ne1$ and
$G_{1,p}(M)=p(M-1-\log M)$. If $P=\lambda e^{\delta t}\xi\varpi$ and the
candidate's budget cost is $w_0\le X_0$, its utility regret is bounded by the
right side plus $\lambda(X_0-w_0)$.
\end{theorem}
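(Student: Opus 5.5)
The plan is to repeat the concavity argument of Theorem~\ref{ec:thm:foc} with one change. The candidate is optimal for $P^0$ but is being evaluated at $P$, so the pointwise concavity inequality picks up an explicit error term, and this term is computed in closed form. Work in utility coordinates $v=u(h)$, $v^0=u(h^0)$, and write $\Phi=\psi_p\circ u^{-1}$. Then the running payoff at price $P$ is $f_{P_t}(v)=v-P_t\Phi(v)$, and $\J_P$ is the discounted integral of $f_P$ minus the adjustment cost.

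\emph{Step 1 (pointwise split).} For each $(t,\omega)$ and every $v$ write
\[
 f_{P_t}(v)-f_{P_t}(v^0_t)=(1-z^0_t)(v-v^0_t)+\big[z^0_tv-P_t\Phi(v)\big]-\big[z^0_tv^0_t-P_t\Phi(v^0_t)\big].
\]
Bound the last difference by $S_t=\sup_w\{z^0_tw-P_t\Phi(w)\}-\{z^0_tv^0_t-P_t\Phi(v^0_t)\}\ge0$. This is the only place where $P\ne P^0$ matters. At $P=P^0$ the supremum is attained at $v^0$, because $P^0\Phi'(v^0)=z^0$, and $S\equiv0$.

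\emph{Step 2 (the linear part).} Take the term $(1-z^0)(v-v^0)$ together with the difference in adjustment costs. Exactly as in the proof of Theorem~\ref{ec:thm:foc}, put $A=v-v^0$, which satisfies $A_{0-}=0$. Since $z^0$ is bounded and $A$ has integrable discounted variation, Fubini's theorem and optional projection turn $\E\int e^{-\delta t}(1-z^0_t)A_t\dd t$ into $\E\int e^{-\delta s}\Psi^0_s\dd A_s$. The inequalities and support conditions \eqref{ec:eq:foc} for $\Psi^0$ then make the linear term plus the cost difference nonpositive. Here the adjoint $\Psi^0$ is built from $z^0$, the index for $P^0$, not from $z^P$. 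This step needs the finite expenditures, of the candidate at $P$ (assumed) and of the competitor (finite $\J_P$), so that the splitting in Step 1 is between finite integrable quantities.

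\emph{Step 3 (computing $S_t$).} Change back to $h$. The task is to maximize $z^0u(h)-Ph^{1/p}$ over $h>0$. The first order condition is $z^0h^{-\gamma}=p^{-1}Ph^{1/p-1}$. Since $z^P=P\psi_p'(h^0)/u'(h^0)=p^{-1}P(h^0)^{1/p-1+\gamma}$, the maximizer is $h=h^0\mathcal M^{-p/(1+p(\gamma-1))}$. Substituting and factoring out $z^0(h^0)^{1-\gamma}$ should give $S_t=z^0_t(h^0_t)^{1-\gamma}G_{\gamma,p}(\mathcal M_t)$. The exponent $p(\gamma-1)/(1+p(\gamma-1))$ and the constants $p$, $1/(\gamma-1)$ come from this substitution, with the analogous computation for $u=\log$. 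This is routine algebra, but I will check it against $G_{\gamma,p}(1)=0$. Combining Steps 1 to 3 gives \eqref{ec:eq:regret}.

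\emph{Step 4 (budget form).} Let $P=\lambda e^{\delta t}\xi\varpi$ and let $h$ be affordable. Then $U(h)\le\J_P(h)+\lambda X_0$ and $U(h^0)=\J_P(h^0)+\lambda w_0$. Subtracting these and applying \eqref{ec:eq:regret} adds the term $\lambda(X_0-w_0)$.

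The main obstacle is Step 2. There the justification of Fubini's theorem and optional projection must use only the boundedness of $z^0$ and admissibility, while the expenditure integrals are taken at the different price $P$. The plan is to handle this by proving the inequality first on $[0,T]$, as in the proof of Theorem~\ref{ec:thm:necessary}, and then passing $T\to\infty$ using the vanishing of the discounted terminal term.
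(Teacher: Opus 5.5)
Your proposal is correct and follows the paper's own proof. You use the Fubini/adjoint argument of Theorem~\ref{ec:thm:foc} to bound the cost difference below by $\E\int e^{-\delta t}(1-z^0_t)(v_t-v^0_t)\dd t$, maximize the remaining integrand $z^0_t(v_t-v^0_t)-P_t(\Phi(v_t)-\Phi(v^0_t))$ pointwise at $h=h^0\mathcal M^{-p/(1+p(\gamma-1))}$ to obtain $z^0(h^0)^{1-\gamma}G_{\gamma,p}(\mathcal M)$, and then add $\lambda(X_0-w_0)$ for the budget form, exactly as the paper does.

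The truncation to $[0,T]$ that you flag as the main obstacle is not needed. The linear term does not involve $P$. The Fubini step is justified directly by the boundedness of $z^0$ and the integrable discounted variation of $A=v-v^0$. The price $P$ enters only through the finite expenditures that make the splitting legitimate.
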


\begin{proof}
Conditions \eqref{ec:eq:foc} for the candidate give
$\mathcal C(v)-\mathcal C(v^0)\ge\E\int e^{-\delta t}(1-z^0_t)(v_t-v^0_t)\dd t$
by the Fubini argument of Theorem~\ref{ec:thm:foc}. Thus
$\J_P(h)-\J_P(h^0)\le\E\int e^{-\delta t}\{z^0_t(v_t-v^0_t)
-P_t(\Phi_p(v_t)-\Phi_p(v^0_t))\}\dd t$. Maximize the term in braces over scalar
consumption at each $(t,\omega)$. Writing that consumption as $sh^0_t$, the
maximizer is $s=\mathcal M_t^{-1/(\gamma+1/p-1)}$, and substitution yields
$z^0_t(h^0_t)^{1-\gamma}G_{\gamma,p}(\mathcal M_t)$, including the
logarithmic limit. Adding the budget multiplier proves the last claim.
\end{proof}

The function $G=G_{\gamma,p}$ satisfies $G(1)=G'(1)=0$ and $G''(1)=p/(1+p(\gamma-1))$, so the
penalty is quadratic for small errors in the log price.

\section{Existence and Finite Populations}\label{ec:existence}

\subsection{Proof of Theorem~\ref{thm:weakexist} and Corollary~\ref{cor:weakgrowth}}\label{ec:proofexist}

\begin{proof}[Proof of Theorem~\ref{thm:weakexist}]
\emph{Step 1: the benchmark for fixed multipliers.} Work on the canonical
product space of the finite initial type, the common Brownian path, and the
private Brownian path, with the usual completed filtrations. For a common
adapted c\`adl\`ag process $y=\log Y$ within bounded distance of
$k+gt+sB_t$, let
\[
 h_i[y]=h^\ast\big(\lambda_ie^{\delta_it}\xi^i_te^{\theta_iy_t},h_{i,0-}\big),\qquad
 (\mathcal T_\mu y)_t=\frac1{1-\Theta_1}\sum_i\pi_i\E\big[\log h_i[y]_t\,\big|\,\F^B_t\big].
\]
The consistency condition \eqref{eq:consistency} is the fixed point equation
$y=\mathcal T_\mu y$, because
$\log c_i=\theta_iy+\log h_i[y]$ and $\sum_i\pi_i\theta_i=\Theta_1$. The best
responses exist by Theorem~\ref{ec:thm:comparable}, and
\eqref{ec:eq:loglipschitz} gives, in the uniform essential path norm,
\begin{equation}\label{ec:eq:contraction}
 \|\mathcal T_\mu y-\mathcal T_\mu\tilde y\|_\infty
 \le\frac{\Theta_2}{1-\Theta_1}\|y-\tilde y\|_\infty=q\,\|y-\tilde y\|_\infty .
\end{equation}
To see that $\mathcal T_\mu$ maps this space into itself, recall that the
envelopes of Theorem~\ref{ec:thm:comparable} bound
$\sup_{t\le T}|\log h_i[y]_t|$ by a constant plus a multiple of
$\sup_{t\le T}|B_t|+\sup_{t\le T}|W_t|$. Future increments of $B$ are
independent of the full filtration at time $t$, so conditioning on the whole
common path agrees at time $t$ with conditioning on $\F^B_t$. Dominated
convergence over the private path then gives right continuity and left
limits of the conditional expectation on compact intervals, so
$\mathcal T_\mu y$ has an adapted c\`adl\`ag version. The space of common
adapted c\`adl\`ag bounded perturbations of a fixed continuous path, modulo
indistinguishability, is complete in this norm, since a Cauchy sequence has
an almost surely uniform limit along a subsequence.

Set $y^0_t=k(\mu)+gt+sB_t$. At this input the responses have log indices
$x^0_i\in[a_i,b_i]$ and
\[
 \log h_i[y^0]_t=\frac{x^0_i(t)-\mu_i-\theta_ik(\mu)-\hat\nu_it
 +\hat\kappa_{a,i}B_t+\kappa_{p,i}W_t}{\gamma_i},
 \qquad \hat\nu_i=\nu_i+\theta_ig .
\]
The definitions of $g$, $s$, and $k$ imply
$(\mathcal T_\mu y^0)_t-y^0_t=(1-\Theta_1)^{-1}\sum_i(\pi_i/\gamma_i)
(\E[x^0_i(t)\,|\,\F^B_t]-m_i)$, whose absolute value is at most $(1-q)R$.
Hence $\mathcal T_\mu$ maps the closed ball of radius $R$ about $y^0$ into
itself and is a contraction there. Its fixed point $y^\mu$ satisfies
\eqref{eq:existbound}, any two fixed points at finite distance coincide by
\eqref{ec:eq:contraction}, and the iterates satisfy
$\|\mathcal T_\mu^jy^0-y^\mu\|_\infty\le q^jR$.

\emph{Step 2: continuity in the multipliers.} Applying
\eqref{ec:eq:loglipschitz} to two multiplier vectors and then
\eqref{ec:eq:contraction} gives
$\|y^\mu-y^{\tilde\mu}\|_\infty\le(1-\Theta_1-\Theta_2)^{-1}
\sum_i\pi_i|\mu_i-\tilde\mu_i|/\gamma_i$. The log relative consumption of type $i$ then moves by
at most $\gamma_i^{-1}(|\mu_i-\tilde\mu_i|+\theta_i\|y^\mu-y^{\tilde\mu}\|_\infty)$,
and its log consumption by at most that amount plus
$\theta_i\|y^\mu-y^{\tilde\mu}\|_\infty$.
The expenditures
\[
 \mathcal E_i(\mu)=\E\int_0^\infty\xi^i_te^{\theta_iy^\mu_t}h_i[y^\mu]_t\,\dd t
\]
are finite for every multiplier vector by Step 3. On a neighborhood of any
multiplier vector the bounds on log consumption dominate the integrands on
the infinite horizon, so the expenditures are continuous.

\emph{Step 3: bounds and budgets.} At $y^\mu$ the price of type $i$ lies
between $e^{\pm\theta_iR}$ times its geometric price. Comparison with the two
reflected ILA policies at the bounding geometric prices and their bounded
indices gives
\begin{equation}\label{ec:eq:wealthenv}
 C_i^-e^{-\mu_i/\gamma_i+\theta_i(1-1/\gamma_i)k(\mu)}\le\mathcal E_i(\mu)
 \le C_i^+e^{-\mu_i/\gamma_i+\theta_i(1-1/\gamma_i)k(\mu)},\qquad
 C_i^\pm=\frac{e^{(m_i\pm d_i/2)/\gamma_i\pm\theta_i(1+1/\gamma_i)R}}
 {\hat K_i(\gamma_i)},
\end{equation}
because the remaining expectation is
$\E\int\hat\xi^i_t\exp(-\hat\nu_it/\gamma_i
+(\hat\kappa_{a,i}B_t+\kappa_{p,i}W_t)/\gamma_i)\dd t=1/\hat K_i(\gamma_i)$.
This proves finiteness, and the constants depend neither on $\mu$ nor on the
initial references. Let $\bar m=\sum_j\pi_jm_j/\gamma_j$ and
\[
 \mathbf A_{ij}=\frac{\ind_{i=j}}{\gamma_i}
 +\frac{\theta_i(1-1/\gamma_i)}{D}\frac{\pi_j}{\gamma_j},\qquad
 (\mathbf A^{-1})_{ij}=\gamma_i\ind_{i=j}-\theta_i(\gamma_i-1)\pi_j,
\]
where the inverse follows from the formula for the inverse of a diagonal
matrix plus a matrix of rank one and
$D+\sum_i\pi_i\theta_i(1-1/\gamma_i)=1$. The continuous functions
$F_i(\mu)=\log\mathcal E_i(\mu)+(\mathbf A\mu)_i$ take values in the fixed
intervals $I_i=[\log C_i^-+\theta_i(1-1/\gamma_i)\bar m/D,\,
\log C_i^++\theta_i(1-1/\gamma_i)\bar m/D]$. The map
$\mu\mapsto\mathbf A^{-1}(F(\mu)-(\log X_i)_i)$ therefore sends the compact
convex parallelepiped $\mathbf A^{-1}(\prod_iI_i-(\log X_i)_i)$ into itself,
and Brouwer's theorem gives $\mu$ with $\mathcal E_i(\mu)=X_i$ for every type.
At these multipliers all budgets bind, the individual plans are primal optima
by Theorem~\ref{ec:thm:foc}, and $y^\mu$ satisfies
\eqref{eq:consistency}. In the market of the representative household the
common risk and its own private risk are spanned, so the budgets finance
the actual consumption. Finally \eqref{ec:eq:comparableband} gives the index
bounds $e^{a_i-2\theta_iR}\le z_i\le e^{b_i+2\theta_iR}$.

The budget also yields a trading strategy. For
$G=\int_0^\infty\xi_sc_s\dd s$ and $\mathcal N_t=\E[G\,|\,\F_t]$ set
$X_t=\xi_t^{-1}(\mathcal N_t-\int_0^t\xi_sc_s\dd s)\ge0$. Brownian martingale
representation gives $\dd\mathcal N=\phi_a\dd B+\phi_p\dd W$, and It\^o's
formula yields $\dd X_t=(rX_t+\kappa_a\vartheta_{a,t}+\kappa_p\vartheta_{p,t}
-c_t)\dd t+\vartheta_{a,t}\dd B_t+\vartheta_{p,t}\dd W_t$ with
$\vartheta_{k,t}=\phi_{k,t}/\xi_t+\kappa_kX_t$ for $k\in\{a,p\}$. The two technologies implement
these loadings, and with common risk only the $B$ term is present.
\end{proof}

\begin{proof}[Proof of Corollary~\ref{cor:weakgrowth}]
Substitute homogeneous coefficients into \eqref{eq:weakcondition} and
\eqref{eq:generalcenter}, divide \eqref{eq:existbound} by $t$, and use
$B_t/t\to0$. Without common noise the conditional averaging in
$\mathcal T_\mu$ produces a deterministic benchmark. For logarithmic utility
$\hat K(1)=\hat r+(\delta-\hat r)=\delta$.
\end{proof}

\subsection{Direct Existence in Finite Games}\label{ec:finite}

With common risk only, put $p_i=1-\theta_i/N$, $\rho_i=1+p_i(\gamma_i-1)>0$,
and define the matrices
$\mathbf U_{ij}=\theta_i\ind_{j\ne i}/(Np_i)$,
$\mathbf E=\mathrm{diag}(1-1/\rho_i)\mathbf U$, and $\mathbf M=I-\mathbf E$.
Let $\bar u_i=\sum_j\mathbf U_{ij}=\theta_i(N-1)/(N-\theta_i)$ and
\begin{equation}\label{ec:eq:finiteweak}
 q_N=\max_i(1+1/\rho_i)\bar u_i<1 .
\end{equation}
Then $\|\mathbf E\|_\infty<1$ and $\mathbf M$ is invertible. With
$\nu_i=\delta_i-r-\kappa^2/2$ define the vectors
$\mathbf g=\mathbf M^{-1}(-\nu_i/\rho_i)_i$ and
$\mathbf s=\mathbf M^{-1}(\kappa/\rho_i)_i$, and for household $i$ the
artificial coefficients
$\tilde\nu_i=\nu_i+(\mathbf U\mathbf g)_i$,
$\tilde\kappa_i=\kappa-(\mathbf U\mathbf s)_i$, and
$\tilde r_i=\delta_i-\tilde\nu_i-\tilde\kappa_i^2/2$.

\begin{theorem}\label{ec:thm:finiteexist}
Assume \eqref{ec:eq:finiteweak}, $\tilde\kappa_i\ne0$, and
$\tilde K_i(\rho_i)>0$ for every household, together with the preference
restrictions of Section~\ref{sec:setup}. Then for every $X_i>0$ and
$h_{i,0-}>0$ there exists a Nash equilibrium of the finite game with common
risk. Neither synchronized initial data nor equal parameters
$(\delta_i,\alpha_i,\beta_i,\gamma_i,\theta_i)$ are required. All consumption
plans can be financed, and all relative plans are admissible. The profile is
also a mean field equilibrium for its empirical type distribution, so
Theorem~\ref{thm:exact} applies to its replicas.
\end{theorem}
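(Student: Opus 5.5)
The plan is to run the proof of Theorem~\ref{thm:weakexist} directly in the finite game, household by household, rather than type by type. Household $i$ faces the price $P^i_t=\lambda_ie^{\delta_it}\xi_t\Pi_i^{\theta_i/p_i}$, where $\Pi_i=\prod_{j\ne i}(c^j)^{1/N}$. Its expenditure is $\psi_{p_i}(h^i)$, so its problem is the Lagrangian \eqref{eq:lagrangian} with $p=p_i$. The substitution $k=h^{1/p_i}$, the relative consumption measured in units of expenditure, turns it into a $p=1$ problem with curvature $\rho_i=1+p_i(\gamma_i-1)$ and utility rescaled by $p_i$. The adjustment costs are rescaled by the same factor, so the ILA structure is preserved.

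For a vector $y=(y_j)$ of log consumptions within bounded distance of $\mathbf g t+\mathbf s B_t$, the log price of $i$ is $\mathrm{const}+\delta_it+\log\xi_t+(\mathbf U y)_i$ up to the factor $\rho_i$ conventions. This is comparable to the geometric price with drift $\tilde\nu_i$ and volatility $\tilde\kappa_i$. Theorem~\ref{ec:thm:comparable}, applied with curvature $\rho_i$ and using $\tilde K_i(\rho_i)>0$ and $\tilde\kappa_i\ne0$, then yields a unique best response with a bounded index. It also gives the log Lipschitz bound \eqref{ec:eq:loglipschitz} with constant $1/\rho_i$ in the log price.

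Log consumption is $\log c^i=(\log h^i+\theta_i\log\Pi_i)/p_i$, so the best response map on log consumption vectors is
\[
\log c^i\mapsto \tfrac{1}{p_i}\log h^i[\,\cdot\,]+(\mathbf U y)_i .
\]
Each piece depends on $y$ only through $(\mathbf U y)_i$. The first piece moves by at most $\bar u_i\|y-\tilde y\|_\infty/\rho_i$, in a way that still needs the exponent bookkeeping to be checked. The second moves by at most $\bar u_i\|y-\tilde y\|_\infty$. The total factor is therefore $(1+1/\rho_i)\bar u_i\le q_N<1$, which is \eqref{ec:eq:finiteweak}.

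The center is chosen so that the candidate $y^0=\mathbf k+\mathbf g t+\mathbf s B_t$ is mapped to itself up to bounded index terms. Linearising shows that the drift and volatility of the response are $-\nu_i/\rho_i+(1-1/\rho_i)(\mathbf U\mathbf g)_i$ and $\kappa/\rho_i+\dots$. Matching them gives $\mathbf M\mathbf g=(-\nu_i/\rho_i)$ and the analogous equation for $\mathbf s$, which explains the definitions. The map then sends a ball of explicit radius, proportional to $\max_i d_i/(2\rho_i)$ divided by $1-q_N$, into itself. Banach's theorem on the complete space of adapted c\`adl\`ag bounded perturbations gives a fixed point for fixed multipliers.

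Next, the multipliers are fixed exactly as in Step 3 of the proof of Theorem~\ref{thm:weakexist}. The envelope bounds \eqref{ec:eq:wealthenv} give expenditures of the form $C_i^\pm e^{-\mu_i/\rho_i+(\text{linear in }\mu)}$, with constants independent of $\mu$. The log Lipschitz dependence gives continuity. Inverting the linear part, which is a diagonal matrix plus a small perturbation and is invertible because $\|\mathbf E\|_\infty<1$, produces a compact convex set mapped into itself. Brouwer's theorem then gives binding budgets. Theorem~\ref{ec:thm:foc}, with $\varpi=\Pi_i^{\theta_i/p_i}$ and $p=p_i$, yields that each plan is the unique best response. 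Completeness of the market with common risk gives financing.

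The final claim follows from the fact that in each household's $p=1$ reformulation, the finite game price coincides with the mean field price at the empirical distribution. This does require care. I would verify directly that the mean field best response to the realized geometric average $Y$ equals $c^i$. The route is Theorem~\ref{thm:exact}'s H\"older comparison run in reverse: any price taker plan can be converted into a finite game deviation of no greater cost, using the construction in the proof of Theorem~\ref{thm:symmetricunique}. Then Theorem~\ref{thm:exact} covers the replicas.

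The main obstacle is the contraction estimate. The dependence on $y$ enters both through the price and through the conversion $(\log h+\theta\log\Pi)/p$, and the exponents must be tracked carefully to obtain exactly $(1+1/\rho_i)\bar u_i$.
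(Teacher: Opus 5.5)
Your construction of the equilibrium follows the paper's proof. This covers the substitution $k_i=(h^i)^{1/p_i}$, which gives a $p=1$ problem with curvature $\rho_i$ and price $(\lambda_i/p_i)e^{\delta_it}\xi_t\varpi_i$, and the best response map $(\mathbf Uy)_i+\log k_i[y]$ with factor $(1+1/\rho_i)\bar u_i\le q_N$. It also covers the center $\mathbf g,\mathbf s,\mathbf k$ defined through $\mathbf M$, the invariant ball of radius $R_N$, and Brouwer on the multipliers; the paper uses $\mathbf k$ as the variable, which is the same affine inversion.

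The gap is in the last claim, that the profile is a mean field equilibrium. First, the prices do not coincide. The reformulated game problem prices $k_i$, with curvature $\rho_i$, at $\varpi_i=\Pi_i^{\theta_i/p_i}$. The price taker prices $h^i$, with curvature $\gamma_i$, at $Y^{\theta_i}$.

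Second, the H\"older comparison cannot be run in reverse. H\"older shows that every relative plan the household can afford in the game is affordable for the price taker. So in terms of relative plans the game budget set is the smaller one, and optimality over it says nothing about the larger set. To realize a price-taker plan $\hat h$ in the game, the household must buy $\check c=c^i(\hat h/h^i)^{1/p_i}$. Since $1/p_i>1$, Jensen's inequality for the probability measure $\xi_tc^i_t\,\dd t\,\dd\mathbb P/X_i$ gives $\E\int\xi_t\check c_t\,\dd t\ge X_i$ whenever $\hat h$ exhausts the price-taker budget, with equality only if $\hat h/h^i$ is constant. The construction in the proof of Theorem~\ref{thm:symmetricunique} uses an exponent $q\le1$, the same direction as Theorem~\ref{thm:exact}, so it does not help.

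What closes the gap is a first-order argument through the index. Using $c^i=\varpi_ik_i=Y^{\theta_i}h^i$ and $k_i^{\rho_i-1}=(h^i)^{\gamma_i-1}$, the index of the reformulated game problem is
\[
 \frac{\lambda_i}{p_i}e^{\delta_it}\xi_t\varpi_ik_i^{\rho_i}
 =\frac{\lambda_i}{p_i}e^{\delta_it}\xi_tY_t^{\theta_i}(h^i_t)^{\gamma_i}.
\]
This is the index of a price taker facing $Y$ with multiplier $\lambda_i/p_i$. The following then holds:
\begin{itemize}
\item the adjoint is the same bounded process and satisfies the obstacle conditions;
\item the support conditions carry over, because $u_{\gamma_i}(h^i)=p_iu_{\rho_i}(k_i)$ has variation measures with the same supports;
\item the price-taker budget binds, since $\E\int\xi_tY_t^{\theta_i}h^i_t\,\dd t=\E\int\xi_tc^i_t\,\dd t=X_i$.
\end{itemize}
Theorem~\ref{ec:thm:foc} with $p=1$ and $\varpi=Y^{\theta_i}$ then gives price-taker optimality. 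Consistency holds because $Y$ is the geometric mean with weights $1/N$, and only after this does Theorem~\ref{thm:exact} apply to the replicas. The marginal costs in the two problems differ by the constant factor $1/p_i$, which is absorbed into the multiplier; a global budget comparison cannot capture this.
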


For homogeneous parameters the vectors $\mathbf g$ and $\mathbf s$ have
constant components $(r+\kappa^2/2-\delta)/\hg$ and $\kappa/\hg$, and the
coefficient in \eqref{ec:eq:finiteweak} tends to $\theta(1+1/\gamma)$ as
$N\to\infty$. Condition \eqref{ec:eq:finiteweak} and the weighted condition
\eqref{eq:weakcondition} do not imply each other.

\begin{proof}
\emph{Reduction.} Hold the log consumptions $y_j=\log c_j$ of the others
fixed and put $\varpi_i=\exp((\mathbf Uy)_i)$. A deviation has consumption
$c_i=\varpi_i\zeta_i$ and relative consumption $h_i=\zeta_i^{p_i}$. For every
$\gamma_i>0$, $u_{\gamma_i}(\zeta_i^{p_i})=p_iu_{\rho_i}(\zeta_i)$, and the
same identity holds for both adjustment costs, with reference
$\zeta_{i,0-}=h_{i,0-}^{1/p_i}$. Dividing the Lagrangian by $p_i$ gives the
problem of Theorem~\ref{ec:thm:comparable} with risk aversion $\rho_i$ and
price $(\lambda_i/p_i)e^{\delta_it}\xi_t\varpi_i$. If $\zeta_i[y]$ is its
optimum, the best response map is
$(\mathcal T_\mu y)_i=(\mathbf Uy)_i+\log\zeta_i[y]$ with $\mu_i=\log\lambda_i$,
and \eqref{ec:eq:loglipschitz} gives
$\|\mathcal T_\mu y-\mathcal T_\mu\tilde y\|_\infty\le q_N\|y-\tilde y\|_\infty$.

\emph{Fixed multipliers.} Let $[e^{a_i},e^{b_i}]$ be the band of household
$i$ in its artificial market, $d_i=b_i-a_i$, $m_i=(a_i+b_i)/2$,
$\mathbf k=\mathbf M^{-1}\mathrm{diag}(1/\rho_i)(m_i+\log p_i-\mu_i)_i$, and
$R_N=(1-q_N)^{-1}\max_id_i/(2\rho_i)$. At the input
$y^0_i(t)=\mathbf k_i+\mathbf g_it+\mathbf s_iB_t$ the transformed plans have
log indices $x^0_i\in[a_i,b_i]$, and the definitions give
$(\mathcal T_\mu y^0)_i-y^0_i=(x^0_i-m_i)/\rho_i$. The closed ball of radius
$R_N$ about $y^0$ is invariant, it is complete as in the proof of
Theorem~\ref{thm:weakexist}, and Banach's theorem gives a unique profile
$y^\mu$ in this class, continuous in $\mu$.

\emph{Budgets.} Use the centering vector $\mathbf k$ as the variable, with
$\mu_i=m_i+\log p_i-\rho_i(\mathbf M\mathbf k)_i$, which is an invertible
affine change of coordinates, and let
$\mathcal E_i(\mathbf k)=\E\int\xi_te^{y^\mu_i(t)}\dd t$. The price of
household $i$ differs from its geometric center by a log amount at most
$\bar u_iR_N$, and the envelopes give
\[
 \mathbf k_i+J_i^-\le\log\mathcal E_i(\mathbf k)\le\mathbf k_i+J_i^+,\qquad
 J_i^\pm=-\log\tilde K_i(\rho_i)\pm\big(d_i/(2\rho_i)+\bar u_i(1+1/\rho_i)R_N\big),
\]
since the log scale of consumption in this expectation is
$-(\mu_i-\log p_i)/\rho_i+(1-1/\rho_i)(\mathbf U\mathbf k)_i=\mathbf k_i-m_i/\rho_i$
and the remaining integral equals $1/\tilde K_i(\rho_i)$. The map
$\mathbf k\mapsto(\log X_i-\log\mathcal E_i(\mathbf k)+\mathbf k_i)_i$ is
continuous and sends the box $\prod_i[\log X_i-J_i^+,\log X_i-J_i^-]$ into
itself, so Brouwer's theorem gives $\mathcal E_i=X_i$ for all households. Each
transformed optimum is then a best response that exhausts the budget, and the
utility identity transfers admissibility and optimality to the original game.
The index of household $i$ is
$z_i=(\lambda_i/p_i)e^{\delta_it}\xi_tY_t^{\theta_i}h_i^{\gamma_i}$, and the
same bounds and support conditions show that the plan is also optimal for a
price taker with multiplier $\lambda_i/p_i$. Budgets and aggregation are unchanged, so the
profile is a mean field equilibrium for the empirical type distribution.
\end{proof}

\subsection{Proof of Theorem~\ref{thm:populationlimit}}\label{ec:prooflimit}

\begin{proof}
The Nash property follows from Theorem~\ref{thm:exact}, so it remains to
prove the convergence assertions. All coefficients in \eqref{eq:generalcenter} and
\eqref{eq:existconstants} and the artificial bands depend continuously on
the weights near $\pi$. Because the hypotheses are strict inequalities,
there is a compact neighborhood of $\pi$ on which the contraction factors are
at most $q_\ast<1$, the artificial volatilities are bounded away from zero,
the cost rates are bounded below by a positive constant, and the radii are
bounded by $R_\ast$.

Let $w^n\to\pi$ and $\mu^n\to\mu$, write $\mathcal T_{w,\mu}$ for the map of
Step 1 in the proof of Theorem~\ref{thm:weakexist}, and start its iteration
from $y^{n,0}_t=k(w^n,\mu^n)+g(w^n)t+s(w^n)B_t$. Let $y^m$ be the iterates
at $(\pi,\mu)$. For every fixed iteration
number $m$, induction and Lemma~\ref{ec:lem:localstab} show
$\E\int_0^T|y^{n,m}_t-y^m_t|\dd t\to0$ for $T<\infty$, since each iterate
stays in a uniformly bounded geometric tube, the prices converge locally in
measure, and the individual log responses and their finite weighted sum
converge locally in $L^1$. The contraction estimate gives
$\|y^{n,m}-y^{w^n,\mu^n}\|_\infty\le R_\ast q_\ast^m$ uniformly in $n$, and
the same estimate holds at the limit. Letting $n\to\infty$ and then
$m\to\infty$ proves local $L^1$ continuity of the benchmark jointly in the
weights and the multipliers, and another application of the lemma proves continuity of log
consumption. The bound \eqref{ec:eq:crossprice} gives uniform tails of
expenditure on the infinite horizon. Therefore
\begin{equation}\label{ec:eq:jointcont}
 \mathcal E_i(w^n,\mu^n)\to\mathcal E_i(\pi,\mu),\qquad
 \E\int_0^\infty\xi_t|c_i^{w^n,\mu^n}(t)-c_i^{\pi,\mu}(t)|\dd t\to0,
\end{equation}
where the second limit uses local convergence in measure, local uniform
integrability, and the uniform tails of $\xi c_i$, which also follow from
$\lambda_ie^{\delta_it}\xi_tc_i=P_ih_i$ and the bounds on the convergent
multipliers.

The multipliers selected in the proof of Theorem~\ref{thm:weakexist} lie in
the parallelepipeds $\mathbf A(w)^{-1}(\prod_iI_i(w)-(\log X_i)_i)$, whose
matrices, inverses, and endpoints vary continuously in $w$ near $\pi$. They
therefore lie in one compact set, and a subsequence satisfies
$\mu^{(N)}\to\mu$. Since $\mathcal E_i(\pi^{(N)},\mu^{(N)})=X_i$, joint
continuity gives $\mathcal E_i(\pi,\mu)=X_i$ for every type, so the profile
with multipliers $\mu$ is a mean field equilibrium with weights $\pi$, and
the continuity already proved gives the assertions on consumption and the
benchmark. The individual Lagrangian is $\J_i=U_i-\lambda_i\mathcal E_i$, its
optimal values converge by Lemma~\ref{ec:lem:localstab}, and
$\mathcal E_i=X_i$ with convergent multipliers gives convergence of
utilities.

For wealth let $G_i^{(N)}=\int_0^\infty\xi_sc_i^{(N)}(s)\dd s$ and
$G_i=\int_0^\infty\xi_sc_i(s)\dd s$. Expenditure convergence gives
$G_i^{(N)}\to G_i$ in $L^1$, and the deflated wealth is
$\xi_tX_i^{(N)}(t)=\E[G_i^{(N)}\,|\,\F_t]-\int_0^t\xi_sc_i^{(N)}(s)\dd s$.
Doob's weak type maximal inequality gives uniform convergence in probability
of deflated wealth on compacts, and multiplication by the locally bounded
continuous process $1/\xi$ gives the claim. Every subsequence has a further
convergent subsequence, so uniqueness of the limiting profile yields
convergence of the whole sequence, whether or not its multipliers are
unique.
\end{proof}

\subsection{Proof of Theorem~\ref{thm:certificate}}\label{ec:proofcertificate}

\begin{proof}
Fix a household of type $a$ and write $c=c^a$, $h^0=h^{a,0}$,
$\theta=\theta_a$, $\gamma=\gamma_a$, $\lambda=\lambda_a$, and $X=X_a$. Let
$\tilde c$ be a plan whose relative consumption in the game is admissible
and that the household can finance, so that
$\E\int\xi_t\tilde c_t\dd t\le X$. As in the proof of Theorem~\ref{thm:exact}, with
$p=1-\theta/N$ the deviation produces the benchmark
$Y^N(\tilde c/c)^{1/N}$ and the relative consumption
$\tilde h=h^{a,N}(\tilde c/c)^p$, whose reference is $h_{a,0-}$. With
$C_N(h)=\E\int\xi_t(Y^N_t)^\theta h_t\dd t$, H\"older's inequality gives
\[
 C_N(\tilde h)=\E\int_0^\infty\xi_tc_t^{1-p}\tilde c_t^{\,p}\dd t
 \le w_a^{1-p}X^p\le X .
\]
Let $P^0_t=\lambda e^{\delta t}\xi_t(Y^0_t)^\theta$ and
$P_t=\lambda e^{\delta t}\xi_t(Y^N_t)^\theta=e^{\theta\Delta_t}P^0_t$, and
let $\J_N$ be the Lagrangian \eqref{eq:lagrangian} with $p=1$ and price
$P$, so that $U(h)=\J_N(h)+\lambda C_N(h)$. The candidate $h^0$ satisfies
the optimality conditions for $P^0$, its index is bounded, and its
expenditure at the price $P$ is
$C_N(h^0)=\E\int\xi_tc_te^{\theta\Delta_t}\dd t<\infty$.
Theorem~\ref{ec:thm:regret} with $p=1$ and $\mathcal M=e^{\theta\Delta}$
therefore applies, and since
$e^{-\delta t}z^0_t(h^0_t)^{1-\gamma}=\lambda\xi_t(Y^0_t)^\theta h^0_t=\lambda\xi_tc_t$
it gives
\[
 \J_N(\tilde h)-\J_N(h^0)\le
 \lambda\E\int_0^\infty\xi_tc_tG_\gamma\big(e^{\theta\Delta_t}\big)\dd t
\]
whenever $\J_N(\tilde h)>-\infty$, and trivially otherwise. Hence
\[
 U(\tilde h)\le\J_N(h^0)+\lambda\E\int_0^\infty\xi_tc_tG_\gamma\big(e^{\theta\Delta_t}\big)\dd t+\lambda X
 =U(h^0)+\lambda\E\int_0^\infty\xi_tc_tG_\gamma\big(e^{\theta\Delta_t}\big)\dd t
 +\lambda\big(X-C_N(h^0)\big),
\]
and subtracting the utility $U(h^{a,N})$ that the household obtains in the
profile gives the claim.
\end{proof}

\subsection{Proof of Corollary~\ref{cor:rounded}}\label{ec:proofrounded}

\begin{proof}
Use the equilibrium multipliers $\lambda_a$ and bounds
$0<\underline z_a\le z^a_t\le\overline z_a<\infty$ for
$z^a_t=\lambda_ae^{\delta t}\xi_tY_t^\theta h^a_t$, which follow from
Theorem~\ref{thm:weakexist}. Put $e_a=N_a/N-\pi_a$. Since utility is
logarithmic and $\delta$ and $\theta$ are common,
$\log h^a_t-\log h^1_t=\log z^a_t-\log z^1_t-\log\lambda_a+\log\lambda_1$, so
there is a deterministic $M_\Delta<\infty$ with $|\log h^a_t-\log h^1_t|\le
M_\Delta$ for all $a$ and $t$. Define
\[
 \bar v_a=\E\int_{[0,\infty)}e^{-\delta t}\dd|\log h^a|_t,\qquad
 \mathcal V_N=\sum_a|e_a|\big(\bar v_a+|\log(h_{a,0-}/h_{1,0-})|\big),\qquad
 \eta_N=\theta M_\Delta d_N,
\]
where each $\bar v_a$ is finite by admissibility, and let
$\Lambda_a=\max\{|\alpha_a|,\beta_a\}$. We prove the certificate
\begin{equation}\label{ec:eq:roundedcertificate}
 \varepsilon_N=\max_a\Big\{\frac{\overline z_a}{\delta}
 \big(e^{\eta_N}-1-\eta_N\big)+\frac{\eta_N}{\delta}+\theta\Lambda_a\mathcal V_N
 +\lambda_aX_a\big(1-e^{-\eta_N}\big)\Big\}.
\end{equation}
The assigned plans keep their budgets $\E\int\xi_tc^a_t\dd t=X_a$ and can be
financed. Let $Y^N_t=\prod_a(c^a_t)^{N_a/N}$ and $\Delta_t=\log(Y^N_t/Y_t)$.
Since $\sum_ae_a=0$ and $c^a=h^aY^\theta$,
$\Delta_t=\sum_ae_a(\log h^a_t-\log h^1_t)$ and $|\Delta_t|\le M_\Delta d_N$.

The relative consumption of a household of type $a$ in the finite profile is
$h^{a,N}_t=c^a_t(Y^N_t)^{-\theta}=h^a_te^{-\theta\Delta_t}$, with prescribed
value $h_{a,0-}$ at $0-$, so $\Delta$ is extended by $\Delta_{0-}=0$. If
$\sum_ae_a\log h^a$ is extended with the original references instead, its
value at $0-$ is $\sum_ae_a\log h_{a,0-}$, and correcting this one initial
atom gives
\begin{equation}\label{ec:eq:roundedvariation}
 \E\int_{[0,\infty)}e^{-\delta t}\dd|\Delta|_t\le\sum_a|e_a|\bar v_a
 +\Big|\sum_ae_a\log(h_{a,0-}/h_{1,0-})\Big|\le\mathcal V_N .
\end{equation}
Thus $\log h^{a,N}=\log h^a-\theta\Delta$ is admissible, including its
initial revision. For a signed measure $\dd v$ let
$\mathcal C_a(v)=\E\int e^{-\delta t}(\alpha_a\dd v^++\beta_a\dd v^-)$. Its
integrand is Lipschitz with constant $\Lambda_a$, so with a common dominating measure
$|\mathcal C_a(v)-\mathcal C_a(w)|\le\Lambda_a\E\int e^{-\delta t}\dd|v-w|_t$,
even when $\alpha_a<0$. Applied to $\log h^a$ and $\log h^a-\theta\Delta$ with
\eqref{ec:eq:roundedvariation}, this gives
$|U_a(h^{a,N})-U_a(h^a)|\le\eta_N/\delta+\theta\Lambda_a\mathcal V_N$.

The hypotheses of Theorem~\ref{thm:certificate} hold with $Y^0=Y$, the
mean field benchmark, because $h^{a,0}=h^a$ is the mean field plan, which
satisfies the optimality conditions with the bounded index $z^a$. All
expenditures are finite because $e^{-\eta_N}\le(Y^N/Y)^\theta\le e^{\eta_N}$.
We bound the three terms of $\varepsilon_a$. For logarithmic utility
$\lambda_a\xi_tc^a_t=e^{-\delta t}z^a_t\le e^{-\delta t}\overline z_a$ and
$G_1(e^x)=e^x-1-x$, so with $|\theta\Delta_t|\le\eta_N$ the first term is at
most $\overline z_a(e^{\eta_N}-1-\eta_N)/\delta$. The second term is
at most $\eta_N/\delta+\theta\Lambda_a\mathcal V_N$ by the estimate above.
The third term is at most $\lambda_aX_a(1-e^{-\eta_N})$, because
$\E\int\xi_tc^a_te^{\theta\Delta_t}\dd t\ge e^{-\eta_N}X_a$. Summing gives
\eqref{ec:eq:roundedcertificate}. Finally
$\mathcal V_N\le d_N\max_a(\bar v_a+|\log(h_{a,0-}/h_{1,0-})|)$, $\eta_N=O(d_N)$,
$e^{\eta_N}-1-\eta_N=O(d_N^2)$, and $1-e^{-\eta_N}=O(d_N)$. If
$|N_a/N-\pi_a|\le1/N$ for every $a$, then $d_N\le n/N$.
\end{proof}

\subsection{Unspanned Private Risk}\label{ec:private}

\begin{proposition}\label{ec:prop:private}
Let $N\ge2$ and $\theta\in(0,1)$, and suppose household $i$ can trade the
bank account and its own private technology, driven by $W^i$, but cannot
trade risks driven by $W^j$, $j\ne i$, where the Brownian motions are
independent. Financing requires nonnegative wealth that funds consumption
through trading. For homogeneous parameters let the proposed relative
consumption be $h^j_t=h^j_0e^{\ell^j_t/\gamma}$ with the reflected
regulators $\ell^j$ of the mean field model,
$x^j_t=x^j_0+\hat\nu t-\kappa W^j_t+\ell^j_t\in[a,b]$, and $\kappa>0$.
Impose the finite consistency $c^i=h^i(Y^N)^\theta$ with
$Y^N=(\prod_jc^j)^{1/N}$. Then the resulting consumption of every household
cannot be financed from any finite initial wealth, and the same conclusion
holds under the usual supermartingale admissibility condition for every
equivalent local martingale measure. If the coefficients are those of the stationary mean field
equilibrium with $\hat K(\gamma)>0$, the cost under the deflator of the
mean field model is nevertheless finite for all large $N$. Multiplying the
profile by positive constants does not remove the obstruction.
\end{proposition}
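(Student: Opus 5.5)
The plan is to first make the finite consistency condition explicit, and then reduce both non-financing claims to one lower bound on the cost under a family of equivalent local martingale measures that tilt only the risks household $i$ cannot trade. Taking logarithms in $c^j=h^j(Y^N)^\theta$ and averaging over $j$ gives $(1-\theta)\log Y^N=N^{-1}\sum_j\log h^j$. Hence
\[
 \log c^i_t=\log h^i_t+\frac{\theta}{N(1-\theta)}\sum_{j}\log h^j_t
 =\mathrm{const}+\Big(1+\frac{\theta}{N(1-\theta)}\Big)\frac{\ell^i_t}{\gamma}
 +\frac{\theta}{N(1-\theta)\gamma}\sum_{j\ne i}\ell^j_t .
\]
The exponent $\varepsilon_N=\theta/(N(1-\theta)\gamma)$ on each foreign regulator $\ell^j$ is strictly positive because $\theta\in(0,1)$. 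Each $\ell^j$ is a functional of $W^j$ alone: it is the Skorokhod regulator of $x^j_0+\hat\nu t-\kappa W^j$ on $[a,b]$.

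\emph{Reduction to a dual bound.} Let $\xi^i$ be the deflator of household $i$'s own market, which contains the bank account and its technology driven by $W^i$. For a bounded progressively measurable $\phi$, define $Q^\phi$ by a Girsanov density acting only on one foreign $W^j$, under which $W^j_t-\int_0^t\phi_s\dd s$ is a Brownian motion. Under $Q^\phi$ the process $W^i$ is unchanged, so $\xi^i$ times any self-financing gains process is still a local martingale. If wealth is nonnegative and finances $c^i$, then $\xi^iX^i+\int_0^\cdot\xi^i_sc^i_s\dd s$ is a nonnegative $Q^\phi$ local martingale and hence a supermartingale. This gives
\[
 X^i_0\ge\E^{Q^\phi}\int_0^\infty\xi^i_tc^i_t\dd t .
\]
The same inequality is exactly the supermartingale admissibility condition for all equivalent local martingale measures, so the two claims are proved together.

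\emph{Blow-up along tilted measures.} Take $\phi\equiv n$ on $[0,1]$ and $\phi\equiv0$ afterwards. On $[0,1]$ the driving process of $x^j$ then has drift $\hat\nu-\kappa n\to-\infty$, so it is pushed against the lower endpoint $a$. Since $x^j$ stays in $[a,b]$, the representation $\ell^j_t=x^j_t-x^j_0-\hat\nu t+\kappa W^j_t$ gives $\ell^j_1\ge\kappa n-C$ with $Q^\phi$-probability tending to one. Because $\varepsilon_N>0$ and $\ell^j$ is nondecreasing up to the bounded upper regulator, $c^i$ on $[1,2]$ is at least $e^{\varepsilon_N(\kappa n-C')}$ times a factor whose law does not depend on $n$: under $Q^\phi$ that factor involves only $W^i$, the other foreign $W^k$, and the increments of $W^j$ after time $1$, all of which keep their original laws. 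The cost therefore exceeds $c\,e^{\varepsilon_N\kappa n}$ for some $c>0$ independent of $n$, and letting $n\to\infty$ rules out every finite $X^i_0$. Multiplying the profile by positive constants only multiplies the cost, so the obstruction remains; I would check here that the consistency relation is homogeneous under such rescaling.

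\emph{Finiteness under the mean field deflator.} Under $\mathbb P$ with the deflator $\xi$ of the mean field model, independence factorizes the expected cost at time $t$ into an own factor and a product of foreign factors $\E e^{\varepsilon_N\ell^j_t}$. The proof of Lemma~\ref{ec:lem:variation} gives $\E e^{k\ell_t}\le e^{|k|(b-a)}\exp\big((-k\hat\nu+k^2\varsigma)t\big)$. The own factor decays like $e^{-\hat K(\gamma)t}$ at $\varepsilon_N=0$, and its rate depends continuously on the own exponent $1/\gamma+\theta/(N(1-\theta)\gamma)$. Since $(N-1)|\varepsilon_N\hat\nu|+(N-1)\varepsilon_N^2\varsigma$ stays bounded, I would verify the leading term more carefully: it is $(N-1)\varepsilon_N\hat\nu\to\theta\hat\nu/((1-\theta)\gamma)$, which does not vanish. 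So the main obstacle is to show that at the stationary coefficients this limit combines with the own-exponent shift into exactly the stationary rate, giving a total rate tending to $\hat K(\gamma)>0$. The stationary identity $(1-\theta)g^\ast=-\hat\nu/\gamma$ from Corollary~\ref{cor:stat}(iii) should produce this cancellation, and the cost is then finite for all large $N$.
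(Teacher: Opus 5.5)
Your non-financing argument is the paper's: you tilt one untraded $W^j$ by a large drift, bound $\ell^j$ from below pathwise through the Skorokhod identity, and use that deflated wealth plus deflated consumption is a nonnegative local martingale, hence a supermartingale, under the tilted measure. Working with $\xi^i$ under $Q^\phi$ is equivalent to the paper's $\mathbb Q_m$, which also gives $W^i$ the drift $-\kappa$ and discounts at $e^{-rt}$. Two of your steps are unnecessary, and one remark is inaccurate:
\begin{itemize}
\item The ``probability tending to one'' step is not needed.
\item The upper regulator does not have bounded increments.
\item The global bound $\ell^j_t\ge-(b-a)-\hat\nu t+\kappa W^j_t$ already suffices. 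On $[1,2]$ the process $W^j_t-n$ has an $n$-free law under $Q^\phi$, which gives the factor $e^{\varepsilon_N\kappa n}$ directly.
\end{itemize}

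The finiteness part has a real error, and it is exactly at the step you left open. Under the mean field deflator $\xi_t=\exp(-(r+\varsigma)t-\kappa W^i_t)$, the own factor does \emph{not} decay like $e^{-\hat K(\gamma)t}$ at $\varepsilon_N=0$. With $q=1/\gamma$, it decays at rate $r+\varsigma+q\hat\nu-\varsigma(q-1)^2=\hat K(\gamma)+\theta g^\ast$, because $\hat K(\gamma)$ is computed at $\hat r=r-\theta g^\ast$. The own-exponent shift $q\to q+\varepsilon_N$ changes the rate only by $O(1/N)$, so it cannot absorb the nonvanishing foreign term. With your accounting the limit rate would be $\hat K(\gamma)-\theta g^\ast$.

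The correct bookkeeping, which is the paper's, is as follows. Put $\bar q=q+N\varepsilon_N$. Independence and $|x^j_t-x^j_0|\le b-a$ bound $\E[\xi_tc^i_t]$ above and below by constants times $e^{-K_Nt}$, where
\[
 K_N=r+\varsigma+\bar q\hat\nu-\varsigma\big((q+\varepsilon_N-1)^2+(N-1)\varepsilon_N^2\big),\qquad
 \bar q\to\frac{1}{\gamma(1-\theta)} .
\]
Since $\hat K(\gamma)=\hat r+\varsigma+q\hat\nu-\varsigma(q-1)^2$ with $\hat\nu=\delta-\hat r-\varsigma$, the limit is
\[
 \lim_{N\to\infty}K_N=\hat K(\gamma)+(r-\hat r)+\frac{\theta\hat\nu}{\gamma(1-\theta)}
 =\hat K(\gamma)+\theta g^\ast+\frac{\theta\hat\nu}{\gamma(1-\theta)}=\hat K(\gamma).
\]
The last equality uses $(1-\theta)g^\ast=-\hat\nu/\gamma$, which is equivalent to the paper's $\hat\nu=\nu\gamma(1-\theta)/\hg$.

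So the cancellation is between the foreign regulators and the gap $r-\hat r$. In the limit, the foreign regulators supply the growth $e^{\theta g^\ast t}$ that $Y^\theta$ supplies in the mean field model. That growth is what the gap $r-\hat r$ offsets. With this correction $K_N>0$ for large $N$, and your argument goes through.
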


\begin{proof}
The consistency equations give $Y^N=(\prod_jh^j)^{1/[N(1-\theta)]}$. Put
$q=1/\gamma$, $\eta=\theta/[N\gamma(1-\theta)]>0$, and $\bar q=q+N\eta$. Then
$c^i_t=c^i_0\exp\big((q+\eta)\ell^i_t+\eta\sum_{j\ne i}\ell^j_t\big)$. Since
$x^j_t-x^j_0\ge-(b-a)$,
\begin{equation}\label{ec:eq:privatebound}
 c^i_t\ge c^i_0e^{-(b-a)\bar q}\exp\Big(-\hat\nu\bar qt
 +\kappa\Big((q+\eta)W^i_t+\eta\sum_{j\ne i}W^j_t\Big)\Big).
\end{equation}
Fix $0<t_0<T<\infty$ and one $j\ne i$. On $[0,T]$ let $\mathbb Q_m$ be the
equivalent measure under which $W^i$ has drift $-\kappa$, $W^j$ has drift
$m>0$, and the other private motions keep zero drift. Girsanov densities with
constant drift are strictly positive martingales on finite horizons, and
all traded discounted assets of household $i$ are local martingales under
every $\mathbb Q_m$, since the drift of $W^j$ prices an untraded risk.
Taking expectations in \eqref{ec:eq:privatebound} and using independence and
Gaussian exponential moments,
\[
 \E^{\mathbb Q_m}[e^{-rt}c^i_t]\ge c^i_0e^{-(b-a)\bar q}
 \exp\Big[\Big(-r-\hat\nu\bar q-\kappa^2(q+\eta)
 +\tfrac12\kappa^2\{(q+\eta)^2+(N-1)\eta^2\}+\kappa\eta m\Big)t\Big],
\]
whose integral over $[t_0,T]$ tends to infinity as $m\to\infty$. A
nonnegative wealth process that funds $c^i$ would make discounted wealth plus
accumulated discounted consumption a nonnegative local martingale under
$\mathbb Q_m$, hence a supermartingale, which would require
$X^i_0\ge\E^{\mathbb Q_m}\int_0^Te^{-rt}c^i_t\dd t$ for every $m>0$. This is
impossible for finite $X^i_0$.

For the finiteness claim, exponential tilting of the independent Brownian
motions bounds the expected cost under the mean field deflator above and
below by constants times $\int_0^\infty e^{-K_Nt}\dd t$ with
$K_N=r+\varsigma+\hat\nu\bar q-\varsigma\{(q+\eta-1)^2+(N-1)\eta^2\}$. At
the stationary growth rate, $\hat\nu=\nu\gamma(1-\theta)/\hg$, so
$K_N\to\hat K(\gamma)>0$. A positive constant only changes the prefactor in
\eqref{ec:eq:privatebound}.
\end{proof}

\subsection{Types without Intertemporal Loss Aversion}\label{ec:frictionless}

If $L=1$, that is $\beta=-\alpha\ge0$, the cost terms
$-\alpha\,\dd v_t^+-\beta\,\dd v_t^-$ in \eqref{eq:U} combine to
$-\alpha\,\dd v_t$. For an admissible path,
discounted integration by parts as in \eqref{ec:eq:coercive}, now with
$\bar\alpha=0$ and $\tilde\alpha=\alpha$, gives
\begin{equation}\label{ec:eq:Ufrictionless}
 U(h)=\alpha\,u(h_{0-})+(1-\delta\alpha)\,\E\int_0^\infty e^{-\delta t}u(h_t)\,\dd t .
\end{equation}
For a type with $L=1$ we take the right side as the definition of utility on
the class $\mathcal A_1\supset\mathcal A$ of positive progressively
measurable processes with $\E\int_0^\infty e^{-\delta t}|u(h_t)|\dd t<\infty$.
Households of such a type choose from $\mathcal A_1$ in the mean field
problem and in the game, and their Lagrangian $\J_P$ is
\eqref{ec:eq:Ufrictionless} minus the expenditure term of
\eqref{eq:lagrangian}. For these types put $a_i=b_i=\log(1-\delta_i\alpha_i)$,
so that $d_i=0$ and $m_i=a_i$. For $L>1$ the band of
Lemma~\ref{lem:band}(i) contains the two points $1-\delta\alpha<1+\delta\beta$,
and these coincide when $\beta=-\alpha$.

\begin{proposition}\label{ec:prop:frictionless}
Let every type satisfy the parameter restrictions of Section~\ref{sec:setup},
except that $\alpha_i=-\beta_i$, and hence $L_i=1$, is also allowed.
\begin{enumerate}[label=(\roman*),nosep,leftmargin=2em]
\item Under the hypotheses of Theorem~\ref{ec:thm:comparable}, a type with
$L=1$ has the optimum $h^\ast_t=((1-\delta\alpha)/P_t)^{1/\gamma}$ in
$\mathcal A_1$, unique up to $\dd t\otimes\mathbb P$ null sets. Its index
$P_t(h^\ast_t)^\gamma$ equals $1-\delta\alpha$, inequality
\eqref{ec:eq:loglipschitz} holds, and every budget binds at exactly one
multiplier.
\item Theorems~\ref{thm:weakexist}, \ref{thm:exact},
and~\ref{thm:populationlimit} hold with these $a_i$ and $b_i$ in
\eqref{eq:existconstants}, where households of types with $L_i=1$ choose
from $\mathcal A_1$.
\item Theorem~\ref{thm:certificate} holds when, for a type with $L_a=1$,
admissibility refers to $\mathcal A_1$ and the optimality conditions of
Theorem~\ref{ec:thm:foc} are replaced by the condition that the index of
$h^{a,0}$ for the price $\lambda_ae^{\delta_at}\xi_t(Y^0_t)^{\theta_a}$ equals
$1-\delta_a\alpha_a$.
\end{enumerate}
\end{proposition}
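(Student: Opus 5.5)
For part (i), I will solve the frictionless problem pointwise. By \eqref{ec:eq:Ufrictionless} the Lagrangian of a type with $L=1$ is
\[
\alpha u(h_{0-})+\E\int_0^\infty e^{-\delta t}\big((1-\delta\alpha)u(h_t)-P_th_t\big)\dd t .
\]
The integrand is strictly concave in $h_t$ because $1-\delta\alpha>0$. Its pointwise maximizer is $h^\ast_t=((1-\delta\alpha)/P_t)^{1/\gamma}$, so the index $P_t(h^\ast_t)^\gamma$ is identically $1-\delta\alpha$.

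Next I check that $h^\ast$ lies in $\mathcal A_1$ and that its expenditure is finite. Under $\underline mP^0\le P\le\overline mP^0$, both quantities reduce to exponential moments of the geometric price $P^0$. Expenditure is bounded by a constant times $\int_0^\infty e^{-K_0(\gamma)t}\dd t$. Discounted felicity is controlled by $|\log P^0_t|$ against $e^{-\delta t}$, or by a power of $P^0$, exactly as in the cost computation of Lemma~\ref{ec:lem:cost} with zero band width.

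Uniqueness up to $\dd t\otimes\mathbb P$ null sets follows from strict pointwise concavity, since any other maximizer must agree almost everywhere. The log Lipschitz estimate \eqref{ec:eq:loglipschitz} is immediate, because $\log h^\ast=\gamma^{-1}(\log(1-\delta\alpha)-\log P)$. For the budget, the expenditure at multiplier $\lambda$ is exactly $\lambda^{-1/\gamma}$ times a finite positive constant. It is therefore a strictly decreasing bijection onto $(0,\infty)$, and the constrained optimality argument at the end of Theorem~\ref{ec:thm:foc} carries over verbatim.

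For part (ii), I will rerun the proof of Theorem~\ref{thm:weakexist}, replacing the only two properties of best responses that it uses.
\begin{itemize}
\item The first is the log Lipschitz bound behind \eqref{ec:eq:contraction}, which part (i) supplies.
\item The second is the index bounds behind the envelope \eqref{ec:eq:wealthenv}. With $a_i=b_i=\log(1-\delta_i\alpha_i)$ and $d_i=0$, the log index at the input $y^0$ is the constant $m_i$. Step 1 then contributes zero to the deviation $\mathcal T_\mu y^0-y^0$, and the constants $C_i^\pm$ keep their form with $d_i=0$.
\end{itemize}
Continuity in the multipliers and the Brouwer step are unchanged.

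Theorem~\ref{thm:exact} needs only that the admissible class is the same in the mean field problem and in the game, as noted in Remark~\ref{rem:scope}. This holds because the classes $\mathcal A_1$ coincide in both settings. The H\"older budget comparison is also unaffected, since $\mathcal A_1$ is invariant under multiplying $h$ by powers of positive processes with the needed integrability.

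For Theorem~\ref{thm:populationlimit}, the one new ingredient is a replacement for Lemma~\ref{ec:lem:localstab}. Here it is direct: $h^n=((1-\delta\alpha)/P^n)^{1/\gamma}$ converges locally in measure whenever $P^n$ does. The envelopes then give local $L^1$ convergence of the logs, convergence of expenditure via \eqref{ec:eq:crossprice}, and convergence of the Lagrangian values. The remaining steps of that proof use only these facts.

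For part (iii), I will reprove Theorem~\ref{ec:thm:regret} for a candidate with constant index $z^0\equiv1-\delta\alpha$. There is no adjustment cost, so the adjoint step disappears. The difference $\J_P(h)-\J_P(h^0)$ is then bounded pointwise by
\[
(1-\delta\alpha)\big(u(h_t)-u(h^0_t)\big)-P_t(h_t-h^0_t).
\]
Maximizing over scalar $h_t$ gives the same expression $z^0(h^0)^{1-\gamma}G_\gamma(\mathcal M)$ as before. The proof of Theorem~\ref{thm:certificate} then applies word for word.

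I expect the main obstacle to be the admissibility and integrability bookkeeping in $\mathcal A_1$. The frictionless plan has a Brownian component, so the variation estimates of Lemma~\ref{ec:lem:variation} are unavailable. Every place where the earlier proofs invoked finite variation must be checked to need only the integrability of $e^{-\delta t}|u(h_t)|$. The place I would check first is the deviation in Theorem~\ref{thm:exact}, where the deviating plan $\tilde h=h^\ast(\tilde c/c^\ast)^p$ must remain in $\mathcal A_1$. I would handle it by restricting deviations to those with finite Lagrangian, or by using concavity of $u$ to bound the felicity of $\tilde h$ from above, which suffices for the utility comparison.
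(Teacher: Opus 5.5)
Your proposal is correct and follows the paper's proof essentially step by step. For (i) you use pointwise maximization, the $(P^0)^{1-1/\gamma}$ and $|\log P^0|$ integrability bounds, and the $\lambda^{-1/\gamma}$ scaling of expenditure; for (ii) you rerun the existence, exact implementation, and population limit proofs with $d_i=0$ and a constant index; and for (iii) you observe that the first inequality of Theorem~\ref{ec:thm:regret} becomes an equality. The worry in your last paragraph is already settled by the definition of an allowed deviation, since the game relative consumption $\tilde h=h^\ast(\tilde c/c^\ast)^p$ is exactly the path the price taker evaluates, so it lies in $\mathcal A_1$ by assumption and needs neither an invariance property nor a restriction of deviations (the paper also notes that the fixed point argument should use the c\`adl\`ag version $((1-\delta_i\alpha_i)/P)^{1/\gamma_i}$ of the response).
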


\begin{proof}
(i) By \eqref{ec:eq:Ufrictionless},
$\J_P(h)=\alpha u(h_{0-})+\E\int_0^\infty e^{-\delta t}\big((1-\delta\alpha)u(h_t)
-P_th_t\big)\dd t$. The integrand is strictly concave in $h_t$ and is
maximized at $h^\ast_t$, so $h^\ast$ is the unique maximizer once
$\J_P(h^\ast)$ is finite. Both $P_th^\ast_t$ and $|u(h^\ast_t)|$ are bounded
by constants times $(P^0_t)^{1-1/\gamma}$, or by an affine function of
$|\log P^0_t|$ when $\gamma=1$, because $\underline mP^0\le P\le\overline mP^0$.
These bounds are integrable, since
$\E\int_0^\infty e^{-\delta t}(P^0_t)^{1-1/\gamma}\dd t=\int_0^\infty
e^{-K_0(\gamma)t}\dd t<\infty$ and
$\E\int_0^\infty e^{-\delta t}|\log P^0_t|\dd t<\infty$. The index identity follows from the first order
condition, and $\log h^\ast(Q)-\log h^\ast(P)=-\gamma^{-1}\log(Q/P)$ gives
\eqref{ec:eq:loglipschitz}. At the price $\lambda P$ the expenditure is
$\lambda^{-1/\gamma}$ times its value at $P$, so each budget binds at exactly
one multiplier, and the Lagrangian argument at the end of the proof of
Theorem~\ref{ec:thm:foc} gives primal optimality.

(ii) The proof of Theorem~\ref{thm:weakexist} in
Section~\ref{ec:proofexist} uses, for each type, existence and uniqueness of
best responses at comparable prices, the estimate
\eqref{ec:eq:loglipschitz}, indices in $[e^{a_i},e^{b_i}]$ at the input $y^0$
and the bounds \eqref{ec:eq:comparableband}, the envelopes
\eqref{ec:eq:wealthenv}, and primal optimality at binding budgets. By (i),
types with $L_i=1$ have all of these properties, where we use the version
$((1-\delta_i\alpha_i)/P)^{1/\gamma_i}$ of their optimum, which is
c\`adl\`ag because $y$ is. Their index is identically
$e^{a_i}$, so they contribute nothing to $\mathcal T_\mu y^0-y^0$, and their
expenditure is
$(1-\delta_i\alpha_i)^{1/\gamma_i}\lambda_i^{-1/\gamma_i}\E\int_0^\infty
e^{-\delta_it/\gamma_i}(\xi^i_t)^{1-1/\gamma_i}e^{\theta_i(1-1/\gamma_i)y_t}\dd t$,
which satisfies \eqref{ec:eq:wealthenv} with $d_i=0$ when
$|y-y^0|\le R$. The proof of Theorem~\ref{thm:exact} compares budgets and
uses optimality in the mean field problem only (Remark~\ref{rem:scope}). The
relative consumption of a deviation in the game is the process that a price
taker facing $Y$ evaluates, so the admissible class is the same in both
problems. In the proof of Theorem~\ref{thm:populationlimit}, the responses
$((1-\delta_i\alpha_i)/P^n)^{1/\gamma_i}$ of a type with $L_i=1$ converge
locally in measure when the prices do and are dominated by the envelopes of
(i), so the conclusions of Lemma~\ref{ec:lem:localstab} hold for them by
dominated convergence and \eqref{ec:eq:crossprice}.

(iii) For a type with $L=1$, \eqref{ec:eq:Ufrictionless} gives
$\J_P(h)-\J_P(h^0)=\E\int_0^\infty e^{-\delta t}\{(1-\delta\alpha)(v_t-v^0_t)
-P_t(\Phi_p(v_t)-\Phi_p(v^0_t))\}\dd t$, and $1-\delta\alpha=z^0_t$ by the
stated condition. The first inequality in the proof of
Theorem~\ref{ec:thm:regret} therefore holds with equality, and the rest of
that proof and the proof of Theorem~\ref{thm:certificate} apply unchanged.
\end{proof}

Corollary~\ref{cor:rounded} is not covered by
Proposition~\ref{ec:prop:frictionless}. Its proof needs the relative
consumption $h^ae^{-\theta\Delta}$ of every type with $L_a>1$ in the rounded
profile to have finite variation. Under its hypotheses the log relative
consumptions of the types with $L=1$ differ only by constants, so $\Delta$
has finite variation exactly when rounding leaves the total weight of these
types unchanged. When $\theta>0$ and both kinds of types are present,
changing that total weight gives the relative consumption of the types
with $L_a>1$ a nonzero Brownian component, so it is not admissible. If
$\theta=0$, the benchmark drops out of every objective, and the assigned
plans remain an exact Nash equilibrium for any composition. If all types
have $L=1$, finite variation is not required.

\section{Computation}\label{ec:compute}

\subsection{Index Lattice for Asynchronous Equilibria}\label{ec:lattice}

This subsection describes the scheme used in
Section~\ref{sec:computeexample}. All risk is common, all types share the
discount rate $\delta$, and $\nu=\delta-r-\kappa^2/2$. The state is the
vector $x=(x^1,\ldots,x^n)$ of log indices
$x^a=\log(\lambda_ae^{\delta t}\xi_tY_t^{\theta_a}(h^a_t)^{\gamma_a})$.
Between revisions $\dd x^a=\nu\,\dd t-\kappa\,\dd B_t$ for every $a$, and a
revision that changes $\log h^b$ by $g_b$ moves $x$ by $g_bV_{\cdot b}$ with
$V_{ab}=\gamma_a\ind_{a=b}+\theta_a\pi_b/(1-\Theta_1)$. By
Theorem~\ref{ec:thm:foc}, the plans are best responses to the benchmark they
generate when the adjoint
$\Psi^a_t=\E[\int_t^\infty e^{-\delta(s-t)}(1-e^{x^a_s})\dd s\,|\,\F_t]$ of every
type lies in $[-\beta_a,\alpha_a]$, equals $\alpha_a$ when $h^a$ increases,
and equals $-\beta_a$ when $h^a$ decreases.

Fix a time step $\Delta t$ and the lattice spacing $\Delta x=\kappa\sqrt{\Delta t}$.
Given the state $y$ after the revisions of a period, all coordinates move
by $+\Delta x$ with probability $p_+=\frac12(1+\nu\sqrt{\Delta t}/\kappa)$ and by
$-\Delta x$ otherwise, which matches the drift and the variance of the log index
to first order. Let $\mathbf 1=(1,\ldots,1)$ and define the continuation adjoint
\[
 C^a(y)=\big(1-e^{y^a}\big)\Delta t
 +e^{-\delta\Delta t}\big(p_+\Psi^a(y+\Delta x\mathbf 1)+(1-p_+)\Psi^a(y-\Delta x\mathbf 1)\big),
\]
where $\Psi^a$ is the adjoint at states before revisions, stored on lattice
points and evaluated elsewhere by multilinear interpolation (trilinear for $n=3$). At a lattice
point $x$ the revisions $g\in\R^n$ solve the complementarity conditions
\[
 C^b(x+Vg)\le\alpha_b\ \text{with equality if }g_b>0,\qquad
 C^b(x+Vg)\ge-\beta_b\ \text{with equality if }g_b<0,
\]
and $\Psi^a(x)=C^a(x+Vg)$. A type that revises upward at $x$ therefore has
$\Psi^a(x)=\alpha_a$, and a type that revises downward has $\Psi^a(x)=-\beta_a$, so each
adjoint solves a discrete double obstacle problem in which the other types
enter through their revisions. The conditions are solved by Gauss--Seidel
iteration over types, with a one dimensional root search along the
direction $V_{\cdot b}$ for each type $b$. We build the continuation
adjoint from the adjoints before revisions because it then crosses the
obstacle transversally along a revision direction. The adjoint after
revisions touches the obstacle with zero slope, which makes the root search
ill conditioned.

Howard's algorithm alternates two steps. Given the revisions at all lattice
points, the adjoints solve a sparse linear system in which rows of revising
types are replaced by the obstacle values. Given the adjoints,
the revisions are recomputed from the complementarity conditions. We solve
the linear systems with a sparse direct solver, except at
$\Delta t=1/128$, where its fill exceeds the available memory. There we use
GMRES with an incomplete LU preconditioner and require relative residuals
below $10^{-10}$. At $\Delta t=1/32$ the multipliers, initial ratios, and risky
shares from the two solvers differ by less than $2\times10^{-14}$. In the
example of Section~\ref{sec:computeexample} the groups have equal weights
and identical preferences, so $\Psi^a(x)$ depends only on $x^a$ and on the
other coordinates as a set. We store one function for each rank on the
sorted wedge $x^{(1)}\le x^{(2)}\le x^{(3)}$, with the dispersion
$x^{(3)}-x^{(1)}$ capped at $1.6$, and evaluate every labeled point by
sorting the corners of its interpolation cell. In the computed equilibrium
only the groups with the lowest or the highest index revise, and their
revisions reduce the dispersion, so along simulated paths it never exceeds
its initial value, $1.19$ after the initial revision and $1.37$ before it,
which is below the cap. The Markov chain implied by
the interpolation in the linear systems can reach rows whose interpolation
cell crosses the cap, where corners are deleted, but from the initial state
its discounted probability of doing so is below $8\times10^{-12}$
(\texttt{data/saved\_lattice\_audit.json}). Without this symmetry each type needs its own
adjoint on the full set of states with bounded dispersion, which for three
types has about six times as many points as the wedge.
Section~\ref{ec:twotypes} computes two examples with two types of different
preferences on the full lattice.

Given the revisions, the cost of consumption per unit of current
consumption solves the linear system
\[
 F^a(x)=e^{\chi_a(g)}\Big(\Delta t+e^{-\delta\Delta t}\big(p_+e^{\Delta x}F^a(y+\Delta x\mathbf 1)
 +(1-p_+)e^{-\Delta x}F^a(y-\Delta x\mathbf 1)\big)\Big),\qquad y=x+Vg,
\]
where $\chi_a(g)=g_a+\sum_b\theta_a\pi_bg_b/(1-\Theta_1)$ is the change in
log consumption at the revision, and the factors $e^{\pm\Delta x}$ come from
$\xi_{t+\Delta t}/\xi_t=e^{-\delta\Delta t}e^{\pm\Delta x}$ on the lattice. The
wealth of type $a$ is its consumption times the cost after revisions, and its
risky share is $-(\kappa/\sigma)$ times the derivative of the log cost along
$\mathbf 1$.

The revision rules do not involve the multipliers. Given $\mu=\log\lambda$,
the initial log index before revisions is
$x^a_{0-}=\mu_a+\gamma_a\log c^a_{0-}-\theta_a(\gamma_a-1)\log\bar c_{0-}$
with $\log\bar c_{0-}=\sum_b\pi_b\log c^b_{0-}$. The revisions at $x_{0-}$
give the initial consumption $c^a_0$ and the state $x_0$ after them, and the
budget equations, which equate $c^a_0$ times the cost at $x_0$ to $X_a$, are
$n$ equations in $\mu$. We solve them with a hybrid Powell method
followed by Newton steps with a finite difference Jacobian.

\subsection{Diagnostics}\label{ec:diagnostics}

Table~\ref{tab:async} and Section~\ref{sec:computeexample} report the
following checks. The Howard iteration stops when
the largest change in the adjoints is below $10^{-10}$. The complementarity
residual is the largest violation of the conditions above at lattice points
whose coordinates all lie in $(\log\za,\log\zb)$. On the diagonal of the lattice the groups are
synchronized, and Theorem~\ref{thm:common} gives the adjoint $H$, the cost
$\Xf_{\hg}$, and the risky share in closed form, which we compare with the
lattice solution. The budget residual is that of the outer equations, and
an independent Monte Carlo estimate of the cost of consumption, with
$20{,}000$ paths over $200$ years, provides a separate budget diagnostic.
The simulations apply the interpolated revision rule at the state after
each binomial shock, which need not be a lattice point, instead of moving
to a lattice point at random as the chain behind the linear systems does.
The Monte Carlo discrepancy therefore contains the difference between the
two processes as well as sampling and truncation errors, and it is not an
unbiased estimate of the residual of the linear systems.
Along simulated paths we also verify the bound \eqref{eq:existbound} of
Theorem~\ref{thm:weakexist}. Finally the results are compared across time
steps. We do not prove convergence of the scheme as the time step tends to
zero, and the table reports the observed changes. The initial risky share
of group 1 changes by $0.64$, $0.48$, and $0.37$ percentage points as the
time step is halved. The changes shrink by a factor of about $0.76$, close
to the factor $1/\sqrt2$ by which the lattice spacing shrinks, and
geometric extrapolation with either factor gives a limit between $60.1\%$
and $60.4\%$. The initial
ratios and the risky shares of the other groups change by less than $0.05$
percentage points between the two finest lattices.

Three further checks separate sources of error. First, on synchronized
states all types revise together, and the scheme reduces to a binomial
chain in one dimension with the obstacles $\alpha$ and $-\beta$. Solving
this chain directly reproduces the adjoint errors of Table~\ref{tab:async}
in all digits shown. Its error is not monotone in the time step, because it
depends on where the ends of the band lie relative to the lattice points.
For example, it is $5.0\times10^{-4}$ at $\Delta t=1/20$ and
$1.1\times10^{-3}$ at $\Delta t=1/24$.
It falls to $1.0\times10^{-5}$ at $\Delta t=1/512$ and to
$6.1\times10^{-6}$ at $\Delta t=1/1024$. The adjoint error in the table
is therefore the error of the approximation of the reflected index in one
dimension, to which the coupling of the types and the interpolation add
nothing on synchronized states.
Second, at $\Delta t=1/32$ we recompute the equilibrium with the dispersion
of the sorted wedge capped at $1.45$ and at $2.0$ instead of $1.6$.
Multipliers, initial ratios, and risky shares change by less than
$10^{-13}$, and the Monte Carlo statistics do not change, since no simulated
path reaches a state with dispersion above that of the initial point.
Third, starting the outer root
finder from eight points drawn uniformly within $0.5$ of the solution in
each log multiplier gives the same multipliers to $10^{-14}$.

\subsection{Evaluations at Fixed Wealth}\label{ec:fixedwealth}

The comparisons of Section~\ref{sec:advice} require expectations of the form
$\E_x\int_0^\infty e^{-\delta t+k\ell_t}f(x_t)\dd t$ and
$\E_x\int_0^\infty e^{-\delta t+k\ell_t}\dd l^{\alpha,\beta}_t$ for a
reflected log index $x$ with regulator $\ell=l^\alpha-l^\beta$ and real $k$.
As functions of the initial point they solve $\varsigma w''+\nu w'-\delta w+f=0$ on the band with the Robin
conditions $w'+kw=c$ at both ends. For the running term $c=0$. For the
regulators $f=0$, and $c=-1$ at the lower end for $l^\alpha$ and $c=1$ at the
upper end for $l^\beta$, with $c=0$ at the other end. The expectations are
finite when $\delta+k\nu-k^2\varsigma>0$. The
utility of a plan $h_0e^{q\ell}$ and the terms of
Theorem~\ref{thm:certificate} for synchronized plans are combinations of
such expressions with $f$ equal to $1$ or $e^x$. For the misjudged
composition, the state is the logarithm of the ratio of wealth to
consumption, which is reflected at the ends of
the believed band and has the drift and volatility implied by the believed
portfolio rule under the true benchmark growth. Utility then solves a
linear equation with variable coefficients, which we discretize with $4{,}000$
points. When the believed rule is the true one, this computation agrees
with the evaluation in index coordinates to a relative error of
$2\times10^{-7}$.

\subsection{Sensitivity of the Losses}\label{ec:sensitivity}

Table~\ref{ec:tab:sensitivity} repeats the comparisons of
Section~\ref{sec:advice} with one parameter changed at a time. When the
inherited standard is ignored, the loss of the first policy is largest, and
constant, for inherited ratios of consumption to wealth at or above the
upper end of the band. The optimal policy and the first policy then both
cut at time zero, and the difference between their utilities does not
depend on the inherited standard. Similarly, the loss of the second policy
is largest, and constant, below the band. The table reports these largest losses. For the
misjudged composition, the household of the second type starts at the
middle of its band, as in Figure~\ref{fig:advice}(b).

\begin{table}[!t]\centering\small
\caption{Losses at Fixed Wealth for Other Parameters.}\label{ec:tab:sensitivity}
\begin{tabular}{@{}lccccc@{}}
\hline
 & \multicolumn{2}{c}{Inherited standard ignored} &
 \multicolumn{3}{c}{Misjudged composition}\\
Change from baseline & Lower end & $\hat K(\gamma)$ & Share off by $0.1$ &
 Only own type & Only other type\\
\hline
None & $16.8$ & $4.4$ & $0.27$ & $16.1$ & $4.5$\\
$\theta=0.2$ & $17.4$ & $4.6$ & $0.08$ & $2.9$ & $1.6$\\
$\theta=0.35$ & $16.5$ & $4.3$ & $0.46$ & $65.3$ & $7.0$\\
$L=1.25$ & $8.6$ & $2.0$ & $0.26$ & $14.7$ & $4.4$\\
$L=2$ & $29.0$ & $9.2$ & $0.28$ & $17.7$ & $4.6$\\
Lower curvature & $25.2$ & $6.4$ & $0.10$ & $3.3$ & $1.9$\\
Higher curvature & $11.0$ & $2.8$ & $0.41$ & $56.0$ & $6.2$\\
\hline
\end{tabular}

\vspace{4pt}
\begin{minipage}{\textwidth}\footnotesize
\emph{Notes.} Losses in percent of initial wealth. The baseline is that of
Figure~\ref{fig:advice}, with $\theta=0.3$ and $L=1.5$, curvature $\gamma=3.5$
for inherited standards, and types with $\gamma_1=2$ and $\gamma_2=5$ for the
misjudged composition. Lower curvature means $\gamma=2.5$ and $\gamma_2=4$,
and higher curvature means $\gamma=5$ and $\gamma_2=5.5$. The first two
columns are the largest losses of the policies that start at the lower end
of the band and at $\hat K(\gamma)$. The third column is the larger loss for
believed shares $0.4$ and $0.6$, and the last two columns correspond to
believed shares $0$ and $1$ of type 1.
\end{minipage}
\end{table}

The cost of ignoring the inherited standard comes from an unnecessary cut
and therefore rises with loss aversion. A misjudgment of the share by $0.1$ costs less than $0.5\%$ of wealth
throughout, whereas the cost of believing that the group consists only of
the household's own type rises steeply with $\theta$ and $\gamma_2$. With
$\theta=0.4$, or with $\gamma_2=6$, and believed shares of type 1 close to
zero, the discretized linear equation of Section~\ref{ec:fixedwealth} for
the expected discounted value of $(h_t/h_0)^{1-\gamma_2}$ has no positive
solution. This indicates that the expected utility under the believed rule
is minus infinity, so that no finite wealth would compensate for the error.
Such a household underestimates benchmark growth, chooses a band with too
high a ratio of consumption to wealth, and cuts its relative standard at a
rate at which the discounted disutility appears to diverge. We have not
proved this divergence for the continuous time problem, which would require
a bound on the growth rate of this expectation, and we report these cases
only as a numerical diagnosis.

\subsection{Two Types with Different Preferences}\label{ec:twotypes}

The example of Section~\ref{sec:computeexample} has groups with identical
preferences. This subsection computes two equilibria in which the types
differ in preferences as well as in initial data, so that the closed forms of
Theorem~\ref{thm:common} do not apply at any state and no symmetry of the
lattice is available. Both examples use the market of
Figure~\ref{fig:async}, $r=0.02$, $\delta=0.05$, $\kappa=0.30$, and
$\sigma=0.20$, two types of equal weight, $\alpha=0$ for both, and inherited
consumption of one, except for type 2 in the second example. In the first
example both types have
$\gamma=3$ and $\theta=0.3$, type 1 has $L=1.5$ and wealth $32$, and type 2
has $L=2.5$ and wealth $27$. The log bands of Lemma~\ref{lem:band} are
$[-0.777,1.046]$ for type 1 and $[-0.986,1.632]$ for type 2, so that the
band of the less loss averse type lies strictly inside the band of the
other. The weak interaction condition holds with $\Theta_1+\Theta_2=0.4$,
$q=1/7$, and $R=0.57$. In the second example type 1 has
$(\gamma,\theta,L)=(2,0.3,1.5)$, type 2 has $(5,0.5,2.5)$ and inherited
consumption $0.95$, both have wealth $30$, and $\Theta_1+\Theta_2=0.525$,
$q=0.21$, and $R=0.67$. In both examples neither
type revises at time zero.

The scheme is that of Section~\ref{ec:lattice} with two changes. The
adjoint of each type is stored on the full rectangular lattice, with each
coordinate covering the band of that type extended by $0.4$ on both sides,
and values between lattice points are obtained by bilinear interpolation.
The obstacles $\alpha_a$ and $-\beta_a$ and the directions $V_{\cdot b}$ are
type specific. With $L_1=L_2=2$ the same code reproduces the closed forms of
Theorem~\ref{thm:common} on the diagonal of the lattice. At $\Delta t=1/64$
the errors in the adjoint, the cost, and the risky share are
$8.3\times10^{-5}$, $1.5\times10^{-3}$, and $2.8\times10^{-3}$, of the same
order as in Table~\ref{tab:async}. Table~\ref{ec:tab:twotypes} reports the diagnostics
of Section~\ref{ec:diagnostics} for the two examples, together with
statistics of the revisions from $20{,}000$ simulated paths over $200$
years. Since the two types never share a band, synchronization in the sense
of Theorem~\ref{thm:common} cannot occur, and we report instead the expected
upward and downward revisions of the log relative standard per year and the
share of each type's revisions, weighted by their size, that occur at a
time step at which the other type also revises.

\begin{table}[!t]\centering\footnotesize
\caption{Two Types with Different Preferences.}\label{ec:tab:twotypes}
\setlength{\tabcolsep}{3pt}
\begin{tabular}{@{}lcccccc@{}}
\hline
 & \multicolumn{4}{c}{Different loss aversion} & \multicolumn{2}{c}{All preferences differ}\\
Time step & $1/16$ & $1/32$ & $1/64$ & $1/128$ & $1/64$ & $1/128$\\
\hline
Lattice points & $1{,}692$ & $3{,}366$ & $6{,}603$ & $13{,}000$ & $6{,}603$ & $13{,}000$\\
Howard iterations & 7 & 8 & 8 & 8 & 10 & 12\\
Largest complementarity residual & $2.5\times10^{-14}$ & $7.1\times10^{-15}$ & $6.7\times10^{-15}$ & $7.1\times10^{-15}$ & $2.5\times10^{-14}$ & $2.1\times10^{-14}$\\
Largest budget residual & $<10^{-15}$ & $<10^{-15}$ & $<10^{-15}$ & $<10^{-15}$ & $<10^{-15}$ & $<10^{-15}$\\
Monte Carlo cost minus wealth, in \% & $0.16$ & $0.27$ & $0.20$ & $0.05$ & $0.22$ & $0.03$\\
Initial risky share, in \% & $55.7,\,40.1$ & $55.6,\,40.0$ & $55.7,\,40.1$ & $55.7,\,40.0$ & $70.5,\,41.6$ & $70.5,\,41.6$\\
Upward revisions per year, in \% & $1.05,\,0.69$ & $1.05,\,0.69$ & $1.05,\,0.70$ & $1.05,\,0.69$ & $1.49,\,0.40$ & $1.49,\,0.39$\\
Downward revisions per year, in \% & $0.50,\,0.31$ & $0.50,\,0.31$ & $0.50,\,0.31$ & $0.50,\,0.31$ & $0.73,\,0.13$ & $0.73,\,0.13$\\
Share of joint revisions & $0.65,\,0.99$ & $0.65,\,0.99$ & $0.64,\,0.99$ & $0.64,\,0.99$ & $0.68,\,1.00$ & $0.67,\,1.00$\\
Type 2 alone after a joint revision & $0$ & $0$ & $0$ & $0$ & $0$ & $0$\\
Largest distance of $\log Y$ from center & $0.46$ & $0.46$ & $0.47$ & $0.47$ & $0.50$ & $0.50$\\
\hline
\end{tabular}

\vspace{4pt}
\begin{minipage}{\textwidth}\footnotesize
\emph{Notes.} The first example has $\gamma=3$, $\theta=0.3$, $L=1.5$ and
$2.5$, and wealth $32$ and $27$. The second has $(\gamma,\theta,L)=(2,0.3,1.5)$
and $(5,0.5,2.5)$, inherited consumption $1$ and $0.95$, and wealth $30$
for both types. Pairs of numbers refer to types 1 and 2. The Monte Carlo
row reports the larger deviation over the two types of the simulated cost
of consumption from wealth, from $20{,}000$ paths over $200$ years, with
standard errors of at most $0.72\%$. Revisions per year are the expected
increments of $\log h$ per year in each direction over the $200$ years.
The share of joint revisions is the share of a type's revisions, weighted
by their size, that occur at a time step at which the other type also
revises. The next row is the share of the revisions of type 2 that occur
alone after the first joint revision on the path. The last row is the
largest value over all paths and dates of $|\log Y_t-k(\mu)-gt-sB_t|$, to
be compared with $R=0.57$ in the first example and $R=0.67$ in the second.
\end{minipage}
\end{table}

The table shows three things. First, the type with the wider band revises
only together with the other type. In both examples, on all $20{,}000$
paths, once a joint revision has occurred type 2 never revises at a time
step at which type 1 does not revise, whereas about a third of the
revisions of type 1 are not joint. The $1.3\%$ of the revisions of type 2
that are not joint in the first example occur before the first joint
revision, because at the initial state type 2 is closer to its upper
boundary than type 1 is to its own. This is an observation about the
computed equilibria and their simulated paths, and we do not prove it for
the continuous time equilibrium. The following account explains it. In the
computed equilibria the revision regions are nested, since the indices of
type 1 stay in $[-0.76,1.03]$ and those of type 2 in $[-0.97,1.61]$. The
two log indices receive the same common increments, so their difference
changes only at revisions. A joint revision sets it to the difference
between the two boundaries on that side, and a revision of type 1 alone
keeps the index of type 1 at its boundary while moving the index of type 2
away from the boundary on the same side, since $V_{21}$ has the sign of
$V_{11}$ and is smaller. Type 1 then remains at least as close to each of
its boundaries as type 2 is to its own, so type 2 reaches a boundary only
while type 1 is already revising. Second, the
standard of the less loss averse type moves more in both directions, by
$1.05\%$ upward and $0.50\%$ downward per year against $0.69\%$ and $0.31\%$
in the first example, while the spending of type 2 follows the benchmark
between its own revisions. Third, the bound \eqref{eq:existbound} holds on
every path, with largest distances of $0.47$ and $0.50$ against $R=0.57$ and
$0.67$. The results change little across time steps. The initial risky
shares move by at most $0.1$ percentage points and the revision statistics
by about $0.01$.

Figure~\ref{fig:twotypes} in the main text follows one path of the first
example at $\Delta t=1/64$, and Figure~\ref{ec:fig:twotypes} shows the
ratios of consumption to wealth and the risky shares along the same path.
Type 1 cuts its standard at about 14 and 16 years and between 26 and 28
years. The risky share of type 1 exceeds that of type 2 by about ten
percentage points on average, and the two nearly coincide only when both
indices are near a boundary, at about 7.5 and 28 years.

\begin{figure}[!t]\centering
\includegraphics[width=\textwidth]{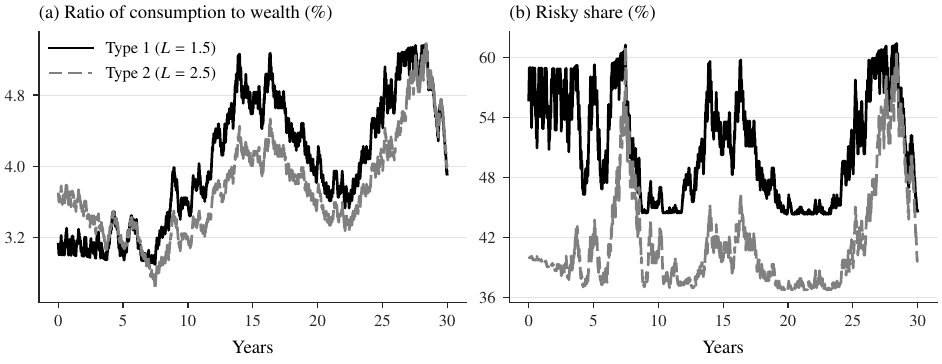}
\caption{Two Types with Different Loss Aversion: Spending and Risk Taking.
\emph{Notes.} The path of Figure~\ref{fig:twotypes}, with the parameters
given there.}
\label{ec:fig:twotypes}
\end{figure}

\subsection{Loss Aversion and Risk Taking inside the Band}\label{ec:lossfig}

Figure~\ref{ec:fig:loss} keeps the relative concern fixed and varies loss
aversion in the model of Section~\ref{sec:commonimpl}. The curves are
computed from the formulas of Lemma~\ref{lem:band} for the band and the
cost function. A higher $L$ widens the band for the ratio of consumption to
wealth and lowers the risky share inside the band, while the share at the
endpoints equals $\kappa/(\sigma\hg)$ for every $L$. Curvature and
adjustment costs could be separated empirically by combining portfolio
exposure with the ratio of consumption to wealth and the time since
spending was last revised, but we do not develop a formal identification
strategy.

\begin{figure}[!t]\centering
\includegraphics[width=\textwidth]{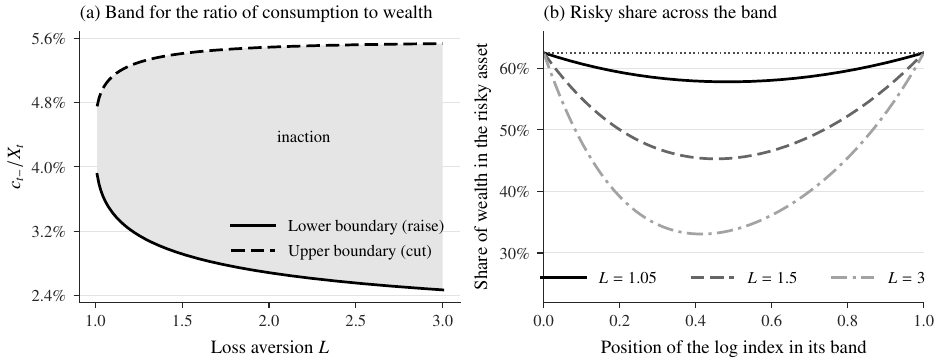}
\caption{Loss Aversion at Fixed Relative Concern.
\emph{Notes.} Common risk with $r=0.02$, $\delta=0.05$, $\kappa=0.30$,
$\sigma=0.20$, $\gamma=3$, $\theta=0.3$, and $\alpha=0$, so that $\hg=2.4$.
Panel (a) varies $L$ through the cost of reductions. Panel (b) compares risky
shares at corresponding positions in the band of each log index, and the
dotted line is the share $\kappa/(\sigma\hg)$ at the endpoints. The figure
shows one parameter family, and we do not claim monotonicity in $L$ in
general.}
\label{ec:fig:loss}
\end{figure}

\input{ec4a_uniqueness}
\input{ec4d_sync}

%% file: ec4a_uniqueness.tex
\section{Uniqueness and Synchronization}\label{ec:further}

An equilibrium lies in the \emph{bounded class} if its benchmark is an
$\FB$ adapted c\`adl\`ag process whose logarithm stays within a bounded
distance of $gt+sB_t$ in the uniform essential path norm, the class in which
Theorem~\ref{thm:weakexist} constructs equilibria and the benchmark is
unique for fixed multipliers. Equilibria are identified up to
indistinguishability, the hypotheses of Theorem~\ref{thm:weakexist} are in
force unless stated otherwise, and the notation of Sections~\ref{sec:exist}
and~\ref{ec:proofexist} is retained.

\subsection{Forgetting the Initial Reference}\label{ec:forget}

Uniqueness of the budget multipliers requires a lower bound on the response
of each type's expenditure to its own multiplier (Section~\ref{sec:scope}).
The next lemma shows that the inherited reference is forgotten at the first
revision in the relevant direction, and the one after it turns this into
such a bound.

\begin{lemma}\label{ec:lem:forget}
Fix the preference parameters and a price $P$ satisfying the hypotheses of
Theorem~\ref{ec:thm:comparable}. Let $0<h^1_{0-}\le h^2_{0-}$ be two
inherited references, let $h^j=h^\ast(P,h^j_{0-})$ and $v^j=u(h^j)$ for
$j=1,2$, and put
\[
 \Lambda_t=v^{1,-}([0,t])+v^{2,+}([0,t]),
\]
where the variations include the atoms at time zero. Then $h^1\le h^2$, and
almost surely, for every $t\ge0$, $\Lambda_t>0$ implies $h^1_s=h^2_s$ for
all $s\ge t$. In particular $h^1_t<h^1_{0-}$ or $h^2_t>h^2_{0-}$ implies that
the two plans coincide from time $t$ on.
\end{lemma}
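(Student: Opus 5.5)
The ordering $h^1\le h^2$ is the reference comparison of Lemma~\ref{ec:lem:lattice}. By Theorem~\ref{ec:thm:comparable} both optima exist, are unique and locally interior, have bounded indices and satisfy \eqref{ec:eq:foc}, so that lemma applies directly. For the second claim I plan a pasting argument at a stopping time, combined with strict concavity in utility coordinates. Put
\[
 \tau=\inf\{t\ge0:\Lambda_t>0\}.
\]
Since $\Lambda$ is right continuous and nondecreasing, it suffices to show $v^1=v^2$ on $[\tau,\infty)$. On $\{\tau<\infty\}$, after a localization, $v^1$ decreases at $\tau$ or just after it, or $v^2$ increases there. Both are optimal plans for the same price $P$, which differ only in their references.

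\emph{Step 1 (adjoints).} Because $P$ is common and $h^1\le h^2$, the indices satisfy $z^1=P(h^1)^\gamma\le z^2$. Hence $\Psi^1\ge\Psi^2$. Where $v^1$ decreases, $\Psi^1=-\beta$, and the lower obstacle then forces $\Psi^2=-\beta$. Symmetrically, where $v^2$ increases, $\Psi^2=\alpha$, which forces $\Psi^1=\alpha$. So at the support of $\dd\Lambda$ both plans sit on the same obstacle.

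\emph{Step 2 (swapped pastings).} Define
\[
 w=v^2\ind_{[0,\tau)}+v^1\ind_{[\tau,\infty)},\qquad w'=v^1\ind_{[0,\tau)}+v^2\ind_{[\tau,\infty)}.
\]
The path $w$ carries the reference $v^2_{0-}$ and $w'$ carries $v^1_{0-}$. Exactly as in the proof of Lemma~\ref{ec:lem:lattice}, running utility and expenditure add up to the same totals,
\[
 \J_P(w)+\J_P(w')\ \text{versus}\ \J_P(v^2)+\J_P(v^1),
\]
so only the adjustment costs at the pasting time differ. Write $a$ and $b$ for the jumps of $v^1$ and $v^2$ at $\tau$, and $d=v^2_{\tau-}-v^1_{\tau-}\ge0$. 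The pasted jumps are then $a-d$ and $b+d$. I will show that
\[
 k_{\alpha\beta}(a-d)+k_{\alpha\beta}(b+d)\le k_{\alpha\beta}(a)+k_{\alpha\beta}(b),
\]
which is exactly where the sign information from $\Lambda$ enters. In the downward case, $a\le0$ and $v^1_\tau\le v^2_\tau$ give $a-d\le\min(a,b)$ and $a-d\le b+d$. Moreover, since both plans are on the lower obstacle, a downward jump of $v^2$ cannot be followed by an upward move. The pair $(a-d,b+d)$ is therefore a spreading of $(a,b)$ in the same direction as the existing cut, and the piecewise linearity of $k_{\alpha\beta}$ gives equality when the signs agree.

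If the cost comparison holds, then $\J_P(w)+\J_P(w')\ge\J_P(v^1)+\J_P(v^2)$. Optimality of $v^1$ and $v^2$ then forces $w=v^2$ by uniqueness, that is, $v^1=v^2$ on $[\tau,\infty)$. When $\tau$ is a continuity time with no atom there, $d$ is handled by the adjoint equality of Step 1 instead. In that case I would use the Fubini identity of Theorem~\ref{ec:thm:foc} on $[\tau,\infty)$ applied to $v^2-v^1$. Both adjoints equal $-\beta$ on the support of the decrease, and $z^1\le z^2$ with strict inequality on $\{v^1<v^2\}$, so the first-order expression is strictly negative unless $v^1=v^2$ after $\tau$.

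\emph{Main obstacle.} The delicate step is the cost inequality at $\tau$ in the mixed-sign cases, for example when $v^2$ jumps up while $v^1$ is also moving. A related difficulty is that $\tau$ may be an accumulation point of revisions rather than an atom. I would first treat $\tau+\epsilon$-type stopping times where $\Lambda$ has strictly increased, using right continuity, and pass $\epsilon\downarrow0$. If the direct convexity bound fails in some sign configuration, the fallback is the Step 1 adjoint argument applied on $[\tau,\infty)$, which avoids the jump bookkeeping.
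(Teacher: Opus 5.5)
\paragraph{Step 1 is right; Step 2 fails.} Your Step~1 matches the paper. From $h^1\le h^2$ you get $\Delta^\Psi=\Psi^1-\Psi^2\ge0$, and the support conditions then force $\Delta^\Psi=0$ for $\dd\Lambda$-almost every $t$. The gap is the pasting in Step~2.

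You observed that $a-d\le\min(a,b)$, and also $(a-d)+(b+d)=a+b$. So the pair $(a-d,b+d)$ majorizes $(a,b)$, and convexity of $k_{\alpha\beta}$ gives
\[
k_{\alpha\beta}(a-d)+k_{\alpha\beta}(b+d)\ \ge\ k_{\alpha\beta}(a)+k_{\alpha\beta}(b)
\]
in every configuration. This is the reverse of the inequality you need. Equality holds only when $[a-d,b+d]$ does not straddle zero. In the downward case this means $b+d\le0$, that is $v^2_\tau\le v^1_{\tau-}$, which is essentially the merging you are trying to prove. Two concrete failures:
\begin{itemize}
\item If $v^1$ jumps down at $\tau$ ($a<0$), $v^2$ does not jump, and $d>0$, the pasted pair pays an extra $e^{-\delta\tau}(\alpha+\beta)d>0$.
\item At a continuity time ($a=b=0$) the excess is the same unless $d=0$.
\end{itemize}
So $\J_P(w)+\J_P(w')\ge\J_P(v^1)+\J_P(v^2)$ cannot be obtained this way. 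Such configurations are excluded only a posteriori, by the lemma itself. The $\tau+\epsilon$ device does not help either: at those times neither plan need sit on an obstacle.

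\paragraph{The missing idea: integrate against $\dd\Lambda$ instead of pasting.} Since $\Delta^\Psi\ge0$ vanishes $\dd\Lambda$-a.e.,
\[
0=\E\int_{[0,\infty)}e^{-\delta t}\Delta^\Psi_t\,\dd\Lambda_t=\E\int_0^\infty e^{-\delta s}P_s\big((h^2_s)^\gamma-(h^1_s)^\gamma\big)\Lambda_s\,\dd s .
\]
The equality is obtained by replacing the conditional expectation by its integrand, which is allowed because $\dd\Lambda$ is optional and the integrand is nonnegative, and then applying Tonelli. Hence $\Lambda_s(h^2_s-h^1_s)=0$ for $\dd s\otimes\mathbb P$-a.e.\ $(s,\omega)$. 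Monotonicity of $\Lambda$ and right continuity of the paths then give $h^1=h^2$ on $[t,\infty)$ whenever $\Lambda_t>0$. This is the paper's proof. It needs no stopping time, no jump bookkeeping, and no sign analysis of the other variation components.

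\paragraph{Your fallback can be completed, but two pieces are missing.} The fallback (Fubini on $[\tau,\infty)$ for $v^2-v^1$) leaves out:
\begin{itemize}
\item the sign check on $\dd v^{1,+}$ and $\dd v^{2,-}$ after $\tau$, both of which contribute nonpositively because $\Delta^\Psi\ge0$;
\item the boundary term $\E[e^{-\delta\tau}\Delta^\Psi_\tau(v^2_{\tau-}-v^1_{\tau-})]$. This vanishes only because $\Delta^\Psi_\tau=0$ on $\{\tau<\infty\}$, which follows from right continuity of the adjoints and the fact that $\dd\Lambda$ charges every interval $[\tau,\tau+\epsilon)$.
\end{itemize}
Written out, that fallback is a localized version of the paper's identity. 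Choosing $\dd\Lambda$ as the integrator avoids both extra steps.
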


\begin{proof}
Reference monotonicity in Lemma~\ref{ec:lem:lattice} gives $h^1\le h^2$.
With $z^j=P(h^j)^\gamma$, the adjoints of \eqref{ec:eq:index} satisfy
\[
 \Delta^\Psi_t=\Psi^1_t-\Psi^2_t
 =\E\Big[\int_t^\infty e^{-\delta(s-t)}P_s\big((h^2_s)^\gamma-(h^1_s)^\gamma\big)\dd s
 \,\Big|\,\F_t\Big]\ge0 .
\]
Both adjoints lie in $[-\beta,\alpha]$ and satisfy \eqref{ec:eq:foc} by
Theorem~\ref{ec:thm:comparable}.
The support conditions in \eqref{ec:eq:foc} give $\Psi^1=-\beta$ for
$\dd v^{1,-}$ almost every $t$, and then $\Psi^2\le\Psi^1=-\beta$ forces
$\Psi^2=-\beta$ there. Likewise $\Psi^2=\alpha$ for $\dd v^{2,+}$ almost
every $t$ forces $\Psi^1=\alpha$. Hence $\Delta^\Psi=0$ almost everywhere
for $\dd\Lambda$, that is $\int_{[0,\infty)}e^{-\delta t}\Delta^\Psi_t\dd\Lambda_t=0$.
Replacing the conditional expectation by the integrand, which is allowed
because $\dd\Lambda$ is an optional random measure and the integrand is
nonnegative, and applying Tonelli's theorem as in the proof of
Theorem~\ref{ec:thm:foc} yields
\[
 0=\E\int_0^\infty e^{-\delta s}P_s\big((h^2_s)^\gamma-(h^1_s)^\gamma\big)\Lambda_s\dd s .
\]
The integrand is nonnegative, so $\Lambda_s(h^2_s-h^1_s)=0$ for
$\dd s\otimes\mathbb P$ almost every $(s,\omega)$. Outside a null set, if
$\Lambda_t>0$ then $\Lambda_s\ge\Lambda_t>0$ for all $s\ge t$ by
monotonicity, so $h^1_s=h^2_s$ for almost every $s\ge t$, and right
continuity of both paths extends the equality to every $s\ge t$. Finally,
$h^1_t<h^1_{0-}$ requires $v^{1,-}([0,t])>0$, and $h^2_t>h^2_{0-}$ requires
$v^{2,+}([0,t])>0$.
\end{proof}

\begin{lemma}\label{ec:lem:elasticity}
Let $P^0_t=\exp(\nu_0t-\bm\kappa_0\cdot\bm W_t)$ with
$\sigma_0=|\bm\kappa_0|>0$, let $r_0=\delta-\nu_0-\sigma_0^2/2$, and assume
$K=K_0(\gamma)>0$ in the market $(r_0,\sigma_0)$, with band
$[z_-,z_+]$. Fix $\mu^-\le\mu^+$, $M\ge0$, $\theta\ge0$, and an inherited
reference $h_{0-}>0$. For an adapted price $Q$ with
$e^{-\theta M}P^0\le Q\le e^{\theta M}P^0$ define
\[
 \mathcal W_Q(\mu)=\E\int_0^\infty e^{-\delta t}Q_t\,h^\ast(e^\mu Q,h_{0-})_t\dd t,
 \qquad\mu^-\le\mu\le\mu^+ .
\]
Put $\upsilon=1-\gamma^{-1}$, $\bar\nu=\nu_0+\upsilon\sigma_0^2$,
$\Delta_0=(\bar\nu^2+2K\sigma_0^2)^{1/2}$,
$\bar x=\log z_+-\mu^-+\theta M-\gamma\log h_{0-}$, and
\begin{equation}\label{ec:eq:tailfunction}
 \mathcal P(x)=
 \begin{cases}
 \dfrac{\Delta_0+\bar\nu}{2\Delta_0}
   \exp\Big(-\dfrac{\Delta_0-\bar\nu}{\sigma_0^2}x\Big),&x\ge0,\\[8pt]
 1-\dfrac{\Delta_0-\bar\nu}{2\Delta_0}
   \exp\Big(\dfrac{\Delta_0+\bar\nu}{\sigma_0^2}x\Big),&x<0,
 \end{cases}
\end{equation}
which is the probability that $\bar\nu T+\sigma_0B_T$ exceeds $x$ for an
exponential time $T$ of rate $K$ independent of $B$. Let
\begin{equation}\label{ec:eq:zeta}
 \zeta=\Big(\frac{z_-}{z_+}\Big)^{1/\gamma}
 \exp\Big(-\frac{\mu^+-\mu^-}{\gamma}-2\theta M(1+\gamma^{-1})\Big)\mathcal P(\bar x)>0 .
\end{equation}
Then, uniformly over such prices, for $\mu^-\le\mu<\tilde\mu\le\mu^+$,
\begin{equation}\label{ec:eq:elasticity}
 \log\mathcal W_Q(\mu)-\log\mathcal W_Q(\tilde\mu)\ge\frac{\zeta}{\gamma}(\tilde\mu-\mu).
\end{equation}
\end{lemma}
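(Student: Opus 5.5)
The plan is to compare the plan at $\tilde\mu$ with the plan at $\mu$ through the scaling identity from Step 5 of the proof of Theorem~\ref{ec:thm:comparable}, and then to use Lemma~\ref{ec:lem:forget} to show that a fixed fraction of expenditure is cut by the full factor $a^{-1/\gamma}$.

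\emph{Reduction to a comparison of two references.} Put $a=e^{\tilde\mu-\mu}>1$. That identity gives
\[
 h^\ast(e^{\tilde\mu}Q,h_{0-})=a^{-1/\gamma}h^\ast(e^\mu Q,a^{1/\gamma}h_{0-}).
\]
Write $h^1=h^\ast(e^\mu Q,h_{0-})$ and $h^2=h^\ast(e^\mu Q,a^{1/\gamma}h_{0-})$. Lemma~\ref{ec:lem:forget} applies at the price $e^\mu Q$, which is comparable to $P^0$, so $h^1\le h^2$. Let $\tau$ be the first time at which $\Lambda$ becomes positive.
\begin{itemize}
\item Before $\tau$, the plan $h^2$ has not risen and $h^1$ has not fallen, so
\[
 a^{-1/\gamma}h^2_t\le a^{-1/\gamma}h^2_{0-}=h_{0-}\le h^1_t .
\]
\item From $\tau$ on, $h^2=h^1$ by the lemma.
\end{itemize}
Integrating against $e^{-\delta t}Q_t$ gives
\[
 \mathcal W_Q(\tilde\mu)\le\mathcal W_Q(\mu)-(1-a^{-1/\gamma})\,
 \E\int_\tau^\infty e^{-\delta t}Q_th^1_t\dd t .
\]
It therefore suffices to prove
\[
 \E\int_\tau^\infty e^{-\delta t}Q_th^1_t\dd t\ge\zeta\,\mathcal W_Q(\mu)
\]
uniformly in $\mu\in[\mu^-,\mu^+]$ and in the prices $Q$. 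This yields $\mathcal W_Q(\tilde\mu)\le(1-\zeta(1-a^{-1/\gamma}))\mathcal W_Q(\mu)$. Because $-\log(1-\zeta(1-e^{-y/\gamma}))/y\to\zeta/\gamma$ as $y\downarrow0$, we then split $[\mu,\tilde\mu]$ into small steps and sum the one-step bounds, using that $\zeta$ does not depend on $\mu$ in the range. This gives \eqref{ec:eq:elasticity}.

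\emph{Forcing a cut.} By \eqref{ec:eq:comparableband}, with the comparability constants $e^{\pm\theta M}$ relative to $P^0$, the index $e^\mu Q_t(h^1_t)^\gamma$ stays below $e^{2\theta M}z_+$. As long as $h^1_t\ge h_{0-}$, this index is at least $e^{\mu^-}e^{-\theta M}P^0_t h_{0-}^\gamma$. Hence $h^1$ must already have cut, so $\tau\le t$, once $\log P^0_t$ exceeds $\bar x$ up to the $\theta M$ terms. This is how $\bar x$ enters, and the $\theta M$ slack is absorbed into the factor $e^{-2\theta M(1+\gamma^{-1})}$ of $\zeta$.

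\emph{Tail fraction.} Both expenditure densities $e^{-\delta t}Q_th^1_t$ are sandwiched, with ratio at most
\[
 (z_+/z_-)^{1/\gamma}\,e^{(\mu^+-\mu^-)/\gamma}\,e^{2\theta M(1+1/\gamma)},
\]
by a constant multiple of $e^{-\delta t}(P^0_t)^{\upsilon}$. This uses the envelopes $\underline h$, $\overline h$ of Step 1 of the proof of Theorem~\ref{ec:thm:comparable} and their indices in $[z_-,z_+]$. Normalizing $e^{-\delta t}(P^0_t)^\upsilon$ by its total mass $1/K$ and changing measure by the exponential martingale in $\upsilon\bm\kappa_0\cdot\bm W$ produces the following description: the normalized density is the law of an exponential time $T$ of rate $K$, under which $\log P^0_T$ equals $\bar\nu T+\sigma_0B_T$ in law, with $\bar\nu=\nu_0+\upsilon\sigma_0^2$. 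The fraction of sandwich mass on $\{t\ge\tau\}$ is then at least the probability $\mathcal P(\bar x)$ of \eqref{ec:eq:tailfunction}. Multiplying by the inverse sandwich ratio gives $\zeta$ as in \eqref{ec:eq:zeta}. The formula for $\mathcal P$ itself is the standard Laplace transform of Brownian motion with drift at an exponential time.

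I expect the main obstacle to be this last step. It requires matching the event $\{\log P^0_t>\bar x\}$ to $\{t\ge\tau\}$ pathwise, not only at the exponential time, and tracking the $\theta M$ and multiplier constants exactly so that they produce the stated $\zeta$.
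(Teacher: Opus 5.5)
\paragraph{Verdict.} Your route is the paper's: the scaling identity, Lemma~\ref{ec:lem:forget} at the common price $e^\mu Q$, geometric expenditure envelopes, tilting to an exponential time, and subdivision of $[\mu,\tilde\mu]$. Your stopping time $\tau$, together with the bound $a^{-1/\gamma}h^2\le h_{0-}\le h^1$ before $\tau$, is an equivalent substitute for the paper's use of price monotonicity off the event $E_t=\{\log P^0_t>\bar x\}$. The step that fails as written is ``forcing a cut'', and it fails only in the constant it delivers.

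\paragraph{The gap.} Going through the index band \eqref{ec:eq:comparableband} gives $e^\mu Q_t(h^1_t)^\gamma\le e^{2\theta M}z_+$. Combined with your lower bound $e^{\mu^--\theta M}P^0_th_{0-}^\gamma$, this forces a cut only on $\{\log P^0_t>\bar x+2\theta M\}$. That shift enters the argument of $\mathcal P$, and it cannot be absorbed into the prefactor of $\zeta$. Write $c_-=z_-^{1/\gamma}e^{-\mu^+/\gamma-\theta M(1+1/\gamma)}$ and $c_+=z_+^{1/\gamma}e^{-\mu^-/\gamma+\theta M(1+1/\gamma)}$ for the constants of your two envelopes. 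The prefactor of $\zeta$ is exactly $c_-/c_+$, so the factor $e^{-2\theta M(1+\gamma^{-1})}$ is already used up. Since $\mathcal P$ is strictly decreasing, your argument proves \eqref{ec:eq:elasticity} only with $\mathcal P(\bar x+2\theta M)$. That constant is strictly smaller than the stated $\zeta$ whenever $\theta M>0$.

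\paragraph{The fix.} Compare levels, not indices. The band \eqref{ec:eq:comparableband} already loses the factor $\overline m/\underline m=e^{2\theta M}$, and converting back to levels loses it again.
\begin{enumerate}
\item Because $e^{\mu^--\theta M}P^0\le e^\mu Q$, Lemma~\ref{ec:lem:lattice} gives $h^1\le h^\ast(e^{\mu^--\theta M}P^0,h_{0-})$.
\item The right side is a reflected policy at a geometric price. By Step~1 of the proof of Theorem~\ref{ec:thm:comparable}, its index at that price stays in $[z_-,z_+]$.
\item Hence $h^1_t\le\big(z_+/(e^{\mu^--\theta M}P^0_t)\big)^{1/\gamma}$, which is strictly below $h_{0-}$ exactly on $E_t$. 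This is the same envelope that produces your $c_+$.
\item On $E_t$ the plan $h^1$ has fallen below its reference, so $v^{1,-}([0,t])>0$, $\Lambda_t>0$, and $t\ge\tau$.
\end{enumerate}
This inclusion holds pathwise for each fixed $t$, so the obstacle you anticipate does not arise. The exponential time appears only after Tonelli, in the identity $K\E\int_0^\infty e^{-\delta t}(P^0_t)^\upsilon\ind_{E_t}\dd t=\mathcal P(\bar x)$. Using $\mathcal W_Q(\mu)\le c_+/K$, you then get
\[
\E\int_\tau^\infty e^{-\delta t}Q_th^1_t\dd t\;\ge\; c_-\mathcal P(\bar x)/K\;\ge\;(c_-/c_+)\,\mathcal P(\bar x)\,\mathcal W_Q(\mu)\;=\;\zeta\,\mathcal W_Q(\mu).
\]
With this change the rest of your proof goes through as stated.
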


\begin{proof}
Let $d=\tilde\mu-\mu>0$. The scaling identity in Step~5 of the proof of
Theorem~\ref{ec:thm:comparable} gives
\[
 h^\ast(e^{\mu+d}Q,h_{0-})=e^{-d/\gamma}h^\ast(e^\mu Q,e^{d/\gamma}h_{0-}) .
\]
The two optima on the right face the same price. By price monotonicity,
every optimum at a price $e^{\mu'}Q$ with $\mu'\ge\mu^-$ lies below the
reflected policy at the geometric price $e^{\mu^--\theta M}P^0$, whatever
its reference, and the index of that policy lies in $[z_-,z_+]$. Hence
\[
 h^\ast(e^{\mu'}Q,\cdot)_t\le\Big(\frac{z_+}{e^{\mu^--\theta M}P^0_t}\Big)^{1/\gamma},
\]
and this envelope is strictly below $h_{0-}$ on the event
$E_t=\{\log P^0_t>\bar x\}$. On $E_t$ the optimum with reference $h_{0-}$ has
fallen below its reference, so by Lemma~\ref{ec:lem:forget} it coincides
with the optimum with the larger reference $e^{d/\gamma}h_{0-}$ from time
$t$ on. Therefore
$h^\ast(e^{\mu+d}Q,h_{0-})_t=e^{-d/\gamma}h^\ast(e^\mu Q,h_{0-})_t$ on $E_t$,
while outside $E_t$ price monotonicity gives
$h^\ast(e^{\mu+d}Q,h_{0-})\le h^\ast(e^\mu Q,h_{0-})$. The lower envelope,
obtained in the same way from the reflected policy at $e^{\mu^++\theta M}P^0$,
and the lower bound on $Q$ give
\begin{align*}
 \mathcal W_Q(\mu)-\mathcal W_Q(\mu+d)
 &\ge(1-e^{-d/\gamma})\E\int_0^\infty e^{-\delta t}Q_th^\ast(e^\mu Q,h_{0-})_t\ind_{E_t}\dd t\\
 &\ge(1-e^{-d/\gamma})z_-^{1/\gamma}e^{-\mu^+/\gamma-\theta M(1+1/\gamma)}
 \E\int_0^\infty e^{-\delta t}(P^0_t)^{\upsilon}\ind_{E_t}\dd t ,
\end{align*}
and the upper envelope gives
$\mathcal W_Q(\mu)\le z_+^{1/\gamma}e^{-\mu^-/\gamma+\theta M(1+1/\gamma)}/K$,
since $\E\int_0^\infty e^{-\delta t}(P^0_t)^{\upsilon}\dd t=1/K$. Under the
exponential tilting by $(P^0_t)^\upsilon$ the drift of $\log P^0$ becomes
$\bar\nu$, so
\[
 K\E\int_0^\infty e^{-\delta t}(P^0_t)^{\upsilon}\ind_{E_t}\dd t
 =\int_0^\infty Ke^{-Kt}\,\mathbb P(\bar\nu t+\sigma_0B_t>\bar x)\dd t=\mathcal P(\bar x),
\]
where the density of $\bar\nu T+\sigma_0B_T$ is
$(K/\Delta_0)\exp((\bar\nu x-\Delta_0|x|)/\sigma_0^2)$, whose tails give
\eqref{ec:eq:tailfunction}. Combining the three displays,
$\mathcal W_Q(\mu+d)\le\mathcal W_Q(\mu)[1-\zeta(1-e^{-d/\gamma})]$. Applying
this bound on each of $N$ equal pieces of $[\mu,\mu+d]$ and using
$-\log(1-x)\ge x$ gives
$\log\mathcal W_Q(\mu)-\log\mathcal W_Q(\mu+d)\ge N\zeta(1-e^{-d/(N\gamma)})$,
and $N\to\infty$ proves \eqref{ec:eq:elasticity}.
\end{proof}

A household with a low inherited standard waits long before it cuts, so
$\bar x$ is large and $\zeta$ small, but $\zeta>0$ for every reference.

\subsection{Uniqueness with All Budgets Imposed}\label{ec:budgetunique}

In Step~3 of the proof of Theorem~\ref{thm:weakexist}, every equilibrium
multiplier vector satisfies $\mathbf A\mu\in\prod_j\tilde I_j$ with
$\tilde I_j=I_j-\log X_j$. Let $\mathbf B=\mathbf A^{-1}$, with entries
$\mathbf B_{ij}=\gamma_i\ind_{i=j}-\theta_i(\gamma_i-1)\pi_j$, and put
\begin{equation}\label{ec:eq:multiplierbox}
 \mu_i^-=\sum_j\min_{x\in\tilde I_j}\mathbf B_{ij}x,\qquad
 \mu_i^+=\sum_j\max_{x\in\tilde I_j}\mathbf B_{ij}x,\qquad
 k_\pm=\frac1D\Big(\bar m-\sum_i\frac{\pi_i\mu_i^\mp}{\gamma_i}\Big),\qquad
 \bar R=R+\max\{|k_-|,|k_+|\},
\end{equation}
where $\bar m=\sum_i\pi_im_i/\gamma_i$. These numbers depend on the
preferences, the markets, the weights, and the initial wealth, but not on
the inherited references. For each type apply Lemma~\ref{ec:lem:elasticity}
with
\begin{equation}\label{ec:eq:zetaassignment}
 (\nu_0,\sigma_0,K,z_-,z_+,\mu^-,\mu^+,\theta,M,h_{0-})
 =\big(\nu_i+\theta_ig,\hat\kappa_i,\hat K_i(\gamma_i),e^{a_i},e^{b_i},
 \mu_i^-,\mu_i^+,\theta_i,\bar R,h_{i,0-}\big),
\end{equation}
and write $\zeta_i$ for the resulting constant. The geometric price
$e^{\delta_it}\xi^i_te^{\theta_i(gt+sB_t)}$ of the artificial market of
type $i$ has exactly this form, with the band $[e^{a_i},e^{b_i}]$.

\begin{theorem}\label{ec:thm:budgetunique}
Under the hypotheses of Theorem~\ref{thm:weakexist}, if
\begin{equation}\label{ec:eq:uniquecondition}
 \Theta_1+\Theta_2+\sum_i\pi_i\frac{\theta_i(1+\gamma_i^{-1})}{\zeta_i}<1,
\end{equation}
then the mean field equilibrium is unique in the bounded class, and so is
the vector of budget multipliers. For fixed individual parameters, weights, wealth,
and references that satisfy $K_i(\gamma_i)>0$ and $\kappa_i>0$ at
$\theta_i=0$, condition \eqref{ec:eq:uniquecondition} holds whenever all
relative concerns are small enough.
\end{theorem}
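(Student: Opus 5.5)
Any two equilibria in the bounded class have multipliers in the box $[\mu^-,\mu^+]$ of \eqref{ec:eq:multiplierbox}, by Step~3 of the proof of Theorem~\ref{thm:weakexist}. The plan is to show that the equilibrium multipliers are the fixed point of a map on this box that contracts in a weighted sup norm. Uniqueness of the benchmark then follows from uniqueness for fixed multipliers in Theorem~\ref{thm:weakexist}.

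\emph{Setup.} Take two equilibria in the bounded class with multipliers $\mu,\tilde\mu$ and benchmarks $y^\mu,y^{\tilde\mu}$. Their benchmarks are the contraction fixed points of Step~1. Since $|k(\mu)|\le\max\{|k_-|,|k_+|\}$ on the box, both benchmarks lie within $\bar R$ of $gt+sB_t$. Hence the price faced by type $i$ lies within $e^{\pm\theta_i\bar R}$ of the geometric artificial price, which is the form required in \eqref{ec:eq:zetaassignment}. Step~2 gives $\|y^\mu-y^{\tilde\mu}\|_\infty\le(1-\Theta_1-\Theta_2)^{-1}\sum_j\pi_j|\mu_j-\tilde\mu_j|/\gamma_j$.

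\emph{Key comparison.} Write $\mathcal E_i(\mu)=\mathcal W_{Q^\mu_i}(\mu_i)$ with $Q^\mu_i=e^{\delta_it}\xi^ie^{\theta_iy^\mu}$. Both budgets bind at $X_i$, so
\[
0=\log\mathcal W_{Q^\mu_i}(\mu_i)-\log\mathcal W_{Q^{\tilde\mu}_i}(\tilde\mu_i).
\]
Split this difference through the intermediate term $\log\mathcal W_{Q^{\mu}_i}(\tilde\mu_i)$.
\begin{itemize}
\item The own-multiplier part is bounded below by $\zeta_i|\mu_i-\tilde\mu_i|/\gamma_i$, by Lemma~\ref{ec:lem:elasticity}, with the sign fixed by the order of $\mu_i$ and $\tilde\mu_i$.
\item The price part is bounded by $\theta_i(1+\gamma_i^{-1})\|y^\mu-y^{\tilde\mu}\|_\infty$. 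This follows because a log change $\varepsilon$ in the price moves log relative consumption by at most $\varepsilon/\gamma_i$ by \eqref{ec:eq:loglipschitz}, while the expenditure weight itself moves by $\varepsilon$. Here $\varepsilon\le\theta_i\|y^\mu-y^{\tilde\mu}\|_\infty$.
\end{itemize}
Combining the two bounds gives
\[
\frac{|\mu_i-\tilde\mu_i|}{\gamma_i}\le\frac{\theta_i(1+\gamma_i^{-1})}{\zeta_i}\cdot\frac{\sum_j\pi_j|\mu_j-\tilde\mu_j|/\gamma_j}{1-\Theta_1-\Theta_2}.
\]
Multiply by $\pi_i$ and sum over $i$. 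With $S=\sum_i\pi_i|\mu_i-\tilde\mu_i|/\gamma_i$ this yields
\[
S\Big(1-\Theta_1-\Theta_2-\sum_i\pi_i\theta_i(1+\gamma_i^{-1})/\zeta_i\Big)\le0.
\]
Condition \eqref{ec:eq:uniquecondition} makes the bracket positive, so $S=0$ and $\mu=\tilde\mu$. Uniqueness of the benchmark and of the plans then follows from the fixed-multiplier uniqueness in Theorem~\ref{thm:weakexist} and from Theorem~\ref{ec:thm:comparable}.

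\emph{Main obstacle.} The delicate step is checking that Lemma~\ref{ec:lem:elasticity} applies uniformly, with constants independent of the competing equilibrium. This requires that $\mu^\pm$ and $\bar R$ be a priori bounds valid for every bounded-class equilibrium, not only the constructed one. I would verify this from the envelope \eqref{ec:eq:wealthenv} with $R$ replaced by the a priori radius: any bounded-class equilibrium benchmark is a fixed point of $\mathcal T_\mu$, so by Step~1 it lies in the $R$-ball around $y^0$.

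\emph{Small relative concerns.} As all $\theta_i\to0$, the quantities $\Theta_1,\Theta_2\to0$. The artificial markets tend to the true ones, and $\hat K_i\to K_i(\gamma_i)>0$ and $\hat\kappa_i\to\kappa_i>0$. The bands, the box, and $\bar R$ converge continuously. Therefore $\zeta_i$ stays bounded away from zero, because $\mathcal P(\bar x)>0$ depends continuously on its inputs. Each summand $\theta_i(1+\gamma_i^{-1})/\zeta_i$ then tends to $0$, and \eqref{ec:eq:uniquecondition} holds for all sufficiently small relative concerns.
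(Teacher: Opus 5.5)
Your proposal is correct and follows the paper's proof. Every bounded-class equilibrium is enclosed in the multiplier box and the $\bar R$-tube, Lemma~\ref{ec:lem:elasticity} bounds the own-multiplier response from below, the estimate \eqref{ec:eq:loglipschitz} bounds the price response by $\theta_i(1+\gamma_i^{-1})\Delta_y$ with $\Delta_y=\|y-\tilde y\|_\infty$, and continuity of $\zeta_i$ at $\theta_i=0$ gives the small-concern claim. The only difference is cosmetic: you close the loop on $S=\sum_i\pi_i|\mu_i-\tilde\mu_i|/\gamma_i$ through the Step~2 bound on the benchmarks, whereas the paper closes it on $\Delta_y$ through the subtracted consistency equations; this is the same inequality and yields the same condition \eqref{ec:eq:uniquecondition}.
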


\begin{proof}
Let $(Y,h)$ be an equilibrium in the bounded class and $y=\log Y$. Each
type's price is comparable to its artificial geometric price, so by the
budget part of Theorem~\ref{ec:thm:comparable} its plan is
$h^\ast(\lambda_ie^{\delta_it}\xi^ie^{\theta_iy},h_{i,0-})$ for a unique
multiplier $\lambda_i=e^{\mu_i}$, and \eqref{eq:consistency} says that $y$
is a fixed point of $\mathcal T_\mu$. By the fixed point uniqueness in
Theorem~\ref{thm:weakexist}, $y=y^\mu$ and
$|y_t-k(\mu)-gt-sB_t|\le R$. The envelopes \eqref{ec:eq:wealthenv} with
$\mathcal E_i(\mu)=X_i$ put $\mathbf A\mu$ in $\prod_j\tilde I_j$, hence
$\mu_i^-\le\mu_i\le\mu_i^+$, $k_-\le k(\mu)\le k_+$, and
$\|y-gt-sB\|_\infty\le\bar R$. This enclosure holds for every equilibrium
in the class.

Now let $(y,\mu)$ and $(\tilde y,\tilde\mu)$ be two equilibria and
$\Delta_y=\|y-\tilde y\|_\infty$. For a frozen benchmark $y$ write
$Q_i[y]_t=e^{\delta_it}\xi^i_te^{\theta_iy_t}$ and
$\mathcal W_i(\mu_i,y)=\E\int_0^\infty e^{-\delta_it}Q_i[y]_t
h^\ast(e^{\mu_i}Q_i[y],h_{i,0-})_t\dd t$. By the enclosure, $Q_i[y]$ lies
between $e^{\pm\theta_i\bar R}$ times the artificial geometric price, so
Lemma~\ref{ec:lem:elasticity} applies on $[\mu_i^-,\mu_i^+]$ with the constant
$\zeta_i/\gamma_i$. The price moves by at most $e^{\theta_i\Delta_y}$
between $y$ and $\tilde y$, so \eqref{ec:eq:loglipschitz} gives
$|\log\mathcal W_i(\mu_i,y)-\log\mathcal W_i(\mu_i,\tilde y)|
\le\theta_i(1+\gamma_i^{-1})\Delta_y$. Both equilibrium expenditures equal
$X_i$. If $\mu_i<\tilde\mu_i$, then
\[
 \frac{\zeta_i}{\gamma_i}(\tilde\mu_i-\mu_i)
 \le\log\mathcal W_i(\mu_i,y)-\log\mathcal W_i(\tilde\mu_i,y)
 =\log\mathcal W_i(\tilde\mu_i,\tilde y)-\log\mathcal W_i(\tilde\mu_i,y)
 \le\theta_i(1+\gamma_i^{-1})\Delta_y ,
\]
and the case $\mu_i>\tilde\mu_i$ is symmetric, so
$|\mu_i-\tilde\mu_i|\le\gamma_i\theta_i(1+\gamma_i^{-1})\Delta_y/\zeta_i$.
Subtracting the two consistency equations and applying
\eqref{ec:eq:loglipschitz} to each type gives
\[
 (1-\Theta_1)\Delta_y\le\Theta_2\Delta_y+\sum_i\frac{\pi_i}{\gamma_i}|\mu_i-\tilde\mu_i|
 \le\Big(\Theta_2+\sum_i\pi_i\frac{\theta_i(1+\gamma_i^{-1})}{\zeta_i}\Big)\Delta_y .
\]
Condition \eqref{ec:eq:uniquecondition} forces $\Delta_y=0$, and then the
multipliers and the plans coincide.

At zero relative concern the artificial markets are the given markets and
$\Theta_1=\Theta_2=0$. The bands and the constants in
\eqref{ec:eq:multiplierbox} depend continuously on the parameters by
Theorem~\ref{ec:thm:allrates}, and each $\zeta_i$ has a positive limit, so
the left side of \eqref{ec:eq:uniquecondition} tends to zero as
$\max_i\theta_i\to0$, while the hypotheses of Theorem~\ref{thm:weakexist}
persist.
\end{proof}

Condition \eqref{ec:eq:uniquecondition} is sufficient and can be
conservative. The lower bound on the own response that
Section~\ref{sec:scope} asks for follows from reference forgetting and
nondegenerate risk.

\begin{corollary}\label{ec:cor:outer}
Under the hypotheses of Theorem~\ref{thm:weakexist} and
\eqref{ec:eq:uniquecondition}, for $\mu\in\prod_i[\mu_i^-,\mu_i^+]$ let
$y^\mu$ be the benchmark of Theorem~\ref{thm:weakexist} at the fixed
multipliers $\mu$, let $r_i(y^\mu)$ be the unique log multiplier at which
the budget of type $i$ binds when $y^\mu$ is held fixed, and define
$(\mathcal S\mu)_i=\min\{\mu_i^+,\max\{\mu_i^-,r_i(y^\mu)\}\}$. Then
$\mathcal S$ is a contraction of $\prod_i[\mu_i^-,\mu_i^+]$ in the norm
$\|v\|_w=\sum_i\pi_i|v_i|/\gamma_i$, with factor
\[
 \chi=\frac{\sum_i\pi_i\theta_i(1+\gamma_i^{-1})/\zeta_i}{1-\Theta_1-\Theta_2}<1 ,
\]
its fixed point is the equilibrium multiplier vector $\mu^\ast$, and
$\|\mu^k-\mu^\ast\|_w\le\chi^k\|\mu^0-\mu^\ast\|_w$ along the iteration
$\mu^{k+1}=\mathcal S\mu^k$. Every $\mu$ in the box satisfies
$\|\mu-\mu^\ast\|_w\le\|\mu-\mathcal S\mu\|_w/(1-\chi)$.
\end{corollary}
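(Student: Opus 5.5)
The plan is to show that $\mathcal S$ is a self-map of the box, to derive its contraction factor from the own-multiplier elasticity of Lemma~\ref{ec:lem:elasticity} and the logarithmic stability estimate \eqref{ec:eq:loglipschitz}, and then to identify its fixed point with $\mu^\ast$.

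\emph{Self-map.} The map sends the box into itself because of the clipping to $[\mu_i^-,\mu_i^+]$. It is well defined because, for fixed $y^\mu$, the budget part of Theorem~\ref{ec:thm:comparable} gives a unique $r_i(y^\mu)$. Clipping is $1$-Lipschitz in each coordinate, so it suffices to bound $|r_i(y^\mu)-r_i(y^{\tilde\mu})|$ for the unclipped values. When $r_i$ leaves the box, I would compare the clipped value with the box endpoint directly, using monotonicity of $\mathcal W_i$ in $\mu_i$.

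\emph{Bounding $r_i$ for $r_i$ inside the box.} For $\mu$ in the box, Step~1 of Theorem~\ref{thm:weakexist} gives $|y^\mu-k(\mu)-gt-sB_t|\le R$ and $k_-\le k(\mu)\le k_+$. Hence $\|y^\mu-gt-sB\|_\infty\le\bar R$, and $Q_i[y^\mu]$ lies within $e^{\pm\theta_i\bar R}$ of the artificial geometric price. Lemma~\ref{ec:lem:elasticity} therefore applies with constant $\zeta_i/\gamma_i$. Put $r=r_i(y^\mu)$ and $\tilde r=r_i(y^{\tilde\mu})$. Then $\mathcal W_i(r,y^\mu)=X_i=\mathcal W_i(\tilde r,y^{\tilde\mu})$, and \eqref{ec:eq:loglipschitz} gives $|\log\mathcal W_i(\cdot,y^\mu)-\log\mathcal W_i(\cdot,y^{\tilde\mu})|\le\theta_i(1+\gamma_i^{-1})\Delta_y$. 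The argument in the proof of Theorem~\ref{ec:thm:budgetunique} then yields
\[
 |r-\tilde r|\le\gamma_i\theta_i(1+\gamma_i^{-1})\Delta_y/\zeta_i .
\]
If $r$ leaves the box, say $r>\mu_i^+$, then $\mathcal W_i(\mu_i^+,y^\mu)>X_i$. Applying the same elasticity inequality on $[\tilde r\wedge\mu_i^+,\mu_i^+]$ bounds $|\mu_i^+-\min(\tilde r,\mu_i^+)|$ in the same way.

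\emph{Contraction factor.} Step~2 of the proof of Theorem~\ref{thm:weakexist} gives
\[
 \Delta_y=\|y^\mu-y^{\tilde\mu}\|_\infty\le\|\mu-\tilde\mu\|_w/(1-\Theta_1-\Theta_2).
\]
Multiplying the bound on $|(\mathcal S\mu)_i-(\mathcal S\tilde\mu)_i|$ by $\pi_i/\gamma_i$ and summing over $i$ gives
\[
 \|\mathcal S\mu-\mathcal S\tilde\mu\|_w\le\chi\|\mu-\tilde\mu\|_w .
\]
Condition \eqref{ec:eq:uniquecondition} is equivalent to
\[
 \sum_i\pi_i\theta_i(1+\gamma_i^{-1})/\zeta_i<1-\Theta_1-\Theta_2,
\]
that is, to $\chi<1$.

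\emph{Fixed point.} The equilibrium multipliers $\mu^\ast$ lie in the box by the enclosure in the proof of Theorem~\ref{ec:thm:budgetunique}. At $\mu^\ast$ every budget binds, so $r_i(y^{\mu^\ast})=\mu_i^\ast$ and $\mathcal S\mu^\ast=\mu^\ast$. By Banach's theorem this is the unique fixed point, and the geometric rate $\|\mu^k-\mu^\ast\|_w\le\chi^k\|\mu^0-\mu^\ast\|_w$ follows. The a posteriori bound comes from the triangle inequality:
\[
 \|\mu-\mu^\ast\|_w\le\|\mu-\mathcal S\mu\|_w+\chi\|\mu-\mu^\ast\|_w .
\]

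The main obstacle is the clipping case. There the elasticity lemma must be applied to $r_i$ values outside $[\mu_i^-,\mu_i^+]$, where it is not available, so the argument has to be organized around the box endpoints as described above.
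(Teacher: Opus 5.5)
Your proposal uses the same ingredients as the paper's proof: the enclosure $\|y^\mu-gt-sB\|_\infty\le\bar R$ for $\mu$ in the box, Lemma~\ref{ec:lem:elasticity} with constant $\zeta_i/\gamma_i$, the stability estimate \eqref{ec:eq:loglipschitz}, the Lipschitz bound of Step~2, and Banach's theorem. The self-map property, the contraction factor, the identification of $\mu^\ast$ as a fixed point, and the a posteriori bound are all fine.

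The one step that does not work as written is the clipping case. The reduction ``clipping is $1$-Lipschitz, so it suffices to bound the unclipped roots'' cannot be carried out, because the roots may lie outside $[\mu_i^-,\mu_i^+]$, where $\zeta_i$ is not available. Your patch on $[\tilde r\wedge\mu_i^+,\mu_i^+]$ has the same defect. If $r>\mu_i^+$ and $\tilde r<\mu_i^-$, that interval extends below $\mu_i^-$. The quantity $|\mu_i^+-\min(\tilde r,\mu_i^+)|$ that you propose to bound is then neither controlled by your tools nor needed.

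The repair, which is how the paper argues, is to work only with the clipped values $\varkappa_i=(\mathcal S\mu)_i$ and $\tilde\varkappa_i=(\mathcal S\tilde\mu)_i$ and with one-sided budget inequalities. This removes all case distinctions. Write $y=y^\mu$ and $\tilde y=y^{\tilde\mu}$, and suppose $\varkappa_i<\tilde\varkappa_i$.
\begin{itemize}
\item Since $\varkappa_i<\mu_i^+$, we have $\varkappa_i\ge r_i(y)$. Because $\mathcal W_i$ is decreasing in its first argument, $\mathcal W_i(\varkappa_i,y)\le X_i$.
\item Since $\tilde\varkappa_i>\mu_i^-$, we have $\tilde\varkappa_i\le r_i(\tilde y)$, hence $\mathcal W_i(\tilde\varkappa_i,\tilde y)\ge X_i$.
\end{itemize}
Because $[\varkappa_i,\tilde\varkappa_i]\subset[\mu_i^-,\mu_i^+]$, the elasticity lemma applies on this interval and gives
\[
 \frac{\zeta_i}{\gamma_i}(\tilde\varkappa_i-\varkappa_i)
 \le\log\mathcal W_i(\varkappa_i,y)-\log\mathcal W_i(\tilde\varkappa_i,y)
 \le\log\mathcal W_i(\tilde\varkappa_i,\tilde y)-\log\mathcal W_i(\tilde\varkappa_i,y)
 \le\theta_i(1+\gamma_i^{-1})\|y-\tilde y\|_\infty .
\]
Exchanging the roles of $\mu$ and $\tilde\mu$ handles $\varkappa_i>\tilde\varkappa_i$. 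With this bound in place of your unclipped one, the rest of your argument goes through unchanged.
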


\begin{proof}
For $\mu$ in the box, $\|y^\mu-gt-sB\|_\infty\le\bar R$, so
Lemma~\ref{ec:lem:elasticity} applies on each interval $[\mu_i^-,\mu_i^+]$
at every benchmark used here, and the root $r_i(y^\mu)$ exists by
Theorem~\ref{ec:thm:comparable}. Let $y=y^\mu$ and $\tilde y=y^{\tilde\mu}$
be two such benchmarks, with clipped roots $\varkappa_i=(\mathcal S\mu)_i$ and
$\tilde\varkappa_i=(\mathcal S\tilde\mu)_i$, and suppose
$\varkappa_i<\tilde\varkappa_i$. Then $\varkappa_i<\mu_i^+$, so
$\varkappa_i\ge r_i(y)$ and $\mathcal W_i(\varkappa_i,y)\le X_i$, and
$\tilde\varkappa_i>\mu_i^-$, so $\tilde\varkappa_i\le r_i(\tilde y)$ and
$\mathcal W_i(\tilde\varkappa_i,\tilde y)\ge X_i$. Hence
\[
 \frac{\zeta_i}{\gamma_i}(\tilde\varkappa_i-\varkappa_i)
 \le\log\mathcal W_i(\varkappa_i,y)-\log\mathcal W_i(\tilde\varkappa_i,y)
 \le\log\mathcal W_i(\tilde\varkappa_i,\tilde y)-\log\mathcal W_i(\tilde\varkappa_i,y)
 \le\theta_i(1+\gamma_i^{-1})\|y-\tilde y\|_\infty ,
\]
and the same bound holds with the roles exchanged. Step~2 of the proof of
Theorem~\ref{thm:weakexist} gives
$\|y^\mu-y^{\tilde\mu}\|_\infty\le\|\mu-\tilde\mu\|_w/(1-\Theta_1-\Theta_2)$.
Weighting by $\pi_i/\gamma_i$ and summing shows
$\|\mathcal S\mu-\mathcal S\tilde\mu\|_w\le\chi\|\mu-\tilde\mu\|_w$. The
equilibrium multipliers lie in the box by the enclosure in the proof of
Theorem~\ref{ec:thm:budgetunique}, and their unclipped roots are their own
coordinates, so they form a fixed point. Banach's theorem gives the rest.
\end{proof}

As an example take $r=0.025$ and two types with
\[
\begin{array}{cccccccccc}
 i&\pi_i&\gamma_i&\delta_i&\theta_i&\kappa_{a,i}&\kappa_{p,i}&X_i&h_{i,0-}&(\alpha_i,\beta_i)\\[2pt]
 1&0.4&2&0.04&0.08&0.20&0.10&25&1&(0,0.5)\\
 2&0.6&3&0.05&0.12&0.25&0.15&18&0.8&(0,0.6)
\end{array}
\]
so that $L_1=1.02$ and $L_2=1.03$. Then $\Theta_1+\Theta_2=0.144$,
$(\hat K_1(\gamma_1),\hat K_2(\gamma_2))=(0.0389,0.0436)$,
$(\zeta_1,\zeta_2)=(0.178,0.221)$, and the left side of
\eqref{ec:eq:uniquecondition} is $0.847$, so the equilibrium is unique. The
script \texttt{uniqueness\_constants.py} in the code package evaluates
these constants.

%% file: ec4d_sync.tex
\subsection{Synchronization of Revisions}\label{ec:sync}

On every simulated path of Section~\ref{sec:computeexample} the three
groups eventually revise together. This is a property of the exact
equilibrium. The relevant objects are the standards normalized by the
multipliers rather than the consumption levels, and a normalization of the
utility allows the groups to differ in $(\alpha_i,\beta_i)$ as long as they
share $L$. Put $A=1-\delta\alpha>0$. Discounted integration by parts, as in
\eqref{ec:eq:coercive}, gives for every admissible $h$
\begin{equation}\label{ec:eq:normalized}
 \frac{U(h)-\alpha u(h_{0-})}{A}
 =\E\int_0^\infty e^{-\delta t}u(h_t)\dd t
 -\frac{L-1}{\delta}\,\E\int_{[0,\infty)}e^{-\delta t}\dd u(h)^-_t ,
\end{equation}
because $-\alpha\dd v^+-\beta\dd v^-=-\alpha\dd v-(\alpha+\beta)\dd v^-$ and
$(\alpha+\beta)/A=(L-1)/\delta$. At a fixed expenditure deflator and budget,
the optimal policy therefore depends on $(\alpha,\beta)$ only through $L$.
For the Lagrangian, the price must also be normalized: its optimum is that
for adjustment parameters $(0,(L-1)/\delta)$ with the multiplier $\lambda/A$,
where $\lambda$ is the multiplier of the original problem, and the band of
this normalized problem is $[\za/A,\zb/A]$, which depends on
$(\alpha,\beta)$ only through $L$ by the proof of
Theorem~\ref{ec:thm:allrates}.

\begin{theorem}\label{ec:thm:sync}
Consider finitely many groups with weights $\pi_i>0$, deterministic
initial wealth and inherited standards, common $(\gamma,\theta,\delta,L)$,
adjustment parameters $(\alpha_i,\beta_i)$ that may differ across groups,
and common risk only, with $\kappa>0$. Let $(Y,h)$ be a mean field
equilibrium in the bounded class, with multipliers $\lambda_i$, and put
$A_i=1-\delta\alpha_i$,
\[
 \tilde h_i(t)=\Big(\frac{\lambda_i}{A_i}\Big)^{1/\gamma}h_i(t),\qquad
 \tau=\inf\{t\ge0:\tilde h_1(t)=\cdots=\tilde h_n(t)\}.
\]
Then the spread $\max_i\tilde h_i-\min_i\tilde h_i$ is nonincreasing in time,
$\tau<\infty$ almost surely, $\E e^{\eta\tau}<\infty$ for some $\eta>0$,
and for every $t\ge\tau$ and all $i,j$
\[
 \frac{z_i(t)}{A_i}=\frac{z_j(t)}{A_j},\qquad
 \frac{h_i(t)}{h_j(t)}=\frac{c_i(t)}{c_j(t)}=\Big(\frac{A_i\lambda_j}{A_j\lambda_i}\Big)^{1/\gamma}.
\]
In particular all groups revise at the same dates and in the same direction
after $\tau$. Before $\tau$, an upward revision is possible only for a group
with the lowest normalized standard and a downward revision only for a
group with the highest, and the groups form ordered clusters that merge but
never split. The hypotheses hold for the equilibrium of
Theorem~\ref{thm:weakexist} whenever $\theta<\gamma/(\gamma+1)$ and
$\hat K(\gamma)>0$.
\end{theorem}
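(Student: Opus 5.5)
The plan is to reduce all groups to a single individual problem at a common price. Then I would apply the comparison results of Lemma~\ref{ec:lem:lattice} and Lemma~\ref{ec:lem:forget} pairwise, and obtain finiteness of $\tau$ from a renewal argument on the common Brownian motion.

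\emph{Step 1 (common price).} Because $(\gamma,\theta,\delta)$ and the market are common, group $i$ faces the price $\lambda_iQ$ with $Q_t=e^{\delta t}\xi_tY_t^{\theta}$. By \eqref{ec:eq:normalized} and the remark after it, the optimum of group $i$ is the optimum for adjustment parameters $(0,(L-1)/\delta)$ at the price $(\lambda_i/A_i)Q$. The scaling identity of Step~5 in the proof of Theorem~\ref{ec:thm:comparable} gives
\[
 h^\ast(aQ,h_{0-})=a^{-1/\gamma}h^\ast(Q,a^{1/\gamma}h_{0-}),
\]
so $\tilde h_i=h^\ast(Q,\tilde h_{i,0-})$ in this normalized problem, with $\tilde h_{i,0-}=(\lambda_i/A_i)^{1/\gamma}h_{i,0-}$. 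All groups therefore solve one problem at one price $Q$ and differ only in their references. Since $Y$ lies in the bounded class, $Q$ is comparable to a geometric price, and Theorem~\ref{ec:thm:comparable} applies. The normalized index $Q\tilde h_i^\gamma=z_i/A_i$ lies in a fixed interval $[e^{a'},e^{b'}]$ obtained from \eqref{ec:eq:comparableband}. Once the $\tilde h_i$ coincide, $z_i/A_i=z_j/A_j$ and the stated ratios of $h$ and $c$ follow at once, because $c_i/c_j=h_i/h_j$.

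\emph{Step 2 (ordering, monotone spread, clusters).} Order the references. By Lemma~\ref{ec:lem:lattice} the plans stay ordered. By Lemma~\ref{ec:lem:forget} applied to each pair, a downward revision of a lower plan, or an upward revision of a higher plan, forces the two to coincide from then on. To use this at a time $t>0$ rather than at time zero, I would first prove a restart property: conditionally on $\F_t$, the continuation of each optimum is the optimum of the problem started at $t$ with reference $\tilde h_i(t-)$ and the shifted price. This follows from the sufficiency and necessity in Theorems~\ref{ec:thm:foc} and~\ref{ec:thm:necessary}, since the adjoint is a conditional expectation of the future index.

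Given restart, the conclusions follow in order:
\begin{itemize}
\item Before $\tau$, only a group with the lowest normalized standard can revise upward, and only one with the highest can revise downward. Otherwise it would merge with everything below it (respectively above it).
\item Such a revision raises the minimum or lowers the maximum, so the spread is nonincreasing.
\item Coincidence, once reached, persists, so clusters merge but never split.
\end{itemize}

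\emph{Step 3 ($\tau<\infty$ with exponential moments).} This is the main obstacle. Write $\log Q_t=\nu t-\kappa B_t+\theta(y_t-gt-sB_t)+\theta(gt+sB_t)$. The third term is bounded by $\theta\bar R$, so $\log Q$ equals a Brownian motion with drift $\nu+\theta g$ and volatility $\hat\kappa>0$, up to a bounded error $\theta\bar R$. Fix a unit time interval. With probability at least $p>0$, independently of $\F_k$ by the Markov property of $B$, the drifted Brownian part rises by more than $w=b'-a'+2\theta\bar R+1$.

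On that event, suppose the lowest group did not revise downward on the interval. Its normalized index $Q\tilde h_{\min}^\gamma$ would then exceed $e^{b'}$, because upward revisions only lower it further. Hence the lowest group cuts. By the restart version of Lemma~\ref{ec:lem:forget}, it then merges with every group, so $\tau\le k+1$. This gives $\mathbb P(\tau>k)\le(1-p)^k$ and therefore $\E e^{\eta\tau}<\infty$ for small $\eta>0$.

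The delicate points I expect are the restart argument, and the fact that the index bounds for $Q$ are only up to $e^{\pm\theta\bar R}$; the latter is handled by enlarging $w$ as above. Finally, the last sentence of the theorem follows from Corollary~\ref{cor:weakgrowth}: it gives existence in the bounded class with deterministic data when $\theta<\gamma/(\gamma+1)$ and $\hat K(\gamma)>0$.
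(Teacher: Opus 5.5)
Your proposal is correct and follows essentially the paper's proof. The shared steps are normalization to one problem at the common price via \eqref{ec:eq:normalized} and the scaling identity, ordering by reference comparison, and the pairwise adjoint comparison, which the paper carries out directly at every time and you invoke through Lemma~\ref{ec:lem:forget}. The finiteness step differs only mildly: the paper uses consistency to trap $\nu t-\kappa B_t$ before $\tau$, while you use the bounded class to force a cut of the lowest group after a large rise of $\log Q$; both give geometric tails.

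Two small corrections. First, no restart property is needed: $\Lambda_t$ in Lemma~\ref{ec:lem:forget} accumulates revisions over $[0,t]$, so the lemma already gives coincidence from any time at which the lower plan has cut or the higher plan has raised. Second, in Step~3 an upward revision raises, not lowers, the index $Q\tilde h_{\min}^\gamma$. The correct reason is that without a cut on the interval, $\log(Q\tilde h_{\min}^\gamma)$ increases at least as much as $\log Q$.
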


\begin{proof}
Put $P_t=e^{\delta t}\xi_tY_t^\theta$, which is the same price for every
group once the benchmark is fixed. By \eqref{ec:eq:normalized}, $h_i$
maximizes the normalized Lagrangian with adjustment parameters $(0,\beta')$,
$\beta'=(L-1)/\delta$, at the price $(\lambda_i/A_i)P$. The scaling identity
in Step~5 of the proof of Theorem~\ref{ec:thm:comparable} then shows that
$\tilde h_i$ is the unique optimum of that normalized problem at the common
price $P$ with the reference $\tilde h_{i,0-}=(\lambda_i/A_i)^{1/\gamma}h_{i,0-}$.
Relabel the groups so that $\tilde h_{1,0-}\le\cdots\le\tilde h_{n,0-}$. The
reference comparison of Lemma~\ref{ec:lem:lattice} gives, up to
indistinguishability,
\begin{equation}\label{ec:eq:syncorder}
 \tilde h_1(t)\le\cdots\le\tilde h_n(t),\qquad t\ge0 .
\end{equation}
Write $\tilde z_i=P\tilde h_i^\gamma=z_i/A_i$ for the normalized index and
$\tilde\Psi_i$ for the normalized adjoint, which lies in $[-\beta',0]$ and
satisfies the support conditions of \eqref{ec:eq:foc} with $\alpha$
replaced by zero. For two groups $i<j$,
\begin{equation}\label{ec:eq:syncadjoint}
 \tilde\Psi_i(t)-\tilde\Psi_j(t)
 =\E\Big[\int_t^\infty e^{-\delta(s-t)}P_s\big(\tilde h_j(s)^\gamma-\tilde h_i(s)^\gamma\big)\dd s
 \,\Big|\,\F_t\Big]\ge0,
\end{equation}
and the inequality is strict on $\{\tilde h_i(t)<\tilde h_j(t)\}$. To see
this at all times at once, let $S$ be a bounded stopping time. On
$\{\tilde h_i(S)<\tilde h_j(S)\}$ right continuity and positivity of $P$
make the integral started at $S$ strictly positive, hence so is its
conditional expectation, and the optional section theorem shows that
$\{\tilde h_i<\tilde h_j,\tilde\Psi_i=\tilde\Psi_j\}$ is evanescent. Hence
on $\{\tilde h_i(t)<\tilde h_j(t)\}$ we have $\tilde\Psi_j(t)<0$ and
$\tilde\Psi_i(t)>-\beta'$, so by the support conditions $\tilde h_j$ cannot
increase and $\tilde h_i$ cannot decrease there. The support conditions also
constrain the atoms of $\dd\tilde h_i$, so they apply at jump times. Let $\mathcal D^{ij}_t=\tilde h_j(t)-\tilde h_i(t)\ge0$.
On $\{\mathcal D^{ij}>0\}$ the continuous part of $\dd\mathcal D^{ij}$ is
nonpositive, and on $\{\mathcal D^{ij}=0\}$ the continuous part of the
differential of a c\`adl\`ag path of finite variation has no mass, which
follows from the change of variables formula for $x\mapsto x^+$ with the
left and the right derivative. At a jump time with $\mathcal D^{ij}_t>0$
neither an upward jump of $\tilde h_j$ nor a downward jump of $\tilde h_i$
is allowed, so $\Delta\mathcal D^{ij}_t\le0$, and at a jump time with
$\mathcal D^{ij}_t=0$ we have $\Delta\mathcal D^{ij}_t=-\mathcal D^{ij}_{t-}\le0$.
This includes time zero with $\mathcal D^{ij}_{0-}=\tilde h_{j,0-}-\tilde h_{i,0-}$.
Hence every $\mathcal D^{ij}$ is nonincreasing and stays at zero once it
reaches zero, so the groups form ordered clusters that merge but never
split, only the lowest cluster can move up and only the highest can move
down, and the spread $\mathcal D=\mathcal D^{1n}$ is nonincreasing. Its
first zero is $\tau$, a stopping time attained whenever finite, and
\begin{equation}\label{ec:eq:synctrap}
 \tilde h_{1,0-}\le\tilde h_i(t)\le\tilde h_{n,0-}\qquad(0\le t<\tau).
\end{equation}

It remains to prove that $\tau$ is finite. Put
$\bar\mu=\gamma^{-1}\sum_i\pi_i\log(\lambda_i/A_i)$,
$\underline y=(\log\tilde h_{1,0-}-\bar\mu)/(1-\theta)$,
$\overline y=(\log\tilde h_{n,0-}-\bar\mu)/(1-\theta)$, and
$\Xi_t=\nu t-\kappa B_t$ with $\nu=\delta-r-\kappa^2/2$. Consistency gives
$(1-\theta)\log Y_t=\sum_i\pi_i\log\tilde h_i(t)-\bar\mu$, so
$\underline y\le\log Y_t\le\overline y$ for $t<\tau$ by
\eqref{ec:eq:synctrap}. Since
$\log\tilde z_i(t)=\Xi_t+\theta\log Y_t+\gamma\log\tilde h_i(t)$ and the
normalized indices lie between deterministic constants
$0<\underline z\le\overline z$ by \eqref{ec:eq:comparableband}, we obtain
\[
 \underline\Xi\le\Xi_t\le\overline\Xi\quad(t<\tau),\qquad
 \underline\Xi=\log\underline z-\theta\overline y-\gamma\log\tilde h_{n,0-},\qquad
 \overline\Xi=\log\overline z-\theta\underline y-\gamma\log\tilde h_{1,0-}.
\]
Let $\underline\Xi'=\min\{\underline\Xi,0\}-1$,
$\overline\Xi'=\max\{\overline\Xi,0\}+1$, and
$\tau^\ast=\inf\{t\ge0:\Xi_t\notin(\underline\Xi',\overline\Xi')\}$. If
$\tau>\tau^\ast$, the bounds would hold at $\tau^\ast$, which is impossible,
so $\tau\le\tau^\ast$. Fix $t_0>0$ and put
$p^\ast=\mathbb P(|\nu t_0-\kappa B_{t_0}|>\overline\Xi'-\underline\Xi')>0$.
An increment of this size over a block of length $t_0$ forces an exit
during the block from any point of the interval, so independence of the
increments gives $\mathbb P(\tau^\ast>jt_0)\le(1-p^\ast)^j$ for every $j$, and
$\E e^{\eta\tau^\ast}<\infty$ whenever $e^{\eta t_0}(1-p^\ast)<1$. The ratio
identities follow from $\mathcal D=0$ after $\tau$ and from
\eqref{ec:eq:syncorder}, and $c_i/c_j=h_i/h_j$ because the groups share
$Y^\theta$. Finally, Theorem~\ref{thm:weakexist} together with
Theorem~\ref{ec:thm:comparable} supplies deterministic multipliers, finite
Lagrangians, the support conditions, and deterministic index bounds.
\end{proof}

The theorem allows any finite distribution of initial wealth and inherited
standards, does not need unique multipliers, and transfers to the finite
populations of Theorem~\ref{thm:exact}. Groups with different $L$ cannot be
included, even without interaction.

\begin{proposition}\label{ec:prop:nonsync}
Let $\theta=0$, let two households share $(r,\kappa,\delta,\gamma)$ with
$K(\gamma)>0$, and let their coefficients of loss aversion differ, so that
their bands $[a_1,b_1]$ and $[a_2,b_2]$ have different widths. Then almost
surely there is no finite time after which the closed supports of the
upward regulators of the two optimal policies coincide and those of the
downward regulators coincide as well.
\end{proposition}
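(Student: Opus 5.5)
With $\theta=0$ the benchmark drops out of both problems, so each household faces the same geometric price $e^{\delta t}\xi_t$ and, by Theorem~\ref{thm:br}, its optimum is the reflected policy $h^i_t=h^i_0e^{\ell^i_t/\gamma}$. Its log index
\[
 x^i_t=x^i_0+\nu t-\kappa B_t+\la{}^{,i}_t-\lb{}^{,i}_t
\]
is reflected in $[a_i,b_i]$, and both indices are driven by the same Brownian path. The plan is to work with the difference $D=x^1-x^2$. This process is continuous, because the regulators are continuous after time zero. It changes only on the union of the supports of the four regulators, since the common increments $\nu\,\dd t-\kappa\,\dd B_t$ cancel.

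First I would show the negation leads to a contradiction on a fixed event. Suppose that for some rational $T$ the closed supports of $\la{}^{,1}$ and $\la{}^{,2}$ coincide on $(T,\infty)$, and likewise those of $\lb{}^{,1}$ and $\lb{}^{,2}$. It suffices to show that each such event is null, since a countable union over rational $T$ then covers every finite time. On this event I would collect three facts.
\begin{enumerate}[label=(\roman*),nosep,leftmargin=2em]
\item Every point in the support of a lower regulator has the corresponding index at its lower endpoint, and similarly for the upper regulators. This follows by continuity, because regulators are flat where the index is interior.
\item The reflected Brownian motion with drift on a compact interval is recurrent. Hence almost surely each regulator has support points after any time.
\item Almost surely every hitting time of an endpoint lies in the support of the corresponding regulator. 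By the strong Markov property, the regulator increases immediately after the hitting time, for instance from the Skorokhod formula and the fact that Brownian motion goes below its starting level immediately.
\end{enumerate}

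Next I would pick a common support point $s>T$ of the two lower regulators, which exists by (ii). At $s$ both indices sit at their lower endpoints by (i), so $D_s=a_1-a_2$. Let $u$ be the first time after $s$ at which either upper regulator has a support point; it is finite by (ii). On $[s,u)$, $D$ can change only on the common lower support, where by (i) it equals $a_1-a_2$. Since $D$ is continuous, $D\equiv a_1-a_2$ on $[s,u]$. At $u$, say $x^1_u=b_1$ (the other case is symmetric). Then
\[
 x^2_u=b_1-a_1+a_2\ne b_2,
\]
because the widths differ. So $x^2$ is interior at $u$, and by continuity it is interior on a neighbourhood of $u$. Hence $u$ is not in the closed support of $\lb{}^{,2}$. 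But $u$ is in the support of $\lb{}^{,1}$, contradicting the assumed coincidence of the upper supports.

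The main obstacle is the rigorous bookkeeping in step (iii) and in the claim that $D$ is constant on $[s,u]$. For (iii) I would use the explicit Skorokhod map. For the constancy of $D$ I would write $D_t-D_s$ as a Stieltjes integral of the regulators over the common support and use that this support is contained in the closed set $\{D=a_1-a_2\}$. Together with continuity of $D$, this forces $D$ to be constant between the support points.
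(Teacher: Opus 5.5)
Your proof is correct, and its skeleton matches the paper's. Both work with the difference $D$ of the two log indices and use that the common increments cancel. Both use coincidence of the supports to pin $D$ at $a_1-a_2$ on the common lower support and at $b_1-b_2$ on the common upper support. Both use that both endpoints are regulated at arbitrarily late times.

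The routes differ in how constancy of $D$ is obtained. The paper argues globally. After the given time, $\dd D$ is carried by the union of the two level sets. The variation measure of a continuous path of finite variation charges no level set, so $D$ is constant on the whole tail and must equal both $a_1-a_2$ and $b_1-b_2$.

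You localize to $[s,u]$, where only the lower regulators act and only one level set is involved, so an elementary last-exit argument suffices. For $t\in(s,u)$, let $\sigma$ be the last point of the common lower support in $[s,t]$. Then $\dd D$ has no mass on $(\sigma,t]$, so $D_t=D_\sigma=a_1-a_2$. This avoids the level-set fact, which is a Banach indicatrix type statement. The price is the bookkeeping for $u$: you need $u>s$ and that $u$ is itself an upper support point. Both follow from closedness of the supports and $x^1_s=a_1<b_1$.

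For late-time regulation, the paper uses large Brownian increments on independent blocks, which force the regulators to increase directly. Your route through recurrence and (iii) also works, provided (iii) is applied to stopping times such as first hitting times after rational dates. At the left end of an excursion away from an endpoint, the regulator is flat immediately afterwards, so (iii) does not hold at every time the index sits at an endpoint.

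Finally, the reduction to rational $T$ is harmless but unnecessary. The contradiction is derived pathwise on the full-measure event where (i) and (ii) hold.
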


\begin{proof}
With $\theta=0$ the two problems do not involve the benchmark, and by
Theorem~\ref{thm:br} the optimal log indices are the reflected diffusions
$x_i(t)=x_i(0)+\nu t-\kappa B_t+l^{\alpha,i}_t-l^{\beta,i}_t\in[a_i,b_i]$,
where the regulators are continuous after the possible initial revision,
$\dd l^{\alpha,i}$ is carried by $\{x_i=a_i\}$ and $\dd l^{\beta,i}$ by
$\{x_i=b_i\}$. Suppose the upward supports coincide after some time and
the downward supports do as well. The difference $\mathcal D^x_t=x_1(t)-x_2(t)$ is a
continuous path of finite variation whose differential is carried by
$\{\mathcal D^x=a_1-a_2\}\cup\{\mathcal D^x=b_1-b_2\}$, because simultaneous upward
revisions require both lower endpoints and simultaneous downward revisions
both upper endpoints. The differential of a continuous path of finite
variation gives no mass to level sets, so $\dd\mathcal D^x=0$ after that
time.
Each reflected diffusion regulates both endpoints at arbitrarily late
times, which follows from independent Brownian increments with large
excursions in either direction on successive blocks. The constant
$\mathcal D^x$ would then equal both $a_1-a_2$ and $b_1-b_2$, contradicting the
unequal widths.
\end{proof}

In the example $\alpha=0$, so $A_i=1$, and the event that the code detects,
a spread of the log indices below $10^{-9}$, is the event of
Theorem~\ref{ec:thm:sync}. The theorem does not prove convergence of the
simulated synchronization times.

%% file: supplement_body.tex
\section{Further Results}\label{sup:intro}

This appendix collects results that build on Appendix~\ref{ec:further}: sharper constants
for the uniqueness condition, equilibria with given directions of the
initial revisions, the limit of vanishing loss aversion, populations with
one benchmark sensitive type, infinitely many types, arithmetic
benchmarks, certificates for approximate best responses, and stability
bounds for the lattice scheme of Section~\ref{ec:lattice}. The notation of
Sections~\ref{sec:exist} and~\ref{ec:proofexist} and of Appendix~\ref{ec:further} is
retained, and so are the bounded class and the standing hypotheses of
Appendix~\ref{ec:further}.

\input{sup_twotail}
\input{ec4b_directions}
\input{ec4c_vanishing}
\input{ec4e_oneconcern}
\input{ec4f_infinite}
\input{ec4g_arithmetic}
\input{ec3_additions}

%% file: sup_twotail.tex
\subsection{Sharper Constants for Uniqueness}\label{ec:sharper}

Using both directions of forgetting and the asymmetric enclosure of the
benchmark improves the constants of Theorem~\ref{ec:thm:budgetunique}.

\begin{proposition}\label{ec:prop:twotail}
Put $\underline\vartheta_i=\theta_i(k_--R)$,
$\overline\vartheta_i=\theta_i(k_++R)$,
$\bar x^+_i=b_i-\mu_i^--\underline\vartheta_i-\gamma_i\log h_{i,0-}$,
$\bar x^-_i=a_i-\mu_i^+-\overline\vartheta_i-\gamma_i\log h_{i,0-}$, let
$\mathcal P_i$ be the function \eqref{ec:eq:tailfunction} of type $i$, and
\[
 \zeta_i^\sharp=\exp\Big(-\frac{b_i-a_i+\mu_i^+-\mu_i^-}{\gamma_i}
 -(\overline\vartheta_i-\underline\vartheta_i)(1+\gamma_i^{-1})\Big)
 \big[\mathcal P_i(\bar x^+_i)+1-\mathcal P_i(\bar x^-_i)\big].
\]
Then $\zeta_i^\sharp\ge\zeta_i$, and Theorem~\ref{ec:thm:budgetunique} and
Corollary~\ref{ec:cor:outer} remain true with $\zeta_i$ replaced by
$\zeta_i^\sharp$.
\end{proposition}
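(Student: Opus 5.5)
The plan is to rerun the proof of Lemma~\ref{ec:lem:elasticity} for each type with two changes. First, replace the symmetric price band $e^{\pm\theta M}$ by the asymmetric enclosure of the benchmark. Second, use both directions of forgetting in Lemma~\ref{ec:lem:forget}, not only the downward one. Then I check $\zeta_i^\sharp\ge\zeta_i$ and observe that the proofs of Theorem~\ref{ec:thm:budgetunique} and Corollary~\ref{ec:cor:outer} use $\zeta_i$ only through the elasticity bound \eqref{ec:eq:elasticity}. With the new constant in that bound, both proofs go through verbatim.

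\emph{Step 1: asymmetric price band.} By the enclosure in the proof of Theorem~\ref{ec:thm:budgetunique}, every benchmark in question satisfies
\[
 k_--R\le y_t-gt-sB_t\le k_++R .
\]
Hence $Q_i[y]$ lies between $e^{\underline\vartheta_i}$ and $e^{\overline\vartheta_i}$ times the artificial geometric price $P^0$. In the envelope step, the upper envelope is the reflected policy at $e^{\mu^-+\underline\vartheta}P^0$, and the lower envelope is the reflected policy at $e^{\mu^++\overline\vartheta}P^0$. This gives
\[
 \mathcal W_Q(\mu)\le\frac{e^{(b-\mu^-)/\gamma-\underline\vartheta/\gamma+\overline\vartheta}}{K}
\]
and the analogous lower bound for the restricted integral, with $e^{a/\gamma}$, $\mu^+$, $\overline\vartheta$ and $\underline\vartheta$ in place of $e^{b/\gamma}$, $\mu^-$, $\underline\vartheta$ and $\overline\vartheta$. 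The ratio of the two prefactors is
\[
 \exp\Big(-\frac{b-a+\mu^+-\mu^-}{\gamma}-(\overline\vartheta-\underline\vartheta)(1+\gamma^{-1})\Big),
\]
which is the exponential factor in $\zeta_i^\sharp$.

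\emph{Step 2: two forgetting events.} Set $h^1=h^\ast(e^\mu Q,h_{0-})$ and $h^2=h^\ast(e^\mu Q,e^{d/\gamma}h_{0-})$, which face the same price. By the scaling identity, $h^\ast(e^{\mu+d}Q,h_{0-})=e^{-d/\gamma}h^2$, so I need the event where $h^1$ and $h^2$ coincide from $t$ on.
\begin{itemize}
\item Upper event $E^+_t=\{\log P^0_t>\bar x^+\}$. The upper envelope lies below $h_{0-}$ there, so $h^1$ has cut. This is the argument of Lemma~\ref{ec:lem:elasticity} with $\bar x^+$ in place of $\bar x$.
\item Lower event $E^-_t=\{\log P^0_t<\bar x^-\}$. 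The lower envelope $(e^{a}/(e^{\mu^++\overline\vartheta}P^0_t))^{1/\gamma}$ exceeds $h_{0-}$ there, so $h^2$ has raised its standard above its reference. The latter holds provided $d$ is small enough that the excess also beats $e^{d/\gamma}$, and Lemma~\ref{ec:lem:forget} again gives $h^1=h^2$ from $t$ on.
\end{itemize}
Since $a-\mu^+-\overline\vartheta<b-\mu^--\underline\vartheta$, we have $\bar x^-<\bar x^+$, so the two events are disjoint. After the tilt by $(P^0)^\upsilon$ their total mass is $\mathcal P(\bar x^+)+1-\mathcal P(\bar x^-)$.

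The one delicate point, which I expect to be the main obstacle, is the factor $e^{d/\gamma}$ on the lower side. I would handle it as in the original proof by subdividing $[\mu,\tilde\mu]$ into $N$ pieces. On each piece I use the slightly smaller event $\{\log P^0_t<\bar x^--d/N\}$, and let $N\to\infty$. Continuity of $\mathcal P$ then restores $\bar x^-$ in the limit. The resulting per-piece inequality
\[
 \mathcal W_Q(\mu+d)\le\mathcal W_Q(\mu)\big[1-\zeta^\sharp_N(1-e^{-d/\gamma})\big],
\]
with $\zeta^\sharp_N\to\zeta^\sharp$, yields \eqref{ec:eq:elasticity} with $\zeta^\sharp$.

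\emph{Step 3: comparison and transfer.} Since $M=\bar R\ge R+\max\{|k_-|,|k_+|\}$, we have $-\theta\bar R\le\underline\vartheta\le\overline\vartheta\le\theta\bar R$. This gives three facts:
\begin{itemize}
\item $\overline\vartheta-\underline\vartheta\le2\theta\bar R$, so the exponential factor of $\zeta^\sharp$ is at least that of $\zeta$;
\item $\bar x^+\le\bar x$, so $\mathcal P(\bar x^+)\ge\mathcal P(\bar x)$ because $\mathcal P$ is decreasing;
\item $1-\mathcal P(\bar x^-)\ge0$.
\end{itemize}
Together these give $\zeta_i^\sharp\ge\zeta_i$. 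Finally, substituting the new elasticity bound into the two proofs gives Theorem~\ref{ec:thm:budgetunique} and Corollary~\ref{ec:cor:outer} with $\zeta_i^\sharp$.
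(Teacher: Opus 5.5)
Your proposal is correct and follows the paper's proof: the asymmetric enclosure $k_--R\le y_t-gt-sB_t\le k_++R$, forgetting on the two disjoint events, the tilted Gaussian computation giving $\mathcal P(\bar x^+)+1-\mathcal P(\bar x^-)$, and the same three comparisons for $\zeta_i^\sharp\ge\zeta_i$. The one difference is the factor $e^{d/\gamma}$ on the lower event. The paper gets it for free by taking the lower envelope at the current multiplier $\mu$, where it already exceeds $e^{(\mu^+-\mu)/\gamma}h_{0-}\ge e^{d/\gamma}h_{0-}$ because $\mu+d\le\mu^+$, so your shrunken events $\{\log P^0_t<\bar x^--d/N\}$ with the continuity limit are valid but unnecessary.
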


\begin{proof}
Every frozen benchmark in the proof of Theorem~\ref{ec:thm:budgetunique}
satisfies $k_--R\le y_t-gt-sB_t\le k_++R$, so $Q_i[y]$ lies between
$e^{\underline\vartheta_i}$ and $e^{\overline\vartheta_i}$ times the artificial
geometric price. Suppress the type subscript and fix
$\mu^-\le\mu<\mu+d\le\mu^+$. The upper envelope falls below the reference $h_{0-}$ on
$\{\log P^0_t>\bar x^+\}$, exactly as in Lemma~\ref{ec:lem:elasticity}. On
$\{\log P^0_t<\bar x^-\}$ the lower envelope at the price $e^{\mu}Q$,
$(z_-/(e^{\mu+\overline\vartheta}P^0_t))^{1/\gamma}$, exceeds
$e^{(\mu^+-\mu)/\gamma}h_{0-}\ge e^{d/\gamma}h_{0-}$, so the optimum with the
larger reference $e^{d/\gamma}h_{0-}$ has risen above its reference, and
Lemma~\ref{ec:lem:forget} again gives equality of the two optima from that
time on. The two events are disjoint because $\bar x^-<\bar x^+$. The
coefficient of the lower expenditure envelope is
$z_-^{1/\gamma}e^{\underline\vartheta-(\mu^++\overline\vartheta)/\gamma}$ and
that of the upper one is
$z_+^{1/\gamma}e^{\overline\vartheta-(\mu^-+\underline\vartheta)/\gamma}$, whose
ratio is the exponential prefactor of $\zeta^\sharp$. The tilted Gaussian
computation now produces both tails, and the rest of the proof is
unchanged. Finally $\underline\vartheta_i\ge-\theta_i\bar R$ and
$\overline\vartheta_i\le\theta_i\bar R$, so $\bar x^+_i\le\bar x_i$, the
prefactor is no smaller than before, and the second tail is nonnegative,
which gives $\zeta_i^\sharp\ge\zeta_i$.
\end{proof}

In the example of Section~\ref{ec:budgetunique} the sharper constants are
$(\zeta_1^\sharp,\zeta_2^\sharp)=(0.261,0.282)$, and the left side of
\eqref{ec:eq:uniquecondition} falls from $0.847$ to $0.668$. Raising the
concerns to $\theta=(0.10,0.15)$ gives $0.882$ with the sharper constants,
whereas the original constants give $1.139$ and no longer certify
uniqueness.

%% file: ec4b_directions.tex
\subsection{Initial Revision Directions}\label{ec:directions}

Without a smallness condition we can still count equilibria by the
directions of their initial revisions. The tool is a problem in which the
initial level of the relative standard is free and charged linearly. Unlike
the problem with an inherited reference, this problem is homogeneous in the
price, so the budget normalization becomes explicit.

For one type and a sign $\mathfrak s\in\{+,-\}$ put $\iota^+=\alpha$ and
$\iota^-=-\beta$. For a positive adapted c\`adl\`ag process $h$ of locally
finite variation with $v=u(h)$ and the integrability of
Section~\ref{sec:setup}, but with a freely chosen initial value $h_0>0$,
define
\begin{equation}\label{ec:eq:relaxedobjective}
 U^\mathfrak s(h)=\E\Big[\int_0^\infty e^{-\delta t}v_t\dd t
 -\int_{(0,\infty)}e^{-\delta t}\big(\alpha\dd v^+_t+\beta\dd v^-_t\big)-\iota^\mathfrak s v_0\Big],
 \qquad \J^\mathfrak s_P(h)=U^\mathfrak s(h)-\E\int_0^\infty e^{-\delta t}P_th_t\dd t .
\end{equation}

\begin{lemma}\label{ec:lem:relaxed}
Let $P$ satisfy the hypotheses of Theorem~\ref{ec:thm:comparable}. Then
$\J^\mathfrak s_P$ has a unique finite maximizer $h^\mathfrak s(P)$. It satisfies the
obstacle and support conditions of \eqref{ec:eq:foc} on $(0,\infty)$ and
$\Psi_0=\iota^\mathfrak s$, its index lies in the envelopes of
Theorem~\ref{ec:thm:comparable}, and for $a>0$
\begin{equation}\label{ec:eq:relaxedscaling}
 h^\mathfrak s(aP)=a^{-1/\gamma}h^\mathfrak s(P).
\end{equation}
It is decreasing in $P$ and satisfies \eqref{ec:eq:loglipschitz}. If the
optimum $h^\ast(P,h_{0-})$ of the original problem revises strictly at time
zero in the direction $\mathfrak s$, then $h^\ast(P,h_{0-})=h^\mathfrak s(P)$.
Conversely, if $h^\mathfrak s(P)_0\ge h_{0-}$ for $\mathfrak s=+$, or
$h^\mathfrak s(P)_0\le h_{0-}$ for $\mathfrak s=-$, then $h^\mathfrak s(P)=h^\ast(P,h_{0-})$.
\end{lemma}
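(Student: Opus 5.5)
The plan is to reduce the relaxed problem to the original problem by treating the free initial level as an extra initial atom, and then reuse Theorem~\ref{ec:thm:comparable} and Lemma~\ref{ec:lem:lattice}.

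\emph{Existence and uniqueness.} For any candidate $h$ with free $h_0$ I would write $v_0=v_{0-}+(v_0-v_{0-})$ for an auxiliary reference $v_{0-}$. For $\mathfrak s=+$ and $v_0\ge v_{0-}$, the charge $-\alpha v_0$ equals the original initial cost $-\alpha(v_0-v_{0-})$ minus the constant $\alpha v_{0-}$. So on paths that revise upward at zero, $\J^+_P$ coincides up to a constant with the original Lagrangian at reference $h_{0-}$. Without the sign restriction, the linear charge $\iota^{\mathfrak s}v_0$ is a concave (linear) function of $v_0$, so $\J^{\mathfrak s}_P$ is still strictly concave in utility coordinates.

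I would then repeat Steps 2 and 3 of the proof of Theorem~\ref{ec:thm:comparable}. The clipping step uses the lattice inequality \eqref{ec:eq:lattice}, which is where I expect the main obstacle. The initial term $\iota^{\mathfrak s}v_0$ is linear, and $(v\vee w)_0+(v\wedge w)_0=v_0+w_0$, so it cancels exactly as running utility does. The envelopes are the relaxed optima at the geometric prices $\underline mP^0$ and $\overline mP^0$. There the solution is explicit: the reflected policy started at $\hat z_\alpha$ for $\mathfrak s=+$ or $\hat z_\beta$ for $\mathfrak s=-$, because $H=\alpha$ (resp.\ $-\beta$) exactly there, which gives $\Psi_0=\iota^{\mathfrak s}$. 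Komlós compactness and Fatou then give a maximizer, and strict concavity gives uniqueness.

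\emph{Optimality conditions.} The obstacle and support conditions on $(0,\infty)$ follow from the perturbation argument of Theorem~\ref{ec:thm:necessary}, since the maximizer is locally interior by the envelopes. The condition $\Psi_0=\iota^{\mathfrak s}$ comes from perturbing $v$ by $\pm\varepsilon$ on $[0,\vartheta_n)$, the same computation that yields the obstacle condition but now two-sided because $v_0$ is free. For sufficiency, I would modify Theorem~\ref{ec:thm:foc}: the Fubini identity now produces $\E\Psi_0 A_0$ with $A_0=v_0-v^\ast_0$, and this term cancels against $-\iota^{\mathfrak s}A_0$.

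\emph{Scaling and monotonicity.} The substitution $h=a^{-1/\gamma}k$ multiplies every term by $a^{(\gamma-1)/\gamma}$, or shifts by a constant when $\gamma=1$, and there is no reference to move. This gives \eqref{ec:eq:relaxedscaling} directly. Monotonicity in $P$ follows from \eqref{ec:eq:lattice} together with uniqueness, as in Lemma~\ref{ec:lem:lattice}. Combining monotonicity with scaling then gives \eqref{ec:eq:loglipschitz} exactly as in Step~5.

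\emph{Identification.} Suppose $h^\ast(P,h_{0-})$ revises strictly upward at zero. The support condition gives $\Psi^\ast_0=\alpha$, and together with the conditions on $(0,\infty)$ this is exactly the sufficient condition for $\J^+_P$ just derived, so $h^\ast=h^+(P)$ by uniqueness. The downward case is symmetric.

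Conversely, suppose $h^+(P)_0\ge h_{0-}$. Then the initial jump of $h^+(P)$ from $h_{0-}$ is upward or zero, and $\Psi_0=\alpha$ meets the initial support condition of \eqref{ec:eq:foc}. The remaining conditions hold on $(0,\infty)$, so Theorem~\ref{ec:thm:foc} shows that $h^+(P)$ is the original optimum. The case $\mathfrak s=-$ is handled the same way.
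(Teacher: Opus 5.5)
Your argument is correct, but you reach existence and the optimality conditions by a longer route than the paper. You rebuild the relaxed problem from scratch:
\begin{itemize}
\item clipping between relaxed envelopes through the lattice inequality;
\item Koml\'os compactness and Fatou for existence;
\item the perturbation argument of Theorem~\ref{ec:thm:necessary}, made two-sided at time zero, for $\Psi_0=\iota^{\mathfrak s}$ and for the conditions on $(0,\infty)$.
\end{itemize}

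The paper inherits all of this from the original problem. The envelopes of Theorem~\ref{ec:thm:comparable} confine the initial level of every original optimum at $P$, whatever its reference, to a deterministic interval $[l_\ast,l^\ast]$. For a reference below $l_\ast$ (or above $l^\ast$), the original optimum therefore revises strictly in the direction $\mathfrak s$, and its initial support condition gives $\Psi_0=\iota^{\mathfrak s}$. It already satisfies \eqref{ec:eq:foc} and the index bounds. The one verification step that you also need, in which $\E[\Psi_0A_0]$ cancels the linear initial charge, then shows that it maximizes $\J^{\mathfrak s}_P$.

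This is your opening observation carried to its conclusion: on upward-revising paths, $\J^{+}_P$ agrees with the original Lagrangian up to a constant. Using it spares the necessity argument and the compactness step. In that step, a free initial value requires checking that the initial charge stays lower semicontinuous when variation concentrates near $t=0$. It does, because $\iota^{\mathfrak s}\in[-\beta,\alpha]$, but your proposal passes over this point.

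Your direct route does not lean on the original problem. It extends to any initial slope $\iota\in[-\beta,\alpha]$ once the envelopes are started at the interior index where $H=\iota$. The paper reaches that case later only through continuity of the initial adjoint in the reference. Your scaling, monotonicity, Lipschitz, and identification steps coincide with the paper's.
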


\begin{proof}
The geometric envelopes of Theorem~\ref{ec:thm:comparable} give
deterministic numbers $0<l_\ast<l^\ast<\infty$ such that every original
optimum at $P$, whatever its reference, has $h_0\in[l_\ast,l^\ast]$. For
$\mathfrak s=+$ choose a reference below $l_\ast$, and for $\mathfrak s=-$ one above
$l^\ast$. The original
optimum $h$ for this reference revises strictly at time zero in the
direction $\mathfrak s$, so its initial support condition gives $\Psi_0=\iota^\mathfrak s$,
and it satisfies \eqref{ec:eq:foc} on $(0,\infty)$. We show that these
conditions make $h$ optimal for $\J^\mathfrak s_P$. Let $k$ be a competitor with
finite expenditure and put $\Delta v=u(k)-u(h)$. Concavity of
$v\mapsto v-P\Phi_1(v)$ gives
\[
 \J^\mathfrak s_P(k)-\J^\mathfrak s_P(h)
 \le\E\int_0^\infty e^{-\delta t}(1-P_th_t^\gamma)\Delta v_t\dd t
 -\big(\mathcal C_{>0}(u(k))-\mathcal C_{>0}(u(h))\big)-\iota^\mathfrak s\E\Delta v_0 ,
\]
where $\mathcal C_{>0}$ is the discounted adjustment cost on $(0,\infty)$.
The Fubini identity in the proof of Theorem~\ref{ec:thm:foc}, applied
conditionally on $\F_0$, writes the first term as
$\E[\Psi_0\Delta v_0]+\E\int_{(0,\infty)}e^{-\delta t}\Psi_t\dd\Delta v_t$, and
the obstacle and support conditions on $(0,\infty)$ bound the integral by
$\mathcal C_{>0}(u(k))-\mathcal C_{>0}(u(h))$. The initial terms cancel
because $\Psi_0=\iota^\mathfrak s$. Competitors with infinite expenditure cannot
improve the objective, and strict concavity in utility coordinates gives
uniqueness. The same verification shows that any original optimum with a
strict initial revision in the direction $\mathfrak s$ maximizes $\J^\mathfrak s_P$,
hence equals $h^\mathfrak s(P)$.

The lattice inequality of Lemma~\ref{ec:lem:lattice} holds for the
objectives $\J^\mathfrak s_P$, because the initial charge is linear and its
values at a maximum and a minimum of two paths add up to the original
values, so $h^\mathfrak s$ is decreasing in $P$. The substitution
$h=a^{-1/\gamma}k$ multiplies $\J^\mathfrak s_{aP}$ by $a^{(\gamma-1)/\gamma}$ for
$\gamma\ne1$ and shifts it by $-(\delta^{-1}-\iota^\mathfrak s)\log a$ for
$\gamma=1$, and the admissible set has no reference, so uniqueness gives
\eqref{ec:eq:relaxedscaling}. Price ordering and \eqref{ec:eq:relaxedscaling}
give \eqref{ec:eq:loglipschitz} as in Step~5 of the proof of
Theorem~\ref{ec:thm:comparable}. Finally, if $h^\mathfrak s(P)$ lies on the side
of $h_{0-}$ prescribed by $\mathfrak s$, its initial adjoint value is the initial
support condition of the original problem, or no initial support condition
is needed when the levels are equal, and Theorem~\ref{ec:thm:foc} identifies
it with $h^\ast(P,h_{0-})$.
\end{proof}

A type with $L_i=1$ has $\alpha_i=-\beta_i$, so $\iota^+_i=\iota^-_i=\alpha_i$,
and by \eqref{ec:eq:Ufrictionless} its objective \eqref{ec:eq:relaxedobjective}
is $(1-\delta_i\alpha_i)\E\int_0^\infty e^{-\delta_it}u_i(h_t)\dd t$ on
$\mathcal A_1$, with the unique response $((1-\delta_i\alpha_i)/P)^{1/\gamma_i}$
of Proposition~\ref{ec:prop:frictionless}. Such a type imposes no direction
and is included below with $a_i=b_i$.

\begin{theorem}\label{ec:thm:directions}
Assume the hypotheses of Theorem~\ref{thm:weakexist}, where types with
$L_i=1$ are allowed as in Proposition~\ref{ec:prop:frictionless}, and fix
$\mathfrak s\in\{+,-\}^n$. Replace the objective of type $i$ by $U_i^{\mathfrak s_i}$.
Then the relaxed population has exactly one equilibrium in the bounded
class for every positive wealth vector, obtained as follows. At
log multipliers $\mu=0$ let $y^{\mathfrak s,0}$ be the fixed point of the benchmark
map, let $h_i^{\mathfrak s,0}$ be the responses, and let
$\mathcal E_i^\mathfrak s(0)$ be their expenditures. With $\mathbf A$ as in
the proof of Theorem~\ref{thm:weakexist} put
\begin{equation}\label{ec:eq:signnormalization}
 \mu^\mathfrak s=\mathbf A^{-1}\big(\log\mathcal E^\mathfrak s(0)-\log X\big),\qquad
 y^\mathfrak s=y^{\mathfrak s,0}+k(\mu^\mathfrak s)-k(0),\qquad
 h_i^\mathfrak s=\exp\Big(-\frac{\mu_i^\mathfrak s+\theta_i(k(\mu^\mathfrak s)-k(0))}{\gamma_i}\Big)h_i^{\mathfrak s,0}.
\end{equation}
This relaxed equilibrium is an equilibrium of the original model whenever
$h^\mathfrak s_{i,0}\ge h_{i,0-}$ for every $i$ with $\mathfrak s_i=+$ and
$h^\mathfrak s_{i,0}\le h_{i,0-}$ for every $i$ with $\mathfrak s_i=-$, where types
with $L_i=1$ are exempt. Conversely, every equilibrium of the original model
in the bounded class in which every type with $L_i>1$ revises strictly at
time zero is one of these candidates. For a given vector of strict
directions there is at most one such equilibrium, including its
multipliers, and the number of equilibria in which all frictional types
revise strictly at time zero is at most $2^{n_f}$, where $n_f$ is the
number of types with $L_i>1$.
\end{theorem}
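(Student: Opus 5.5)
The plan is to run the proof of Theorem~\ref{thm:weakexist} with the relaxed objectives $U_i^{\mathfrak s_i}$. Their homogeneity \eqref{ec:eq:relaxedscaling} lets the multipliers be solved for explicitly, so Brouwer is not needed.

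\emph{Fixed multipliers.} For fixed $\mu$ I would define the benchmark map $\mathcal T^{\mathfrak s}_\mu$ as in Step~1 of that proof, with $h_i[y]=h^{\mathfrak s_i}(\lambda_ie^{\delta_it}\xi^i_te^{\theta_iy},\cdot)$. Types with $L_i=1$ use the response of Proposition~\ref{ec:prop:frictionless}. Lemma~\ref{ec:lem:relaxed} supplies existence, the index envelopes and the log-Lipschitz bound \eqref{ec:eq:loglipschitz}. Hence \eqref{ec:eq:contraction} holds with factor $q$, the ball of radius $R$ about $k(\mu)+gt+sB_t$ is invariant, and there is a unique fixed point $y^{\mathfrak s,\mu}$ in the bounded class.

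\emph{Homogeneity in $\mu$.} The key step is the exact scaling of $y^{\mathfrak s,\mu}$. By \eqref{ec:eq:relaxedscaling}, a constant shift $y\mapsto y+c$ together with a change $\mu\mapsto\mu'$ multiplies the price of type $i$ by $e^{\mu'_i-\mu_i+\theta_ic}$. It therefore shifts $\log h_i$ by the constant $-(\mu'_i-\mu_i+\theta_ic)/\gamma_i$. Averaging, and using $D=1-\Theta_1+\Theta_2$, I would check that $y^{\mathfrak s,0}+k(\mu)-k(0)$ is a fixed point of $\mathcal T^{\mathfrak s}_\mu$; uniqueness then identifies it with $y^{\mathfrak s,\mu}$. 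This yields the formulas for $y^\mathfrak s$ and $h^\mathfrak s_i$ in \eqref{ec:eq:signnormalization}. It also gives $\log\mathcal E^\mathfrak s_i(\mu)=\log\mathcal E^\mathfrak s_i(0)-(\mathbf A\mu)_i$, since the $i$th expenditure picks up $-\mu_i/\gamma_i+\theta_i(1-1/\gamma_i)(k(\mu)-k(0))$, and $k(\mu)-k(0)=-D^{-1}\sum_j\pi_j\mu_j/\gamma_j$, which is exactly the matrix $\mathbf A$. The budgets $\mathcal E^\mathfrak s_i(\mu)=X_i$ are then a linear system. Its unique solution is $\mu^\mathfrak s=\mathbf A^{-1}(\log\mathcal E^\mathfrak s(0)-\log X)$, and $\mathbf A^{-1}$ is explicit in the proof of Theorem~\ref{thm:weakexist}.

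\emph{Uniqueness of the relaxed equilibrium.} Take any relaxed equilibrium in the bounded class. Since the price of type $i$ is comparable to a geometric price, each budget binds at a unique multiplier: expenditure scales as $\lambda^{-1/\gamma}$ by homogeneity. The fixed-point uniqueness then forces $y=y^{\mathfrak s,\mu}$, and the linear system forces $\mu=\mu^\mathfrak s$.

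\emph{Transfer to the original model.} I would apply Lemma~\ref{ec:lem:relaxed} for each type at the price determined by $y^\mathfrak s$ and $\mu^\mathfrak s$. If $h^\mathfrak s_{i,0}$ lies on the side of $h_{i,0-}$ prescribed by $\mathfrak s_i$, the lemma gives $h^\mathfrak s_i=h^\ast(P_i,h_{i,0-})$. The budgets and consistency are unchanged, and Theorem~\ref{ec:thm:foc} gives primal optimality, so the candidate is an equilibrium of the original model. Conversely, let an original equilibrium in the bounded class have every frictional type revising strictly at time zero, in directions $\mathfrak s$. The same lemma says that each optimum equals the relaxed response at its own price, so the profile is a relaxed equilibrium for $\mathfrak s$; types with $L_i=1$ coincide automatically. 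By the uniqueness just proved it equals the candidate, which also pins down its multipliers. Counting the possible direction vectors gives the bound $2^{n_f}$.

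\emph{Main obstacle.} I expect the hardest step to be the homogeneity step. It needs the relaxed responses of all types, including those with $L_i=1$, to scale exactly with a common shift of $y$, and it needs the bounded class to be preserved under that shift. It also needs a careful check that the matrix obtained is exactly $\mathbf A$, so that invertibility follows from the rank-one formula.
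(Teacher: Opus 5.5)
Your proposal is correct and follows essentially the same route as the paper's proof: you run the fixed-multiplier contraction with the relaxed responses, and you use exact homogeneity \eqref{ec:eq:relaxedscaling} to identify the fixed point at $\mu$ as $y^{\mathfrak s,0}+k(\mu)-k(0)$, which gives $\mathcal E_i^{\mathfrak s}(\mu)=\mathcal E_i^{\mathfrak s}(0)e^{-(\mathbf A\mu)_i}$ and hence a linear solve for the multipliers, followed by Lemma~\ref{ec:lem:relaxed} for the transfer to the original model and for its converse. In the converse, the phrase ``at its own price'' relies, as in the paper, on the unique supporting multipliers from the budget part of Theorem~\ref{ec:thm:comparable}; you already use this fact in your relaxed uniqueness paragraph, so nothing is missing.
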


\begin{proof}
The relaxed responses are decreasing in the price and satisfy
\eqref{ec:eq:loglipschitz}, so the fixed multiplier contraction of
Theorem~\ref{thm:weakexist} holds. At the geometric center the relaxed
response is the reflected policy started at $e^{a_i}$ for $\mathfrak s_i=+$ and
at $e^{b_i}$ for $\mathfrak s_i=-$, so the same center, radius, invariant ball,
and expenditure envelopes apply, which gives the base fixed point and its
finite positive expenditures.

For arbitrary $\mu$ put $\varkappa=k(\mu)-k(0)=-D^{-1}\sum_i\pi_i\mu_i/\gamma_i$
and $\varrho_i=\exp(-(\mu_i+\theta_i\varkappa)/\gamma_i)$. Homogeneity
\eqref{ec:eq:relaxedscaling} gives
$h_i^{\mathfrak s_i}(e^{\mu_i}e^{\delta_it}\xi^ie^{\theta_i(y^{\mathfrak s,0}+\varkappa)})=\varrho_ih_i^{\mathfrak s,0}$,
and $\sum_i\pi_i\log\varrho_i=(1-\Theta_1)\varkappa$, so these responses
aggregate to $y^{\mathfrak s,0}+\varkappa$. Fixed multiplier uniqueness identifies them with the
relaxed fixed point at $\mu$. Their expenditures are
\[
 \mathcal E_i^\mathfrak s(\mu)=\mathcal E_i^\mathfrak s(0)
 \exp\big(-\mu_i/\gamma_i+\theta_i(1-1/\gamma_i)\varkappa\big)
 =\mathcal E_i^\mathfrak s(0)e^{-(\mathbf A\mu)_i},
\]
so all budgets bind exactly at the vector \eqref{ec:eq:signnormalization}.
The relaxed optima have bounded indices, and Theorem~\ref{ec:thm:foc}, or
its version with the linear initial charge, gives primal optimality at
binding budgets. This proves relaxed existence and uniqueness.

If the candidate satisfies the stated initial inequalities,
Lemma~\ref{ec:lem:relaxed} transfers individual optimality to the original
objectives, while budgets and aggregation are unchanged. Conversely, in an
original equilibrium with strict directions $\mathfrak s$, every frictional type
satisfies the initial adjoint equality of the relaxed problem, so by the
lemma its plan is the relaxed optimum at its own price, and the profile is a
relaxed equilibrium with these directions, hence the candidate. Here the
multipliers are the unique supporting multipliers of
Theorem~\ref{ec:thm:comparable}.
\end{proof}

For logarithmic utility the multipliers are explicit. With $\gamma_i=1$ the
index is $\lambda_ie^{\delta_it}\xi^i_tc^i_t$, and conditioning on the type
gives $\Psi_i(0)=\delta_i^{-1}-\lambda_iX_i$ at binding budgets, so the
obstacles give $(\delta_i^{-1}-\alpha_i)/X_i\le\lambda_i\le(\delta_i^{-1}+\beta_i)/X_i$,
with equality on the left for a strict upward initial revision and on the
right for a strict downward one.

The next corollary describes regions of inherited references on which the
initial directions are the same in every equilibrium.

\begin{corollary}\label{ec:cor:forced}
With $\tilde I_i=[\tilde I_i^-,\tilde I_i^+]$ as in
Section~\ref{ec:budgetunique} define
\begin{align*}
 \mathcal L_i&=\exp\Big(\frac{a_i-\theta_iR}{\gamma_i}-\frac{\theta_i\bar m}{\gamma_iD}
 -(1-\theta_i\pi_i)\tilde I_i^++\theta_i\sum_{j\ne i}\pi_j\tilde I_j^-\Big),\\
 \mathcal U_i&=\exp\Big(\frac{b_i+\theta_iR}{\gamma_i}-\frac{\theta_i\bar m}{\gamma_iD}
 -(1-\theta_i\pi_i)\tilde I_i^-+\theta_i\sum_{j\ne i}\pi_j\tilde I_j^+\Big).
\end{align*}
Under the hypotheses of Theorem~\ref{thm:weakexist}, if every type with
$L_i>1$ has $h_{i,0-}<\mathcal L_i$ or $h_{i,0-}>\mathcal U_i$, the mean field
equilibrium is unique in the bounded class, and it is the candidate of
Theorem~\ref{ec:thm:directions} with the corresponding directions. For
fixed parameters and wealth these conditions describe $2^{n_f}$ nonempty
open regions of references.
\end{corollary}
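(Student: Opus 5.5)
The plan is to show that, under the stated hypotheses on the references, every equilibrium in the bounded class revises every frictional type strictly at time zero, and in a direction fixed by the reference alone. Theorem~\ref{ec:thm:directions} then leaves at most one candidate, and Theorem~\ref{thm:weakexist} supplies an equilibrium, so the candidate is realized.

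\emph{Step 1: enclosure of an arbitrary equilibrium.} Take any equilibrium $(Y,h)$ in the bounded class. As in the first part of the proof of Theorem~\ref{ec:thm:budgetunique}, its multipliers satisfy $w=\mathbf A\mu\in\prod_j\tilde I_j$, and $y=\log Y=y^\mu$ satisfies $|y_t-k(\mu)-gt-sB_t|\le R$. Hence the price of type $i$, $\lambda_ie^{\delta_it}\xi^i_te^{\theta_iy_t}$, lies between $e^{\pm\theta_iR}$ times the geometric price $e^{\mu_i+\theta_ik(\mu)}P^0_i$, where $P^0_i$ is the artificial geometric price with band $[e^{a_i},e^{b_i}]$ and $P^0_i(0)=1$.

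\emph{Step 2: forced initial directions.} By price monotonicity in Lemma~\ref{ec:lem:lattice}, the optimum $h_i$ lies above the reflected policy $\underline h_i$ at the larger geometric price $e^{\mu_i+\theta_ik(\mu)+\theta_iR}P^0_i$, with the same reference. That policy starts at the projection of its initial index onto $[e^{a_i},e^{b_i}]$ (Theorem~\ref{thm:br}). Therefore $\underline h_i(0)\ge\ell_i:=\exp((a_i-\mu_i-\theta_ik(\mu)-\theta_iR)/\gamma_i)$, with a strict upward jump whenever $h_{i,0-}<\ell_i$. In that case $h_i(0)\ge\underline h_i(0)>h_{i,0-}$, a strict upward revision. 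Symmetrically, the reflected policy at the smaller price $e^{\mu_i+\theta_ik(\mu)-\theta_iR}P^0_i$ bounds $h_i$ from above and forces a strict cut whenever $h_{i,0-}>\upsilon_i:=\exp((b_i-\mu_i-\theta_ik(\mu)+\theta_iR)/\gamma_i)$.

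It remains to bound $\ell_i$ below by $\mathcal L_i$ and $\upsilon_i$ above by $\mathcal U_i$ uniformly over the multiplier box. For this I would verify, using $\mu=\mathbf Bw$, the rank-one structure of $\mathbf B$, and $D+\sum_j\pi_j\theta_j(1-1/\gamma_j)=1$, the identity
\[
 \frac{\mu_i+\theta_ik(\mu)}{\gamma_i}=\frac{\theta_i\bar m}{\gamma_iD}+(1-\theta_i\pi_i)w_i-\theta_i\sum_{j\ne i}\pi_jw_j .
\]
Minimizing and maximizing the right side over $w\in\prod_j\tilde I_j$, coordinate by coordinate, then gives exactly $\ell_i\ge\mathcal L_i$ and $\upsilon_i\le\mathcal U_i$. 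This identity is the step I expect to need the most care: the signs, and the check that $1-\theta_i\pi_i>0$ so that each coordinate is extremized at the intended endpoint. If the algebra does not close, I would fall back on expressing $k(\mu)$ directly through $w$ and redoing the extremization.

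\emph{Step 3: conclusion.} By Step 2, every equilibrium in the bounded class has every frictional type revising strictly at time zero, upward if $h_{i,0-}<\mathcal L_i$ and downward if $h_{i,0-}>\mathcal U_i$. This fixes the direction vector $\mathfrak s$ independently of the equilibrium. Types with $L_i=1$ are exempt, as in Theorem~\ref{ec:thm:directions}. The converse part of Theorem~\ref{ec:thm:directions} identifies every such equilibrium with the unique candidate for $\mathfrak s$, including its multipliers. Theorem~\ref{thm:weakexist} gives existence, so that equilibrium is the candidate and is unique. Finally, $\mathcal L_i$ and $\mathcal U_i$ are finite positive numbers that do not depend on any reference, because $\tilde I_j$, $R$, $\bar m$ and $D$ do not. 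Each condition $h_{i,0-}<\mathcal L_i$ or $h_{i,0-}>\mathcal U_i$ therefore describes a nonempty open set of references. Taking products over the $n_f$ frictional types gives $2^{n_f}$ nonempty open regions.
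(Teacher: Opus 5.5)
Your proposal is correct and follows the paper's proof step for step. The paper likewise uses the enclosure from the proof of Theorem~\ref{ec:thm:budgetunique} and compares the plan at time zero with the two bounding reflected policies. It then extremizes the identity $(\mu_i+\theta_ik(\mu))/\gamma_i=w_i-\theta_i\sum_j\pi_jw_j+\theta_i\bar m/(\gamma_iD)$ over the box; this identity does hold, with $1-\theta_i\pi_i>0$ since $\theta_i<1$. It concludes with existence from Theorem~\ref{thm:weakexist} and the converse part of Theorem~\ref{ec:thm:directions}.

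Your remark that $\mathcal L_i$ and $\mathcal U_i$ do not involve the references fills in the count of $2^{n_f}$ nonempty open regions, which the paper leaves implicit. So does $\mathcal L_i\le\ell_i\le\upsilon_i\le\mathcal U_i$, which also makes these regions disjoint.
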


\begin{proof}
In every equilibrium in the class, the enclosure in the proof of
Theorem~\ref{ec:thm:budgetunique} gives $\tilde I_i^-\le(\mathbf A\mu)_i\le\tilde I_i^+$.
Writing $\jmath_i=(\mathbf A\mu)_i$ and $S=\sum_j\pi_j\jmath_j$, the inverse
of $\mathbf A$ gives $\mu_i=\gamma_i\jmath_i-\theta_i(\gamma_i-1)S$,
$k(\mu)=\bar m/D-S$, and
$(\mu_i+\theta_ik(\mu))/\gamma_i=\jmath_i-\theta_iS+\theta_i\bar m/(\gamma_iD)$.
The comparison with the two bounding geometric prices at time zero, where
$B_0=0$ and $|y_0-k(\mu)|\le R$, confines the initial level to
\[
 e^{(a_i-\mu_i-\theta_ik(\mu)-\theta_iR)/\gamma_i}
 \le h_{i,0}\le e^{(b_i-\mu_i-\theta_ik(\mu)+\theta_iR)/\gamma_i}.
\]
Minimizing and maximizing the affine function of $\jmath$ in the exponents
over the box gives $\mathcal L_i\le h_{i,0}\le\mathcal U_i$. A reference below
$\mathcal L_i$ therefore forces a strict upward initial revision, and one
above $\mathcal U_i$ a strict downward one, in every equilibrium. Existence
is Theorem~\ref{thm:weakexist}, and Theorem~\ref{ec:thm:directions} gives
uniqueness.
\end{proof}

Outside these regions, or when some type is inactive at time zero, the
arguments do not exclude several equilibria. The remaining question has an
exact finite dimensional form. For $\iota\in\prod_i[-\beta_i,\alpha_i]$
consider the relaxed objective with initial charge $-\iota_iu_i(h_{i,0})$.
The proof of Lemma~\ref{ec:lem:relaxed} extends from the endpoint slopes to
every $\iota_i$ in the interval, because the initial adjoint
$\Psi_i(0)=\delta_i^{-1}-\E\int_0^\infty e^{-\delta_it}P_t(h^\ast_t)^{\gamma_i}\dd t$
of the original optimum is continuous in the reference, equals $\alpha_i$
for small references and $-\beta_i$ for large ones, and so takes every
intermediate value, and the verification argument is unchanged. The
continuity holds because reference comparison and the scaling identity give
$h^\ast(P,h_{0-})\le h^\ast(P,ah_{0-})\le ah^\ast(P,h_{0-})$ for $a\ge1$, so the
optimum is continuous in the logarithm of the reference in uniform log
ratio, and the reference free index envelopes justify dominated
convergence. Homogeneity
and the explicit normalization of Theorem~\ref{ec:thm:directions} then give,
for every $\iota$, a unique relaxed equilibrium in the bounded class, with
initial relative consumption $h^\circ_i(\iota)>0$ that does not involve the
inherited references.

\begin{proposition}\label{ec:prop:complementarity}
Under the hypotheses of Theorem~\ref{thm:weakexist}, the equilibria of the
original model in the bounded class correspond one to one to the solutions
$\iota\in\prod_i[-\beta_i,\alpha_i]$ of the complementarity system
\[
 h^\circ_i(\iota)>h_{i,0-}\ \Rightarrow\ \iota_i=\alpha_i,\qquad
 h^\circ_i(\iota)<h_{i,0-}\ \Rightarrow\ \iota_i=-\beta_i,\qquad
 -\beta_i<\iota_i<\alpha_i\ \Rightarrow\ h^\circ_i(\iota)=h_{i,0-}.
\]
\end{proposition}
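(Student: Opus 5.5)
We will establish the correspondence in both directions, using the relaxed equilibria with interior initial charges described just before the statement. Fix $\iota\in\prod_i[-\beta_i,\alpha_i]$ and write $(y^\iota,h^\iota)$ for the unique relaxed equilibrium in the bounded class, with initial levels $h^\circ_i(\iota)$. Two facts carry the argument. The first is the extension of Lemma~\ref{ec:lem:relaxed} to interior slopes: the relaxed optimum at a price $P$ with charge $\iota_i$ is characterized by the obstacle and support conditions of \eqref{ec:eq:foc} on $(0,\infty)$ together with $\Psi_i(0)=\iota_i$. The second is the verification argument of Theorem~\ref{ec:thm:foc}. In that argument the initial atom $\Delta v_0=u(k_0)-u(h_{0-})$ of a competitor enters with the weight $\Psi_0$. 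The original problem penalizes this atom by $\alpha\,(\Delta v_0)^+ +\beta\,(\Delta v_0)^-$, so the initial support condition reads $\Psi_0\in[-\beta,\alpha]$, with $\Psi_0=\alpha$ if the plan jumps up and $\Psi_0=-\beta$ if it jumps down.

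\emph{From solutions to equilibria.} Let $\iota$ solve the complementarity system and fix a type $i$. We check that $h^\iota_i$ satisfies the full conditions \eqref{ec:eq:foc} of the original problem at its price, with reference $h_{i,0-}$. On $(0,\infty)$ this holds by the relaxed characterization. At time zero, $\Psi_i(0)=\iota_i\in[-\beta_i,\alpha_i]$. An upward jump $h^\circ_i(\iota)>h_{i,0-}$ forces $\iota_i=\alpha_i$, and a downward jump forces $\iota_i=-\beta_i$. These are exactly the initial support conditions. The index bounds and finite Lagrangian carry over from the relaxed problem, so Theorem~\ref{ec:thm:foc} makes $h^\iota_i$ the unique original best response, with a binding budget. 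Aggregation is unchanged, so $(y^\iota,h^\iota)$ is an original equilibrium in the bounded class.

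\emph{From equilibria to solutions.} Take an original equilibrium in the bounded class and set $\iota_i=\Psi_i(0)$, its initial adjoint. This lies in $[-\beta_i,\alpha_i]$ by Theorem~\ref{ec:thm:comparable}. The initial support conditions give the first two implications of the system. If $-\beta_i<\iota_i<\alpha_i$, those same conditions exclude any jump, so $h_{i,0}=h_{i,0-}$. The same Fubini computation shows that each plan maximizes the relaxed objective with charge $\iota_i$ at its own price. The computation works because the initial term $\iota_i\Delta v_0$ cancels against $\Psi_i(0)\Delta v_0$, exactly as in the proof of Lemma~\ref{ec:lem:relaxed}. The profile is therefore a relaxed equilibrium with parameter $\iota$. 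By uniqueness of relaxed equilibria it coincides with $(y^\iota,h^\iota)$, so $h^\circ_i(\iota)=h_{i,0}$ and the third implication holds.

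\emph{Bijectivity.} The two maps are inverse to each other. Starting from $\iota$, the constructed equilibrium has initial adjoint $\iota$, so $\iota$ is recovered. Starting from an equilibrium, we showed it equals the relaxed equilibrium at the $\iota$ it defines. Distinct $\iota$ give distinct equilibria because $\iota_i$ is read off as $\Psi_i(0)$.

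\emph{Main obstacle.} The delicate step is the claim that the verification argument with linear initial charge $-\iota_iu_i(h_{i,0})$ yields a unique relaxed optimum for every interior $\iota_i$. This includes the homogeneity \eqref{ec:eq:relaxedscaling} and the monotonicity needed to run the fixed point and explicit normalization of Theorem~\ref{ec:thm:directions}. The paper sketches this just before the statement, and I would take it as given. The one point to verify is that the initial adjoint is indeed the right-continuous version at $0$, conditional on $\F_0$. Types with $L_i=1$ have a degenerate interval, so for them the system is vacuous and they are handled as in Proposition~\ref{ec:prop:frictionless}.
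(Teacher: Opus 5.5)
Your argument is essentially the paper's: you read off $\iota_i=\Psi_i(0)$ from an original equilibrium and identify the profile with the unique relaxed equilibrium through the verification with a free initial level. Conversely, you let a solution of the system supply the initial support conditions, so that Theorem~\ref{ec:thm:foc} turns the relaxed equilibrium into an original one.

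One step is left implicit. The adjoint $\Psi_i(0)$ depends on the multiplier $\lambda_i$. So both the well-definedness of $\iota$ for a given equilibrium and your injectivity claim (that $\iota_i$ is read off as $\Psi_i(0)$) need each equilibrium plan to be supported by a unique multiplier. This follows from the budget part of Theorem~\ref{ec:thm:comparable}, where expenditure is strictly decreasing in the multiplier, and it is exactly what the paper cites to get the one-to-one correspondence.
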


\begin{proof}
In an original equilibrium set $\iota_i=\Psi_i(0)$. The initial obstacle
and support conditions are exactly the three implications, and the
verification with a free initial level makes the profile the unique relaxed
equilibrium at this $\iota$, so its initial level is $h^\circ_i(\iota)$.
Conversely, a solution of the system supplies the initial support
conditions, and the remaining support conditions, budgets, and aggregation
already hold for the relaxed equilibrium, so Theorem~\ref{ec:thm:foc} gives
an original equilibrium. A fixed original equilibrium has unique
multipliers and hence unique initial adjoints, so the correspondence is one
to one.
\end{proof}

Uniqueness of the solution of this system is what general uniqueness
requires. The estimates above do not establish the monotonicity of
$\iota\mapsto(h^\circ_i(\iota))_i$ that would give it.

%% file: ec4c_vanishing.tex
\subsection{Vanishing Loss Aversion}\label{ec:vanishing}

This subsection describes how equilibria behave as the coefficients of
loss aversion tend to one. It uses the normalization
\eqref{ec:eq:normalized} of Section~\ref{ec:sync}.

\begin{lemma}\label{ec:lem:smallband}
Fix $r\in\R$ and $\kappa,\delta>0$, and let $d=\log(\zb/\za)$ be the width
of the band of Lemma~\ref{lem:band}. As $L\downarrow1$,
\begin{equation}\label{ec:eq:smallband}
 d=\Big(\frac{6\kappa^2}{\delta}\Big)^{1/3}(L-1)^{1/3}+O\big((L-1)^{2/3}\big),\qquad
 \log\frac{\za}{A}=-\frac d2+O(d^2),\qquad
 \log\frac{\zb}{A}=\frac d2+O(d^2),
\end{equation}
with constants that do not depend on $A$.
\end{lemma}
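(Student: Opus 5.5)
We work with the explicit construction in the proof of Theorem~\ref{ec:thm:allrates}. Recall that there $d$ is the unique solution of $\varrho(d)=f_d(d)/f_d(0)=L$, where $f_d$ solves the Neumann problem \eqref{ec:eq:neumannwidth}. The endpoints are given by $\za=A/(\delta f_d(0))$ and $\zb=\za e^d$, with $A=A_\alpha$. Neither $f_d$ nor $\varrho$ involves $A$, so $\log(\za/A)=-\log(\delta f_d(0))$ and $\log(\zb/A)=d-\log(\delta f_d(0))$. This already shows that the constants do not depend on $A$. The whole statement therefore reduces to small-$d$ expansions of $f_d(0)$, $f_d(d)$, and hence $\varrho(d)$, followed by inverting $\varrho(d)=L$.

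\emph{Step 1: expansion of $f_d$.} We expand on the rescaled interval, writing $s=d\sigma$ with $\sigma\in[0,1]$, and seek $f_d=\delta^{-1}+d\,\phi_1+d^2\phi_2+\cdots$. A cleaner route uses the two integrated identities. Integrating the equation over $(0,d)$ with the Neumann conditions gives
\[
 \delta\int_0^d f_d=e^d-1 .
\]
Multiplying by $e^{\nu s/\varsigma}$ gives $\varsigma(e^{\nu s/\varsigma}f_d')'=e^{\nu s/\varsigma}(\delta f_d-e^s)$. Hence $f_d'$ is of order $d^2$, since $\delta f_d-e^s=O(d)$ on an interval of length $d$. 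To leading order, $\varsigma f_d''\approx\delta f_d(0)-1-s$. Imposing $f_d'(d)=0$ then forces $\delta f_d(0)-1\approx d/2$ and $f_d'(s)\approx(s d-s^2)/(2\varsigma)$. Integrating once more gives
\[
 f_d(d)-f_d(0)\approx\frac{d^3}{12\varsigma}=\frac{d^3}{6\kappa^2}.
\]
Consequently
\[
 \varrho(d)-1=\frac{\delta\,d^3}{6\kappa^2}\,(1+O(d)),
 \qquad
 \log(\delta f_d(0))=\frac d2+O(d^2).
\]
The $O(d)$ correction comes from the next order of the drift $\nu$ and of $e^s$. As a check, the explicit formulas \eqref{ec:eq:fwidth} can be Taylor expanded in $d$ to confirm the cubic coefficient, using the relation $\varsigma m^2+\nu m-\delta=0$ to simplify.

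\emph{Step 2: inversion.} By \eqref{ec:eq:Rprime}, $\varrho$ is strictly increasing, and by Step~1 it satisfies $\varrho(d)-1\sim\delta d^3/(6\kappa^2)$. Writing $\varepsilon=L-1$, we get $d^3(1+O(d))=6\kappa^2\varepsilon/\delta$. Taking cube roots yields
\[
 d=(6\kappa^2/\delta)^{1/3}\varepsilon^{1/3}\,(1+O(d)),
\]
which is the first claim, because $d\cdot\varepsilon^{1/3}=O(\varepsilon^{2/3})$. The endpoint claims then follow from $\log(\za/A)=-d/2+O(d^2)$ and $\log(\zb/A)=d-d/2+O(d^2)$.

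\emph{Main obstacle.} The delicate point is making the error term in $\varrho(d)-1$ rigorously $O(d^4)$ rather than merely $o(d^3)$. Only that sharper bound gives the stated $O((L-1)^{2/3})$ remainder after inversion. We would do this by bounding the remainder of the perturbative ansatz with the maximum principle for the Neumann problem. The killing rate $\delta>0$ gives $\|w\|_\infty\le\delta^{-1}\|\text{forcing}\|_\infty$ for the residual equation, and the forcing residual is $O(d^2)$ in the scaled variable. Alternatively, we would Taylor expand \eqref{ec:eq:fwidth} directly, where every term is analytic in $d$ near zero.
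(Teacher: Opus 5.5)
Your argument follows the paper's route. You reduce, via $\za/A=1/(\delta f_d(0))$ and $\zb=\za e^d$, to the expansions $\delta f_d(0)=1+d/2+O(d^2)$ and $f_d(d)-f_d(0)=d^3/(12\varsigma)+O(d^4)$ for the Neumann problem \eqref{ec:eq:neumannwidth}, and then invert the strictly increasing $\varrho$. The difference is only in execution:
\begin{itemize}
\item The paper gets the profile $f_d'(s)\approx s(d-s)/(2\varsigma)$ from the rescaled Dirichlet problem for $d^{-2}f_d'(dx)$ by a Neumann series. It gets $f_d(0)$ from the weighted identity $\delta\int_0^de^{\nu s/\varsigma}f_d=\int_0^de^{(1+\nu/\varsigma)s}$.
\item You integrate directly, which is more elementary and just as rigorous once the error terms are written out.
\end{itemize}

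Three things need fixing.
\begin{itemize}
\item Your first identity $\delta\int_0^df_d=e^d-1$ is wrong when $\nu\ne0$. Integrating the equation leaves $\nu(f_d(d)-f_d(0))$ on the right. This is harmless, since that term is $O(d^3)$, or you can use the weighted identity instead.
\item The claim $\delta f_d-e^s=O(d)$ needs a reason. The maximum principle, or \eqref{ec:eq:fshape} together with the equation at the endpoints, gives $1\le\delta f_d\le e^d$.
\item Your ``main obstacle'' is not an obstacle. The $O(d^4)$ remainder already follows from Step~1 once you carry the errors:
 \begin{itemize}
 \item With the bound above, the conservative form gives $|f_d'|\le Cd^2$.
 \item Hence $\varsigma f_d''=(\delta f_d(0)-1)-s+O(d^2)$ on $[0,d]$.
 \item The condition $f_d'(d)=0$ then gives $\delta f_d(0)-1=d/2+O(d^2)$.
 \item It follows that $\varsigma f_d'(s)=\tfrac12 s(d-s)+O(d^2s)$.
 \item One more integration gives $\varsigma(f_d(d)-f_d(0))=d^3/12+O(d^4)$.
 \end{itemize}
\end{itemize}

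Your first proposed remedy would not have delivered this. The killing-rate bound $\|w\|_\infty\le\delta^{-1}\|\text{forcing}\|_\infty$, with an $O(d^2)$ residual, controls $f_d$ only to $O(d^2)$. That cannot resolve a difference of order $d^3$. The gain has to come from working with $f_d'$, as above or as in the paper. Your fallback of Taylor expanding \eqref{ec:eq:fwidth} does work, since both quotients have removable singularities and are analytic at $d=0$. It is just laborious.
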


\begin{proof}
Use the Neumann problem \eqref{ec:eq:neumannwidth} of the proof of
Theorem~\ref{ec:thm:allrates}, where $L=f_d(d)/f_d(0)$ and
$\za/A=1/(\delta f_d(0))$. Set $\tilde g_d(x)=d^{-2}f_d'(dx)$ for
$0\le x\le1$. Differentiating \eqref{ec:eq:neumannwidth} shows that
$\varsigma\tilde g_d''+\nu d\tilde g_d'-\delta d^2\tilde g_d=-e^{dx}$ with
$\tilde g_d(0)=\tilde g_d(1)=0$. The Dirichlet operator $\varsigma\dd^2/\dd x^2$
has a bounded inverse from $C([0,1])$ to the $C^2$ functions vanishing at
the endpoints, and a Neumann series for small $d$ gives
$\tilde g_d(x)=x(1-x)/(2\varsigma)+O(d)$ in $C^2([0,1])$. Hence
$f_d(d)-f_d(0)=d^3\int_0^1\tilde g_d(x)\dd x=d^3/(12\varsigma)+O(d^4)$ and
$\sup_{s\le d}|f_d(s)-f_d(0)|=O(d^3)$. Multiplying
\eqref{ec:eq:neumannwidth} by $e^{\nu s/\varsigma}$ and integrating over
$[0,d]$ gives $\delta\int_0^de^{\nu s/\varsigma}f_d(s)\dd s=\int_0^de^{(1+\nu/\varsigma)s}\dd s$,
and the oscillation bound turns this into
$\delta f_d(0)=1+d/2+O(d^2)$. Therefore $L-1=\delta d^3/(12\varsigma)+O(d^4)$,
and since $L$ increases strictly from one to infinity with $d$, $L\downarrow1$
forces $d\downarrow0$. Inverting the expansion gives the first claim, and
$\log(\za/A)=-\log(\delta f_d(0))$ gives the other two.
\end{proof}

Now fix a finite type distribution, all market parameters,
$(\gamma_i,\delta_i,\theta_i)$, wealth $X_i>0$, and references
$h_{i,0-}>0$, and let only the adjustment parameters vary. The trend
coefficients $g$ and $s$, the artificial markets, and $K_i=\hat K_i(\gamma_i)$
do not involve $(\alpha_i,\beta_i)$. Define
\begin{gather}
 k_F=\sum_i\pi_i\log(X_iK_i),\qquad Y^F_t=\exp(k_F+gt+sB_t),\label{ec:eq:frictionlessbenchmark}\\
 g_i^F=-\frac{\nu_i}{\gamma_i}+\theta_i(1-\gamma_i^{-1})g,\qquad
 s_i^F=\frac{\kappa_{a,i}}{\gamma_i}+\theta_i(1-\gamma_i^{-1})s,\label{ec:eq:frictionlessloads}\\
 c_i^F(t)=X_iK_i\exp\Big(g_i^Ft+s_i^FB_t+\frac{\kappa_{p,i}}{\gamma_i}W_t\Big),\qquad
 h_i^F=c_i^F(Y^F)^{-\theta_i},\label{ec:eq:frictionlessplan}\\
 \mu_i^F=-\gamma_i\log(X_iK_i)+\theta_i(\gamma_i-1)k_F .\label{ec:eq:frictionlessmultiplier}
\end{gather}
The definitions of $D$, $g$, and $s$ give $\sum_i\pi_ig_i^F=g$ and
$\sum_i\pi_is_i^F=s$, and a Gaussian computation with \eqref{eq:hat} gives
\begin{equation}\label{ec:eq:frictionlesscost}
 \E\Big[\xi^i_t\exp\Big(g_i^Ft+s_i^FB_t+\frac{\kappa_{p,i}}{\gamma_i}W_t\Big)\Big]=e^{-K_it}.
\end{equation}

\begin{proposition}\label{ec:prop:allfrictionless}
Let every type have $L_i=1$ and $\kappa_i>0$, and let households choose
from $\mathcal A_1$. A mean field equilibrium exists if and only if $K_i>0$
for every type. It is then unique up to $\dd t\otimes\mathbb P$ null sets,
and up to indistinguishability among c\`adl\`ag versions, and it is given
by \eqref{ec:eq:frictionlessbenchmark} to \eqref{ec:eq:frictionlessmultiplier},
with the log multiplier $\mu_i^F$ of the normalized problem
\eqref{ec:eq:normalized} for type $i$. Neither $\Theta_1+\Theta_2<1$ nor
$\hat\kappa_i>0$ is needed.
\end{proposition}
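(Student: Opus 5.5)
The plan is to exploit that for $L_i=1$ the problem has no intertemporal link, so every best response is pointwise and the consistency condition becomes a linear equation. By \eqref{ec:eq:Ufrictionless}, a type with $L_i=1$ maximizes $(1-\delta_i\alpha_i)\E\int_0^\infty e^{-\delta_it}u_i(h_t)\dd t$ over $\mathcal A_1$, plus a constant. After dividing by $A_i=1-\delta_i\alpha_i$ this is the normalized problem \eqref{ec:eq:normalized} with $L=1$.

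\emph{Step 1: multiplier form for an arbitrary benchmark.} Fix any positive $\FB$ adapted benchmark $Y$. I will show that an optimum of the budget problem \eqref{eq:primal} has the form $h_i=(e^{\mu_i}e^{\delta_it}\xi^i_tY_t^{\theta_i})^{-1/\gamma_i}$ for a constant $\mu_i$, up to $\dd t\otimes\mathbb P$ null sets. The objective is concave in $h$, the constraint is linear, and small constant plans satisfy the constraint strictly. Lagrange duality in $L^1(\xi^iY^{\theta_i}\dd t\,\dd\mathbb P)$ then gives a multiplier $\lambda_i\ge0$, and $\lambda_i=0$ is excluded because $u_i$ is strictly increasing. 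Maximizing the Lagrangian pointwise gives the displayed form, as in Proposition~\ref{ec:prop:frictionless}(i).

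The one point needing care is that Proposition~\ref{ec:prop:frictionless}(i) assumed comparable prices, which are not yet available here. I would handle this directly through the duality argument, using one-sided perturbations $h\pm\varepsilon\ind_A$ on sets where $h$ is bounded away from $0$ and $\infty$ to identify the multiplier. This is the step I expect to be the main obstacle. It is also the only place where uniqueness in an unrestricted class, with no bounded-class assumption, is decided.

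\emph{Step 2: solving consistency.} With this form, $\log c_i=\theta_i(1-\gamma_i^{-1})\log Y-(\mu_i+\delta_it+\log\xi^i_t)/\gamma_i$. Use $\E[W_t\,|\,\F^B_t]=0$ and the independence of the type from $(B,W)$. Then \eqref{eq:consistency} becomes
\[
 D\log Y_t=-\sum_i\pi_i\frac{\mu_i}{\gamma_i}-\sum_i\pi_i\frac{\nu_i}{\gamma_i}t+\sum_i\pi_i\frac{\kappa_{a,i}}{\gamma_i}B_t ,
\]
with $D=1-\Theta_1+\Theta_2>0$ because each $\theta_i<1$. This determines $\log Y=k(\mu)+gt+sB_t$ uniquely, with $g$ and $s$ as in \eqref{eq:generalcenter}, so no contraction and hence no condition $\Theta_1+\Theta_2<1$ is needed.

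Substituting back gives $c_i=e^{\mathrm{const}_i}\exp(g_i^Ft+s_i^FB_t+\kappa_{p,i}W_t/\gamma_i)$ with the loadings \eqref{ec:eq:frictionlessloads}. The identities $\sum_i\pi_ig_i^F=g$ and $\sum_i\pi_is_i^F=s$ confirm the aggregation.

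\emph{Step 3: budgets and the if-and-only-if.} By \eqref{ec:eq:frictionlesscost}, the expenditure of type $i$ equals $e^{\mathrm{const}_i}\int_0^\infty e^{-K_it}\dd t$. This is finite if and only if $K_i>0$, and $K_i$ does not depend on $\mu$. So if some $K_i\le0$, every candidate from Steps 1 and 2 violates a budget, and no equilibrium exists. If all $K_i>0$, the budget equations read $\mathrm{const}_i=\log(X_iK_i)$.

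Since $\mathrm{const}_i=-(\mathbf A\mu)_i$ up to the centering, inverting $\mathbf A$ via the rank-one formula from the proof of Theorem~\ref{thm:weakexist} gives the unique solution $\mu_i^F$ of \eqref{ec:eq:frictionlessmultiplier}. It also gives $k(\mu^F)=\sum_i\pi_i\log(X_iK_i)=k_F$, so $c_i=c_i^F$. Optimality of these plans follows from the Lagrangian argument at the end of Theorem~\ref{ec:thm:foc}. Admissibility in $\mathcal A_1$ follows from Gaussian log moments.

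Uniqueness holds up to $\dd t\otimes\mathbb P$ null sets, because the pointwise optimizer is determined only almost everywhere. Among càdlàg versions it holds up to indistinguishability. Nowhere is $\hat\kappa_i>0$ used, since no band or reflection appears.
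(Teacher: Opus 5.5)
Your proof has the same structure as the paper's. You show that every equilibrium plan has the multiplier form $h_t^{-\gamma}=e^{\mu}e^{\delta t}\xi_tY_t^{\theta}$. Consistency then becomes a linear equation with coefficient $D>0$ whenever every $\theta_i<1$, so the benchmark is geometric with trend $(g,s)$ and no contraction is needed. By \eqref{ec:eq:frictionlesscost} the expenditure is $\varkappa_i/K_i$ or $+\infty$, and binding budgets force $\varkappa_i=X_iK_i$, $k=k_F$ and $\mu_i=\mu_i^F$.

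The difference is in Step 1, where the paper uses no duality theorem. It first shows the budget binds by scaling $h$ up by a constant. It then perturbs multiplicatively, using $h(1+\epsilon f)$ with $f$ bounded and $\E\int P^\circ_th_tf_t\,\dd t=0$. These plans stay in $\mathcal A_1$ and have unchanged, automatically finite cost. Differentiating shows that the measures $e^{-\delta t}h^{1-\gamma}\,\dd t\,\dd\mathbb P$ and $P^\circ h\,\dd t\,\dd\mathbb P$ are proportional. This gives the multiplier form directly, with multiplier $M/X$, and avoids a duality theorem followed by a localization argument.

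Your duality route works, but three points need fixing.
\begin{itemize}
\item \emph{Slater point.} It cannot be a small constant plan, because $\E\int_0^\infty\xi^i_tY_t^{\theta_i}\dd t$ may be infinite for a general benchmark. Even at the equilibrium benchmark it equals $e^{\theta_ik_F}\int_0^\infty e^{-\hat r_it}\dd t$, which is infinite when $\hat r_i\le0$, and that case is compatible with $K_i>0$. Use $\tfrac12h^\ast$ instead.
\item \emph{Additive perturbations.} Perturbations $h\pm\varepsilon\ind_A$ keep the cost change finite only if $A$ is also restricted to bounded times and bounded prices.
\item \emph{Admissibility for $\gamma_i\ne1$.} Admissibility of $h_i^F$ requires an exponential moment, not a log moment. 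It follows from $e^{-\delta_it}(h_i^F)^{1-\gamma_i}=e^{\mu_i^F}\xi^i_tc_i^F(t)$ together with finite expenditure.
\end{itemize}
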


\begin{proof}
Since $\Theta_1<1$ and $\Theta_2\ge0$, $D>0$ and $g$, $s$ are well defined.
Normalize utility by \eqref{ec:eq:normalized}. Consider any equilibrium,
fix its benchmark and one type, suppress the type index, and write
$P^\circ_t=\xi_tY_t^\theta$. The budget binds, since otherwise a small constant
upward scaling of $h$ stays in $\mathcal A_1$, remains affordable, and
raises utility. Put $M=\E\int_0^\infty e^{-\delta t}h_t^{1-\gamma}\dd t$,
which equals $1/\delta$ for $\gamma=1$ and $|1-\gamma|$ times the finite
absolute utility integral otherwise. For bounded progressive $f$ with
$\E\int P^\circ_th_tf_t\dd t=0$, the plans $h(1+\epsilon f)$ are positive, lie in
$\mathcal A_1$, and have the same budget for small $|\epsilon|$, and
dominated differentiation of their utility at $\epsilon=0$ gives
$\E\int_0^\infty e^{-\delta t}h_t^{1-\gamma}f_t\dd t=0$. For arbitrary
bounded progressive $f$ apply this to $f-X^{-1}\E\int P^\circ hf\dd t$. The
two finite measures on the progressive sigma field with densities
$e^{-\delta t}h^{1-\gamma}$ and $P^\circ h$ are therefore proportional with
ratio $M/X$, so that $h_t^{-\gamma}=(M/X)e^{\delta t}P^\circ_t$ almost
everywhere. Every
equilibrium therefore has a positive multiplier $e^{\mu_i}$ for every type.

Taking logarithms, averaging conditionally on $\F^B_t$, and imposing
\eqref{eq:consistency} yields
$D\log Y_t=-\sum_i\pi_i\mu_i/\gamma_i-t\sum_i\pi_i\nu_i/\gamma_i+B_t\sum_i\pi_i\kappa_{a,i}/\gamma_i$
almost everywhere, so the benchmark is geometric with the coefficients $g$
and $s$. With intercept $k$ the consumption of type $i$ is
$\varkappa_i\exp(g_i^Ft+s_i^FB_t+\kappa_{p,i}W_t/\gamma_i)$ with
$\varkappa_i=\exp(-\mu_i/\gamma_i+\theta_i(1-1/\gamma_i)k)$. By
\eqref{ec:eq:frictionlesscost} its cost is $\varkappa_i/K_i$ if $K_i>0$ and
infinite otherwise, so positivity of every $K_i$ is necessary. If all
$K_i>0$, binding budgets force $\varkappa_i=X_iK_i$, consistency at time
zero forces $k=\sum_i\pi_i\log\varkappa_i=k_F$, and the first order
conditions give $\mu_i=\mu_i^F$. Conversely these choices have finite
expenditure, satisfy all budgets and \eqref{eq:consistency}, and have
finite absolute discounted utility because
$e^{-\delta_it}(h_i^F)^{1-\gamma_i}=e^{\mu_i^F}\xi^i_tc_i^F(t)$ when
$\gamma_i\ne1$, while the logarithmic case follows from Gaussian moments.
Pointwise strict concavity gives optimality against every competitor in
$\mathcal A_1$, which proves sufficiency and uniqueness.
\end{proof}

For a family of adjustment parameters indexed by $\varepsilon$ let
$L_i^\varepsilon\ge1$ and $A_i^\varepsilon=1-\delta_i\alpha_i^\varepsilon$, and
let $d_i^\varepsilon$, $R^\varepsilon$, and the band endpoints be the
quantities of Theorem~\ref{thm:weakexist}, with degenerate bands for types
with $L_i^\varepsilon=1$. Put
\begin{equation}\label{ec:eq:frictionlesserror}
 E_i^\varepsilon=\frac{d_i^\varepsilon}{2\gamma_i}+\theta_i(1+\gamma_i^{-1})R^\varepsilon .
\end{equation}

\begin{theorem}\label{ec:thm:vanishing}
Assume $\hat\kappa_i>0$, $K_i>0$ for every type, and $\Theta_1+\Theta_2<1$,
and suppose $L_i^\varepsilon\to1$ for every type as $\varepsilon\to0$. For
each $\varepsilon$ choose any equilibrium furnished by
Theorem~\ref{thm:weakexist} and Proposition~\ref{ec:prop:frictionless},
with benchmark $Y^\varepsilon$, consumption $c_i^\varepsilon$, log multipliers
$\mu_i^\varepsilon$, and center $k^\varepsilon=k(\mu^\varepsilon)$, all on
the same Brownian space. Then the equilibria converge to the frictionless
equilibrium of Proposition~\ref{ec:prop:allfrictionless} in the following
sense.
\begin{enumerate}[label=(\roman*),nosep,leftmargin=2em]
\item $|k^\varepsilon-k_F|\le R^\varepsilon$ and, almost surely,
\[
 \sup_{t\ge0}\Big|\log\frac{Y^\varepsilon_t}{Y^F_t}\Big|\le2R^\varepsilon,\qquad
 \sup_{t\ge0}\Big|\log\frac{c_i^\varepsilon(t)}{c_i^F(t)}\Big|\le2E_i^\varepsilon ,
\]
so the whole family converges uniformly in time in these logarithmic
norms, whichever multipliers were selected, with error
$O(\max_i(L_i^\varepsilon-1)^{1/3})$.
\item The normalized multipliers satisfy
$|\mu_i^\varepsilon-\log A_i^\varepsilon-\mu_i^F|\le d_i^\varepsilon+2\theta_i\max\{1,\gamma_i\}R^\varepsilon$.
\item The wealth processes that finance the consumption streams without
terminal surplus satisfy $X^F_{i,t}=c_i^F(t)/K_i$ and
$\sup_{t\ge0}|\log(X^\varepsilon_{i,t}/X^F_{i,t})|\le2E_i^\varepsilon$ almost
surely.
\item The equilibrium utilities satisfy
\[
 \frac{U_i(h_i^\varepsilon)-\alpha_i^\varepsilon u_i(h_{i,0-})}{A_i^\varepsilon}
 \longrightarrow\E\int_0^\infty e^{-\delta_it}u_i(h_i^F(t))\dd t .
\]
\end{enumerate}
\end{theorem}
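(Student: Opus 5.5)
The plan is to compare each $\varepsilon$-equilibrium with the frictionless one through the geometric center $y^0_t=k^\varepsilon+gt+sB_t$ of Theorem~\ref{thm:weakexist}. The key observation is that $g$, $s$, the artificial markets and the $K_i$ do not depend on the adjustment parameters. Hence the frictionless equilibrium of Proposition~\ref{ec:prop:allfrictionless} is the same geometric object with index $\log(1-\delta_i\alpha_i^\varepsilon)=\log A_i^\varepsilon$ in place of a band. In these terms it is the degenerate case $d_i=0$ of the construction.

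\emph{Benchmark, consumption and multipliers.} By \eqref{eq:existbound}, $|\log Y^\varepsilon_t-k^\varepsilon-gt-sB_t|\le R^\varepsilon$. Since $\log Y^F_t=k_F+gt+sB_t$, part (i) reduces to $|k^\varepsilon-k_F|\le R^\varepsilon$. For this I would use the expenditure envelopes \eqref{ec:eq:wealthenv} with binding budgets $\mathcal E_i=X_i$. By the normalization \eqref{ec:eq:normalized}, shifting $\mu_i$ by $\log A_i^\varepsilon$ makes the band endpoints $a_i-\log A_i^\varepsilon$ and $b_i-\log A_i^\varepsilon$ lie within $d_i^\varepsilon/2+O(d^2)$ of $0$ (Lemma~\ref{ec:lem:smallband}). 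The lower bound $\za<A<\zb$ of Theorem~\ref{ec:thm:allrates} gives the exact containment $a_i\le\log A_i\le b_i$, which I would use in place of the asymptotics. The envelopes then read $\log X_i=-(\mu_i-\log A_i)/\gamma_i+\theta_i(1-\gamma_i^{-1})k^\varepsilon-\log K_i+e_i$ with $|e_i|\le E_i^\varepsilon$. Averaging with weights $\pi_i$ and using the definition of $k(\mu)$, together with the matrix identity for $\mathbf A^{-1}$ from Step~3, should give $k^\varepsilon-k_F$ as a weighted average of the $e_i$ and the band offsets. I expect to bound this average by $R^\varepsilon$. Solving the same identity for $\mu_i$ gives (ii) with the stated constant. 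For consumption, $\log c_i^\varepsilon=\theta_i\log Y^\varepsilon+\log h_i^\varepsilon$, and $\log h_i^\varepsilon$ is pinned by its index bound \eqref{ec:eq:comparableband}: $\gamma_i\log h_i=\log z_i-\mu_i-\log(e^{\delta t}\xi^iY^{\theta_i})$. Substituting the bounds on $\log Y$, $\mu_i$ and $\log z_i$ and comparing with \eqref{ec:eq:frictionlessplan} should give $2E_i^\varepsilon$. The rate $(L-1)^{1/3}$ then follows because $d_i^\varepsilon=O((L_i-1)^{1/3})$ by Lemma~\ref{ec:lem:smallband}, and $R^\varepsilon$ is linear in the $d_i$.

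\emph{Wealth and utility.} For (iii), the wealth is $X^\varepsilon_{i,t}=\xi_t^{-1}\E[\int_t^\infty\xi_sc^\varepsilon_{i,s}\dd s\,|\,\F_t]$. The uniform log bound on $c^\varepsilon_i/c^F_i$, which holds for all $s\ge t$, passes inside the conditional expectation with the same constant. Combined with $\E[\int_t^\infty\xi_sc^F_s\dd s|\F_t]=\xi_tc^F_t/K_i$ from \eqref{ec:eq:frictionlesscost}, this gives the claim. For (iv), write the normalized utility as in \eqref{ec:eq:normalized}. The running term converges by dominated convergence, since $\log h^\varepsilon_i$ is uniformly within $O(E^\varepsilon)$ of $\log h^F_i$ and $e^{-\delta t}(h^F)^{1-\gamma}$ is integrable because $K_i>0$. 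The penalty term $\frac{L-1}{\delta}\E\int e^{-\delta t}\dd u(h)^-$ needs a bound on the downward variation. Here I would use Lemma~\ref{ec:lem:variation}-type estimates. The variation is controlled by the regulator, whose discounted expected mass grows at most like $1/d$ for a band of width $d$, from the Robin computation in the proof of that lemma with $f$ of size $d$ and $f''\sim1/d$. The penalty is therefore $O((L-1)/d)=O((L-1)^{2/3})\to0$.

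\emph{Main obstacle.} I expect the hardest step to be this last estimate, and more generally controlling the variation of $h^\varepsilon_i$ when the price is only comparable to geometric rather than geometric. The equilibrium index is then not exactly a reflected diffusion in a band of width $d$. I would try to bound $\E\int e^{-\delta t}e^{(q-1)\ell}\dd|\ell|$ using the lattice comparison with the two geometric envelopes, or by using the support conditions: the adjoint difference between the obstacles is $\beta'=(L-1)/\delta$, and $\Psi$ is a martingale-type process whose quadratic variation forces regulator increments of order $1/d$ per unit time. The constant bookkeeping in (i)--(ii) also needs care, especially showing $|k^\varepsilon-k_F|\le R^\varepsilon$ exactly rather than up to $O(d^2)$.
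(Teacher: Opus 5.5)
The genuine gap is the variation penalty in (iv), which you flag yourself, and the route you propose for it does not work. Lemma~\ref{ec:lem:variation} needs the index to be an exact reflected diffusion in $[a_i,b_i]$ whose regulators act only at the endpoints, and that holds only at a geometric price. In an equilibrium of Theorem~\ref{thm:weakexist} the log index of type $i$ also contains $\theta_i(\log Y^\varepsilon_t-gt-sB_t)$, which moves when other types revise. The index is confined only to $[a_i-2\theta_iR^\varepsilon,b_i+2\theta_iR^\varepsilon]$, and revisions happen where the adjoint meets an obstacle, not at fixed index levels. So no test function of the index has the boundary derivatives that the Robin computation needs. The sandwich between the two geometric envelopes does not help either, because a path squeezed between two paths can have arbitrary variation. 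The quadratic variation remark is a heuristic, not an estimate.

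The missing idea is that no regulator estimate is needed. In the normalized problem \eqref{ec:eq:normalized} the adjoint $\tilde\Psi_i^\varepsilon$ lies in $[-\beta_i',0]$ with $\beta_i'=(L_i^\varepsilon-1)/\delta_i$. It equals $0$ on upward revisions and $-\beta_i'$ on downward ones. Apply the Fubini identity from the proof of Theorem~\ref{ec:thm:foc} to $v_i^\varepsilon-v_{i,0-}$, where $v_i^\varepsilon=u_i(h_i^\varepsilon)$:
\[
 \beta_i'\,\E\int_{[0,\infty)}e^{-\delta_it}\dd(v_i^\varepsilon)^-_t
 =\E\int_0^\infty e^{-\delta_it}\big(1-\tilde z_i^\varepsilon(t)\big)v_i^\varepsilon(t)\dd t
 -v_{i,0-}\tilde\Psi_i^\varepsilon(0),\qquad \tilde z_i^\varepsilon=z_i^\varepsilon/A_i^\varepsilon .
\]
The normalized band $[e^{a_i^\varepsilon}/A_i^\varepsilon,e^{b_i^\varepsilon}/A_i^\varepsilon]$ contains one. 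Hence \eqref{ec:eq:comparableband} gives $|\log\tilde z_i^\varepsilon|\le d_i^\varepsilon+2\theta_iR^\varepsilon=:\bar q_i^\varepsilon$ uniformly. The penalty is therefore at most $(e^{\bar q_i^\varepsilon}-1)\E\int_0^\infty e^{-\delta_it}|v_i^\varepsilon|\dd t+\beta_i'|v_{i,0-}|\to0$. This holds at any comparable price and includes the initial jump, and the running integral is controlled by your domination argument.

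For (i)--(iii) your route is the paper's: comparison with the reflected policies at the bounding geometric prices, binding budgets, averaging, the containment $a_i\le\log A_i^\varepsilon\le b_i$, and conditional expectation for wealth. Two bookkeeping corrections settle your worry about exact constants.

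\emph{Keep the band midpoint in the envelope identity.} Set $\log\varkappa_i^\varepsilon=(m_i^\varepsilon-\mu_i^\varepsilon)/\gamma_i+\theta_i(1-\gamma_i^{-1})k^\varepsilon$.
\begin{itemize}
\item The binding budget and \eqref{ec:eq:wealthenv} give $|\log\varkappa_i^\varepsilon-\log(X_iK_i)|\le E_i^\varepsilon$.
\item The definitions of $k(\mu)$ and $D$ give $\sum_i\pi_i\log\varkappa_i^\varepsilon=k^\varepsilon$ exactly.
\item \eqref{eq:existconstants} gives $\sum_i\pi_iE_i^\varepsilon=R^\varepsilon$.
\end{itemize}
Together these give $|k^\varepsilon-k_F|\le R^\varepsilon$ with no $O(d^2)$ remainder. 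With $\log A_i^\varepsilon$ in place of $m_i^\varepsilon$, as you wrote it, the bound $|e_i|\le E_i^\varepsilon$ fails by up to $d_i^\varepsilon/(2\gamma_i)$; the offsets only cancel after averaging. The estimate $|m_i^\varepsilon-\log A_i^\varepsilon|\le d_i^\varepsilon/2$ is needed only for (ii).

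\emph{Take the pointwise bound on $\log h_i^\varepsilon$ from the direct two-sided comparison.} The reflected policies at $e^{\pm\theta_iR^\varepsilon}$ times the artificial geometric price give an error of $(d_i^\varepsilon/2+\theta_iR^\varepsilon)/\gamma_i$. After multiplying by $(Y^\varepsilon)^{\theta_i}$ this becomes exactly $E_i^\varepsilon$. Going through \eqref{ec:eq:comparableband} instead, whose width already includes $2\theta_iR^\varepsilon$, and then adding the separate error of $\log Y^\varepsilon$ counts the $\theta_iR^\varepsilon$ error twice. In general that gives a constant larger than $2E_i^\varepsilon$, for instance when $\gamma_i<1$.
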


\begin{proof}
\emph{Uniform bounds.} The existence construction gives
$|\log Y^\varepsilon_t-k^\varepsilon-gt-sB_t|\le R^\varepsilon$. The price of
type $i$ lies between $e^{\pm\theta_iR^\varepsilon}$ times its artificial
geometric price, and comparison with the reflected policies at these two
geometric prices, whose log indices lie in $[a_i^\varepsilon,b_i^\varepsilon]$,
gives
\[
 \log h_i^\varepsilon(t)=\frac{m_i^\varepsilon-\mu_i^\varepsilon-\theta_ik^\varepsilon}{\gamma_i}
 -\frac{\nu_i+\theta_ig}{\gamma_i}t+\frac{\kappa_{a,i}-\theta_is}{\gamma_i}B_t
 +\frac{\kappa_{p,i}}{\gamma_i}W_t+\eta_i^\varepsilon(t),\qquad
 |\eta_i^\varepsilon(t)|\le\frac{d_i^\varepsilon}{2\gamma_i}+\frac{\theta_iR^\varepsilon}{\gamma_i},
\]
where $m_i^\varepsilon$ is the midpoint of the band. Multiplying by
$(Y^\varepsilon)^{\theta_i}$ yields
\begin{equation}\label{ec:eq:consumptionenvelope}
 c_i^\varepsilon(t)=\varkappa_i^\varepsilon
 \exp\Big(g_i^Ft+s_i^FB_t+\frac{\kappa_{p,i}}{\gamma_i}W_t+e_i^\varepsilon(t)\Big),\qquad
 |e_i^\varepsilon(t)|\le E_i^\varepsilon,\qquad
 \log\varkappa_i^\varepsilon=\frac{m_i^\varepsilon-\mu_i^\varepsilon}{\gamma_i}+\theta_i(1-\gamma_i^{-1})k^\varepsilon .
\end{equation}
By \eqref{ec:eq:frictionlesscost} and the binding budget,
$|\log\varkappa_i^\varepsilon-\log(X_iK_i)|\le E_i^\varepsilon$. The definitions
of $k(\mu)$ and $D$ give $\sum_i\pi_i\log\varkappa_i^\varepsilon=k^\varepsilon$,
and
$\sum_i\pi_iE_i^\varepsilon=\sum_i\pi_id_i^\varepsilon/(2\gamma_i)+(\Theta_1+\Theta_2)R^\varepsilon=R^\varepsilon$
by \eqref{eq:existconstants}. Averaging the bound on $\log\varkappa_i^\varepsilon$
proves $|k^\varepsilon-k_F|\le R^\varepsilon$, and combining it with the bound
on the benchmark and with \eqref{ec:eq:consumptionenvelope} proves the two
uniform bounds of (i). The artificial markets do not depend on
$\varepsilon$, so Lemma~\ref{ec:lem:smallband} in each of them gives
$d_i^\varepsilon=O((L_i^\varepsilon-1)^{1/3})$, hence the rate.

\emph{Multipliers and wealth.} By Theorem~\ref{ec:thm:allrates},
$\log A_i^\varepsilon\in[a_i^\varepsilon,b_i^\varepsilon]$, so
$|m_i^\varepsilon-\log A_i^\varepsilon|\le d_i^\varepsilon/2$. Solving the
definition of $\varkappa_i^\varepsilon$ for $\mu_i^\varepsilon$ and subtracting
$\log A_i^\varepsilon+\mu_i^F$ gives a bound of
$d_i^\varepsilon/2+\gamma_iE_i^\varepsilon+\theta_i|\gamma_i-1|R^\varepsilon
=d_i^\varepsilon+2\theta_i\max\{1,\gamma_i\}R^\varepsilon$, which is (ii). The
wealth without terminal surplus is
$X_{i,t}=(\xi^i_t)^{-1}\E[\int_t^\infty\xi^i_sc_i(s)\dd s\,|\,\F_t]$, and
conditional Gaussian integration gives $X^F_{i,t}=c_i^F(t)/K_i$. The
multiplicative bounds on consumption pass through the conditional
integral, which proves (iii) at every fixed time, and continuity of the
wealth processes gives one null set for all times.

\emph{Utility.} The bounds give
$\sup_t|\log(h_i^\varepsilon(t)/h_i^F(t))|\le2E_i^\varepsilon+2\theta_iR^\varepsilon\to0$,
and $\E\int e^{-\delta_it}|u_i(h_i^F(t))|\dd t<\infty$ by the proof of
Proposition~\ref{ec:prop:allfrictionless}. For power utility the
deterministic multiplicative bounds dominate $|u_i(h_i^\varepsilon)|$ by a
fixed multiple of $|u_i(h_i^F)|$, and for logarithmic utility
$|\log h_i^\varepsilon|\le|\log h_i^F|+2E_i^\varepsilon+2\theta_iR^\varepsilon$,
so the running utility integrals converge in $L^1$. It remains to show
that the variation penalty in \eqref{ec:eq:normalized} vanishes, since
convergence of consumption alone does not control variation. In the
normalized problem of type $i$ the price is
$(\lambda_i^\varepsilon/A_i^\varepsilon)e^{\delta_it}\xi^i_t(Y^\varepsilon_t)^{\theta_i}$
and the index $\tilde z_i^\varepsilon$ is $z_i^\varepsilon/A_i^\varepsilon$, so
\eqref{ec:eq:comparableband} and the normalized band
$[e^{a_i^\varepsilon}/A_i^\varepsilon,e^{b_i^\varepsilon}/A_i^\varepsilon]$, which
contains one, give
$|\log\tilde z_i^\varepsilon|\le d_i^\varepsilon+2\theta_iR^\varepsilon=:\bar q_i^\varepsilon$.
Put $v_i^\varepsilon=u_i(h_i^\varepsilon)$ and
$\beta_i'=(L_i^\varepsilon-1)/\delta_i$. For $L_i^\varepsilon>1$ the normalized
adjoint $\tilde\Psi_i^\varepsilon$ lies in $[-\beta_i',0]$ and satisfies the
support conditions of the normalized problem, so the Fubini identity of
Theorem~\ref{ec:thm:foc} applied to $v_i^\varepsilon-v_{i,0-}$ gives
\[
 \beta_i'\,\E\int_{[0,\infty)}e^{-\delta_it}\dd(v_i^\varepsilon)^-_t
 =\E\int_0^\infty e^{-\delta_it}\big(1-\tilde z_i^\varepsilon(t)\big)v_i^\varepsilon(t)\dd t
 -v_{i,0-}\tilde\Psi_i^\varepsilon(0)
 \le\big(e^{\bar q_i^\varepsilon}-1\big)\E\int_0^\infty e^{-\delta_it}|v_i^\varepsilon(t)|\dd t
 +\beta_i'|v_{i,0-}|\longrightarrow0 .
\]
For $L_i^\varepsilon=1$ the penalty is zero. Identity \eqref{ec:eq:normalized}
now proves (iv).
\end{proof}

The theorem does not assert convergence in total variation, since a
continuous limit can be approached uniformly by plans of finite variation
whose variations diverge, and the wealth bounds do not by themselves give
convergence of the portfolio integrands. If only some types lose their
adjustment costs, $R^\varepsilon$ need not vanish and the argument does not
identify a limit.

Combined with Theorem~\ref{ec:thm:directions}, the quantitative limit gives
uniqueness for populations with small adjustment costs. A condition on the
inherited standards is essential, because convergence to a unique limit
does not by itself exclude several equilibria at positive costs. With the
notation above let
\begin{equation}\label{ec:eq:frictionlessgap}
 h_i^F(0)=X_iK_ie^{-\theta_ik_F},\qquad
 \Delta_i=\log\frac{h_i^F(0)}{h_{i,0-}},\qquad
 \varrho_i^\varepsilon=2E_i^\varepsilon+2\theta_iR^\varepsilon .
\end{equation}

\begin{theorem}\label{ec:thm:nearlimit}
Assume $\hat\kappa_i>0$, $K_i>0$ for every type, and $\Theta_1+\Theta_2<1$.
If $|\Delta_i|>\varrho_i^\varepsilon$ for every type with $L_i^\varepsilon>1$, there
is exactly one mean field equilibrium in the bounded class, and in it every
frictional type revises strictly at time zero, upward if $\Delta_i>0$ and
downward if $\Delta_i<0$. In particular, if $\Delta_i\ne0$ for every type,
there is $\varepsilon_0>0$ such that this uniqueness holds whenever
$1\le L_i^\varepsilon<1+\varepsilon_0$ for all types.
\end{theorem}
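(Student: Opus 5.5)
The plan is to show that in every equilibrium of the bounded class the initial level $h_{i,0}$ of each frictional type is confined to a small logarithmic neighborhood of $h_i^F(0)$, with radius $\varrho_i^\varepsilon$. Once this is known, the hypothesis $|\Delta_i|>\varrho_i^\varepsilon$ forces a strict initial revision in a determined direction. Theorem~\ref{ec:thm:directions} then delivers uniqueness, since for a given vector of strict directions there is at most one equilibrium. Existence comes from Theorem~\ref{thm:weakexist}, extended by Proposition~\ref{ec:prop:frictionless} to types with $L_i^\varepsilon=1$. Those frictionless types impose no direction, and Theorem~\ref{ec:thm:directions} exempts them.

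\emph{Step 1: the enclosure holds in every bounded-class equilibrium.} Fix any equilibrium in the bounded class. I would argue as in the first part of the proof of Theorem~\ref{ec:thm:budgetunique}. Each type's plan is $h^\ast(\lambda_ie^{\delta_it}\xi^ie^{\theta_iy},h_{i,0-})$ for a unique multiplier $\lambda_i$. The benchmark is then the fixed point $y^\mu$, so $|y_t-k(\mu)-gt-sB_t|\le R^\varepsilon$.

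With this in hand, the uniform bounds in the proof of Theorem~\ref{ec:thm:vanishing} apply verbatim to this equilibrium. That proof only used the enclosure, the envelopes of Theorem~\ref{ec:thm:comparable}, and the binding budgets, not the particular selection of multipliers. They give $|k-k_F|\le R^\varepsilon$ and $\sup_t|\log(h_i(t)/h_i^F(t))|\le 2E_i^\varepsilon+2\theta_iR^\varepsilon=\varrho_i^\varepsilon$. Evaluating at $t=0$, where $h_i^F(0)=X_iK_ie^{-\theta_ik_F}$, yields $|\log(h_{i,0}/h_i^F(0))|\le\varrho_i^\varepsilon$.

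\emph{Step 2: the direction of the initial revision is forced.} Suppose $\Delta_i>\varrho_i^\varepsilon$. Then $h_{i,0}\ge h_i^F(0)e^{-\varrho_i^\varepsilon}>h_{i,0-}$, so type $i$ revises strictly upward at time zero. The case $\Delta_i<-\varrho_i^\varepsilon$ gives a strict downward revision in the same way.

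Hence every bounded-class equilibrium has strict initial revisions for all frictional types, in the direction vector $\mathfrak s$ with $\mathfrak s_i=\operatorname{sign}\Delta_i$. By the converse part of Theorem~\ref{ec:thm:directions}, any such equilibrium is the relaxed candidate for $\mathfrak s$, and that candidate is unique including its multipliers. Existence from Theorem~\ref{thm:weakexist} completes the count, so there is exactly one equilibrium.

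\emph{Step 3: the final claim.} Lemma~\ref{ec:lem:smallband} gives $d_i^\varepsilon=O((L_i^\varepsilon-1)^{1/3})$, uniformly because the artificial markets do not depend on $\varepsilon$. Therefore $R^\varepsilon\to0$, $E_i^\varepsilon\to0$ and $\varrho_i^\varepsilon\to0$ as $\max_i(L_i^\varepsilon-1)\to0$. If every $\Delta_i\neq0$, choose $\varepsilon_0$ so that $\varrho_i^\varepsilon<\min_j|\Delta_j|$ whenever all $L_i^\varepsilon<1+\varepsilon_0$. Here $\Delta_i$ does not depend on the adjustment parameters.

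\emph{Main obstacle.} The delicate point is Step 1: checking that the estimate of Theorem~\ref{ec:thm:vanishing}(i) holds for an arbitrary bounded-class equilibrium, not only for the constructed one. It must also survive the presence of types with degenerate bands, where the plans have Brownian components and the comparison argument must be replaced by the explicit response of Proposition~\ref{ec:prop:frictionless}(i). I would handle this by rederiving the reduction $y=y^\mu$ through fixed-multiplier uniqueness, which holds among all benchmarks at finite distance from $gt+sB_t$. The envelope computation \eqref{ec:eq:consumptionenvelope} then goes through as before, with $d_i=0$ for the frictionless types.
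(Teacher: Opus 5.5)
Your proposal is correct and follows the paper's proof essentially step for step. The enclosure from the proof of Theorem~\ref{ec:thm:budgetunique} carries the bounds of Theorem~\ref{ec:thm:vanishing} over to every equilibrium in the bounded class, the resulting bound at time zero fixes the vector of strict initial directions so that Theorem~\ref{ec:thm:directions} leaves at most one equilibrium, existence comes from Theorem~\ref{thm:weakexist} with Proposition~\ref{ec:prop:frictionless}, and Lemma~\ref{ec:lem:smallband} gives the final claim.
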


\begin{proof}
Existence is Theorem~\ref{thm:weakexist} with Proposition~\ref{ec:prop:frictionless}.
The bounds of Theorem~\ref{ec:thm:vanishing} apply to every equilibrium in
the bounded class and not only to the one constructed there, because the
enclosure in the proof of Theorem~\ref{ec:thm:budgetunique} shows that any
such equilibrium satisfies the radius bound at its own multipliers, and the
bounds only use that radius, the envelopes, and binding budgets. Hence
$\sup_t|\log(h_i(t)/h_i^F(t))|\le \varrho_i^\varepsilon$, and at time zero
$\log(h_{i,0}/h_{i,0-})\in[\Delta_i-\varrho_i^\varepsilon,\Delta_i+\varrho_i^\varepsilon]$.
Under the hypothesis each frictional type's interval lies strictly on one
side of zero, so every equilibrium has the same vector of strict initial
directions, and Theorem~\ref{ec:thm:directions} gives at most one. Finally
$\varrho_i^\varepsilon\to0$ as $\max_j(L_j^\varepsilon-1)\to0$ by
Lemma~\ref{ec:lem:smallband}, which gives the last claim.
\end{proof}

\begin{corollary}\label{ec:cor:localinvariance}
Under the strict inequalities of Theorem~\ref{ec:thm:nearlimit}, the
equilibrium consumption, benchmark, wealth, and multipliers do not change
when the inherited standards are replaced by sufficiently close positive
numbers. Only the utility levels change, through the initial adjustment
costs.
\end{corollary}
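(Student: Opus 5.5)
The plan is to show that, under the strict inequalities of Theorem~\ref{ec:thm:nearlimit}, the unique equilibrium does not involve the inherited standards at all except through its vector of strict initial directions. Since that vector is locally constant in the references, the equilibrium is too.

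\emph{Step 1: the candidate is reference free.} Fix the references $h_{i,0-}$ satisfying $|\Delta_i|>\varrho_i^\varepsilon$ for every frictional type. Let $\mathfrak s_i=\mathrm{sign}(\Delta_i)$. By Theorem~\ref{ec:thm:nearlimit} the unique equilibrium in the bounded class is the candidate of Theorem~\ref{ec:thm:directions} with directions $\mathfrak s$. In that candidate, every object is built from the relaxed objectives $U_i^{\mathfrak s_i}$, which have a free initial level and a linear initial charge: the base fixed point $y^{\mathfrak s,0}$, the responses $h_i^{\mathfrak s,0}$, the expenditures $\mathcal E^\mathfrak s(0)$, the multipliers $\mu^\mathfrak s$ from \eqref{ec:eq:signnormalization}, and the rescaled plans $h_i^\mathfrak s$. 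None of these involves $h_{i,0-}$. Types with $L_i=1$ are also reference free by Proposition~\ref{ec:prop:frictionless}(i). Hence the candidate's relative consumption, benchmark, consumption $c_i=h_iY^{\theta_i}$, financing wealth $\xi_t^{-1}\E[\int_t^\infty\xi_sc_s\dd s\,|\,\F_t]$, and multipliers depend only on $\mathfrak s$, the preferences, the markets, the weights and the wealth.

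\emph{Step 2: the hypotheses persist.} The quantities $h_i^F(0)$, $k_F$ and $\varrho_i^\varepsilon$ in \eqref{ec:eq:frictionlessgap} do not depend on the references. The frictionless objects of Proposition~\ref{ec:prop:allfrictionless} involve only $X_iK_i$, and $R^\varepsilon$ and $d_i^\varepsilon$ involve only bands. Each $\Delta_i$ is continuous in $\log h_{i,0-}$, and the inequalities $|\Delta_i|>\varrho_i^\varepsilon$ are strict and finitely many. So there is a neighborhood of the given references on which they still hold with the same signs $\mathfrak s_i$. For references in this neighborhood, Theorem~\ref{ec:thm:nearlimit} again identifies the unique equilibrium as the candidate with the same $\mathfrak s$, which by Step 1 is the same profile. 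Equivalently, by Theorem~\ref{ec:thm:directions}, the candidate is an equilibrium whenever $h^\mathfrak s_{i,0}\ge h_{i,0-}$ for $\mathfrak s_i=+$ and $h^\mathfrak s_{i,0}\le h_{i,0-}$ for $\mathfrak s_i=-$. These initial inequalities are strict at the original references and so survive small perturbations.

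\emph{Step 3: utilities.} Utility is $U_i(h)$ with the reference entering only through the atom at time zero, namely $-\alpha_i(u(h_{i,0})-u(h_{i,0-}))$ for an upward revision or $-\beta_i(u(h_{i,0-})-u(h_{i,0}))$ for a downward one. Since the plan is unchanged, utility changes only through this term. This is affine in $u(h_{i,0-})$ with nonzero coefficient unless the relevant adjustment parameter vanishes, so utility levels change only through the initial adjustment cost.

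The main point to check carefully is Step 1. I must make sure that the multipliers reported in the corollary are the supporting multipliers of the original problem, and that these coincide with $\mu^\mathfrak s$. This follows from the converse part of Theorem~\ref{ec:thm:directions}, which uses the uniqueness of supporting multipliers in Theorem~\ref{ec:thm:comparable}. It requires no new estimate.
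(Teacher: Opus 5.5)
Your proof is correct, but it takes a different route from the paper's.

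\textbf{The paper's route: verification.} Since every frictional type revises strictly at time zero, a small change of $h_{i,0-}$ leaves it on the same side of $h_{i,0}$. The plan, its price and its adjoint are unchanged, and the adjoint at time zero already sits on the obstacle the new initial jump requires. The support conditions on $(0,\infty)$, the budgets and consistency do not involve the reference. So Theorem~\ref{ec:thm:foc} shows directly that the old profile is still an equilibrium. Persistence of the strict inequalities of Theorem~\ref{ec:thm:nearlimit} then makes it the unique one.

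\textbf{Your route: characterization.} Theorem~\ref{ec:thm:nearlimit} together with the converse half of Theorem~\ref{ec:thm:directions} identifies the unique equilibrium with the relaxed candidate. That candidate is built from the reference-free relaxed objectives and the normalization \eqref{ec:eq:signnormalization}, so the equilibrium depends on the references only through the vector of initial directions. You never need to check optimality conditions yourself.

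\textbf{What each buys.} Your argument gives slightly more. Because neither $h_i^F(0)$ nor $\varrho_i^\varepsilon$ involves the references, the equilibrium is the same for every reference vector with $|\Delta_i|>\varrho_i^\varepsilon$ and the same signs. That is a product of half-lines in $\log h_{i,0-}$, not merely a small neighborhood. The paper's argument is more elementary at this point and shows the mechanism: the inherited standard enters only through the initial support condition. Its first half also shows that any equilibrium with strict initial revisions survives small changes of the references, whether or not it is unique.

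Your alternative remark via the sufficiency half of Theorem~\ref{ec:thm:directions} is essentially the paper's verification step, routed through Lemma~\ref{ec:lem:relaxed}.
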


\begin{proof}
Every frictional type revises strictly at time zero, and a small change of
its reference preserves the direction. The plan, its price, and its adjoint
are unchanged, the adjoint already equals the required obstacle at time
zero, and all support conditions after zero are unchanged, so
Theorem~\ref{ec:thm:foc} gives optimality for the new reference, while
budgets and consistency are unchanged. The inequalities of
Theorem~\ref{ec:thm:nearlimit} persist, so the profile remains the unique
equilibrium.
\end{proof}

The condition $\Delta_i\ne0$ excludes references that coincide with the
frictionless initial optimum. The theorem is not a uniqueness result for
arbitrary adjustment costs, since equilibria with initially inactive types
require a separate argument.

%% file: ec4e_oneconcern.tex
\subsection{One Benchmark Sensitive Type}\label{ec:oneconcern}

When only one type is concerned with the benchmark, the other types form
an exogenous comparison group, the outer problem reduces to one individual
problem, and existence and global uniqueness hold without any weak
interaction condition.

\begin{theorem}\label{ec:thm:oneconcern}
Let risk be common, let all types trade at
$\xi_t=\exp(-(r+\kappa^2/2)t-\kappa B_t)$, and let there be finitely many
types with weights $\pi_i>0$. Suppose $\theta_j=0$ for every $j\ne a$ and
$0\le\theta_a<1$, with the parameter restrictions of Section~\ref{sec:setup}
and $K_j(\gamma_j)>0$ for every $j\ne a$, while the other parameters may
differ across types. Let $c^j$ be the unique optimal plan of type $j\ne a$,
which does not involve the benchmark, and put
\[
 Y^{-a}_t=\prod_{j\ne a}(c^j_t)^{\pi_j},\qquad p=1-\pi_a\theta_a,\qquad
 \rho=1+p(\gamma_a-1),\qquad\vartheta=\frac{\theta_a}{p},
\]
\[
 g^{-a}=-\sum_{j\ne a}\frac{\pi_j\nu_j}{\gamma_j},\qquad
 s^{-a}=\kappa\sum_{j\ne a}\frac{\pi_j}{\gamma_j},\qquad
 \kappa^{-a}=\kappa-\vartheta s^{-a},\qquad
 r^{-a}=r-\vartheta(g^{-a}-\kappa s^{-a})-\tfrac12\vartheta^2(s^{-a})^2 ,
\]
with $\nu_j=\delta_j-r-\kappa^2/2$. Assume $\kappa^{-a}\ne0$ and
$K^{-a}(\rho)=r^{-a}+(\delta_a-r^{-a})/\rho+(\rho-1)(\kappa^{-a})^2/(2\rho^2)>0$.
Then for every positive wealth vector and reference vector there is
exactly one mean field equilibrium, and its multipliers are unique. It is
obtained from the unique optimum $\breve h$ of the ordinary ILA problem
with curvature $\rho$, parameters $(\delta_a,\alpha_a,\beta_a)$, reference
$h_{a,0-}^{1/p}$, wealth $X_a$, and expenditure deflator
$\xi_t(Y^{-a}_t)^{\vartheta}$ by
\[
 c^a_t=(Y^{-a}_t)^{\vartheta}\breve h_t,\qquad h^a_t=\breve h_t^{\,p},\qquad
 Y_t=(c^a_t)^{\pi_a}Y^{-a}_t .
\]
\end{theorem}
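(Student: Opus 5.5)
The plan has three steps: reduce the equilibrium to one individual problem for type $a$, identify that problem as an ordinary ILA problem in a geometric market, and solve it with Theorem~\ref{thm:br}.

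\emph{The other types.} For $j\ne a$ the relative concern vanishes, so $h^j=c^j$ and the problem of type $j$ does not involve the benchmark. It is a single ILA problem in the market $(r,\kappa)$ with $K_j(\gamma_j)>0$. By Theorem~\ref{thm:br} with $\theta=0$ and $Y\equiv1$, it has a unique optimum $c^j=c^j_0e^{\ell^j/\gamma_j}$ for every wealth and reference. Its multiplier is unique by Theorem~\ref{ec:thm:comparable}. Because all risk is common, $c^j$ is $\FB$ adapted. Taking logarithms, $\log c^j_t=\mathrm{const}+(x^j_t-\nu_jt+\kappa B_t)/\gamma_j$ with a bounded index. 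Hence $\log Y^{-a}_t=\mathrm{const}+g^{-a}t+s^{-a}B_t+O(1)$ uniformly in $t$, almost surely.

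\emph{Reduction for type $a$.} Consistency reads $\log Y=\pi_a\log c^a+\log Y^{-a}$, with $c^a=h^aY^{\theta_a}$ deterministic given $\F^B$. Solving gives $Y^{p}=(h^a)^{\pi_a}Y^{-a}$. I parametrize type $a$'s choice by $\breve h=(h^a)^{1/p}$, so that $c^a=(Y^{-a})^{\vartheta}\breve h$. To check this, note that $c^a=h^aY^{\theta_a}$ and $Y^{\theta_a}=(h^a)^{\theta_a\pi_a/p}(Y^{-a})^{\vartheta}$. Then $1+\theta_a\pi_a/p=1/p$ shows that the exponent of $h^a$ is $1/p$.

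As in the reduction in the proof of Theorem~\ref{ec:thm:finiteexist}, $u_{\gamma_a}(\breve h^{p})=p\,u_\rho(\breve h)$. The adjustment costs scale by the same factor $p$, and the reference becomes $h_{a,0-}^{1/p}$. The equilibrium condition for type $a$ is that $h^a$ is optimal for a price taker facing the equilibrium $Y$. That is, the first-order conditions of Theorem~\ref{ec:thm:foc} hold at price $\lambda e^{\delta t}\xi Y^{\theta_a}$ with $p=1$.

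\emph{Main obstacle: equating the price-taker conditions with the reduced problem.} The price-taker problem faces a fixed $Y$, whereas the reduced problem is the one in which $Y$ moves with $\breve h$. I would compare the two indices directly. The price-taker index is $\lambda e^{\delta t}\xi Y^{\theta_a}(h^a)^{\gamma_a}$. Substituting $Y^{\theta_a}=(h^a)^{\theta_a\pi_a/p}(Y^{-a})^{\vartheta}$, it becomes $\lambda e^{\delta t}\xi(Y^{-a})^{\vartheta}\breve h^{\rho}$, using $\gamma_ap+\theta_a\pi_a=\rho$. The last expression is $p$ times the reduced-problem index at multiplier $\lambda/p$ and price $\xi(Y^{-a})^{\vartheta}$, because $\psi'_{1}/u'_\rho$ evaluated at $\breve h$ gives $\breve h^{\rho}$ and the factor $p$ enters through the utility identity.

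Because the adjoints are linear in the index and the obstacles scale by $p$, the obstacle and support conditions for the two problems coincide. The budgets are literally the same, since $\xi c^a=\xi(Y^{-a})^{\vartheta}\breve h$. Theorem~\ref{ec:thm:foc}, together with necessity in Theorem~\ref{ec:thm:necessary} within the locally interior class, then shows that $(Y,h^a)$ is an equilibrium if and only if $\breve h$ solves the reduced ILA problem.

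\emph{Existence and uniqueness of the reduced problem.} The reduced deflator is $\xi(Y^{-a})^{\vartheta}$. By the first step it lies between constant multiples of the geometric deflator with drift $g^{-a}$ and loading $s^{-a}$. The corresponding artificial market given by \eqref{eq:hat} is exactly $(r^{-a},\kappa^{-a})$, and $K^{-a}(\rho)>0$ holds by assumption. Theorem~\ref{ec:thm:comparable} therefore yields a unique optimum $\breve h$ and a unique binding multiplier.

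Consistency is satisfied by construction. The multipliers of the types $j\ne a$ are unique, and that of type $a$ is unique, so the equilibrium and all its multipliers are unique. No weak interaction condition enters, because there is no fixed point to solve: $Y$ is determined explicitly from $\breve h$.
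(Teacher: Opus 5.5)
Your existence argument is the paper's: the index identity $\lambda e^{\delta t}\xi_tY_t^{\theta_a}(h^a_t)^{\gamma_a}=\lambda e^{\delta t}\xi_t(Y^{-a}_t)^{\vartheta}\breve h_t^{\rho}$ transfers the conditions of Theorem~\ref{ec:thm:foc} from $\breve h$ to $h^a=\breve h^{\,p}$.

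One correction there. The right side is already the index of the ordinary problem at the \emph{same} multiplier $\lambda$. Both problems have the obstacles $[-\beta_a,\alpha_a]$. So the adjoints coincide outright, and the support measures differ only by the factor $p>0$. Your rephrasing through the multiplier $\lambda/p$, with adjoints ``linear in the index'' and obstacles ``scaled by $p$'', is wrong. The adjoint $\E[\int_t^\infty e^{-\delta(s-t)}(1-z_s)\dd s\,|\,\F_t]$ is affine in $z$, so replacing $z$ by $pz$ turns $\Psi$ into $(1-p)/\delta+p\Psi$, and the obstacle conditions are not preserved under that matching.

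The genuine gap is uniqueness. You obtain it from ``$(Y,h^a)$ is an equilibrium if and only if $\breve h$ solves the reduced problem.'' The ``only if'' direction rests on Theorem~\ref{ec:thm:necessary}, which needs the following:
\begin{itemize}
\item $h^a$ maximizes a Lagrangian $\J_P$ for some multiplier;
\item its index is bounded;
\item $\J_P(h^a)>-\infty$;
\item $h^a$ is locally interior.
\end{itemize}
For an arbitrary competing equilibrium none of these is available. Its benchmark is $Y=(h^a)^{\pi_a/p}(Y^{-a})^{1/p}$, built from the unknown $h^a$ itself. So you do not know that $\xi Y^{\theta_a}$ is comparable to a geometric deflator, and Theorem~\ref{ec:thm:comparable} cannot supply a multiplier or index bounds. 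At best your argument gives uniqueness within a regularity class, whereas the theorem asserts uniqueness among all mean field equilibria.

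The missing idea is a direct deviation built with H\"older's inequality, which needs no multiplier or regularity for the competing equilibrium.
\begin{itemize}
\item Put $\tilde h=(h^a)^{1/p}$. Consistency gives $c^a=(Y^{-a})^{\vartheta}\tilde h$. Hence $\tilde h$ is admissible and affordable in the ordinary problem, with $U_{\gamma_a}(h^a)=pU_\rho(\tilde h)$.
\item If $\tilde h\ne\breve h$, uniqueness of the ordinary optimum gives $pU_\rho(\breve h)>U_{\gamma_a}(h^a)$.
\item Hold $Y$ fixed and let type $a$ deviate to $\check c^a=(c^a)^{1-p}(\bar c^a)^{p}$, where $\bar c^a=(Y^{-a})^{\vartheta}\breve h$. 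Since $1-p=\pi_a\theta_a$, the factor $c^a$ cancels in $\check c^aY^{-\theta_a}$, which equals $\breve h^{\,p}$, with the correct reference.
\item H\"older's inequality for the measure $\xi_t\,\dd t\,\mathbb P(\dd\omega)$ bounds the cost of $\check c^a$ by $X_a^{1-p}X_a^{p}=X_a$.
\end{itemize}
This is a strictly profitable, affordable deviation, which is a contradiction. Once the benchmark is pinned down it is comparable to a geometric one, and multiplier uniqueness follows from Theorem~\ref{ec:thm:comparable}.
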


\begin{proof}
The benchmark does not enter the problems of the types $j\ne a$, whose
plans and multipliers are unique by Theorem~\ref{ec:thm:comparable} applied
to geometric prices. Their indices lie in deterministic bands and
$\gamma_j\log c^j_t=\log z_j(t)-\log\lambda_j-\nu_jt+\kappa B_t$, so
$\log Y^{-a}_t-g^{-a}t-s^{-a}B_t$ is bounded by deterministic constants. The
deflator $\xi(Y^{-a})^{\vartheta}$ is therefore bounded above and below by
constant multiples of $\exp(-(r^{-a}+(\kappa^{-a})^2/2)t-\kappa^{-a}B_t)$,
and Theorem~\ref{ec:thm:comparable} in the market $(r^{-a},|\kappa^{-a}|)$
gives the unique optimum $\breve h$ of the ordinary problem with curvature
$\rho$, its unique multiplier $\lambda_a$, a bounded index
$z_t=\lambda_ae^{\delta_at}\xi_t(Y^{-a}_t)^{\vartheta}\breve h_t^{\,\rho}$,
and the conditions \eqref{ec:eq:foc} for its adjoint.

For power and for logarithmic utility,
$u_{\gamma_a}(\breve h^{\,p})=p\,u_\rho(\breve h)$ holds exactly, including the
two signed variation measures and the initial jump, because
$1-\rho=p(1-\gamma_a)$. Hence $h^a=\breve h^{\,p}$ is admissible, has the
reference $h_{a,0-}$, and revises in the same directions as $\breve h$. The
definition of $Y$ and $p=1-\pi_a\theta_a$ give
$c^aY^{-\theta_a}=(c^a)^p(Y^{-a})^{-\theta_a}=\breve h^{\,p}=h^a$, so the
profile satisfies \eqref{eq:consistency}. The index of type $a$ at the
benchmark $Y$ is
$\lambda_ae^{\delta_at}\xi_tY_t^{\theta_a}(h^a_t)^{\gamma_a}
=\lambda_ae^{\delta_at}\xi_tc^a_t(h^a_t)^{\gamma_a-1}=z_t$, so its adjoint and
support conditions are those already verified for $\breve h$, its
Lagrangian is finite, and its budget binds because
$\E\int\xi_tc^a_t\dd t=\E\int\xi_t(Y^{-a}_t)^{\vartheta}\breve h_t\dd t=X_a$.
Theorem~\ref{ec:thm:foc} makes $h^a$ the unique best response, and the
other types are best responses by construction. This proves existence.

Conversely, let $(Y,h)$ be any mean field equilibrium. The plans of the
types $j\ne a$ are the same $c^j$, so $Y^{-a}$ is the process above, and
consistency gives $h^a=(c^a)^p(Y^{-a})^{-\theta_a}$. Put $\tilde h=(h^a)^{1/p}$,
so that $c^a=(Y^{-a})^{\vartheta}\tilde h$. The utility identity makes
$\tilde h$ admissible for the ordinary problem with curvature $\rho$, and
its expenditure is at most $X_a$. If $\tilde h\ne\breve h$, uniqueness of
the ordinary optimum gives $pU_\rho(\breve h)>pU_\rho(\tilde h)=U_{\gamma_a}(h^a)$, where $U_\rho$ denotes the
utility functional with curvature $\rho$. Hold $Y$ fixed and let type $a$
deviate to $\check c^a=(c^a)^{1-p}(\bar c^a)^p$ with
$\bar c^a=(Y^{-a})^{\vartheta}\breve h$. H\"older's inequality for the measure
$\xi_t\dd t\,\mathbb P(\dd\omega)$ gives
$\E\int\xi_t\check c^a_t\dd t\le(\E\int\xi_tc^a_t\dd t)^{1-p}(\E\int\xi_t\bar c^a_t\dd t)^p\le X_a$,
and for $p=1$ the deviation is $\bar c^a$ itself. Completeness makes it
financeable, and since $Y=(c^a)^{\pi_a}Y^{-a}$ and $1-p=\pi_a\theta_a$, its
relative consumption is
$\check c^aY^{-\theta_a}=(\bar c^a)^p(Y^{-a})^{-\theta_a}=\breve h^{\,p}$, an
admissible plan with the correct reference and utility $pU_\rho(\breve h)$,
a strict improvement. This contradicts equilibrium, so $\tilde h=\breve h$
and all plans and the benchmark coincide. Finally, any positive multiplier that
supports $h^a$ at the benchmark $Y$ makes $\breve h$ the maximizer of the
Lagrangian of the ordinary problem at the corresponding price, by the index
identity, and the multiplier of the ordinary problem is unique by
Theorem~\ref{ec:thm:comparable}.
\end{proof}

The parameter region extends beyond weak interaction. For two types with
$\pi_a=3/4$, $\theta_a=4/5$, $\gamma_a=1$, $\theta_b=0$, $\gamma_b=2$,
$r=0.03$, $\kappa=0.2$, and $\delta=0.04$ one has
$\Theta_1+\Theta_2=1.2$, so that the contraction factor of
Theorem~\ref{thm:weakexist} would be $3/2$, whereas $p=0.4$, $\vartheta=2$,
$\rho=1$, $\kappa^{-a}=0.15$, $r^{-a}=0.03625$, $K^{-a}(1)=0.04$, and
$K_b(2)=0.04$, so the theorem covers all admissible adjustment
parameters, wealth levels, and references. The restriction to one benchmark
sensitive type is essential to the reduction. It does extend to a block
$\mathcal Z$ of benchmark sensitive types with identical
$(\gamma,\delta,\alpha,\beta,\theta)$ whose initial data are proportional,
$X_i=\varrho_iX_{\mathcal Z}$ and $h_{i,0-}=\varrho_ih_{\mathcal Z,0-}$ with
$\varrho_i>0$ for $i\in\mathcal Z$. Scaling a plan and its reference by
$\varrho_i$ scales its budget by $\varrho_i$ and its utility by
$\varrho_i^{1-\gamma}$, or shifts it by $(\log\varrho_i)/\delta$, so at any
benchmark the problems of the types in $\mathcal Z$ are equivalent. With
$\pi_{\mathcal Z}=\sum_{i\in\mathcal Z}\pi_i$ and
$Y^{-\mathcal Z}=(\prod_{i\in\mathcal Z}\varrho_i^{\pi_i})\prod_{j\notin\mathcal Z}(c^j)^{\pi_j}$,
the profile built from the theorem with $\pi_{\mathcal Z}$ in place of
$\pi_a$, whose constant in $Y^{-\mathcal Z}$ changes neither $g^{-\mathcal Z}$
nor $s^{-\mathcal Z}$, together with $c^i=\varrho_ic^{\mathcal Z}$, is an
equilibrium, and its benchmark lies in the bounded class. It is the only
equilibrium in that class, because there the price of every household in
the block is comparable to a geometric price, Theorem~\ref{ec:thm:comparable}
makes its budget constrained optimum unique, so that $h^i=\varrho_ih^{\mathcal Z}$
in every such equilibrium, and the consistency equation reduces to
$Y=(c^{\mathcal Z})^{\pi_{\mathcal Z}}Y^{-\mathcal Z}$, to which the uniqueness
argument of the theorem applies.

%% file: ec4f_infinite.tex
\subsection{Infinitely Many Types}\label{ec:infinitetypes}

Theorem~\ref{thm:weakexist} applies Brouwer's theorem to finitely many
multipliers. The restriction to finitely many types can be removed in two
directions: countably many parameter types under summability conditions,
and arbitrary distributions of initial wealth and inherited standards
under finitely many parameter regimes.

\begin{theorem}\label{ec:thm:countable}
Let the type set be $\mathbb N$ with probabilities $\pi_i>0$, let every
type satisfy the restrictions of Section~\ref{sec:setup} with $\kappa_i>0$,
and define $\Theta_1$, $\Theta_2$, $D$, $g$, and $s$ by the formulas of
Section~\ref{sec:exist} with infinite sums. Suppose
\begin{equation}\label{ec:eq:countablesum}
 \Theta_1+\Theta_2<1,\qquad
 \sum_i\frac{\pi_i}{\gamma_i}\big(|\nu_i|+\kappa_{a,i}+\kappa_{p,i}\big)<\infty,\qquad
 \sum_i\pi_i\Big(\frac{|a_i|+|b_i|}{\gamma_i}+|\log\hat K_i(\gamma_i)|+|\log X_i|\Big)<\infty,
\end{equation}
where $\hat\kappa_i>0$ and $\hat K_i(\gamma_i)>0$ in every artificial
market. Then for every sequence of references $h_{i,0-}>0$ a mean field
equilibrium exists in which every budget binds and
$\sum_i\pi_i|\mu_i|/\gamma_i<\infty$. Its benchmark satisfies
\eqref{eq:existbound} with the infinite sum definitions of $k(\mu)$ and
$R$. For fixed multipliers with $\sum_i\pi_i|\mu_i|/\gamma_i<\infty$ the
benchmark is unique in the bounded class and best response iteration
converges at the rate $q=\Theta_2/(1-\Theta_1)$.
\end{theorem}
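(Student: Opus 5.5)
The plan is to redo the proof of Theorem~\ref{thm:weakexist} with all sums over $\mathbb N$, and to replace Brouwer's theorem by Schauder's theorem. Schauder will be applied on a compact convex subset of a weighted sequence space of multipliers.

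\emph{Inner contraction.} First I check that the summability hypotheses \eqref{ec:eq:countablesum} make the infinite-sum quantities finite: $g$, $s$, $k(\mu)$ for $\sum_i\pi_i|\mu_i|/\gamma_i<\infty$, the midpoints $m_i$, and the radius $R$. Finiteness of $R$ uses $d_i\le|a_i|+|b_i|$. The map $\mathcal T_\mu$ is then defined as in Step~1 of the proof of Theorem~\ref{thm:weakexist}. Because of \eqref{ec:eq:loglipschitz}, the estimate \eqref{ec:eq:contraction} holds term by term and sums to the same factor $q=\Theta_2/(1-\Theta_1)$. The deviation $(\mathcal T_\mu y^0)-y^0$ is again bounded by $(1-q)R$, since each index stays in $[a_i,b_i]$. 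The only new points are well-posedness and the c\`adl\`ag version of the infinite sum of conditional expectations. For these I use the envelope bounds $|\log h_i[y]_t|\le C_i+|\nu_i|t/\gamma_i+(\kappa_{a,i}|B_t|+\kappa_{p,i}|W_t|)/\gamma_i$ with $\sum_i\pi_iC_i<\infty$. The second summability condition then gives a dominating series on compact time intervals, and dominated convergence applies. With this, the fixed point $y^\mu$, the bound \eqref{eq:existbound}, uniqueness in the bounded class, and the geometric rate follow exactly as before.

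\emph{Outer problem.} The envelope \eqref{ec:eq:wealthenv} holds type by type, with $\log C_i^\pm$ bounded by $|m_i|/\gamma_i+d_i/(2\gamma_i)+2R+|\log\hat K_i(\gamma_i)|$. This quantity is $\pi$-summable by hypothesis. The map $\mathbf A$ is a diagonal operator plus a rank-one operator, with the same explicit inverse as before. I therefore work in $\ell^1$ with weights $\pi_i$, in the coordinates $\jmath=\mathbf A\mu$, which I normalize as $\jmath_i/\gamma_i$ as needed. In these coordinates the map $\mu\mapsto\mathbf A^{-1}(F(\mu)-\log X)$ of Step~3 sends the set $K=\mathbf A^{-1}\big(\prod_i(I_i-\log X_i)\big)$ into itself. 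The set $\prod_i(I_i-\log X_i)$ is a product of intervals whose endpoints are $\pi$-summable, because $\sum_i\pi_i|\log X_i|<\infty$, and whose lengths $\ell_i$ are summable. Such a box is compact in weighted $\ell^1$: it is closed, bounded, and has uniformly small tails. Since $\mathbf A^{-1}$ is bounded on this space, with the rank-one part controlled by $\sum_j\pi_j|\cdot|$, the set $K$ is a compact convex set. Its elements satisfy $\sum_i\pi_i|\mu_i|/\gamma_i<\infty$.

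\emph{Main obstacle: continuity for Schauder.} The step I expect to be hardest is proving that $\mu\mapsto F(\mu)$ is continuous on $K$ in the weighted $\ell^1$ norm. Step~2 of the earlier proof gives $\|y^\mu-y^{\tilde\mu}\|_\infty\le(1-\Theta_1-\Theta_2)^{-1}\sum_i\pi_i|\mu_i-\tilde\mu_i|/\gamma_i$, so the benchmark depends continuously on $\mu$ in the norm $\|\cdot\|_w$. Each $\mathcal E_i$ is continuous by the dominated-convergence argument used there. The values $F_i$ lie in the fixed intervals $I_i$, which have summable lengths. Hence coordinatewise continuity together with uniform tail control yields continuity in weighted $\ell^1$. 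I would also check that the topology on $K$ given by $\|\cdot\|_w$ coincides with the product topology, which holds because the tails are uniformly small. Schauder's theorem then produces multipliers at which every budget binds. Optimality of the plans, consistency \eqref{eq:consistency}, and financing follow as in the finite case, via Theorem~\ref{ec:thm:foc} and martingale representation.
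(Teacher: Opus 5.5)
Your proposal is correct and follows the paper's proof. The same summable envelopes give the inner contraction with factor $q$ and radius $R$, and the outer step uses the same box $\prod_i I_i$ of envelope intervals with $\pi$-summable endpoints, on which coordinatewise continuity of the expenditures is enough.

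The only difference is in the fixed point step. You apply Schauder's theorem to this box, viewed as a compact convex set in weighted $\ell^1$ and transported to $\mu$ coordinates by the bounded map $\mathbf A^{-1}$. The paper instead works in the coordinates $f$ with the product topology: it applies Brouwer to the first $N$ coordinates with the tail frozen and extracts a diagonal limit, which is a hands-on proof of the same Schauder--Tychonoff statement.

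One small slip: the term $\theta_i(1+\gamma_i^{-1})R$ in $\log C_i^\pm$ need not be below $2R$. It is $\pi$-summable, with sum at most $(\Theta_1+\Theta_2)R$, and that is all you use.
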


\begin{proof}
All trend coefficients and $R=\sum_i\pi_id_i/(2\gamma_i(1-\Theta_1-\Theta_2))$
are finite by \eqref{ec:eq:countablesum}. Fix $\mu$ with
$\sum_i\pi_i|\mu_i|/\gamma_i<\infty$ and consider the benchmark map
$\mathcal T_\mu$ of Section~\ref{ec:proofexist}. If
$|y_t-k-gt-sB_t|\le M$, comparison with the two bounding geometric prices
gives
\[
 |\log h_i[y]_t|\le\frac1{\gamma_i}\Big(|a_i|+|b_i|+|\mu_i|+\theta_i(|k|+M)
 +|\nu_i+\theta_ig|t+|\kappa_{a,i}-\theta_is||B_t|+\kappa_{p,i}|W_t|\Big),
\]
and the sum over $i$ of the conditional expectations of the right side is
finite, uniformly on compact time intervals, almost surely, since
$\E[|W_t|\,|\,\F^B_t]=\E|W_t|$ and every term is summable by
\eqref{ec:eq:countablesum}. The series defining $\mathcal T_\mu y$ therefore
converges uniformly on compacts to an adapted c\`adl\`ag process, and the
argument of Step~1 applies to each summand. Summing the individual
estimates \eqref{ec:eq:loglipschitz} gives the contraction with factor $q$,
the summable index identity at $y^0=k(\mu)+gt+sB_t$ gives
$\|\mathcal T_\mu y^0-y^0\|_\infty\le(1-q)R$, and Banach's theorem gives the
unique fixed point $y^\mu$ in the ball of radius $R$ and the rate. Comparing
two multiplier sequences as in Step~2 gives
$\|y^\mu-y^{\tilde\mu}\|_\infty\le(1-\Theta_1-\Theta_2)^{-1}\sum_i\pi_i|\mu_i-\tilde\mu_i|/\gamma_i$.

For the budgets define $C_i^\pm$ and $I_i$ as in Step~3, which uses no
summation over types, so the envelopes \eqref{ec:eq:wealthenv} hold type by
type. By \eqref{ec:eq:countablesum}, $\sum_i\pi_i\sup_{x\in I_i}|x|<\infty$.
Let $\mathcal I=\prod_iI_i$ with the product topology, and for $f\in\mathcal I$
put $S(f)=\sum_j\pi_j(f_j-\log X_j)$ and
$\mu_i(f)=\gamma_i(f_i-\log X_i)-\theta_i(\gamma_i-1)S(f)$, which is
$\mathbf A^{-1}$ applied to $f-\log X$. Then
$\sum_i\pi_i\mu_i(f)/\gamma_i=DS(f)$, $k(\mu(f))=\bar m/D-S(f)$, and
$\sum_i\pi_i|\mu_i(f)|/\gamma_i$ is bounded over $\mathcal I$. Substituting
$f_i-\log X_i=\mu_i(f)/\gamma_i+\theta_i(1-1/\gamma_i)S(f)$ into
\eqref{ec:eq:wealthenv} shows that
$(\mathcal Hf)_i=\log\mathcal E_i(\mu(f))+f_i-\log X_i$ maps $\mathcal I$
into itself. The map is continuous in the product topology. If
$f^{(n)}\to f$ coordinatewise, dominated convergence gives
$S(f^{(n)})\to S(f)$ and $\sum_i\pi_i|\mu_i(f^{(n)})-\mu_i(f)|/\gamma_i\to0$,
the benchmarks then converge uniformly, each log relative consumption
converges uniformly by \eqref{ec:eq:loglipschitz}, and each expenditure
converges by dominated convergence on the infinite horizon. A fixed point
follows from Brouwer's theorem in finite dimensions. Fix $f^0\in\mathcal I$,
freeze the coordinates $i>N$ at $f^0_i$, and apply Brouwer's theorem to the
first $N$ coordinates of $\mathcal H$, which gives $f^{(N)}\in\mathcal I$
with $f^{(N)}_i=(\mathcal Hf^{(N)})_i$ for $i\le N$. A diagonal subsequence
converges coordinatewise to some $f^\ast$, since every $I_i$ is compact,
and coordinatewise continuity gives $f^\ast=\mathcal Hf^\ast$, that is
$\mathcal E_i(\mu(f^\ast))=X_i$ for every $i$. Individual optimality,
admissibility, index bounds, and financing are as in the finite case, and
the summability above justifies the conditional aggregation.
\end{proof}

If the preference and market parameters range over a compact set on which
$\inf\hat\kappa_i>0$ and $\inf\hat K_i(\gamma_i)>0$ for the trend
coefficients induced by the distribution, then continuity of the band
endpoints bounds $a_i$ and $b_i$, so every countable distribution of such
parameters with $\Theta_1+\Theta_2<1$ and $\sum_i\pi_i|\log X_i|<\infty$
satisfies \eqref{ec:eq:countablesum}, for arbitrary positive references.

\begin{theorem}\label{ec:thm:continuousinitial}
Let the preference and market parameters take finitely many values
$i=1,\dots,n$ with probabilities $\pi_i>0$ and satisfy the hypotheses of
Theorem~\ref{thm:weakexist}. Conditionally on each regime, let the initial
wealth $X$ and the reference $h_{0-}$ have any joint distribution,
independent of the Brownian motions, with
$\E[|\log X|+|\log h_{0-}|]<\infty$. Then a mean field equilibrium exists in
which every budget binds and the indices are bounded, and its benchmark
satisfies $|\log Y_t-k-gt-sB_t|\le R$ almost surely for some finite $k$,
with the $g$, $s$, and $R$ of Theorem~\ref{thm:weakexist}. The joint
distribution of the initial data may be nonatomic and unbounded.
\end{theorem}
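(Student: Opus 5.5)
The plan is to reduce the nonatomic initial data to the proof of Theorem~\ref{thm:weakexist} by keeping finitely many multiplier parameters. Within each regime, I would let the multiplier be a function of the initial data rather than a single number. By the scaling identity in Step~5 of the proof of Theorem~\ref{ec:thm:comparable}, at a fixed benchmark the response of a household in regime $i$ with reference $h_{0-}$ and log multiplier $\mu$ is $e^{-\mu/\gamma_i}h^\ast(Q_i,e^{\mu/\gamma_i}h_{0-})$. The budget part of Theorem~\ref{ec:thm:comparable} then gives, for each $(X,h_{0-})$ and each frozen benchmark $y$, a unique binding multiplier $\mu_i(X,h_{0-};y)$. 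I would treat the benchmark itself as the unknown and define a map $y\mapsto\mathcal T y$. In this map every household first solves its budget-constrained problem at price $e^{\delta_it}\xi^ie^{\theta_iy}$, and the map then aggregates $(1-\Theta_1)^{-1}\sum_i\pi_i\E[\log h\mid\F^B_t]$, now also averaging over the initial data.

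The key steps are as follows. First, bound the binding multiplier. By the envelopes \eqref{ec:eq:wealthenv}, applied household by household, $\mu=-\gamma_i\log X+\theta_i(\gamma_i-1)k+O(1)$, with a constant that does not depend on $(X,h_{0-})$. The references do not enter this bound. Hence $\log h$ equals $(\log X)$ plus a function of $(y,t,B,W)$ plus a uniformly bounded error, and integrability of $\log X$ and $\log h_{0-}$ makes the conditional aggregation finite.

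Second, handle the budgeted map, which is no longer a contraction of the type in \eqref{ec:eq:contraction}, because the multiplier also moves with $y$. Rather than seek a contraction, I would apply Schauder's or a Brouwer approximation. I would take an increasing sequence of finite partitions of the initial-data space into cells, within each cell assign a common multiplier by averaging the budget, and use Theorem~\ref{thm:weakexist} for the resulting finite type problem. That theorem extends to randomness within a cell, since $\F_0$-measurable data are allowed and the envelopes hold conditionally. For each approximating problem, binding holds only on the cell average.

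Third, show that the finite-cell equilibria have benchmarks in the ball of radius $R$ about $k^{(N)}+gt+sB_t$, where $R$ is the same for all $N$ because $g$, $s$, $R$ depend only on regimes. By the first step the centers $k^{(N)}$ are bounded. Extract a subsequence with $k^{(N)}\to k$ and with benchmarks converging locally in measure (Koml\'os, as in Step~3 of Theorem~\ref{ec:thm:comparable}). Then pass to the limit in individual optimality and budgets using Lemma~\ref{ec:lem:localstab} and the uniform tails \eqref{ec:eq:crossprice}.

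The main obstacle is this limit passage in the budgets. One has to show that individual budgets bind in the limit, not merely cell averages. I would avoid this issue as follows: at each fixed limit benchmark, define the multiplier household by household using Theorem~\ref{ec:thm:comparable}, and verify consistency directly. Consistency should follow from the log-Lipschitz estimate \eqref{ec:eq:loglipschitz} together with the bound on the multiplier, since the cell-average multiplier differs from the individual one by an amount that vanishes as the mesh of the partition tightens. I expect that showing this vanishing, uniformly enough to use dominated convergence in $\E[\cdot\mid\F^B_t]$, is the delicate step. The bound \eqref{eq:existbound} then passes to the limit because it is closed under the convergence used.
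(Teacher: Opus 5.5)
\textbf{There is a genuine gap in the compactness step.} In your third step, the only information you have on the approximating benchmarks is $|y^{(N)}_t-k^{(N)}-gt-sB_t|\le R$. A uniformly bounded sequence of adapted processes has, in general, no subsequence that converges in measure. Koml\'os's theorem only gives almost sure convergence of Ces\`aro averages. In Step~3 of the proof of Theorem~\ref{ec:thm:comparable} that is enough, because averages of a maximizing sequence stay maximizing by concavity. Here, however, an average of equilibrium benchmarks is not a fixed point of any benchmark map, since best responses depend nonlinearly on the benchmark. Schauder's theorem would need the same missing compactness. So you never obtain a limit benchmark at which to define the multipliers household by household. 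For the same reason you cannot invoke Lemma~\ref{ec:lem:localstab}, whose hypothesis is local convergence in measure of the prices.

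The pooled cell budgets cause a second difficulty, the one you call delicate. To control the gap between a cell multiplier and an individual binding multiplier, you need a lower bound on the elasticity of expenditure with respect to the own multiplier. Section~\ref{sec:scope} identifies exactly this as hard under inaction. Lemma~\ref{ec:lem:elasticity} provides such a bound only on bounded sets, with constants that degrade for small references. In addition, finitely many cells covering unbounded data cannot have vanishing mesh.

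\textbf{What the paper does instead.} The paper obtains compactness from the multipliers and transfers it to the benchmark. Step~2 of the proof of Theorem~\ref{thm:weakexist}, with an integral over the initial data in place of the sum over types, gives
\[
 \|y^{(m)}-y^{(l)}\|_\infty\le\frac1{1-\Theta_1-\Theta_2}\sum_i\frac{\pi_i}{\gamma_i}\int|\mathring\mu^{(m)}_i(w)-\mathring\mu^{(l)}_i(w)|\,\dd\Lambda_i(w).
\]
So $L^1$ convergence of the multiplier functions makes the benchmarks Cauchy in the uniform essential path norm, and no convergence in measure has to be proved.

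To make the multiplier functions compact, the paper uses homogeneity, $h^\ast(P,h_{0-})=h_{0-}\,h^\ast(h_{0-}^{\gamma_i}P,1)$. This lets it replace every reference by $\bar h=\exp(\E\log h_{0-})$ and every wealth by $\bar hX/h_{0-}$, so within a regime the only datum is the ratio $w=X/h_{0-}$. It then quantizes $w$ itself, not the budgets, using nondecreasing finite-valued maps $\varrho_m$ with $|\log\varrho_m(w)|\le|\log w|+1$. Each approximating population is therefore a genuine finite-type instance of Theorem~\ref{thm:weakexist}, in which every household's budget binds.

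Within regime $i$ all households face one price, so the binding multiplier is nonincreasing in $w$ by the strict monotonicity in Theorem~\ref{ec:thm:comparable}. Your own envelope bound, $|\mathring\mu^{(m)}_i(w)+\gamma_i\log\varrho_m(w)|\le M_i$, supplies the domination needed for Helly's selection theorem and for dominated convergence. Your two-dimensional cells with a pooled multiplier destroy exactly this monotonicity. Quantizing the data also removes the elasticity issue: in the limit one only needs $\varrho_m(w)\to w$.
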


\begin{proof}
Put $\bar h=\exp(\E\log h_{0-})$, let $w=X/h_{0-}$ be the wealth to reference
ratio, and let $\Lambda_i$ be the conditional law of $w$ in regime $i$,
whose absolute logarithm is integrable. Homogeneity gives
\begin{equation}\label{ec:eq:unitreference}
 h^\ast(P,h_{0-})=h_{0-}\,h^\ast(h_{0-}^{\gamma_i}P,1),
\end{equation}
by multiplying the objective by $h_{0-}^{1-\gamma_i}$ for power utility,
while for logarithmic utility the running utility changes by the constant
$\delta_i^{-1}\log h_{0-}$ and the adjustment cost is unchanged. We first
construct an equilibrium of the proxy population in which every household
has the reference $\bar h$ and the wealth $\bar hw$. Once its unit reference
responses $\mathring h_i(w)=h_i(w)/\bar h$ are known, the response of an
original household is $h_{0-}\mathring h_i(w)$, with expenditure
$h_{0-}w=X$, and the aggregate log relative consumption of both populations
agrees because $\E\log h_{0-}=\log\bar h$.

Choose nondecreasing finite valued quantizations
$\varrho_m:(0,\infty)\to(0,\infty)$ with $\varrho_m(w)\to w$ and
$|\log\varrho_m(w)|\le|\log w|+1$, for instance by truncating $\log w$ to
$[-m,m]$, rounding down to the grid $m^{-1}\mathbb Z$, and exponentiating.
Apply Theorem~\ref{thm:weakexist} to the finite population with the law of
$(i,\bar h\varrho_m(w),\bar h)$, and write $y^{(m)}$ for its log benchmark,
$k^{(m)}$ for its center, and $\mu^{(m)}_i(\varrho_m(w))$ for its log
multipliers. For a quantization value of zero probability, let the
multiplier be the unique one at which the budget $\bar h\varrho_m(w)$ binds
at the benchmark $y^{(m)}$, which exists by Theorem~\ref{ec:thm:comparable}
and changes nothing. Set $\mathring\mu_i^{(m)}(w)=\mu^{(m)}_i(\varrho_m(w))+\gamma_i\log\bar h$,
so that at the benchmark $y^{(m)}$ the unit reference response
$\mathring h^{(m)}_i(w)=h^\ast(e^{\mathring\mu_i^{(m)}(w)}e^{\delta_it}\xi^ie^{\theta_iy^{(m)}},1)$
has expenditure $\varrho_m(w)$. The uniqueness and strict monotonicity of
the budget multiplier in Theorem~\ref{ec:thm:comparable} make
$\mathring\mu_i^{(m)}(w)$ nonincreasing in $w$.

The envelope \eqref{ec:eq:wealthenv} supplies numbers
$\epsilon_i^{(m)}(w)\in[\log C_i^-,\log C_i^+]$ with
\[
 \log(\bar h\varrho_m(w))=-\frac{\mu^{(m)}_i(\varrho_m(w))}{\gamma_i}+\theta_i(1-\gamma_i^{-1})k^{(m)}+\epsilon_i^{(m)}(w).
\]
Averaging over regimes and $w$, and using $Dk^{(m)}=\bar m-\E[\mu^{(m)}/\gamma]$
and $D+\E[\theta(1-1/\gamma)]=1$, gives
\[
 k^{(m)}=\bar m+\E\log h_{0-}+\E\log\varrho_m(w)-\E\,\epsilon_i^{(m)}(w),
\]
so $(k^{(m)})$ is bounded. Returning to the individual envelope gives
constants $M_i$ independent of $m$ and $w$ with
$|\mathring\mu_i^{(m)}(w)+\gamma_i\log\varrho_m(w)|\le M_i$. Helly's selection
theorem for monotone functions on an exhausting sequence of compact
subintervals of $(0,\infty)$, a diagonal argument over the finitely many
regimes, and a further extraction at the countably many atoms of the $\Lambda_i$
in the discontinuity set of the limit give a subsequence along which
$\mathring\mu_i^{(m)}(w)\to\mathring\mu_i(w)$ for $\Lambda_i$ almost every
$w$, and $\int|\mathring\mu_i^{(m)}-\mathring\mu_i|\dd\Lambda_i\to0$ by
dominated convergence, with the dominating function
$M_i+\gamma_i(|\log w|+1)$. At the countably many exceptional $w$ define
$\mathring\mu_i(w)$ as any limit point, which leaves the integrals
unchanged.

Realize all these equilibria on one product space. Their benchmark
equations read
$y^{(m)}_t=(1-\Theta_1)^{-1}\big(\E\log h_{0-}+\sum_i\pi_i\int\E[\log\mathring h^{(m)}_i(w)_t\,|\,\F^B_t]\dd\Lambda_i(w)\big)$,
and the fixed multiplier estimate of Step~2, with the integral over $w$ in
place of the sum over types, gives
\[
 \|y^{(m)}-y^{(l)}\|_\infty\le\frac1{1-\Theta_1-\Theta_2}
 \sum_i\frac{\pi_i}{\gamma_i}\int|\mathring\mu_i^{(m)}(w)-\mathring\mu_i^{(l)}(w)|\dd\Lambda_i(w)
\]
along the subsequence. Hence $y^{(m)}$ converges in the uniform essential
path norm to an adapted c\`adl\`ag process $y$, and after a further
extraction $k^{(m)}\to k$ with $|y_t-k-gt-sB_t|\le R$. For $\Lambda_i$ almost every
$w$ define $\mathring h_i(w)=h^\ast(e^{\mathring\mu_i(w)}e^{\delta_it}\xi^ie^{\theta_iy},1)$,
and at the exceptional $w$ let $\mathring h_i(w)$ be the unique optimum
with expenditure $w$ at this benchmark. These responses have a version
that is jointly measurable in $(w,t,\omega)$ with adapted c\`adl\`ag
paths. To construct it, choose the optimal response at every rational log
multiplier on the canonical space, so that all c\`adl\`ag versions and all
estimates \eqref{ec:eq:loglipschitz} hold outside one null set. A real log
multiplier is the limit of rational valued Borel functions of $w$, and
\eqref{ec:eq:loglipschitz} makes the uniform limit in log consumption well
defined and equal to the optimum at the real multiplier. By
Theorem~\ref{ec:thm:comparable},
$\|\log\mathring h^{(m)}_i(w)-\log\mathring h_i(w)\|_\infty
\le\gamma_i^{-1}(|\mathring\mu_i^{(m)}(w)-\mathring\mu_i(w)|+\theta_i\|y^{(m)}-y\|_\infty)$, so
the expenditure of $\mathring h_i(w)$ is the limit of $\varrho_m(w)$, that
is $w$, by dominated convergence with the limiting integrable integrand,
and integrating the estimate over $\Lambda_i$ and conditioning on the
common path passes the benchmark equation to the limit. This is an equilibrium of
the proxy population.

Finally set $Y=e^y$, and for an original household with data $(i,X,h_{0-})$
take $h=h_{0-}\mathring h_i(X/h_{0-})$ and the multiplier
$e^{\mathring\mu_i(X/h_{0-})}h_{0-}^{-\gamma_i}$. Identity \eqref{ec:eq:unitreference}
gives optimality and admissibility, the budget is $h_{0-}w=X$, and
aggregation follows from $\E\log h_{0-}=\log\bar h$. The log moment bound
and the geometric envelopes justify all conditional expectations.
\end{proof}

The proof shows more. Two populations with the same regime probabilities,
the same conditional laws of $X/h_{0-}$ in each regime, and the same value
of $\E\log h_{0-}$ have the same set of equilibrium benchmarks in the
bounded class, because at any such benchmark the unit reference response
depends on the initial data only through the regime and the ratio
$X/h_{0-}$, and aggregation depends only on $\E\log h_{0-}$ and the laws of
the ratio. In particular, replacing every reference by $\bar h$ and every
wealth by $\bar hX/h_{0-}$ preserves this set.

%% file: ec4g_arithmetic.tex
\subsection{Arithmetic Benchmarks}\label{ec:arithmetic}

Remark~\ref{rem:scope} notes that a deviation does not enter the arithmetic
average $Y^{\rm ar}_t=N^{-1}\sum_jc^j_t$ as a power. Exact implementation
survives with this benchmark when consumption shares are constant, and it
fails in general. Markets are
complete for the common risk, preferences may be any functional of the
relative consumption path, and the same admissible relative paths are used
for the price taker and in the game. A price taking equilibrium for the
arithmetic benchmark is a profile in which each $c^i$ is optimal for a
price taker facing $Y^{\rm ar}$ with budget $X_i$.

\begin{proposition}\label{ec:prop:arithmeticpos}
Let $(c^1,\dots,c^N)$ be a price taking equilibrium for the arithmetic
benchmark in which every share $\varkappa_i=c^i_t/\sum_jc^j_t$ is a
deterministic constant, and let $0\le\theta_i<1$. Then the profile is a
Nash equilibrium of the game with the arithmetic benchmark.
\end{proposition}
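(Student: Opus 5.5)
The plan mirrors the budget comparison in the proof of Theorem~\ref{thm:exact}, with the geometric power replaced by a concavity argument for the arithmetic average. Fix household $i$, write $Y=Y^{\rm ar}$ for the equilibrium benchmark and $S=\sum_{j\ne i}c^j$ for the others' total consumption. Since the shares are constant, $c^i=\varkappa_i NY$ and $S=(1-\varkappa_i)NY$. Let $\tilde c$ be any financeable deviation whose relative consumption in the game is admissible. The induced benchmark and relative consumption are
\[
 \tilde Y=\frac{S+\tilde c}{N},\qquad \tilde h=\tilde c\,\tilde Y^{-\theta_i}.
\]
The reference at $0-$ is unchanged, so the utility of the deviation in the game is $U_i(\tilde h)$. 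A price taker facing $Y$ buys $\tilde h$ through the consumption $\hat c=\tilde hY^{\theta_i}=\tilde c\,(Y/\tilde Y)^{\theta_i}$. It therefore suffices to show $\E\int\xi_t\hat c_t\dd t\le X_i$. Optimality of $c^i$ for the price taker then gives $U_i(\tilde h)\le U_i(h^i)$, because the admissible relative paths coincide in both problems by hypothesis.

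The key step is to rewrite $\hat c$ in terms of the ratio $\rho=\tilde c/c^i$. Using $Y=c^i/(N\varkappa_i)$ and $\tilde Y=(S+\rho c^i)/N$, we have $Y/\tilde Y=1/(1-\varkappa_i+\varkappa_i\rho)$, and hence
\[
 \hat c=c^i\,\phi(\rho),\qquad \phi(\rho)=\rho\,(1-\varkappa_i+\varkappa_i\rho)^{-\theta_i}.
\]
Set $\varkappa=\varkappa_i$ and $\theta=\theta_i$. One checks that $\phi$ is concave on $(0,\infty)$ for $\theta\in[0,1)$ and $\varkappa\in(0,1]$. Differentiating gives
\[
 \phi'(\rho)=(1-\varkappa+\varkappa\rho)^{-\theta-1}\bigl(1-\varkappa+(1-\theta)\varkappa\rho\bigr),
\]
and a second differentiation yields
\[
 \phi''(\rho)=-\theta\varkappa(1-\varkappa+\varkappa\rho)^{-\theta-2}\bigl(2(1-\varkappa)+(1-\theta)\varkappa\rho\bigr)\le0 .
\]
This holds for $\theta\ge0$ and $1-\theta>0$, which is exactly where $\theta_i<1$ enters. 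We also have $\phi(1)=1$ and $\phi'(1)=1-\theta\varkappa\le1$.

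With concavity in hand, the budget bound follows from Jensen's inequality under a probability measure weighted by $c^i$. Let $w_i=\E\int\xi_tc^i_t\dd t\le X_i$, and define the probability measure $\mathbb Q$ on $[0,\infty)\times\Omega$ by $\dd\mathbb Q=\xi_tc^i_t\,\dd t\,\dd\mathbb P/w_i$. Then
\[
 \E\int\xi\hat c\,\dd t=w_i\,\E^{\mathbb Q}[\phi(\rho)]\le w_i\,\phi\bigl(\E^{\mathbb Q}\rho\bigr),
\]
and $\E^{\mathbb Q}\rho=\E\int\xi\tilde c\,\dd t/w_i\le X_i/w_i$.

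The main obstacle is that $w_i$ might be strictly less than $X_i$. One must then bound $w_i\phi(X_i/w_i)$ by $X_i$, and this needs $\phi(m)\le m$ for $m\ge1$. That inequality does hold: since $\theta\ge0$ and $1-\varkappa+\varkappa m\ge1$, we get $\phi(m)=m(1-\varkappa+\varkappa m)^{-\theta}\le m$. Alternatively, one can note that the budget binds in a price-taking optimum under strictly increasing utility, but the direct inequality avoids needing that.

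It remains to handle two details. The constant-share assumption is what makes $Y/\tilde Y$ a deterministic function of $\rho$ alone. The case $\varkappa_i=0$ is excluded because consumption is positive. Finally, $\tilde h$ has finite variation exactly when it does in the game, so admissibility transfers directly.
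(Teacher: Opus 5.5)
Your proposal is correct and follows the paper's proof. Both reduce the deviation to the price-taker cost of $c^i\phi(\rho)$ with $\phi(\rho)=\rho(1-\varkappa_i+\varkappa_i\rho)^{-\theta_i}$ and use the same concavity computation. The only difference is in how concavity is applied: the paper uses the tangent line at $\rho=1$, $\phi(\rho)\le\theta_i\varkappa_i+(1-\theta_i\varkappa_i)\rho$, which bounds the cost by $\theta_i\varkappa_i w_i+(1-\theta_i\varkappa_i)\E\int_0^\infty\xi_t\tilde c_t\,\dd t\le X_i$ in one step, whereas your Jensen route also needs $\phi(m)\le m$ for $m\ge1$ and the monotonicity of $\phi$. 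Your formula for $\phi'$ supplies that monotonicity, but you should state explicitly that you use it to pass from $\phi(\E^{\mathbb Q}\rho)$ to $\phi(X_i/w_i)$.
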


\begin{proof}
Fix $i$, suppress its subscript, and let $\varrho=\tilde c/c$ for a
deviation $\tilde c$. The benchmark becomes $Y^{\rm ar}(1-\varkappa+\varkappa\varrho)$
and the relative consumption of the deviating household becomes
$h\phi_\varkappa(\varrho)$ with
$\phi_\varkappa(\varrho)=\varrho(1-\varkappa+\varkappa\varrho)^{-\theta}$ and
$h=c(Y^{\rm ar})^{-\theta}$. A price taker facing $Y^{\rm ar}$ reproduces this
relative path by buying $c\phi_\varkappa(\varrho)$. For $\varrho>0$,
\[
 \phi_\varkappa''(\varrho)=-\theta\varkappa(1-\varkappa+\varkappa\varrho)^{-\theta-2}
 \big\{2(1-\varkappa)+(1-\theta)\varkappa\varrho\big\}\le0 ,
\]
so the tangent at $\varrho=1$ gives
$\phi_\varkappa(\varrho)\le\theta\varkappa+(1-\theta\varkappa)\varrho$. With
$\E\int_0^\infty\xi_tc_t\dd t\le X$ and the deviation financeable, the
reproducing plan costs at most
$\theta\varkappa\E\int\xi_tc_t\dd t+(1-\theta\varkappa)\E\int\xi_t\tilde c_t\dd t\le X$.
It is admissible whenever the relative path of the deviation is, so price
taking optimality excludes a profitable deviation.
\end{proof}

Constant shares include identical households and proportional profiles. The
hypothesis cannot be dropped, as the following example with types without
loss aversion, which are admitted by Remark~\ref{rem:frictionless}, shows.

\begin{proposition}\label{ec:prop:arithmeticneg}
Let $r>0$, $\kappa>0$, let two households have logarithmic utility,
$\alpha_i=\beta_i=0$, distinct discount rates $\delta_1\ne\delta_2$,
relative concerns $0<\theta_1,\theta_2<1/2$, and wealth $X_1,X_2>0$. Then
the price taking equilibrium for the arithmetic benchmark is not a Nash
equilibrium of the two household game, although the weak interaction
condition \eqref{eq:weakcondition} holds.
\end{proposition}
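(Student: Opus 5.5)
The plan is to compute the price taking equilibrium explicitly and then exhibit a profitable deviation for one household by a first order perturbation argument. With logarithmic utility and $\alpha_i=\beta_i=0$, utility is time separable: $U_i(h)=\E\int_0^\infty e^{-\delta_it}\log h_t\,\dd t$. For any benchmark $Y$, the term $-\theta_i\log Y$ is then additive and independent of household $i$'s own choice when $i$ is a price taker. The price taker therefore solves the Merton log problem in the complete market, and its plan is $c^i_t=\delta_iX_i e^{-\delta_it}/\xi_t$ regardless of $Y$. This profile is the unique price taking equilibrium, with $Y^{\rm ar}=\tfrac12(c^1+c^2)$. Because $\delta_1\ne\delta_2$, the share $\varkappa_1(t)=c^1_t/(c^1_t+c^2_t)$ is a deterministic but nonconstant function of time. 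The contrast with Proposition~\ref{ec:prop:arithmeticpos} comes precisely from this time variation of the shares.

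In the game, household~$1$ evaluates
\[
\E\int e^{-\delta_1t}\bigl(\log\tilde c_t-\theta_1\log(\tfrac12(\tilde c_t+c^2_t))\bigr)\dd t .
\]
I would consider deviations $\tilde c=c^1(1+\epsilon f)$, where $f$ is a bounded deterministic function of time satisfying the budget condition $\E\int\xi_tc^1_tf_t\,\dd t=0$. Since $\xi_tc^1_t=\delta_1X_1e^{-\delta_1t}$, this condition reads $\int e^{-\delta_1t}f_t\,\dd t=0$. Differentiating at $\epsilon=0$ gives the directional derivative
\[
\int_0^\infty e^{-\delta_1t}\bigl(1-\theta_1\varkappa_1(t)\bigr)f_t\,\dd t .
\]
The first term integrates to zero by the budget constraint, so the derivative equals $-\theta_1\int e^{-\delta_1t}\varkappa_1(t)f_t\,\dd t$. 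Because $\varkappa_1$ is not constant on $[0,\infty)$ (it is strictly monotone, with the sign determined by $\delta_1-\delta_2$), we can choose $f$ orthogonal to constants in $L^2(e^{-\delta_1t}\dd t)$ but not orthogonal to $\varkappa_1$. This makes the derivative nonzero, and replacing $f$ by $-f$ if necessary makes it positive. Small $\epsilon>0$ then gives a profitable financeable deviation. The deviation is admissible in the sense of the game: relative consumption need not have finite variation here since $L=1$, as in Remark~\ref{rem:frictionless}.

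Finally I would check the weak interaction condition: $\sum\pi_i\theta_i(1+1/\gamma_i)=\theta_1+\theta_2<1$ because each $\theta_i<1/2$.

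The main obstacle is justifying differentiation under the integral sign. For bounded $f$ and small $|\epsilon|$, the integrand's derivative in $\epsilon$ is bounded by $e^{-\delta_1t}C|f_t|$, uniformly. This bound holds because $\tilde c/(\tilde c+c^2)\le1$ and $\log(1+\epsilon f)$ has a bounded derivative. Dominated convergence therefore applies, and the finiteness of the utilities follows from Gaussian moments of $\log\xi_t$.
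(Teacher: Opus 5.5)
Your proposal is correct and follows essentially the same route as the paper. Both compute the explicit Merton log plans, note that the share $\varkappa_1$ is deterministic and nonconstant, apply the budget-neutral deterministic perturbation $c^1(1+\epsilon f)$, and differentiate to get $\int e^{-\delta_1t}(1-\theta_1\varkappa_1)f\,\dd t$. The only difference is that the paper fixes $f=(\vartheta-\bar\vartheta)\ind_{[0,T)}$ with $\vartheta=1-\theta_1\varkappa_1$, which makes the derivative a positive sum of squares directly, whereas you argue that a suitable $f$ exists and then flip its sign.
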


\begin{proof}
With $\xi_t=\exp(-(r+\kappa^2/2)t-\kappa B_t)$ and any fixed benchmark,
logarithmic price taking utility differs from
$\E\int e^{-\delta_it}\log c_t\dd t$ by a constant, so the unique optimum of
household $i$ is $c^i_t=\delta_iX_ie^{-\delta_it}/\xi_t$, with
$\E\int\xi_tc^i_t\dd t=X_i$, and $Y^{\rm ar}=(c^1+c^2)/2$ completes the price
taking equilibrium. All discounted logarithmic utilities are finite. The
share $\varkappa_i(t)=\delta_iX_ie^{-\delta_it}/(\delta_1X_1e^{-\delta_1t}+\delta_2X_2e^{-\delta_2t})$
is deterministic, strictly monotone, and not constant. Fix $T>0$, put
$\vartheta(t)=1-\theta_i\varkappa_i(t)$,
$\bar\vartheta=\int_0^Te^{-\delta_it}\vartheta(t)\dd t/\int_0^Te^{-\delta_it}\dd t$,
and $f(t)=(\vartheta(t)-\bar\vartheta)\ind_{[0,T)}(t)$. For small $\epsilon>0$ the
deviation $\tilde c^i=c^i(1+\epsilon f)$ is positive, lies in the admissible
class of the type, and has the same budget, because $\xi_tc^i_t$ is
deterministic and $\int_0^Te^{-\delta_it}f(t)\dd t=0$. With the other
household fixed, its utility changes by
\[
 \Delta(\epsilon)=\int_0^Te^{-\delta_it}\big\{\log(1+\epsilon f(t))
 -\theta_i\log(1+\epsilon\varkappa_i(t)f(t))\big\}\dd t,
\]
and differentiation under the integral gives
\[
 \Delta'(0)=\int_0^Te^{-\delta_it}\vartheta(t)f(t)\dd t
 =\int_0^Te^{-\delta_it}(\vartheta(t)-\bar\vartheta)^2\dd t>0 ,
\]
so the deviation is profitable. Finally
$\sum_i\pi_i\theta_i(1+\gamma_i^{-1})=\theta_1+\theta_2<1$.
\end{proof}

The failure does not depend on $\kappa$ and is caused by the time varying
share of each household in the arithmetic benchmark. Whether a
counterexample exists when every type has $L>1$ is open.

%% file: ec3_additions.tex
\subsection{Certificates for Approximate Best Responses}\label{ec:approxcert}

Theorems~\ref{ec:thm:regret} and~\ref{thm:certificate} assume that
the candidate is an exact best response to some benchmark. Exact
optimality can be replaced by a feasible approximate adjoint, at the cost
of a complementarity slack that involves only the variation of the
candidate. Write
$\mathcal C(v)=\E\int_{[0,\infty)}e^{-\delta t}(\alpha\dd v^+_t+\beta\dd v^-_t)$
for the adjustment cost and
$\operatorname{Var}_\delta(v)=\E\int_{[0,\infty)}e^{-\delta t}\dd|v|_t$, and let
all paths carry the same reference.

\begin{theorem}\label{ec:thm:approxregret}
Let $h^0\in\mathcal A$ with $v^0=u(h^0)$, let $\tilde z$ be a bounded
positive progressively measurable process, and let
$\Upsilon_t=\E[\int_t^\infty e^{-\delta(s-t)}(1-\tilde z_s)\dd s\,|\,\F_t]$
satisfy $-\beta\le\Upsilon\le\alpha$ up to indistinguishability. Put
\[
 \mathcal K(\Upsilon,v^0)=\E\int_{[0,\infty)}e^{-\delta t}
 \big((\alpha-\Upsilon_t)\dd v^{0,+}_t+(\beta+\Upsilon_t)\dd v^{0,-}_t\big)\ge0 .
\]
For a positive price $P$ at which $h^0$ has finite expenditure let
$z^P=P\psi_p'(h^0)/u'(h^0)$. Then every $h\in\mathcal A$ with finite $\J_P(h)$
satisfies
\begin{equation}\label{ec:eq:approxregret}
 \J_P(h)-\J_P(h^0)\le\mathcal K(\Upsilon,v^0)
 +\E\int_0^\infty e^{-\delta t}\tilde z_t(h^0_t)^{1-\gamma}G_{\gamma,p}(z^P_t/\tilde z_t)\dd t ,
\end{equation}
with $G_{\gamma,p}$ as in Theorem~\ref{ec:thm:regret}, where the right side
may be infinite. If $P=\lambda e^{\delta t}\xi\varpi$ with $\lambda>0$ and
$w_0=\E\int\xi_t\varpi_t\psi_p(h^0_t)\dd t$, every budget feasible competitor
has a utility gain over $h^0$ of at most the right side plus
$\lambda(X_0-w_0)$, whether or not $w_0\le X_0$.
\end{theorem}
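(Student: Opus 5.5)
The proof follows the template of Theorem~\ref{ec:thm:regret}, with the exact optimality conditions of the candidate replaced by the feasible adjoint $\Upsilon$ and its complementarity slack $\mathcal K$. Write $v=u(h)$ and $\Phi_p=\psi_p\circ u^{-1}$. Split the Lagrangian difference into three parts: running utility, expenditure, and adjustment cost. The key step is a lower bound on the cost difference:
\[
 \mathcal C(v)-\mathcal C(v^0)\ge\E\int_0^\infty e^{-\delta t}(1-\tilde z_t)(v_t-v^0_t)\dd t-\mathcal K(\Upsilon,v^0).
\]
To get it, set $A=v-v^0$. This is right continuous and of finite variation, with $A_{0-}=0$ and integrable discounted variation by admissibility. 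Since $\tilde z$ is bounded, the Fubini and optional projection argument of the proof of Theorem~\ref{ec:thm:foc} applies verbatim with $\Upsilon$ in place of $\Psi^\ast$ and gives
\[
 \E\int_0^\infty e^{-\delta t}(1-\tilde z_t)A_t\dd t=\E\int_{[0,\infty)}e^{-\delta s}\Upsilon_s\dd A_s.
\]
Decompose $\dd A=\dd v^+-\dd v^--\dd v^{0,+}+\dd v^{0,-}$. On the competitor's side, the obstacle $-\beta\le\Upsilon\le\alpha$ gives
\[
 \E\int e^{-\delta s}\Upsilon_s(\dd v^+_s-\dd v^-_s)\le\mathcal C(v).
\]
On the candidate's side the support conditions no longer hold, so we keep the term exactly:
\[
 -\E\int e^{-\delta s}\Upsilon_s(\dd v^{0,+}_s-\dd v^{0,-}_s)=-\mathcal C(v^0)+\mathcal K(\Upsilon,v^0).
\]
Adding the two pieces yields the displayed inequality. $\mathcal K$ is nonnegative by the obstacle bounds.

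Substituting this bound into $\J_P(h)-\J_P(h^0)$ leaves, at each $(t,\omega)$,
\[
 e^{-\delta t}\big\{\tilde z_t(v_t-v^0_t)-P_t\big(\Phi_p(v_t)-\Phi_p(v^0_t)\big)\big\}
\]
plus $\mathcal K$. Notice that the exact index is not needed here: the concavity step of Theorem~\ref{ec:thm:regret} is skipped, and the $\tilde z$ terms come entirely from the adjoint identity. Maximize the bracket pointwise over scalar consumption $sh^0_t$, using $P_t\Phi_p'(v^0_t)=z^P_t$. This is the same one-dimensional computation as in Theorem~\ref{ec:thm:regret}, with $z^0$ replaced by $\tilde z$ and the ratio $\mathcal M=z^P/\tilde z$. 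It gives $\tilde z_t(h^0_t)^{1-\gamma}G_{\gamma,p}(z^P_t/\tilde z_t)$, including the logarithmic limit, and establishes \eqref{ec:eq:approxregret}.

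Finiteness of $\J_P(h)$ and of the candidate's expenditure at $P$ keeps every split legitimate, even when the right side is infinite. For the budget claim, take a competitor $h$ with budget cost at most $X_0$. Then
\[
 U(h)-U(h^0)=\J_P(h)-\J_P(h^0)+\lambda\,\big(\text{cost}(h)-w_0\big)\le\J_P(h)-\J_P(h^0)+\lambda(X_0-w_0).
\]
This needs no sign condition on $X_0-w_0$. Competitors with $\J_P(h)=-\infty$ but finite utility have infinite expenditure, which contradicts budget feasibility, so they cannot occur.

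The main obstacle is justifying the Fubini and optional projection identity for $\Upsilon$ without the structure of an optimal plan. I would check that boundedness of $\tilde z$ makes $\Upsilon$ bounded with a c\`adl\`ag optional version, which the obstacle hypothesis presupposes anyway. I would also check that integrability of $\dd|A|$ comes only from the admissibility of $h$ and $h^0$, exactly as in Theorem~\ref{ec:thm:foc}.
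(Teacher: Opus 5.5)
Your proposal is correct and follows the paper's proof essentially step for step. The paper also derives the cost bound $\mathcal C(v)-\mathcal C(v^0)\ge\E\int_0^\infty e^{-\delta t}(1-\tilde z_t)(v_t-v^0_t)\,\dd t-\mathcal K(\Upsilon,v^0)$ from the Fubini identity in the proof of Theorem~\ref{ec:thm:foc}, substitutes it into the Lagrangian difference, maximizes pointwise to get $\tilde z_t(h^0_t)^{1-\gamma}G_{\gamma,p}(z^P_t/\tilde z_t)$, and obtains the budget claim from $U=\J_P+\lambda\cdot\text{cost}$. One small inaccuracy: the proof of Theorem~\ref{ec:thm:regret} has no separate concavity step before the pointwise maximization either, so nothing is actually ``skipped'' relative to that template.
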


\begin{proof}
Since $\Upsilon\in[-\beta,\alpha]$, every admissible $v$ satisfies
$\mathcal C(v)\ge\E\int e^{-\delta t}\Upsilon_t\dd v_t$, and for the candidate
the gap between the two sides is $\mathcal K(\Upsilon,v^0)$. The Fubini
identity in the proof of Theorem~\ref{ec:thm:foc}, applied to $v-v^0$,
gives
$\mathcal C(v)-\mathcal C(v^0)\ge\E\int_0^\infty e^{-\delta t}(1-\tilde z_t)(v_t-v^0_t)\dd t-\mathcal K(\Upsilon,v^0)$,
where boundedness of $\Upsilon$ and $\tilde z$ and admissibility justify
every term. With $\Phi_p=\psi_p\circ u^{-1}$ it follows that
$\J_P(h)-\J_P(h^0)\le\mathcal K(\Upsilon,v^0)+\E\int_0^\infty e^{-\delta t}
\{\tilde z_t(v_t-v^0_t)-P_t(\Phi_p(v_t)-\Phi_p(v^0_t))\}\dd t$. At each
$(t,\omega)$ the expression in braces is strictly concave in $v_t$, and
writing the consumption as $sh^0_t$ its maximum is attained at
$s=(z^P_t/\tilde z_t)^{-1/(\gamma+1/p-1)}$ and equals
$\tilde z_t(h^0_t)^{1-\gamma}G_{\gamma,p}(z^P_t/\tilde z_t)$, as in the proof of
Theorem~\ref{ec:thm:regret}. The budget claim follows from
$U(h)-U(h^0)=\J_P(h)-\J_P(h^0)+\lambda(\E\int\xi\varpi\psi_p(h)\dd t-w_0)$
and the budget of $h$.
\end{proof}

The feasible process $\Upsilon$ can be built from an approximation of the
adjoint of the candidate, without knowing an exact best response.

\begin{corollary}\label{ec:cor:repair}
Let $z=P^0\psi_p'(h^0)/u'(h^0)$ be bounded and positive, let $\Psi$ be its
adjoint as in \eqref{ec:eq:index}, and let a bounded adapted c\`adl\`ag
process $\hat\Psi$ satisfy $\|\hat\Psi-\Psi\|_\infty\le\varepsilon_R$ and
$-\beta-\varepsilon_o\le\hat\Psi\le\alpha+\varepsilon_o$ for some
$\varepsilon_R,\varepsilon_o\ge0$. With $\bar\alpha=(\alpha+\beta)/2$,
$\tilde\alpha=(\alpha-\beta)/2$, $\eta=\varepsilon_o+\varepsilon_R$, and
$\varkappa=\bar\alpha/(\bar\alpha+\eta)$ put
$\Upsilon=\varkappa\Psi+(1-\varkappa)\tilde\alpha$ and
$\tilde z=\varkappa z+(1-\varkappa)(1-\delta\tilde\alpha)$. Then $\Upsilon$
and $\tilde z$ satisfy the hypotheses of Theorem~\ref{ec:thm:approxregret},
and with
$\hat{\mathcal K}=\E\int_{[0,\infty)}e^{-\delta t}((\alpha-\hat\Psi_t)^+\dd v^{0,+}_t+(\beta+\hat\Psi_t)^+\dd v^{0,-}_t)$,
\begin{equation}\label{ec:eq:repairbound}
 0\le\mathcal K(\Upsilon,v^0)\le\varkappa\hat{\mathcal K}
 +\varkappa(\varepsilon_R+\eta)\operatorname{Var}_\delta(v^0).
\end{equation}
\end{corollary}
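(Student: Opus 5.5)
The plan is to verify the hypotheses of Theorem~\ref{ec:thm:approxregret} directly. Almost everything follows from linearity of the adjoint map, together with one affine identity that turns the slack terms into multiples of $\Psi$. Throughout, $\bar\alpha=(\alpha+\beta)/2>0$ because $\alpha>-\beta$, so $\varkappa\in(0,1]$ is well defined.

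\emph{Step 1: $\tilde z$ is admissible and $\Upsilon$ is its adjoint.} A constant index $z\equiv1-\delta\tilde\alpha$ has adjoint $\int_t^\infty e^{-\delta(s-t)}\delta\tilde\alpha\,\dd s=\tilde\alpha$. The map $z\mapsto\E[\int_t^\infty e^{-\delta(s-t)}(1-z_s)\dd s\,|\,\F_t]$ is affine. Hence the convex combination $\tilde z=\varkappa z+(1-\varkappa)(1-\delta\tilde\alpha)$ has adjoint $\varkappa\Psi+(1-\varkappa)\tilde\alpha=\Upsilon$. Boundedness of $\tilde z$ follows from that of $z$. For positivity I would use $\delta\alpha<1$ and $\beta\ge0$, which give $\delta\tilde\alpha<1/2$, so $1-\delta\tilde\alpha>0$. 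Since also $z>0$ and $\varkappa\in(0,1]$, the combination $\tilde z$ is positive.

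\emph{Step 2: the obstacle $-\beta\le\Upsilon\le\alpha$.} From $\|\hat\Psi-\Psi\|_\infty\le\varepsilon_R$ and the enlarged obstacle for $\hat\Psi$ we get $-\beta-\eta\le\Psi\le\alpha+\eta$. The two bounds are equivalent to $|\Psi-\tilde\alpha|\le\bar\alpha+\eta$, because $\alpha=\tilde\alpha+\bar\alpha$ and $-\beta=\tilde\alpha-\bar\alpha$. Then
\[
 |\Upsilon-\tilde\alpha|=\varkappa|\Psi-\tilde\alpha|\le\varkappa(\bar\alpha+\eta)=\bar\alpha ,
\]
which is exactly $\Upsilon\in[-\beta,\alpha]$. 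The only point needing care is that the bound must hold up to indistinguishability and not merely for each fixed $t$. Both $\Psi$, taken in its right continuous version, and $\hat\Psi$ are c\`adl\`ag, so the bounds at rational times extend to all $t$ outside one null set. This also gives $\mathcal K(\Upsilon,v^0)\ge0$.

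\emph{Step 3: the slack bound.} Using $\alpha-\tilde\alpha=\bar\alpha=\varkappa(\bar\alpha+\eta)$, I compute $\alpha-\Upsilon=\varkappa(\alpha+\eta-\Psi)$ and symmetrically $\beta+\Upsilon=\varkappa(\beta+\eta+\Psi)$. Next, $\alpha-\Psi\le(\alpha-\hat\Psi)^++\varepsilon_R$ and $\beta+\Psi\le(\beta+\hat\Psi)^++\varepsilon_R$. This yields the pointwise estimates
\[
 \alpha-\Upsilon\le\varkappa\big((\alpha-\hat\Psi)^++\varepsilon_R+\eta\big),\qquad
 \beta+\Upsilon\le\varkappa\big((\beta+\hat\Psi)^++\varepsilon_R+\eta\big).
\]
Integrate these against $e^{-\delta t}\dd v^{0,+}$ and $e^{-\delta t}\dd v^{0,-}$, respectively, and use $\dd v^{0,+}+\dd v^{0,-}=\dd|v^0|$, including the atom at time zero. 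This gives \eqref{ec:eq:repairbound}. The integrals are finite by admissibility of $h^0$.

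The main obstacle is modest: identifying the affine identity in Step 3 that makes $\varkappa$ the right contraction weight, and handling the indistinguishability issue in Step 2. The remaining steps are bookkeeping.
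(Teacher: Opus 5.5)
Your proposal is correct and follows the paper's proof essentially step by step. The paper likewise uses linearity of the adjoint map to identify $\Upsilon$ as the adjoint of $\tilde z$, observes that the affine map $x\mapsto\varkappa x+(1-\varkappa)\tilde\alpha$ sends $[-\beta-\eta,\alpha+\eta]$ onto $[-\beta,\alpha]$, and uses the identity $(1-\varkappa)\bar\alpha=\varkappa\eta$ to bound the slack by $\varkappa\bigl((\alpha-\hat\Psi)^++\varepsilon_R+\eta\bigr)$ and its downward analogue. The remaining differences are cosmetic, for instance obtaining $1-\delta\tilde\alpha>0$ from $\delta\tilde\alpha<1/2$ rather than from $\tilde\alpha<\alpha<1/\delta$.
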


\begin{proof}
The hypotheses give $-\beta-\eta\le\Psi\le\alpha+\eta$, and since
$\alpha-\tilde\alpha=\beta+\tilde\alpha=\bar\alpha$ and
$\varkappa(\bar\alpha+\eta)=\bar\alpha$, the affine map
$x\mapsto\varkappa x+(1-\varkappa)\tilde\alpha$ sends $[-\beta-\eta,\alpha+\eta]$
onto $[-\beta,\alpha]$. Since $\tilde\alpha<\alpha<1/\delta$, the constant
$1-\delta\tilde\alpha$ is positive, so $\tilde z$ is positive and bounded,
and linearity of conditional expectations gives
$\Upsilon_t=\E[\int_t^\infty e^{-\delta(s-t)}(1-\tilde z_s)\dd s\,|\,\F_t]$
because $(1-(1-\delta\tilde\alpha))/\delta=\tilde\alpha$. For the upward
variation,
$\alpha-\Upsilon=\varkappa(\alpha-\hat\Psi)+\varkappa(\hat\Psi-\Psi)+(1-\varkappa)\bar\alpha
\le\varkappa(\alpha-\hat\Psi)^++\varkappa\varepsilon_R+(1-\varkappa)\bar\alpha$,
with the same bound for $\beta+\Upsilon$, and $(1-\varkappa)\bar\alpha=\varkappa\eta$
gives \eqref{ec:eq:repairbound}.
\end{proof}

A direct approximate optimality argument would multiply the obstacle
violation by the variation of the competitor, which is not bounded on
$\mathcal A$, whereas here only the variation of the candidate enters. The
next lemma gives a sufficient residual estimate for $\varepsilon_R$ for a
continuous reconstruction of a numerical adjoint.

\begin{lemma}\label{ec:lem:residual}
Let $z$ be bounded and positive with adjoint $\Psi$, and let a bounded
adapted c\`adl\`ag process $\hat\Psi$ satisfy on $(0,\infty)$
\[
 \dd\hat\Psi_t=\big(\delta\hat\Psi_t-(1-z_t)\big)\dd t+\dd M_t+\dd R_t ,
\]
where $R$ is adapted and of finite variation on compacts and
$\int_0^te^{-\delta s}\dd M_s$ is a martingale on every finite interval. If
$\E[\int_{(\tau,\infty)}e^{-\delta(s-\tau)}\dd|R|_s\,|\,\F_\tau]\le\varepsilon_R$
almost surely for every finite stopping time $\tau$, then
$\|\hat\Psi-\Psi\|_\infty\le\varepsilon_R$. In particular
$\varepsilon_R=\varepsilon_0/\delta$ is valid when $\dd R_t=\varepsilon_t\dd t$
with $|\varepsilon_t|\le\varepsilon_0$.
\end{lemma}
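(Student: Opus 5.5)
The plan is to compare $\hat\Psi$ with $\Psi$ through the discounted process $D_t=e^{-\delta t}(\hat\Psi_t-\Psi_t)$. We show it is the conditional expectation of a tail of the discounted residual measure $\dd R$.

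First, rewrite the adjoint. By definition,
\[
 e^{-\delta t}\Psi_t=\E\Big[\int_t^\infty e^{-\delta s}(1-z_s)\dd s\,\Big|\,\F_t\Big]=\mathcal N_t-\int_0^te^{-\delta s}(1-z_s)\dd s,
\]
where $\mathcal N_t=\E[\int_0^\infty e^{-\delta s}(1-z_s)\dd s\,|\,\F_t]$ is a uniformly integrable martingale, since $z$ is bounded.

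Second, integrate by parts. Applying the product rule to $e^{-\delta t}\hat\Psi_t$, which is legitimate since $e^{-\delta t}$ is continuous of finite variation, the assumed dynamics give
\[
 \dd(e^{-\delta t}\hat\Psi_t)=-e^{-\delta t}(1-z_t)\dd t+e^{-\delta t}\dd M_t+e^{-\delta t}\dd R_t .
\]
The drift term $\delta\hat\Psi$ cancels against the derivative of the discount factor. Subtracting the previous representation, we get for $t<T$
\[
 D_T-D_t=\int_{(t,T]}e^{-\delta s}\dd M_s-(\mathcal N_T-\mathcal N_t)+\int_{(t,T]}e^{-\delta s}\dd R_s .
\]

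Third, take conditional expectations at a finite stopping time $\tau$, replacing $t$ by $\tau$ and $T$ by $\tau\vee T$ with $T$ deterministic. Both martingale increments have zero conditional mean: for $M$ this uses the martingale hypothesis on finite intervals, and for $\mathcal N$ optional sampling. Hence
\[
 D_\tau=\E\big[D_{\tau\vee T}\,\big|\,\F_\tau\big]-\E\Big[\int_{(\tau,\tau\vee T]}e^{-\delta s}\dd R_s\,\Big|\,\F_\tau\Big].
\]

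Now let $T\to\infty$. Boundedness of $\hat\Psi$ and $\Psi$ gives $|D_{\tau\vee T}|\le Ce^{-\delta T}\to0$. For the residual term, the hypothesis, multiplied by $e^{-\delta\tau}$, supplies an integrable dominating bound, so dominated convergence applies. We obtain $e^{-\delta\tau}|\hat\Psi_\tau-\Psi_\tau|\le e^{-\delta\tau}\varepsilon_R$, that is, $|\hat\Psi_\tau-\Psi_\tau|\le\varepsilon_R$ for every finite stopping time.

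Since both processes are adapted and càdlàg (we use the right continuous version of $\Psi$), the optional section theorem upgrades this to indistinguishability, which is the bound $\|\hat\Psi-\Psi\|_\infty\le\varepsilon_R$.

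For the special case $\dd R_t=\varepsilon_t\dd t$ with $|\varepsilon_t|\le\varepsilon_0$, compute $\int_\tau^\infty e^{-\delta(s-\tau)}\varepsilon_0\dd s=\varepsilon_0/\delta$, so $\varepsilon_R=\varepsilon_0/\delta$ is valid.

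I expect the main obstacle to be the technical justification of the stopped identity, namely that the martingale term has zero conditional increment between $\tau$ and $\tau\vee T$ when $\tau$ is unbounded. I would first work with bounded stopping times $\tau\wedge n$, using the hypothesis that $\int e^{-\delta s}\dd M_s$ is a true martingale on $[0,n+T]$, and then let $n\to\infty$ by right continuity.
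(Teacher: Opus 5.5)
Your proof is correct and follows essentially the paper's argument: discount the dynamics, integrate between a stopping time and $T$, take conditional expectations, let $T\to\infty$ using boundedness and the conditional integrability of $R$, and upgrade to indistinguishability by right continuity. The paper works with bounded stopping times from the start, which avoids the unbounded-$\tau$ issue you flag, since bounded (even deterministic) times already suffice for the final upgrade.
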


\begin{proof}
Multiply the equation by $e^{-\delta t}$, integrate from a bounded stopping
time $\tau$ to $T$, and take conditional expectations. Boundedness of
$\hat\Psi$ removes the terminal term as $T\to\infty$, and the conditional
integrability of $R$ allows the passage to the limit in the finite
variation term, which gives
$\hat\Psi_\tau=\Psi_\tau-\E[\int_{(\tau,\infty)}e^{-\delta(s-\tau)}\dd R_s\,|\,\F_\tau]$.
The bound holds at every bounded stopping time and hence, by right
continuity, up to indistinguishability.
\end{proof}

For a reconstruction $\hat\Psi=\phi(\bm x)$ with a bounded $C^2$ function
$\phi$ of a reflected state $\dd\bm x=\bm m(\bm x)\dd t+\sigma(\bm x)\dd B+\sum_jr_j(\bm x)\dd l^j$,
with $z=z(\bm x)$ and a martingale part that satisfies the integrability
of Lemma~\ref{ec:lem:residual}, It\^o's formula gives
$\dd R=(\mathcal L\phi-\delta\phi+1-z)(\bm x)\dd t+\sum_j\nabla\phi(\bm x)\cdot r_j(\bm x)\dd l^j$,
so a bound $\varepsilon_0$ on the generator residual on the reached state
space, bounds $\bar b_j$ on $|\nabla\phi\cdot r_j|$ on the supports of the
regulators, and bounds $\bar A_j$ on the conditional discounted regulator
masses give $\varepsilon_R=\varepsilon_0/\delta+\sum_j\bar b_j\bar A_j$. The reflection terms
cannot be omitted, a nonsmooth reconstruction adds local time terms at the
interpolation interfaces, and a residual checked only at grid points or on
simulated paths does not provide these bounds.

If candidates $h^n$ evaluated at their own prices have indices in a fixed
compact subset of $(0,\infty)$, uniformly bounded $\operatorname{Var}_\delta(u(h^n))$
and $\E\int e^{-\delta t}(h^n_t)^{1-\gamma}\dd t$, and
$\hat{\mathcal K}_n+\varepsilon_{o,n}+\varepsilon_{R,n}\to0$, then the regret
bound \eqref{ec:eq:approxregret} with the repaired $\Upsilon$ and $\tilde z$
of Corollary~\ref{ec:cor:repair} tends to zero. Its price term is of order
$(\varepsilon_{o,n}+\varepsilon_{R,n})^2$, because
$G_{\gamma,p}(1)=G_{\gamma,p}'(1)=0$,
$G_{\gamma,p}''(M)=(p/\rho)M^{-1-1/\rho}$ with $\rho=1+p(\gamma-1)$, and
$\tilde z^n/z^n-1=O(\varepsilon_{o,n}+\varepsilon_{R,n})$. If the candidates are budget feasible and
their budget slack times the multiplier tends to zero, their utility regret
tends to zero as well.

In the setting of Theorem~\ref{thm:certificate}, with
$\Delta=\log(Y^N/Y^0)$, the exact optimality hypothesis for type $a$ can be
replaced in the same way. Construct $\tilde z^a$ and $\Upsilon^a$ for the
candidate $h^{a,0}$ at the price $P^{a,0}=\lambda_ae^{\delta_at}\xi(Y^0)^{\theta_a}$
with index $z^{a,0}$. H\"older's inequality in the proof of
Theorem~\ref{thm:certificate} makes every deviation in the finite game
affordable at the price taking budget, so applying
Theorem~\ref{ec:thm:approxregret} with $p=1$ and the price
$P^{a,0}e^{\theta_a\Delta}$, adding the budget term, and subtracting the
candidate's actual utility bounds the gain of every admissible financeable
unilateral deviation by
\begin{equation}\label{ec:eq:approxfinite}
 \mathcal K(\Upsilon^a,u(h^{a,0}))
 +\E\int_0^\infty e^{-\delta_at}\tilde z^a_t(h^{a,0}_t)^{1-\gamma_a}
 G_{\gamma_a}\Big(\frac{e^{\theta_a\Delta_t}z^{a,0}_t}{\tilde z^a_t}\Big)\dd t
 +U_a(h^{a,0})-U_a(h^{a,N})
 +\lambda_a\Big(X_a-\E\int_0^\infty\xi_tc^a_te^{\theta_a\Delta_t}\dd t\Big),
\end{equation}
provided the cross expenditure is finite. Exact complementarity and
$\tilde z^a=z^{a,0}$ recover Theorem~\ref{thm:certificate}. A profile whose
budgets are only approximately satisfied can be made feasible by scaling
$c^a$ by $\min\{1,X_a/\overline w_a\}$ for certified cost bounds
$\overline w_a\ge\E\int_0^\infty\xi_tc^a_t\dd t$, which changes the geometric
benchmark and the relative plans by constant factors, and
\eqref{ec:eq:approxfinite} is then evaluated for the scaled profile, with
the initial jumps measured from the original references.

Finally, the contraction in the proof of Theorem~\ref{thm:weakexist}
tolerates certified errors. Fix the multipliers, let $\mathcal T$ be the
benchmark map on the invariant ball with factor $q$ and fixed point
$y^\ast$, and let $y^0$ and all iterates lie in that ball with
$\|y^{k+1}-\mathcal Ty^k\|_\infty\le e_k$. The triangle inequality gives
$\|y^{k+1}-y^\ast\|_\infty\le e_k+q\|y^k-y^\ast\|_\infty$, so
$\|y^k-y^\ast\|_\infty\le q^k\|y^0-y^\ast\|_\infty+\sum_{j<k}q^{k-1-j}e_j$,
which tends to zero when $e_k\to0$. For a computed map
$\tilde{\mathcal T}$ with $\|\tilde{\mathcal T}y-\mathcal Ty\|_\infty\le e$
on the ball, any $y$ in the ball satisfies
$\|y-y^\ast\|_\infty\le(\|y-\tilde{\mathcal T}y\|_\infty+e)/(1-q)$, and
best responses with $\|\log\tilde h_i[y]-\log h_i[y]\|_\infty\le\epsilon_i$
give $e=(1-\Theta_1)^{-1}\sum_i\pi_i\epsilon_i$. On a fixed grid, the
discrete adjoint $\Psi=f+e^{-\delta\Delta t}\mathbf P\Psi$ of a fixed
policy with a stochastic matrix $\mathbf P$ satisfies
$\|\hat\Psi-\Psi\|_\infty\le\|\hat\Psi-f-e^{-\delta\Delta t}\mathbf P\hat\Psi\|_\infty/(1-e^{-\delta\Delta t})$
for any vector $\hat\Psi$, which certifies the algebraic error of a
linear solve but says nothing about the distance between discrete and
continuous policies.

\subsection{Stability and Variation Bounds for the Lattice}\label{ec:latticebounds}

Convergence of the scheme of Section~\ref{ec:lattice} as the time step
tends to zero is not proved. This subsection records what is proved. The
first result concerns the linear systems that Howard's algorithm solves,
the second bounds the expected total revision of the exact discrete
policies uniformly in the time step, which is what the interpolation error
requires, and the third gives a convergence rate for the adjoint of the
scheme on synchronized states. Throughout, $\Delta x=\kappa\sqrt{\Delta t}$, and
multilinear interpolation is used with all corners retained and without
clipping.

\begin{proposition}\label{ec:prop:stable}
For a fixed array of revisions let $\mathbf P$ be the interpolation
transition matrix of the scheme, a nonnegative matrix with row sums at most
one, and let $\mathbf D$ be the diagonal matrix with entry one on the rows
of inactive types and zero on the rows of revising types. The matrix
$\mathbf L=I-e^{-\delta\Delta t}\mathbf D\mathbf P$ of the linear step of
Howard's algorithm is invertible with $\mathbf L^{-1}\ge0$ and
$\|\mathbf L^{-1}\|_\infty\le(1-e^{-\delta\Delta t})^{-1}$. If
$\mathbf L\Psi=b$ with obstacle values bounded by $M_o$ on the revising rows
and right sides $\Delta tf$ with $\|f\|_\infty\le M_f$ on the inactive rows,
then $\|\Psi\|_\infty\le\max\{M_o,\Delta tM_f/(1-e^{-\delta\Delta t})\}$, and
an approximate solution $\hat\Psi$ satisfies
$\|\hat\Psi-\Psi\|_\infty\le\|\mathbf L\hat\Psi-b\|_\infty/(1-e^{-\delta\Delta t})$.
\end{proposition}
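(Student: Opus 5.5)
The plan is to treat $\mathbf L=I-e^{-\delta\Delta t}\mathbf D\mathbf P$ as a perturbation of the identity by a nonnegative matrix of small sup norm and to read off every claim from a Neumann series. Put $\mathbf Q=e^{-\delta\Delta t}\mathbf D\mathbf P$. The matrix $\mathbf D$ is a diagonal $0$--$1$ matrix, and $\mathbf P$ is nonnegative with row sums at most one. Hence $\mathbf Q\ge0$ entrywise, and every row sum of $\mathbf Q$ is either zero (revising rows) or at most $e^{-\delta\Delta t}$ (inactive rows). This gives $\|\mathbf Q\|_\infty\le e^{-\delta\Delta t}<1$.

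The Neumann series $\sum_{k\ge0}\mathbf Q^k$ therefore converges in the induced sup norm to $\mathbf L^{-1}$. Each term is nonnegative, so $\mathbf L^{-1}\ge0$. Summing the norms gives $\|\mathbf L^{-1}\|_\infty\le\sum_k e^{-k\delta\Delta t}=(1-e^{-\delta\Delta t})^{-1}$. Since multilinear interpolation with all corners retained and no clipping has nonnegative weights summing to one, the hypotheses on $\mathbf P$ hold for the scheme.

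For the bound on $\Psi$ I will not use $\|\mathbf L^{-1}\|_\infty\|b\|_\infty$, which would lose the $M_o$ structure. Instead I take a row $j$ at which $|\Psi_j|=\|\Psi\|_\infty$.
\begin{itemize}
\item If $j$ is a revising row, then $\mathbf D_{jj}=0$ and $\Psi_j=b_j$, so $|\Psi_j|\le M_o$.
\item If $j$ is inactive, then $\Psi_j=\Delta tf_j+e^{-\delta\Delta t}(\mathbf P\Psi)_j$. Row sums at most one give $\|\Psi\|_\infty\le\Delta tM_f+e^{-\delta\Delta t}\|\Psi\|_\infty$, that is $\|\Psi\|_\infty\le\Delta tM_f/(1-e^{-\delta\Delta t})$.
\end{itemize}
Combining the two cases yields the stated maximum.

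Finally, for an approximate solution, $\hat\Psi-\Psi=\mathbf L^{-1}(\mathbf L\hat\Psi-b)$. The norm bound on $\mathbf L^{-1}$ then gives the residual estimate directly.

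I expect no real obstacle. The only point needing care is that the row sums of $\mathbf P$ are bounded by one, not equal to one; the argument only uses the inequality, which covers deleted corners or boundary rows.
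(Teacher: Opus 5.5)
Your proof is correct, and the Neumann-series argument (invertibility, $\mathbf L^{-1}\ge0$, the norm bound) and the residual estimate are exactly the paper's. The one step where you differ is the bound on $\Psi$.

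The paper uses a barrier. It checks that $\mathbf L(M\mathbf 1)\ge|b|$ componentwise: this holds with equality $M\ge M_o$ on revising rows, and with $(\mathbf L M\mathbf 1)_j\ge(1-e^{-\delta\Delta t})M\ge\Delta tM_f$ on inactive rows. Hence $\mathbf L(M\mathbf 1\pm\Psi)\ge0$, and positivity of $\mathbf L^{-1}$ gives $|\Psi|\le M$.

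Your maximum-modulus-row argument is the discrete maximum principle applied directly. It does not need the positivity of $\mathbf L^{-1}$, only that the lattice is finite so that a maximizing row exists. The paper's comparison form instead reuses the inverse positivity it has just proved, and it adapts directly to one-sided or componentwise bounds. Both routes give the same constant.
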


\begin{proof}
The nonnegative matrix $\mathbf D\mathbf P$ has supremum norm at most one,
so the Neumann series $\mathbf L^{-1}=\sum_{k\ge0}e^{-k\delta\Delta t}(\mathbf D\mathbf P)^k$
converges, is nonnegative, and has norm at most $(1-e^{-\delta\Delta t})^{-1}$.
With $M$ the stated bound, $\mathbf L(M\mathbf 1)$ equals $M\ge|b|$ on revising
rows and is at least $(1-e^{-\delta\Delta t})M\ge|b|$ on inactive rows, so
$\mathbf L(M\mathbf 1\pm\Psi)=\mathbf L(M\mathbf 1)\pm b\ge0$, and positivity
of the inverse gives $M\mathbf 1\pm\Psi\ge0$, that is $|\Psi|\le M$.
The residual estimate follows by subtraction.
\end{proof}

For the variation bound write $V=\mathrm{diag}(\gamma_i)+\theta\pi^\top/(1-\Theta_1)$
for the revision matrix of Section~\ref{ec:lattice}, let
$\bm\Pi=\mathrm{diag}(\pi_i/\theta_i)$ with $\theta_i\in(0,1)$, and
$\mathbf H=\bm\Pi V^{-1}$. For a time step $\Delta t$ put
$\delta^\Delta=(1-e^{-\delta\Delta t})/\Delta t\in(0,\delta)$,
$\underline y_i=\log(1-\alpha_i\delta^\Delta)$,
$\overline y_i=\log(1+\beta_i\delta^\Delta)$, and
$y^\circ_i=(\underline y_i+\overline y_i)/2$, which are well defined because
$\alpha_i<1/\delta$ and $\beta_i\ge0$, and satisfy $\overline y_i-\underline y_i\to\log L_i>0$
as $\Delta t\to0$. Let $\mathcal Q_\Delta(x)=\frac12(x-y^\circ)^\top\mathbf H(x-y^\circ)$.

\begin{theorem}\label{ec:thm:variation}
Let every type have $\theta_i\in(0,1)$ and $L_i>1$, and let the nodal
adjoints satisfy $-\beta_i\le\Psi^i\le\alpha_i$. Suppose that at every grid
state $x$ the revisions $g=g(x)$ solve the complementarity conditions of
Section~\ref{ec:lattice} exactly, with $y=x+Vg$ and the continuation
adjoint built from the common shock and multilinear interpolation. Realize
the transition as the common shock followed by independent unbiased
rounding of each coordinate to a neighboring grid coordinate, which has
exactly the transition probabilities of the scheme. Then for every
$T<\infty$ there is a constant $C_T$, independent of all sufficiently
small $\Delta t$, such that the chain $\bm x_k$ started at a grid point
$\bm x_0$ satisfies
\begin{equation}\label{ec:eq:variationbound}
 \E\sum_{k\Delta t\le T}\|g_k\|_1\le C_T(1+|\bm x_0|^2),\qquad
 \sup_{k\Delta t\le T}\E|\bm x_k|^2\le C_T(1+|\bm x_0|^2),\qquad
 \lim_{\varrho\to\infty}\sup_{\Delta t}\mathbb P\Big(\max_{k\Delta t\le T}|\bm x_k|>\varrho\Big)=0 ,
\end{equation}
and the rounding errors $\varepsilon_k$ are martingale differences with
$\E\max_{m\le T/\Delta t}|\sum_{k<m}\varepsilon_k|^2\le C_T\Delta x(1+|\bm x_0|^2)\to0$.
The same bounds hold for the chain stopped before the first corner
deletion or clipping.
\end{theorem}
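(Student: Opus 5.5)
We prove the three bounds in \eqref{ec:eq:variationbound} with a Lyapunov argument built on the quadratic form $\mathcal Q_\Delta$. The matrix $\mathbf H=\bm\Pi V^{-1}$ is chosen so that revisions act on $\mathcal Q_\Delta$ coordinatewise. Write $V=\mathrm{diag}(\gamma_i)+\theta\pi^\top/(1-\Theta_1)$. A direct computation gives $\bm\Pi V=\mathrm{diag}(\pi_i\gamma_i/\theta_i)+\pi\pi^\top/(1-\Theta_1)$, which is symmetric positive definite. Hence $\mathbf H=\bm\Pi V^{-1}=V^{-\top}(\bm\Pi V)^{\top}V^{-1}$ is symmetric positive definite as well, and $V^\top\mathbf H=\bm\Pi$. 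A revision $y=x+Vg$ then changes $\mathcal Q_\Delta$ by
\[
 \mathcal Q_\Delta(y)-\mathcal Q_\Delta(x)=\sum_i\frac{\pi_i}{\theta_i}g_i\big(y_i-y^\circ_i\big)-\tfrac12g^\top\bm\Pi Vg .
\]
The key discrete step is to show that the complementarity conditions force $y_i\le\underline y_i$ whenever $g_i>0$ and $y_i\ge\overline y_i$ whenever $g_i<0$. For this we use the recursion $C^i(y)=(1-e^{y_i})\Delta t+e^{-\delta\Delta t}\E[\Psi^i]$ together with $\Psi^i\le\alpha_i$. The condition $C^i(y)=\alpha_i$ then yields $(1-e^{y_i})\Delta t\ge\alpha_i(1-e^{-\delta\Delta t})$, that is $e^{y_i}\le1-\alpha_i\delta^\Delta$, and the downward case is symmetric. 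In both cases
\[
 g_i(y_i-y^\circ_i)\le-\tfrac12(\overline y_i-\underline y_i)|g_i| .
\]
Each revision therefore lowers $\mathcal Q_\Delta$ by at least $c\|g\|_1$, with $c=\min_i\pi_i(\overline y_i-\underline y_i)/(2\theta_i)$, which stays bounded away from zero as $\Delta t\to0$ because $L_i>1$.

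Next we control the increase of $\mathcal Q_\Delta$ along the rest of a step, namely the common shock $\pm\Delta x\mathbf 1$ with drift $\nu\Delta t$ and the unbiased rounding. Both are mean zero up to the $O(\Delta t)$ drift and have second moments $O(\Delta t)$; the rounding step is $O(\Delta x^2)$ per coordinate. A Taylor expansion of the quadratic form gives
\[
 \E[\mathcal Q_\Delta(\bm x_{k+1})-\mathcal Q_\Delta(y_k)\mid\F_k]\le C\Delta t\,(1+|y_k|),
\]
and since $\mathcal Q_\Delta$ is comparable to $|x|^2$, this gives
\[
 \E\mathcal Q_\Delta(\bm x_{k+1})\le\E\mathcal Q_\Delta(\bm x_k)-c\E\|g_k\|_1+C\Delta t\big(1+\E\mathcal Q_\Delta(\bm x_k)\big).
\]
A discrete Gronwall argument up to $T/\Delta t$ steps yields both the second-moment bound and the summed variation bound with a constant $C_T$. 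The tightness statement follows from Doob's inequality applied to the martingale part of $\mathcal Q_\Delta(\bm x_k)$, whose compensator is nonincreasing apart from the $C\Delta t$ terms. The quadratic variation of that martingale is bounded by $C\Delta t\sum_k(1+|\bm x_k|^2)$.

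For the rounding errors, unbiasedness makes the $\varepsilon_k$ martingale differences. Each coordinate error has conditional variance at most $\Delta x^2/4$ on the grid of spacing $\Delta x$, but a coordinate is rounded only when it is off the grid. A shock along $\mathbf 1$ keeps every coordinate on the grid, so rounding can occur only after a revision, and its variance is bounded by $\Delta x\min(\Delta x,|Vg_k|)\le C\Delta x\|g_k\|_1$. Doob's $L^2$ inequality then gives
\[
 \E\max_m\Big|\sum_{k<m}\varepsilon_k\Big|^2\le C\Delta x\,\E\sum_k\|g_k\|_1,
\]
and the variation bound turns this into $C_T\Delta x(1+|\bm x_0|^2)$. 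Stopping before corner deletion or clipping preserves every inequality above, because all of them are supermartingale estimates and so survive optional stopping.

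The main obstacle is the first step: deriving the one-sided location $y_i\le\underline y_i$ for every revising type from the \emph{coupled} complementarity conditions. The continuation adjoint at $y$ depends on the revisions of all types through interpolation, so the argument must use only $\Psi^i\le\alpha_i$ at the nodes and the preservation of this bound by interpolation with nonnegative weights. I expect this to work because interpolation with nonnegative weights summing to one cannot push the interpolated adjoint above $\alpha_i$. The positive-definiteness of $\bm\Pi V$, which needs $\theta_i>0$, is where the hypothesis $\theta_i\in(0,1)$ enters.
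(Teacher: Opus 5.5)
Your proposal is correct and follows the paper's proof. It uses the same Lyapunov function $\mathcal Q_\Delta$ with $V^\top\mathbf H=\bm\Pi$ and the same exact revision identity. It uses the nodal bound $\Psi^i\le\alpha_i$, preserved by nonnegative interpolation weights, to force $y_i\le\underline y_i$ when $g_i>0$ and $y_i\ge\overline y_i$ when $g_i<0$. It closes with discrete Gronwall and with Doob's inequality, using the variance bound $\Delta x\,\mathrm{dist}((Vg)_j,\Delta x\mathbb Z)$ for the rounding errors.

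The differences are only cosmetic. You bound the rounding contribution to the drift crudely by $O(\Delta x^2)=O(\Delta t)$, where the paper absorbs a $C\Delta x\|g\|_1$ term into the revision decrease. You get tightness from Doob applied to the martingale part of $\mathcal Q_\Delta(\bm x_k)$, where the paper uses the maximal inequality for the nonnegative supermartingale $(1+C\Delta t)^{-k}(\mathcal Q_\Delta(\bm x_k)+1)$. Both variants work.
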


\begin{proof}
The matrix $\bm\Pi V=\mathrm{diag}(\pi_i\gamma_i/\theta_i)+\pi\pi^\top/(1-\Theta_1)$
is symmetric positive definite, so
$\mathbf H=\bm\Pi(\bm\Pi V)^{-1}\bm\Pi$ is symmetric positive definite, with
$\mathbf HV=\bm\Pi$ and $V^\top\mathbf HV=\bm\Pi V$. Positive interpolation
weights preserve the obstacle bounds, so if $g_i>0$ then
$\alpha_i=C^i(y)\le(1-e^{y_i})\Delta t+e^{-\delta\Delta t}\alpha_i$, that is
$y_i\le\underline y_i$, and if $g_i<0$ then $y_i\ge\overline y_i$. The exact
quadratic identity
$\mathcal Q_\Delta(y)-\mathcal Q_\Delta(x)=(y-y^\circ)^\top\bm\Pi g-\frac12g^\top\bm\Pi Vg$
therefore gives
\begin{equation}\label{ec:eq:lyapunovrevision}
 \mathcal Q_\Delta(y)-\mathcal Q_\Delta(x)\le-c\|g\|_1-c_2|g|^2,\qquad
 c=\inf_{\Delta t\le\Delta_0}\min_i\frac{\pi_i(\overline y_i-\underline y_i)}{2\theta_i}>0,\quad
 c_2=\tfrac12\lambda_{\min}(\bm\Pi V)>0 ,
\end{equation}
because each term $(\pi_i/\theta_i)(y_i-y^\circ_i)g_i$ is at most
$-(\pi_i/\theta_i)(\overline y_i-\underline y_i)|g_i|/2$. The common shock
$\pm\Delta x\mathbf 1$, whose sign has mean $2p_+-1=\nu\Delta t/\Delta x$,
changes the expectation of $\mathcal Q_\Delta$ by
$\nu\Delta t\,\mathbf 1^\top\mathbf H(y-y^\circ)+\frac12\kappa^2\Delta t\,\mathbf 1^\top\mathbf H\mathbf 1
\le C\Delta t(1+\mathcal Q_\Delta(y))$. The rounding of coordinate $j$ has
conditional mean zero and variance $\Delta x^2f_j(1-f_j)$, where $f_j$ is
the fractional part of $y_j/\Delta x$, and the roundings of distinct
coordinates are conditionally independent. Given the state after the
shock, the rounding error $\varepsilon$ therefore has mean zero and second
moment at most
$\Delta x\sum_j\mathrm{dist}((Vg)_j,\Delta x\mathbb Z)\le\Delta x\|Vg\|_1$,
since $x$ lies on the grid. Its contribution to $\mathcal Q_\Delta$ is
therefore at most $C'\Delta x\|g\|_1$, which for small $\Delta x$ is absorbed
by half of the term $c\|g\|_1$. The one step inequality
\[
 \E[\mathcal Q_\Delta(\bm x_{k+1})\,|\,\F_k]
 \le(1+C\Delta t)\mathcal Q_\Delta(\bm x_k)+C\Delta t-\tfrac c2\|g_k\|_1-c_2|g_k|^2
\]
and the discrete Gronwall inequality bound
$\sup_{k\Delta t\le T}\E\mathcal Q_\Delta(\bm x_k)$. Summing then gives the
first two bounds in \eqref{ec:eq:variationbound}, because $\mathcal Q_\Delta$
is uniformly equivalent to $|x|^2$ up to constants, and
$(1+C\Delta t)^{-k}(\mathcal Q_\Delta(\bm x_k)+1)$ is a nonnegative
supermartingale whose maximal inequality gives the third. The rounding
errors are martingale differences, so Doob's inequality and the first bound
give the last claim. Stopping preserves every estimate.
\end{proof}

The variation bound is what the control of the interpolation noise
requires, and it does not use the weak interaction condition.
What remains for convergence of the coupled scheme is the identification of
weak limits of the state paths and adjoints with a continuous time
equilibrium, which requires the limiting reflection geometry to be well
posed, and the control of the nonlinear policy iteration and of the outer
budget equations on a fixed grid. On synchronized states, where the scheme
reduces to the one dimensional chain of Section~\ref{ec:diagnostics}, the
convergence rate can be proved.

\begin{proposition}\label{ec:prop:scalarrate}
Let $[\za,\zb]$ and $H$ be the band and the function of
Lemma~\ref{lem:band} for the parameters of the scheme, let $a=\log\za$ and
$b=\log\zb$, and let $\tilde H(x)=H(e^x)$ be the adjoint in log index
coordinates, extended by $\alpha$ below $a$ and by $-\beta$ above $b$. Let
$p_+$ be as in Section~\ref{ec:lattice}. Fix $\underline x<a<b<\overline x$, nodes $x_j=\underline x+j\Delta x$ for
$0\le j\le m$ with $m=\lceil(\overline x-\underline x)/\Delta x\rceil$, and consider
the double obstacle system
\[
 \Psi_0=\alpha,\qquad \Psi_m=-\beta,\qquad
 \Psi_j=\min\big\{\alpha,\max\{-\beta,\Delta t(1-e^{x_j})+e^{-\delta\Delta t}(p_+\Psi_{j+1}+(1-p_+)\Psi_{j-1})\}\big\},
 \quad 0<j<m .
\]
It has exactly one solution, and there is a constant $C$, independent of
all sufficiently small $\Delta t$, such that
$\max_j|\Psi_j-\tilde H(x_j)|\le C\Delta x=C\kappa\sqrt{\Delta t}$. The
piecewise linear interpolant converges uniformly on $[\underline x,\overline x]$
at the same rate.
\end{proposition}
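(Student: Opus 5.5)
The plan is to first obtain the unique solution from a contraction, and then to compare it with the extended continuous adjoint $\tilde H$ through a discrete comparison principle. The barrier in that comparison is the ingredient that produces the rate $\Delta x$.

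\emph{Uniqueness.} Write the system as $\Psi=\mathcal S\Psi$ on $\R^{m+1}$, with the end values fixed at $\alpha$ and $-\beta$. Two properties of $\mathcal S$ do the work. First, $x\mapsto\min\{\alpha,\max\{-\beta,x\}\}$ is nondecreasing and $1$-Lipschitz. Second, the weights $p_+$ and $1-p_+$ are nonnegative for small $\Delta t$ and sum to one. Hence $\mathcal S$ is monotone and satisfies $\|\mathcal S\Psi-\mathcal S\Phi\|_\infty\le e^{-\delta\Delta t}\|\Psi-\Phi\|_\infty$. Banach's theorem then gives exactly one solution. The same two properties give a comparison principle. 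If a grid vector $\Phi^\pm$ has the correct boundary values and satisfies $\Phi^+\ge\mathcal S\Phi^+$ (respectively $\Phi^-\le\mathcal S\Phi^-$), then $\Phi^-\le\Psi\le\Phi^+$, by iterating $\mathcal S$.

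\emph{Consistency of $\tilde H$.} By Lemma~\ref{lem:band}, $\tilde H$ is $C^2$ on $[a,b]$ and solves $\varsigma\tilde H''+\nu\tilde H'-\delta\tilde H+1-e^x=0$ there, with $\tilde H'=0$ at $a$ and $b$. It is therefore $C^1$ on $\R$ and piecewise $C^2$, with bounded second derivative and jumps of $\tilde H''$ only at $a$ and $b$.

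Below $a$ we have $e^x<\za<1-\delta\alpha$, so $1-e^x-\delta\alpha>0$. Hence the unprojected expression at a node $x_j<a-\Delta x$ is at least $\alpha$, and the projection returns $\alpha=\tilde H(x_j)$ exactly. The symmetric argument applies above $b+\Delta x$ using $\zb>1+\delta\beta$.

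At interior nodes with $x_j\pm\Delta x\in[a,b]$, I Taylor expand $p_+\tilde H(x_j+\Delta x)+(1-p_+)\tilde H(x_j-\Delta x)$ and compare with $e^{\delta\Delta t}$. Since $\Delta x^2=\kappa^2\Delta t$ and $p_+$ matches drift and variance, the defect is $O(\Delta t^{3/2})$. This uses that $\tilde H'''$ is bounded on $[a,b]$, which follows from the ODE.

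At the $O(1)$ nodes within $\Delta x$ of $a$ or $b$, the $C^1$ fit only gives a defect of order $\Delta x^2\sup|\tilde H''|=O(\Delta t)$. At these nodes I also have to check that the projection acts harmlessly. Since $\tilde H$ lies in $[-\beta,\alpha]$, projecting the perturbed value changes it by no more than the defect. So $\tilde H$ satisfies $|\tilde H-\mathcal S\tilde H|\le\tau_j$ nodewise, with $\tau_j=O(\Delta t^{3/2})$ away from the kinks and $\tau_j=O(\Delta t)$ at the kink nodes.

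\emph{Barrier and the rate (main obstacle).} A naive sum of the defects gives $\max_j\tau_j/(1-e^{-\delta\Delta t})=O(1)$, which is useless. The key point is that the chain spends discounted time of order only $\Delta x$ near each kink, in the sense of local time.

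To exploit this I will build supersolutions $\Phi^\pm=\tilde H\pm C\Delta x(K-\phi)$, where $\phi(x)=|x-a|+|x-b|$ is a tent function and $K$ is a large constant. At nodes adjacent to a kink, one step of the binomial average increases $|x-a|$ in expectation by about $\Delta x$, up to an $O(\Delta t)$ drift correction. Hence $C\Delta x(K-\phi)$ gains an amount of order $C\Delta x^2=C\kappa^2\Delta t$ there, and this absorbs the $O(\Delta t)$ defect. Away from the kinks, the discount term $(1-e^{-\delta\Delta t})C\Delta xK\approx\delta CK\Delta t\Delta x$ absorbs the $O(\Delta t^{3/2})$ defect, while the drift acting on $\phi$ costs only $O(C\Delta t\Delta x)$; taking $K$ large makes the net sign correct. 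The end nodes $x_0$ and $x_m$ carry the exact obstacle values, which can be arranged by a boundary correction.

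This sign bookkeeping at the kinks, including the interaction with the obstacle projection and the dependence of the node positions relative to $a$ and $b$ (the non-monotonicity noted in Section~\ref{ec:diagnostics}), is the step I expect to require the most care. The comparison principle then gives $\max_j|\Psi_j-\tilde H(x_j)|\le C'\Delta x$. For the piecewise linear interpolant, $\tilde H$ is Lipschitz, so interpolation adds a further $O(\Delta x)$ and the uniform bound on $[\underline x,\overline x]$ follows.
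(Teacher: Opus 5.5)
Your route differs from the paper's. The paper also gets uniqueness from the contraction, and it computes the residuals of $\tilde H$: $O(\Delta x^4)$ where the stencil lies in the continuation region, exactly zero elsewhere in the stopping regions, and $O(\Delta x^2)$ at the at most six nodes whose stencil straddles $a$ or $b$. It then propagates these through $e\le e^{-\delta\Delta t}\mathbf Pe+r$. The key step bounds every entry of the discounted Green's function of the killed walk by $C/\Delta x$, using $\sup_{j,k}(\mathbf P^n)_{jk}\le C/\sqrt{n+1}$ from Fourier inversion. That bound does not depend on where $a$ and $b$ fall relative to the grid, so the crude $O(\Delta x^2)$ residual at the kink nodes is enough. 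Your tent barrier is a discrete Tanaka-formula substitute for this local-limit estimate and can be made to work. However, the step you single out as the main obstacle is asserted incorrectly.

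\emph{The gap.} Take $\phi=|x-a|+|x-b|$ and a node $x_j=a+s\Delta x$ with $|s|<1$. One binomial step raises $|x-a|$ in expectation by $\Delta x(1-|s|)+(2p_+-1)s\Delta x$, not by about $\Delta x$. Suppose $x_{j-1}<a\le x_j$ with $\eta=a-x_{j-1}$ small, so $s=1-\eta/\Delta x$. Then the barrier $C\Delta x(K-\phi)$ gains only about $C\Delta x\,\eta$ at $x_j$, plus the discount term of order $C\Delta x\,\Delta t$. Your defect bound at that node is only $O(\Delta x^2)$. So for a fixed $C$ the supersolution inequality is not established uniformly over grid positions, and the comparison step fails as written.

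There are two ways to repair it.
\begin{enumerate}
\item[(i)] Keep the tents at $a$ and $b$, but sharpen the defect using that only $\tilde H''$ jumps at the kinks.
\begin{itemize}
\item At the continuation-side node $x_j$, compare with the smooth continuation of the ODE solution across $a$. This gives a defect of $\tfrac12(1-p_+)|\tilde H''(a)|\eta^2+O(\Delta x^3)$. Since $\eta\le\Delta x$, this is absorbed by the gain $C\Delta x\,\eta$.
\item At the stopping-side node $x_{j-1}$, use $\varsigma\tilde H''(a)=\za+\delta\alpha-1$ and $(x_j-a)^2\le\Delta x^2=2\varsigma\Delta t$. The unclipped value then exceeds $\alpha$ by at least $(1-p_+)(1-\za-\delta\alpha)\Delta t+O(\Delta x^3)>0$, so the residual there is exactly zero for small $\Delta t$.
\item The same holds symmetrically at $b$.
\end{itemize}
\item[(ii)] Alternatively, centre the tents at the two grid nodes bracketing $a$ and the two bracketing $b$.
\begin{itemize}
\item These are exactly the nodes whose stencil straddles a kink.
\item Each of them then gains $\Delta x$ up to $O(\Delta t)$.
\item The new $\phi$ still has slopes bounded by $4$ and values bounded on $[\underline x,\overline x]$ uniformly in $\Delta t$, so your crude $O(\Delta x^2)$ defect suffices.
\end{itemize}
\end{enumerate}

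With either repair the rest of your argument goes through:
\begin{itemize}
\item comparison via monotonicity of $\mathcal S$;
\item absorption of the $O(\Delta t^{3/2})$ interior defect by the discount term with $K$ large (this interior defect is weaker than the paper's $O(\Delta x^4)$ but sufficient);
\item the boundary inequalities, which hold with no correction once $K\ge\max\phi$.
\end{itemize}
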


\begin{proof}
The right side defines a contraction with constant $e^{-\delta\Delta t}$ on
the interior values, because clipping is nonexpansive and the weights are
nonnegative and sum to one, so the solution is unique. Let $r_j$ be the
residual of $\tilde H$ in the $j$th equation. Smooth fit makes $\tilde H$
continuously differentiable with a bounded second derivative on each side
of $a$ and of $b$, and on $(a,b)$ it solves
$\varsigma\tilde H''+\nu\tilde H'-\delta\tilde H+1-e^x=0$. At a node whose
neighbors lie in $(a,b)$, Taylor expansion with $2p_+-1=\nu\Delta x/\kappa^2$
and $1-e^{-\delta\Delta t}=\delta\Delta t+O(\Delta t^2)$ cancels the terms of
order $\Delta t$ by the equation and gives $|r_j|\le C_1\Delta x^4$. Clipping
does not enlarge the residual, because $\tilde H(x_j)$ lies between the
obstacles. At the at most six nodes within $\Delta x$ of $a$ or $b$ the
bounded second derivative gives $|r_j|\le C_2\Delta x^2$. At every other
node of the stopping regions the strict inequalities $\za<1-\delta\alpha$
and $\zb>1+\delta\beta$ of Lemma~\ref{lem:band}, that is $1-e^a-\delta\alpha>0$ and
$1-e^b+\delta\beta<0$, make the clipping return the constant obstacle
exactly for small $\Delta x$, so $r_j=0$. Let $\mathbf P$ be the
substochastic matrix of the nearest neighbor walk killed at the two
endpoints and $e_j=|\Psi_j-\tilde H(x_j)|$. Nonexpansiveness gives
$e\le e^{-\delta\Delta t}\mathbf Pe+r$, hence
$e\le\sum_{n\ge0}e^{-n\delta\Delta t}\mathbf P^nr$. A killed transition
probability is at most that of the free walk, and Fourier inversion with
the squared modulus $1-4p_+(1-p_+)\sin^2t$ of the characteristic function
gives
$\sup_{j,k}(\mathbf P^n)_{jk}\le C_3/\sqrt{n+1}$ uniformly for
$p_+\in[1/4,3/4]$, so that
$\sup_{j,k}\sum_ne^{-n\delta\Delta t}(\mathbf P^n)_{jk}\le C_4/\sqrt{1-e^{-\delta\Delta t}}\le C_5/\Delta x$.
Using the row sum bound for the $O(\Delta x^4)$ part of the residual and the
last estimate at each of the at most six nodes with an $O(\Delta x^2)$
residual gives
$\|e\|_\infty\le C_1\Delta x^4/(1-e^{-\delta\Delta t})+6C_2C_5\Delta x\le C\Delta x$.
Interpolation of the nodal error is nonexpansive and the interpolation
error of $\tilde H$ is $O(\Delta x^2)$.
\end{proof}

The proof gives the rate $\sqrt{\Delta t}$, and the errors observed on the
one dimensional chain of Section~\ref{ec:diagnostics} fall faster, so the
estimate need not be sharp. On the diagonal of the three dimensional lattice the saved solution
at $\Delta t=1/64$ agrees with the one dimensional chain to within $2\times10^{-13}$, so
the proposition explains the adjoint errors on synchronized states in
Table~\ref{tab:async}. This agreement is a check at one time step and does
not prove convergence of the coupled scheme or of the portfolios. An audit
of that solution, reported with the code,
evaluates the full residual of every adjoint row, including the revising
rows, and the obstacle and support residuals, all of which are below
$10^{-13}$, as is the residual $8.6\times10^{-14}$ of the cost equations.
Evaluation with outward rounding bounds the discounted probability of
reaching a row with deleted interpolation corners by $8\times10^{-12}$ for
the stored transition matrix after exact normalization. These checks
concern the discrete problem and do not certify the discretization error.